\documentclass[11pt,a4paper]{article}
\pdfoutput=1
\usepackage[left=2.3cm,right=2.3cm,top=2.3cm,bottom=2.3cm]{geometry}
\usepackage[ascii]{inputenc}
\usepackage[dvipsnames,svgnames]{xcolor}
\usepackage[bookmarksnumbered,linktocpage,hypertexnames=false,colorlinks=true,linkcolor=NavyBlue,urlcolor=NavyBlue,citecolor=ForestGreen,anchorcolor=green,breaklinks=true,pagebackref=true,pdfusetitle]{hyperref}
\usepackage{bbm,braket,microtype,mathrsfs,amsmath,amssymb,xcolor,amsthm,amsfonts,latexsym,mathtools,graphicx,enumitem,booktabs,bm,xspace,float,mathdots,caption,subcaption,ellipsis,mleftright}
\usepackage[T1]{fontenc}
\usepackage{textcomp}
\usepackage[english]{babel}
\usepackage[capitalize,nameinlink]{cleveref}
\usepackage{mathpazo}
\usepackage{dsfont}
\usepackage{scalerel}
\usepackage{comment}
\usepackage{algorithm}
\usepackage{algpseudocode}

\usepackage{makecell,multirow,tabularx,tabulary}

\usepackage[normalem]{ulem}
\newtheorem{thm}{Theorem}[section]\crefname{thm}{Theorem}{Theorems}
\newtheorem{lem}[thm]{Lemma}\crefname{lem}{Lemma}{Lemmas}\AddToHook{env/lem/begin}{\crefalias{thm}{lem}}
\newtheorem{prp}[thm]{Proposition}\crefname{prp}{Proposition}{Propositions}\AddToHook{env/prp/begin}{\crefalias{thm}{prp}}
\newtheorem{cor}[thm]{Corollary}\crefname{cor}{Corollary}{Corollaries}\AddToHook{env/cor/begin}{\crefalias{thm}{cor}}
\crefname{cnj}{Conjecture}{Conjectures}\AddToHook{env/cnj/begin}{\crefalias{thm}{cnj}}
\newtheorem*{thm*}{Theorem}
\theoremstyle{definition}
\crefname{dfn}{Definition}{Definitions}\AddToHook{env/dfn/begin}{\crefalias{thm}{dfn}}
\newtheorem{rem}[thm]{Remark}\crefname{rem}{Remark}{Remarks}\AddToHook{env/rem/begin}{\crefalias{thm}{rem}}
\crefname{exa}{Example}{Examples}\AddToHook{env/exa/begin}{\crefalias{thm}{exa}}
\numberwithin{equation}{section}
\newcommand{\Ex}{\mathop{\bf E\/}}
\DeclareMathOperator{\tr}{tr}
\DeclareMathOperator{\rk}{rk}

\DeclareMathOperator{\polylog}{polylog}
\DeclareMathOperator{\Sym}{Sym}

\DeclareMathOperator{\diag}{diag}

\DeclareMathOperator{\Lin}{L}

\DeclareMathOperator{\U}{U}
\DeclareMathOperator{\GL}{GL}
\DeclareMathOperator{\Sp}{Sp}
\DeclareMathOperator{\Mp}{Mp}
\DeclareMathOperator{\ASp}{ASp}

\DeclareMathOperator{\SO}{SO}
\DeclareMathOperator{\Spin}{Spin}
\DeclareMathOperator{\Pin}{Pin}
\DeclareMathOperator{\PSD}{P}
\DeclareMathOperator{\trcl}{\operatorname{T}}
\DeclareMathOperator{\bounded}{\operatorname{B}}

\DeclareMathOperator{\Span}{span}
\DeclareMathAlphabet{\pazocal}{OMS}{zplm}{m}{n}
\DeclareMathOperator{\Tr}{Tr}

\newcommand*{\coloneqq}{\mathrel{\vcenter{\baselineskip0.5ex \lineskiplimit0pt \hbox{\scriptsize.}\hbox{\scriptsize.}}} =}

\renewcommand{\O}{\operatorname{O}}
\renewcommand{\S}{\operatorname{S}}
\renewcommand{\L}{\operatorname{L}}
\DeclarePairedDelimiter\abs{\lvert}{\rvert}
\DeclarePairedDelimiter\norm{\lVert}{\rVert}
\DeclarePairedDelimiter\floor{\lfloor}{\rfloor}
\DeclarePairedDelimiter\ceil{\lceil}{\rceil}
\DeclarePairedDelimiter\parens{\lparen}{\rparen}

\allowdisplaybreaks[4]
\newcommand{\R}{\mathbbm R}
\newcommand{\RR}{\mathbbm R}
\newcommand{\C}{\mathbbm C}
\newcommand{\N}{\mathbbm N}

\newcommand{\E}{\mathcal E}
\newcommand{\cA}{\mathcal A}
\newcommand{\cB}{\mathcal B}

\newcommand{\cP}{\mathcal P}
\newcommand{\cQ}{\mathcal Q}
\newcommand{\cH}{\mathcal H}
\newcommand{\cK}{\mathcal K}
\newcommand{\cL}{\mathcal L}
\newcommand{\cR}{\mathcal R}
\newcommand{\cV}{\mathcal V}
\newcommand{\cW}{\mathcal W}
\newcommand{\cG}{\mathcal G}
\newcommand{\cT}{\mathcal T}

\newcommand{\so}{\mathfrak{so}}
\renewcommand{\sp}{\mathfrak{sp}}
\newcommand{\ot}{\otimes}

\newcommand{\eps}{\varepsilon}
\newcommand{\bit}{\{0,1\}}
\newcommand{\id}{I}
\newcommand{\idCh}{\mathcal I}
\newcommand*{\tran}{^{\mkern-1.5mu\mathsf{T}}}

\newcommand{\dd}{\mathrm{d}}
\newcommand{\bigO}{\mathcal O}
\newcommand{\ketbra}[2]{\mathinner{\lvert#1\rangle\!\langle#2\rvert}}
\newcommand{\proj}[1]{\mathinner{\lvert#1\rangle\!\langle#1\rvert}}

\begin{document}

\title{Optimal tomography\\of bosonic and fermionic Gaussian states}
\author{Senrui Chen\thanks{Caltech, Pasadena, USA, \href{mailto:csenrui@gmail.com}{csenrui@gmail.com}} \and Marco Fanizza\thanks{Inria, T\'el\'ecom Paris -- LTCI, Institut Polytechnique de Paris, Palaiseau, France, \href{mailto:marco.fanizza@inria.fr}{marco.fanizza@inria.fr}}\and Filippo Girardi\thanks{Scuola Normale Superiore, Pisa, \href{mailto:filippo.girardi@sns.it}{filippo.girardi@sns.it}} \and Ludovico Lami\thanks{Scuola Normale Superiore, Pisa, \href{mailto:ludovico.lami@gmail.com}{ludovico.lami@gmail.com}} \and Francesco Anna Mele\thanks{Scuola Normale Superiore, Pisa, \href{mailto:francesco.mele@sns.it}{francesco.mele@sns.it}} \and Michael Walter\thanks{LMU Munich \& Munich Center for Quantum Science and Technology (MCQST), University of Amsterdam \& QuSoft, \href{mailto:michael.walter@lmu.de}{michael.walter@lmu.de}} \and Freek Witteveen\thanks{CWI \& QuSoft, Amsterdam, \href{mailto:f.witteveen@cwi.nl}{f.witteveen@cwi.nl}}}
\date{}
\maketitle
\begin{abstract}
The sample complexity is the minimum number of copies required to learn an accurate classical description of a quantum state.
Bosonic and fermionic Gaussian quantum states are families of quantum states that play a key role in quantum science and technology, from quantum optics and many-body physics to quantum chemistry, quantum computing, and quantum information theory.
Despite their importance, their sample complexity had not been fully determined.
We settle this open problem and show that both bosonic and fermionic Gaussian states can be learned using a number of copies that scales quadratically in the number of modes, regardless of whether the state is pure or mixed, and independently of any energy bound on the state. We derive these results by using the representation theory of Gaussian unitaries and by putting forth a generalization of the random purification channel to this setting and beyond.
\end{abstract}

\begin{center}
    \textit{This work merges and extends \cite{mergedwork1} and \cite{mergedwork2}.}
\end{center}

\tableofcontents

\section{Introduction}
Consider the fundamental task of quantum state tomography~\cite{AnshuArunachalam2024}: given access to independent copies of an unknown quantum state $\rho$, determine an accurate estimate $\hat \rho$ of $\rho$.
One of the key questions to ask in this context is that of \emph{sample complexity}:
what is the minimal number of copies or samples~$n$ needed to achieve a desired accuracy~$\eps$ with probability at least $1-\delta$?
For general states on a $d$-dimensional Hilbert space, this question, and variants of it, have been extensively studied and are now well-understood.
The answer is that~\cite{ODonnell2016-1,Haah2016,pelecanos2025, scharnhorst2025optimal}
\begin{align}\label{eq:sample complexity tomography}
    n = \Theta\mleft(\frac{d^2 + \log(\delta^{-1})}{\eps^2} \mright)
\end{align}
copies of $\rho$ are necessary and sufficient to obtain an estimate~$\hat \rho$ with fidelity to the true state~$F(\rho, \hat\rho)^2 \geq 1 - \eps^2$ with probability at least $1 - \delta$.
By the Fuchs--van de Graaf inequalities~\cite{FuchsVanDeGraaf1999}, the latter fidelity guarantee also implies the more commonly used trace-distance guarantee \(\frac12\|\rho-\hat\rho\|_1\le \eps\), which has a strong operational meaning due to the Holevo--Helstrom theorem~\cite{Holevo1973Decision,Helstrom1976}.

When the unknown state $\rho$ is promised to have some structure, the task of tomography may become easier, i.e., may have lower sample complexity.
This is highly relevant, since
(1)~the dependence on~$d$ in \cref{eq:sample complexity tomography} quickly becomes intractable, as $d = 2^m$ scales exponentially with the number of qubits~$m$, while
(2)~most states that are relevant for practical applications in quantum science and technology are indeed far from generic.

Among the most fundamental classes of quantum states are the \emph{Gaussian states}, which exist in bosonic~\cite{BUCCO, BraunsteinVanLoock2005, weedbrook2012gaussian} and fermionic~\cite{Bravyi2005} systems.
These states are ground states or Gibbs states of quadratic Hamiltonians in the (fermionic or bosonic) creation and annihilation operators, and play an important role in physics, chemistry, quantum computing and quantum information.
Because of their central role in these fields, the notion of tomography of Gaussian states has also been studied extensively.
In fact, quantum state tomography first appeared in physics in the~1990s in the context of bosonic systems, with quantum optics experiments aimed at estimating states of light~\cite{SmitheyBeckRaymerFaridani1993, DAranio1995, WallentowitzVogel1995, LeonhardtPaul1995, BabichevAppelLvovsky2004, LvovskyRaymer2009, BUCCO}.
About a decade ago, the optimal performance of tomography began to be analyzed systematically within the rigorous framework of quantum learning theory~\cite{ODonnell2016-1, Haah2016, AnshuArunachalam2024}.
By contrast, the theory of quantum learning for Gaussian states, both for bosons and fermions, has only recently begun to be developed~\cite{aaronson2021efficient,ogorman2022fermionic,bittel2025optimalfermion,mele2025learning,bittel2025optimalboson,fanizza2024efficient, bittel2025energy,chen2026towards,mele2026advances}.

Gaussian states are completely characterized by their covariance matrix and mean~\cite{BUCCO,Bravyi2005}, which can easily be estimated by single-copy Gaussian measurements.
This suggests a natural class of strategies for learning Gaussian states~\cite{aaronson2021efficient, ogorman2022fermionic, bittel2025optimalfermion, mele2025learning, bittel2025optimalboson, fanizza2024efficient, bittel2025energy, chen2026towards}.
However, neither these natural strategies nor more involved algorithms~\cite{bittel2025energy, chen2026towards} have led to a conclusive answer to the question of the sample complexity of learning Gaussian states.
The main result of this work is to provide this answer.

\begin{thm*}[Main result]
    The sample complexity of learning a bosonic or fermionic Gaussian state on $m$ modes to accuracy $\eps$ in purified distance or trace distance with probability of error at most $\delta$ is
    \begin{align}\label{eq_main_res}
        n = \Theta\mleft(\frac{m^2 + \log(\delta^{-1})}{\eps^2}\mright).
    \end{align}
    This holds regardless of whether the state is promised to be pure or not.
\end{thm*}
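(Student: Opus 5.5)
The plan has two halves: an upper bound obtained by reducing Gaussian-state tomography to pure-state tomography with a symmetry restriction, and a matching lower bound from a packing argument. The abstract points to the key tools, namely the representation theory of Gaussian unitaries and a generalized random purification channel.

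\textbf{Upper bound, pure case.} First treat pure Gaussian states. Each such state is of the form $U_g\ket{0}$ (bosons: with a displacement $D_\xi$ and a symplectic unitary; fermions: a $\Spin(2m)$ or $\O(2m)$ element acting on the vacuum). Then $\psi^{\otimes n}$ lies in the orbit of $\ket{0}^{\otimes n}$ under the diagonal action of the group. Project onto the cyclic subspace this orbit generates, which is an irreducible representation of the (meta)symplectic or spin group with highest weight $n$ times the vacuum weight. Apply the covariant measurement $\{c\, U_g^{\otimes n}\proj{0}^{\otimes n}U_g^{\dagger\otimes n}\dd g\}$ on this subspace. This measurement has density proportional to $\abs{\braket{\psi_g|\psi}}^{2n}$. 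The manifold of pure Gaussian states has real dimension $\Theta(m^2)$: $m(m+1)$ for bosons, plus $2m$ for the mean, and $m(m-1)$ for fermions. Standard estimates for coherent-state POVMs on homogeneous spaces (Laplace method plus a concentration bound) should give infidelity $\eps^2$ with probability $1-\delta$ once $n \gtrsim (m^2+\log\delta^{-1})/\eps^2$.

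For bosons there is an extra wrinkle: the orbit is noncompact. I would handle this by first estimating the mean and the covariance crudely with $O(m^2)$ heterodyne copies, then recentering and unsqueezing so that the remaining estimation is local. The final guarantee is energy independent because the error is measured in fidelity, which is invariant under Gaussian unitaries.

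\textbf{Mixed states.} Reduce to the pure case via a Gaussian random purification channel. I want a channel $\Lambda_n$ mapping $\rho^{\otimes n}$ to $\Ex_{V}(\id\otimes V)^{\otimes n}\proj{\psi_\rho}^{\otimes n}(\id\otimes V)^{\dagger\otimes n}$, where $\psi_\rho$ is a Gaussian purification on $2m$ modes and $V$ ranges over passive or Gaussian unitaries on the purifying modes. Construct it by Schur--Weyl-type duality for the dual pair $(\Sp(2m),\O(n))$ for bosons and $(\O(2m),\Sp(2n))$ for fermions (Howe duality). The commutant of $U_g^{\otimes n}$ decomposes $\rho^{\otimes n}$ blockwise, and one can then "lift" each isotypic block by an isometry onto the corresponding block of the purified space, mimicking the Tang--Wright construction. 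Since $\rho^{\otimes n}$ is Gaussian-invariant, a twirling argument shows the output is the desired mixture. Applying the pure-state learner on $2m$ modes and tracing out yields an estimate of $\rho$, because fidelity is monotone under partial trace. The randomness in $V$ does not matter, since the learner is covariant and we only keep the $m$-mode marginal.

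\textbf{Lower bound and main obstacle.} For the lower bound, restrict to pure Gaussian states near the vacuum and take a $\eps$-packing of size $e^{\Omega(m^2)}$, for instance squeezing matrices $\exp(\eps\, iS/\sqrt{m})$ with random sign-symmetric $S$ in the bosonic case, or random antisymmetric generators in the fermionic case. Pairwise infidelities are $\Theta(\eps^2)$ while the states are $O(\eps)$-close to a common centre. A Holevo/Fano argument then bounds the accessible information of $n$ copies by $O(n\eps^2)$, forcing $n=\Omega(m^2/\eps^2)$. The $\log\delta^{-1}/\eps^2$ term follows from the two-hypothesis Helstrom bound.

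I expect the main obstacle to be constructing the Gaussian random purification channel rigorously. This means identifying the multiplicity spaces under the Howe dual pair, checking that the required isometries exist (in the bosonic case this also involves infinite-dimensional, noncompact issues), and confirming that the output is exactly a mixture of Gaussian purifications.
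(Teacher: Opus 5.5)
Your fermionic outline essentially matches the paper: a covariant measurement on the span of $\ket{\psi}^{\otimes n}$, plus a random purification that only needs $\sigma^{\otimes n}\in\C\{U_R^{\otimes n}\}$ and Gaussianity of the standard purification. For bosons, however, the step you flag as the main obstacle is not a technicality: the channel you ask for does not exist. With $V$ Gaussian there is no Haar probability measure on $\Sp(2m)$, and the block construction would have to prepare maximally mixed states on the infinite-dimensional registers $\tilde\cV_\lambda$. With $V$ passive, it requires $\sigma^{\otimes n}$ to commute with the copy-mixing $\U(n)$. That holds only for gauge-invariant states; squeezed states are merely $\O(n)$-invariant.

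The missing idea is a \emph{quasi-purification} channel that puts an arbitrary fixed state $\ket{v_\lambda}$ on each $\tilde\cV_\lambda$. It is justified by exactly the covariance you invoke only to dismiss the randomness. Disintegrate the Haar measure of $\Sp(4m)$ over the subgroup $\Sp(2m)$ acting on the purifying modes, which does not change the reduced estimate. This shows that the law of the reduced estimate depends on the input only through $(I\otimes\mathcal T)[\cdot]$, with $\mathcal T$ the unnormalized $\Sp(2m)$-twirl. By Schur orthogonality, $\mathcal T$ kills cross terms between different $\lambda$ and sees only the partial trace over $\tilde\cV_\lambda$. Hence the quasi-purified input and $n$ copies of the standard purification give identical statistics.

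Likewise, noncompactness of the pure-state orbit needs no localization. For $n\geq 2m+1$ the space $\cV_{n,m}$ lies in the discrete series, so the covariant POVM on the whole mean-zero orbit is normalized by the formal degree $d_{n,m}$. Your heterodyne ``crude estimate, then unsqueeze'' step is precisely where no energy-independent guarantee is known: heterodyne noise masks strongly squeezed quadratures, and adaptive Gaussian schemes pay $\log\log E$. Invariance of the fidelity does not make that step's cost energy-independent.

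Finally, a nonzero mean breaks the $\O(n)$ symmetry of $\rho^{\otimes n}$, so your reduction does not apply as stated. The paper first maps pairs by a balanced beam splitter to $\rho(0,V)\otimes\rho(\sqrt2\mu,V)$. It then learns $V$ to multiplicative accuracy from the centered copies and estimates $\mu$ by generaldyne with seed $\hat V$.

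The quantitative steps are also left open. A Laplace-method estimate is asymptotic in $n$ at fixed $m$; it does not give a bound uniform in $m$ with the correct $\log(\delta^{-1})$ term. The tool you need is the exact identity $\Ex\abs{\braket{\psi|\hat\psi}}^{2(k-n)}=d_{n,m}/d_{k,m}$, obtained by applying Schur's lemma (resp.\ Schur orthogonality) twice. Combine it with the Weyl dimension formula for $\so(2m)$ and Harish-Chandra's formal-degree formula for $\Mp(2m)$. Markov's inequality on this negative moment, with $k$ a constant fraction of $n$, then gives failure probability $\exp(-\Omega(\eps^2 n)+O(m^2))$.

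For the lower bound, the whole difficulty of your Holevo/Fano route sits in the unproven claim that the information is $O(n\eps^2)$. Bounding the entropy of the averaged $n$-copy state over all $mn$ modes gives only $O(n\eps^2\log(m/\eps))$, which reproduces the earlier $\Omega(m^2/(\eps^2\log(m/\eps)))$ bound of \cite{chen2026towards}. Removing the logarithm requires confining the average to the $\Theta(m^2)$-parameter space $\cV_{n,m}$.

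The paper sidesteps this by bounding $\Ex_\psi\Ex_{\hat\psi}\abs{\braket{\psi|\hat\psi}}^{2k}\leq d_{n,m}/d_{n+k,m}$ under the Haar prior for fermions. For bosons no uniform probability measure exists, so it uses the prior $\dd Q\propto\abs{\braket{\psi|0}}^{4m+2}\dd\psi$, giving $d_{n+2m+1,m}/d_{n+2m+1+k,m}$. It then chooses $k=\lfloor 1/(4\eps^2)\rfloor$.

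Separately, your packing normalization is off. With $\pm1$ entries, generators $\eps S/\sqrt m$ have Frobenius norm $\eps\sqrt m$ and give infidelities of order $m\eps^2$, not $\eps^2$; you need $\eps S/m$.
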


\noindent
Remarkably, the sample complexity we find for Gaussian states has exactly the same form as the sample complexity in \cref{eq:sample complexity tomography} for learning arbitrary $d$-dimensional states, but with the Hilbert-space
dimension~$d$ replaced by the number of modes~$m$.
While for arbitrary pure states this scaling can be quadratically improved, this is \emph{not} the case for Gaussian states.
This is intuitive because an $m$-mode Gaussian pure state is still described by $\Theta(m^2)$ parameters.

In the above theorem, just as in the earlier discussion and throughout the paper, it is sufficient to prove the sample complexity upper bound for the purified distance guarantee
\begin{align*}
    P(\rho,\hat \rho):=\sqrt{1-F(\rho,\hat \rho)^2}\leq \eps,
\end{align*}
in other words, we require $F(\rho, \hat \rho)^2 \geq 1 - \eps^2$. 
By the Fuchs--van de Graaf inequality $T(\rho,\hat \rho) \leq P(\rho,\hat \rho)$, the result for the purified distance directly implies the sample complexity upper bound in trace distance. By the fact that purified distance and trace distance coincide for pure states, and the fact that for the sample complexity lower bound it suffices to restrict to pure states, the result for the purified distance implies the sample complexity bound with respect to trace distance.
The tomography schemes we design are not directly easy to implement. This is in contrast to schemes based on measuring the covariance matrix and mean, which can be done with low-complexity single-copy measurements, and are practical to use in experiments. Our result fully determines the information-theoretic sample complexity of the problem, and may also be a starting point to design schemes that combine optimal sample complexity with efficient implementations.
Our main result is the combination of \cref{thm:tomo fermion}, \cref{thm:tomo fermion lower bound}, \cref{thm:tomo boson} and \cref{thm:tomo boson lower bound}, which prove the fermionic and bosonic cases of the upper and lower bounds on the sample complexity.

\subsection{High-level overview and challenges}\label{sec:high level}


The idea behind the tomography protocol we consider, and more sophisticated versions of it, has been analyzed in detail for tomography of general finite-dimensional states in~\cite{pelecanos2025}.
Remarkably, while the analysis is \emph{easier} than for previous protocols, the resulting protocol is \emph{optimal} in terms of sample complexity and has an efficient implementation. It has two key ingredients:
\begin{enumerate}
    \item Pure states on $\C^d$ can be learned optimally by the \emph{Hayashi measurement}~\cite{hayashi1998asymptotic,hayashi2017group, harrow2013church}, which is a maximally symmetric POVM on $n$ copies of the state, where the POVM element corresponding to an estimate $\hat{\psi}$ is proportional to $\proj{\hat{\psi}}^{\ot n}$.
    \item The \emph{random purification channel}~\cite{tang2025conjugate,pelecanos2025,girardi2025random} is a quantum channel that on input $\rho^{\ot n}$, prepares a quantum state that can be interpreted as a mixture of states $\psi^{\ot n}$ where $\psi$ is a \emph{random} purification of $\rho$. Applying a pure state tomography protocol to this output yields an estimate of a random purification of $\rho$, and upon taking a partial trace, an estimate of $\rho$ itself. In other words, this reduces mixed-state tomography to pure-state tomography \cite{pelecanos2025}.
\end{enumerate}

Our main result, the optimal sample complexity for tomography of Gaussian states, requires three innovations.
\begin{enumerate}
    \item[1.] We analyze a \emph{Gaussian analogue of the Hayashi measurement} for pure state tomography, based on the fact that these states are the orbit of the group of Gaussian unitaries. This proves \cref{eq_main_res} for pure states.
    \item[2a.] We provide a generalization of the random purification channel to \emph{arbitrary symmetry} groups or algebras. This result is of independent interest beyond the case of Gaussian states. It is more general, but it also gives a simple and transparent proof of the standard random purification channel.
    As a direct application, our theorem allows us to reduce tomography of mixed fermionic Gaussian states to the case of pure states.
    \item[2b.] For the bosonic case, there is a fundamental technical challenge that has limited progress previous to this work, and that we overcome. The issue is that the infinite-dimensional nature of the underlying Hilbert spaces forbids the existence of a random purification channel. Indeed, the notion of random purification is itself problematic, because it is impossible to construct a Haar probability measure on the non-compact group of symplectic unitaries. One of our main conceptual contributions is the construction of a \emph{quasi-purification channel} that applies to general bosonic Gaussian states. While it does not have the same interpretation as preparing a random purification, it does allow reduction of mixed-state tomography to pure-state tomography.
\end{enumerate}
The high-level idea behind these innovations is that optimal quantum information processing protocols can often be derived from exploiting the symmetries of the problem. If the setting is that of $n$ identical copies of arbitrary $d$-dimensional quantum states, the relevant symmetries are the action of the symmetric group $S_n$, permuting copies, and the unitary group $\U(d)$, acting identically on each copy. Together, these symmetries decompose the $n$-copy Hilbert space according to \emph{Schur-Weyl duality}, and this decomposition guides the construction of sample-optimal protocols, in particular for tomography. When considering Gaussian states, the role of arbitrary unitaries is replaced by Gaussian unitaries, and instead of Schur-Weyl duality there exist \emph{Howe duality} \cite{howe1989remarks} and \emph{Kashiwara-Vergne duality} \cite{kashiwara1978segal} in the fermionic and bosonic settings, respectively. Here, the permutation symmetry is enhanced to an $\O(n)$-symmetry. We expect that this perspective will find further applications in the study of Gaussian quantum information theory.

\subsection{Related work}

\begin{table*}[!tp]\label{tab:summary}
    \centering
    \renewcommand{\arraystretch}{1.5}
    \begin{tabulary}{\textwidth}{@{} L C C @{}}
        \toprule
        Class of states & Prior bounds & This work \\
        \midrule
        Fermionic Gaussian
          & $\widetilde{\bigO}(m^4/\varepsilon^2)$~\cite{bittel2025optimalfermion} &  $\boldsymbol{\Theta(m^2/\eps^2)}$\\ 
          \midrule
         Pure fermionic Gaussian & ${\widetilde{\bigO}}(m^2/\varepsilon^2)$~\cite{zhao2024learning} & $\boldsymbol{\Theta(m^2/\varepsilon^2)}$\\        
        \bottomrule
        
        \multirow{2}{*}{Bosonic Gaussian}
          & $\widetilde{\bigO}(m^3/\varepsilon^2)$~\cite{bittel2025energy} &  $\boldsymbol{\bigO(m^2/\eps^2)}$\\ 
          & ${\Omega}(m^2/\varepsilon^2)$~\cite{chen2026towards} & $\Omega(m^2/\varepsilon^2)$\\
        \midrule
        \multirow{2}{*}{Pure bosonic Gaussian}
          &  $\widetilde{\bigO}(m^2/\varepsilon^2)$~\cite{chen2026towards} 
          & $\boldsymbol{\bigO(m^2/\eps^2)}$\\
          & ${\Omega}\left(\tfrac{m^2}{\eps^2\log\tfrac m\eps}\right)$~\cite{chen2026towards} 
          & $\boldsymbol{\Omega(m^2/\eps^2)}$\\
        \bottomrule
    \end{tabulary}

    \caption{ Comparison with previous best sample complexity bounds. Improvements are in bold font.
    Here $\widetilde O$, $\widetilde \Omega$ hide polynomial factors in $\log m$, $\log\log {E}$. }
    \label{tab:bounds_summary}
\end{table*}

Tomography of fermionic Gaussian states has been studied in previous work \cite{aaronson2021efficient,ogorman2022fermionic,bittel2025optimalfermion}. The previous best sample complexities were $\widetilde \bigO(m^2/\eps^2)$ for pure states\footnote{Since \cite{zhao2024learning} shows a sample complexity linear in $L$ for states prepared by $L$ two-qubit gates and all Gaussian fermionic states can be prepared using $\bigO(m^2)$ two-qubit gates.} \cite{zhao2024learning} and $\widetilde{\bigO}(m^4/\eps^2)$ for mixed states \cite{bittel2025optimalfermion}. Here $\widetilde{O}$ hides $\polylog$ dependence on $m$.
Our work improves this to~$\bigO(m^2)$ for both pure and mixed states, and determines the optimal dependence on the accuracy and probability of success.

Tomography of bosonic Gaussian states has also been studied extensively in previous work \cite{mele2025learning,bittel2025optimalboson,fanizza2024efficient,bittel2025energy,chen2026towards}, see \cite{mele2026advances} for a recent overview of quantum learning theory for bosonic states. We now describe the previous best sample complexity bounds.
For pure states, \cite{chen2026towards} derives an upper bound on the sample complexity of $\widetilde\bigO(m^2 / \eps^2)$, with a very mild dependence on an energy cut-off, and the same scaling for gauge-invariant (or passive) mixed Gaussian states.
For arbitrary mixed Gaussian states, the best previous sample complexity was found in \cite{bittel2025energy} as $\widetilde\bigO(m^3/\eps^2)$. Our work improves this to $\bigO(m^2/\eps^2)$.
Furthermore, \cite{chen2026towards} derives a $\Omega(m^2 / \eps^2)$ sample complexity lower bound for learning mixed Gaussian states and $\widetilde \Omega(m^2 / \eps^2)$ for pure Gaussian states, together with a $\widetilde{\Omega}(m^3/\eps^2)$ lower bound for adaptive Gaussian measurements; \cref{thm:tomo boson lower bound} improves by logarithmic factors for pure states (and uses a completely different argument). Here $\widetilde{\bigO}$ hides a $\polylog$ dependence on $m$ and on $\log E$, with $E$ being the average energy, while $\widetilde{\Omega}$ hides a $\polylog$ dependence on $m$.

The random purification channel was developed in~\cite{tang2025conjugate,pelecanos2025,girardi2025random,soleimanifar2022testing,chen2024local}, and used to reduce mixed-state tomography to pure-state tomography in~\cite{pelecanos2025}.
For mixed states, tomography protocols scaling optimally in $d$ and $\eps$ were first derived from representation-theoretic considerations in \cite{Haah2016,ODonnell2016-1}, and lower bounds can be found in \cite{Haah2016, scharnhorst2025optimal}.
The protocol based on the random purification channel is not only easier to analyze, but also has the advantage that the random purification channel can be implemented efficiently~\cite{bacon2006efficient, harrow2005applicationscoherentclassicalcommunication, burchardt2025high, girardi2026randomstinespringsuperchannelconverting, yoshida2026randomdilationsuperchannel}, leading to the first tomography scheme that is both optimal in sample complexity and computationally efficient.
The concept of random purifications has also been used for quantum channel tomography, where one can apply random purification to the Choi state of a quantum channel, with an improved sample complexity compared to previous channel tomography results~\cite{mele2025optimal,chen2025quantum}. An analogue of the purification channel exists for the Stinespring dilation of a quantum channel~\cite{chen2025quantum, girardi2026randomstinespringsuperchannelconverting, yoshida2026randomdilationsuperchannel}.
A version of the random purification channel for general symmetries was used to study cloning of structured quantum states \cite{bansal2026cloning}.

\subsection{Organization of the paper}
In the following, we start by presenting the random purification channel for general symmetries (\cref{sec:intro_general_pur}).
We then review the mathematical structure and symmetries of bosonic quantum systems (\cref{sec:church rotation invariant subspace}).
These ideas are then applied to our main results for fermionic and bosonic Gaussian states (\cref{sec:result_sample_optimal_tomography_of_fermionic_gaussian_states,sec:result_sample_optimal_tomography_of_bosonic_gaussian_states}).
We conclude with a discussion and open problems (\cref{sec:discussion_and_open_questions}).
The detailed technical proofs are delegated to the Supplementary Material.

\section{Result: Random purification channel for general symmetries}\label{sec:intro_general_pur}
This section describes an abstract random purification channel, that applies to general symmetries, and states 

Let $\cH$, $\tilde{\cH}$ be Hilbert spaces of the same finite dimension, with basis vectors~$\ket x$,~$\ket{\tilde{x}}$, respectively.
For an operator~$A$ on the first Hilbert space~$\cH$, denote by~$A\tran$ its transpose on the other Hilbert space~$\tilde{\cH}$, where we identify the operators through the choices of bases.
Also denote by~$\ket\Gamma = \sum_x \ket{x\tilde{x}} \in \cH \ot \tilde{\cH}$ the corresponding unnormalized maximally entangled state.
For any quantum state $\rho \in \S(\cH)$, we have a purification $\ket{\psi^\text{std}_\rho} \coloneqq (\sqrt\rho \ot I) \ket\Gamma$, called the \emph{standard purification}.%
\footnote{To obtain a truly canonical purification one can choose $\tilde{\cH}$ to be the dual (or the conjugate) Hilbert space of~$\cH$.}

\begin{thm}[Random purification for general symmetries, simplified]\label{thm:main simplified}
For any closed subgroup~$G \subseteq \U(\cH)$, there is a channel $\cP_G \colon \Lin(\cH) \to \Lin(\cH \ot \tilde{\cH})$ such that the following holds:
For any state~$\rho \in \S(\cH)$ such that~$\rho \in \C G$ (the span of~$G$), we have
\begin{align}\label{eq:twirl intro}
    \cP_G[\rho] = \int_G (I \ot g\tran) \psi^\text{std}_\rho (I \ot \bar g) \, \dd g.
\end{align}
If the Fourier transform for the action of~$G$ can be implemented efficiently, then so can the channel.
\end{thm}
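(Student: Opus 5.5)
The plan is to make both sides of \cref{eq:twirl intro} explicit in the isotypic decomposition of $\cH$ under $G$, and then to write down a channel that outputs the resulting expression. Decompose the (unitary, completely reducible) representation of $G$ on $\cH$ as
\begin{align*}
  \cH \cong \bigoplus_\lambda V_\lambda \ot M_\lambda ,
\end{align*}
where the $V_\lambda$ are pairwise inequivalent irreducible representations of dimension $d_\lambda$, and the $M_\lambda$ are multiplicity spaces of dimension $m_\lambda$ on which $G$ acts trivially. By the double commutant theorem,
\begin{align*}
  \C G = \bigoplus_\lambda \Lin(V_\lambda) \ot I_{M_\lambda}.
\end{align*}
So the hypothesis $\rho \in \C G$ means $\rho = \bigoplus_\lambda A_\lambda \ot I_{M_\lambda}$ with $A_\lambda \ge 0$ and $\sum_\lambda m_\lambda \tr A_\lambda = 1$. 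In particular $\sqrt\rho = \bigoplus_\lambda \sqrt{A_\lambda} \ot I_{M_\lambda}$.

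The first step is to fix a basis of $\cH$ adapted to this decomposition. The vector $\ket\Gamma$ satisfies $(U \ot \bar U)\ket\Gamma = \ket\Gamma$ for any unitary $U$, and transposes change only by the corresponding conjugation. Hence we may work in a product basis, in which
\begin{align*}
  \ket\Gamma = \bigoplus_\lambda \ket{\Gamma_{V_\lambda}} \ot \ket{\Gamma_{M_\lambda}},
\end{align*}
with $\tilde{\cH} \cong \bigoplus_\lambda \tilde V_\lambda \ot \tilde M_\lambda$ and $g\tran = \bigoplus_\lambda \pi_\lambda(g)\tran \ot I$. Then
\begin{align*}
  \ket{\psi^\text{std}_\rho} = \sum_\lambda \ket{\phi_\lambda} \ot \ket{\Gamma_{M_\lambda}},
  \qquad \ket{\phi_\lambda} = (\sqrt{A_\lambda} \ot I)\ket{\Gamma_{V_\lambda}}.
\end{align*}

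Next, compute the twirl on the right-hand side of \cref{eq:twirl intro}. Its action on the $\tilde{\cH}$ factor only involves the representations $g \mapsto \pi_\lambda(g)\tran$ on $\tilde V_\lambda$. These are again irreducible (they are the conjugate, or dual, representations) and pairwise inequivalent. By Schur orthogonality:
\begin{itemize}
  \item the cross terms $\lambda \ne \lambda'$ vanish, because $\int_G \pi_\lambda(g)\tran \ot \overline{\pi_{\lambda'}(g)}\, \dd g$ intertwines inequivalent irreducibles;
  \item each diagonal term $\proj{\phi_\lambda}$ is mapped to $\tr_{\tilde V_\lambda}\bigl(\proj{\phi_\lambda}\bigr) \ot I_{\tilde V_\lambda}/d_\lambda = A_\lambda \ot I_{\tilde V_\lambda}/d_\lambda$.
\end{itemize}
The multiplicity parts are untouched. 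Therefore the target state is
\begin{align*}
  \bigoplus_\lambda A_\lambda \ot \frac{I_{\tilde V_\lambda}}{d_\lambda} \ot \ketbra{\Gamma_{M_\lambda}}{\Gamma_{M_\lambda}},
\end{align*}
block diagonal in $\lambda$.

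Now define $\cP_G$ as follows. On input $X$:
\begin{enumerate}
  \item measure (pinch) the irrep label $\lambda$ with the projectors $\Pi_\lambda$;
  \item on the block $V_\lambda \ot M_\lambda$, discard $M_\lambda$ by the partial trace;
  \item append the maximally mixed state $I_{\tilde V_\lambda}/d_\lambda$ and the normalized maximally entangled state $\Phi_\lambda = \ketbra{\Gamma_{M_\lambda}}{\Gamma_{M_\lambda}}/m_\lambda$ on $M_\lambda \ot \tilde M_\lambda$;
  \item embed the result into the $\lambda$ block of $\cH \ot \tilde{\cH}$.
\end{enumerate}
This is a composition of a pinching, partial traces, and the preparation of fixed states, so it is manifestly CPTP on all of $\Lin(\cH)$. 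For $X = \rho$, the pinching acts trivially, and $\tr_{M_\lambda}(A_\lambda \ot I) = m_\lambda A_\lambda$. Multiplying by $\Phi_\lambda$ cancels the factor $m_\lambda$, so the output reproduces exactly the twirled state computed above.

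For efficiency, I would note that the only nontrivial step is the change of basis $\cH \to \bigoplus_\lambda V_\lambda \ot M_\lambda$, which is by definition the Fourier transform of the $G$-action. After applying it, one discards and prepares registers (with appropriate padding for varying dimensions), and then applies the inverse transform on both factors.

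The main obstacle I anticipate is bookkeeping rather than conceptual. One must check that the transposes and the conjugate representation on $\tilde{\cH}$ are consistent with the chosen bases, so that $\ket\Gamma$ really splits as $\bigoplus_\lambda \ket{\Gamma_{V_\lambda}} \ot \ket{\Gamma_{M_\lambda}}$. One must also check that the dual irreducibles remain pairwise inequivalent, so that all cross terms vanish; this holds because dualization is a bijection on irreducible representations. I would handle the first point by invoking the $U \ot \bar U$ invariance of $\ket\Gamma$ to reduce to the adapted basis from the start.
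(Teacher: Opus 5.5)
Your proposal is correct and is essentially the paper's proof. The paper takes $\cA = G'$, decomposes $\cH$ and $\tilde\cH$ via $\Upsilon$ and $\bar\Upsilon$ so that $\ket\Gamma$ splits blockwise, and evaluates the twirl as the conditional expectation $\E_{\cA\tran}$ (measure $\lambda$, then depolarize $\tilde\cL_\lambda$), which is your Schur-orthogonality step; it then defines $\cP_G$ by the same measure, discard and prepare recipe.

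One minor imprecision: $g\mapsto\pi_\lambda(g)\tran$ is an anti-homomorphism, not a representation. This is harmless, because substituting $g\to g^{-1}$ turns the twirl into one over the conjugate representation $g\mapsto\overline{\pi_\lambda(g)}$.
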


\noindent
\Cref{thm:main simplified} is a simplified version of \cref{thm:main technical}, which gives a precise description of the channel and is naturally phrased in the language of $*$-algebras.

Here we sketch the construction; details and a full proof are deferred to \cref{sec:random purification general}.
Let
\begin{align*}
    \cH \cong \bigoplus_{\lambda \in \Lambda} \cL_\lambda \ot \cR_\lambda,
\end{align*}
where $\cL_\lambda $ is an irreducible representation of $G$ and $\cR_\lambda$ is its multiplicity space. If $\rho \in \C G$, then it has the form
\begin{align*}
    \rho \cong \bigoplus_{\lambda \in \Lambda} \rho_{\cL_{\lambda}}\ot\frac{\id_{\cR_\lambda}}{\dim\cR_{\lambda}} .
\end{align*}
If $\tilde{\cH}$ is a copy of $\cH$, and we rearrange
\begin{align*}
    \cH \otimes \tilde{\cH} &\cong \left(\bigoplus_{\lambda \in \Lambda} \cL_\lambda \ot \cR_\lambda\right) \otimes \left(\bigoplus_{\mu \in \Lambda} \tilde{\cL}_{\mu} \ot \tilde{\cR}_{\mu}\right)\\
    &\cong \bigoplus_{\lambda,\mu \in \Lambda}\left(\cL_\lambda \otimes  \tilde{\cL}_{\mu} \right)\ot \left( \cR_\lambda\ot \tilde{\cR}_{\mu}\right),
\end{align*}
then the standard purification of $\rho$ will be supported on $\bigoplus_{\lambda \in \Lambda}\left(\cL_\lambda \otimes  \tilde{\cL}_{\lambda} \right)\ot \left( \cR_\lambda\ot \tilde{\cR}_{{\lambda}}\right)\subseteq \cH\otimes \tilde\cH$ (so $\lambda = \mu$) and will take the form
\begin{align*}
    \psi_{\rho} = \bigoplus_{\lambda \in \Lambda} (\psi_{\rho_{\lambda}})_{\cL_{\lambda}\tilde{\cL}_{\lambda}}\ot\frac{\ketbra{\Gamma}{\Gamma}_{\cR_\lambda\tilde{\cR}_{\lambda}} }{\dim\cR_{\lambda}},
\end{align*}
where $\frac{\ketbra{\Gamma}{\Gamma}_{\cR_\lambda\tilde{\cR}_{\lambda}} }{\dim\cR_{\lambda}}$ is a maximally entangled state on $\left(\cR_\lambda \ot \tilde{\cR}_\lambda\right)$ and $(\psi_{\rho_{\lambda}})_{\cL_{\lambda}\tilde{\cL}_{\lambda}}$ is an unnormalized purification of $\rho_{\cL_{\lambda}}=\Tr_{\tilde{\cL}_{\lambda}}[(\psi_{\rho_{\lambda}})_{\cL_{\lambda}\tilde{\cL}_{\lambda}}]$.
The twirling map acting on $\tilde{\cH}$ will correspond to depolarizing channels on the $\tilde \cL_\lambda$, and hence
\begin{align*}
    \int_G (I \ot g\tran) \psi^\text{std}_\rho (I \ot \bar g) \, \dd g= \bigoplus_{\lambda \in \Lambda} \rho_{\cL_\lambda}\otimes \frac{I_{\tilde{\cL}_{\lambda}}}{\dim \cL_{\lambda}}\ot \frac{\ketbra{\Gamma}{\Gamma}_{\cR_\lambda\tilde{\cR}_{\lambda}} }{\dim\cR_{\lambda}}.
\end{align*}
It is now clear that the right-hand side can also be implemented as a quantum channel $\cP_G$ acting on $\rho$. 

Notably, the channel $\cP_G$ maps arbitrary states into the symmetric subspace for the commutant algebra~$G'$.
It can be interpreted as the Petz recovery map for the partial trace and the maximally mixed state on this symmetric subspace.

It is easy to see that the random purification channel of~\cite{tang2025conjugate,pelecanos2025,girardi2025random} in \cref{eq:random purification intro} is a special case of the above theorem.
Indeed, let $\cH = \tilde{\cH} = (\C^d)^{\ot n}$, with the computational basis. 
If~$\rho = \sigma^{\ot n}$ is an IID state, then~$\rho \in \C \{ U^{\ot n} : U \in \U(d) \}$.
This follows from Schur--Weyl duality, but also from the following easier argument (see, e.g., \cite{symqi}). As a vector space, $\C \{ U^{\ot n} : U \in \U(d) \}$ contains arbitrary limits, hence in particular the Lie algebra action, therefore also its complexification, and thus also $g^{\ot n}$ for $g \in \overline{\GL(d)} = \C^{d \times d}$.
Because the unitary group is closed under transposition, the standard purification of an $n$-th tensor power is the $n$-th tensor power of the standard purification, and purifications are unique up to unitaries, we obtain the following as a direct consequence of \cref{thm:main simplified}:

\begin{cor}[\cite{tang2025conjugate,pelecanos2025,girardi2025random}]\label{cor:og channel}
There is a channel~$\cP^\text{Purify}_{n,d}$ from~$(\C^d)^{\ot n}$ to the symmetric subspace of~$(\C^d \ot \C^d)^{\ot n}$ such~that
\begin{align*}
    \cP^\text{Purify}_{n,d}[\sigma^{\ot n}]
= \int_{\U(d)} (I \ot U^{\ot n}) \psi^{\text{std}}_{\sigma^{\ot n}} (I \ot U^{\dagger,\ot n}) \ \dd U
= \Ex_{\psi_\sigma \text{ purification of } \sigma } \psi_\sigma^{\ot n}
\end{align*}
The first equation holds not just IID states~$\sigma^{\ot n}$, but for arbitrary permutation-invariant states~$\rho$.
\end{cor}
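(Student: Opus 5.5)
We apply \cref{thm:main simplified} with $\cH=\tilde\cH=(\C^d)^{\ot n}$, the computational basis, and $G=\{U^{\ot n}:U\in\U(d)\}$. This is a closed subgroup of $\U(\cH)$, being the continuous image of a compact group. The proof then has three parts: check the hypothesis $\rho\in\C G$, identify the twirl in \eqref{eq:twirl intro} with the stated integral, and rewrite that integral as an average over purifications. We also locate the output of $\cP_G$ in the symmetric subspace.

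\emph{Hypothesis.} We need $\rho\in\C G$ for every permutation-invariant $\rho$. The span $\C G$ is a finite-dimensional subspace, hence closed. Differentiating $t\mapsto (e^{itX})^{\ot n}$ at $t=0$ shows that $\C G$ contains $\sum_j I\ot\cdots\ot X\ot\cdots\ot I$ for Hermitian $X$. By complex linearity it contains this operator for all $X\in\C^{d\times d}$. Since the span is an algebra (as $G$ is a group), we get $g^{\ot n}$ for $g\in\GL(d)$, and by density and closedness $A^{\ot n}$ for every $A\in\C^{d\times d}$. Operators of the form $A^{\ot n}$ span the symmetric tensors in $\Lin(\C^d)^{\ot n}$ (polarization), and these are exactly the permutation-invariant operators. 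Hence $\rho\in\C G$ for all permutation-invariant $\rho$, in particular for $\sigma^{\ot n}$. This proves the last sentence of the corollary.

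\emph{Identifying the twirl.} For $g=U^{\ot n}$ we have $g\tran=(U\tran)^{\ot n}$ and $\bar g=(\bar U)^{\ot n}$. The map $U\mapsto \bar U$ preserves Haar measure on $\U(d)$, so reparametrizing $V=U\tran$ turns \eqref{eq:twirl intro} into $\int (I\ot V^{\ot n})\psi^{\text{std}}_\rho (I\ot V^{\dagger,\ot n})\,\dd V$. This is the first equality.

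\emph{Average over purifications.} With respect to the identification $\cH\ot\tilde\cH\cong(\C^d\ot\C^d)^{\ot n}$ pairing each copy with its tilde copy, $\ket\Gamma_{(\C^d)^{\ot n}}=\ket{\Gamma_d}^{\ot n}$ and $\sqrt{\sigma^{\ot n}}=\sqrt\sigma^{\ot n}$. It follows that $\psi^{\text{std}}_{\sigma^{\ot n}}=(\psi^{\text{std}}_\sigma)^{\ot n}$, and the integrand becomes $((I\ot V)\psi^{\text{std}}_\sigma(I\ot V^\dagger))^{\ot n}$. Every purification of $\sigma$ on $\C^d\ot\C^d$ is $(I\ot V)\ket{\psi^{\text{std}}_\sigma}$ for some unitary $V$, by uniqueness of purifications up to unitaries on the purifying system. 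We therefore interpret ``$\Ex_{\psi_\sigma}$'' as the distribution induced by Haar-random $V$, which gives the second equality.

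\emph{Symmetric subspace.} The output lies in the symmetric subspace of $(\C^d\ot\C^d)^{\ot n}$ for every input. By the remark after \cref{thm:main simplified}, $\cP_G$ maps into the symmetric subspace for the commutant $G'$. By Schur--Weyl duality, $G'$ is the span of the permutations of the $n$ tensor factors. We then need to check that $G'$-symmetry on $\cH\ot\tilde\cH$ means invariance under simultaneous permutation of the pairs $(\C^d\ot\C^d)$, which is the step most likely to need care. One can check this directly from the explicit form of $\cP_G$ in \cref{thm:main technical}: its output is $\bigoplus_\lambda \rho_{\cL_\lambda}\ot I/\dim\cL_\lambda\ot\Gamma_{\cR_\lambda}$. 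Each term is invariant under $\pi\ot\bar\pi=\pi\ot\pi$ for permutations $\pi$, because $\Gamma_{\cR_\lambda}$ is invariant under $\pi\ot\bar\pi$ and $G$ commutes with permutations. Operators on the permutation-invariant subspace with this invariance are exactly those supported on the symmetric subspace of the paired system. This is where the identification of $\cR_\lambda$ with the $S_n$-irreps must be used, and it is where I would spend the most effort.
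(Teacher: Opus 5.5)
Your proposal is correct and follows essentially the paper's route: apply \cref{thm:main simplified} to $G=\{U^{\ot n}:U\in\U(d)\}$, place the input in $\C G$ via the Lie-algebra/complexification argument, and then use transpose-invariance of the Haar measure, $\psi^{\text{std}}_{\sigma^{\ot n}}=(\psi^{\text{std}}_\sigma)^{\ot n}$, and uniqueness of purifications up to unitaries; your polarization step only makes explicit the half of Schur--Weyl duality that the paper cites for general permutation-invariant $\rho$. The step you flag as delicate is immediate: each $R_\pi$ commutes with every $U^{\ot n}$, so it is a unitary in $G'$, and item~2 of \cref{thm:main technical} already says every vector in the output's support is fixed by $R_\pi\ot\bar R_\pi=R_\pi\ot R_\pi$, i.e.\ by the permutation of the pairs $\C^d\ot\C^d$, so no identification of the $\cR_\lambda$ with $S_n$-irreps is needed (and it is this fixedness of vectors, not mere conjugation-invariance of the state, that gives support in the symmetric subspace).
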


The fact that this holds for arbitrary permutation-invariant states is due to Schur--Weyl duality, which asserts that $\C \{ U^{\ot n} : U \in \U(d) \}$ is the commutant of the permutation operators. Moreover, the channel can be efficiently implemented by means of the quantum Schur transform~\cite{bacon2006efficient,burchardt2025high}.

The usefulness of the random purification channel in quantum tomography is that, by turning IID copies of a mixed state into IID copies of one of its purifications (albeit randomly chosen), it allows us to reduce the problem of mixed state learning to that of pure state learning.
This fact does not have a general analogue for arbitrary symmetries, but it is adequately reproduced in the specific cases of fermionic and bosonic Gaussian states. The reason for this is to be found in dualities that characterize the commutant of Gaussian unitaries, together with the fact that IID pure Gaussian states are invariant under the action of the commutant.

\section{Gaussian states and the church of the rotation-invariant subspace}\label{sec:church rotation invariant subspace}
Bosonic and fermionic quantum systems are described in terms of creation and annihilation operators $a_i^\dagger$ and $a_i$, for $i = 1,\dots,m$, where $m$ is the number of modes.
The associated Hilbert spaces are the symmetric and antisymmetric Fock spaces, which are isomorphic to $L^2(\R^m)$ and $(\C^2)^{\ot m}$, respectively.
A crucial role is played by Hamiltonians that are quadratic in the creation and annihilation operators.
Pure Gaussian states are ground states of such Hamiltonians, mixed Gaussian states are thermal states of quadratic Hamiltonians (and, in the bosonic case, limits thereof), and Gaussian unitaries are generated by quadratic Hamiltonians.
Gaussian states are fully determined by their mean vector and covariance matrix; detailed definitions are given in \cref{sec:fermions,sec:bosons}.

The random purification channel combines two important principles: the ``church of the larger Hilbert space''~\cite{Gottesman2000}, which replaces mixed states by purifications, and the ``church of the symmetric subspace''~\cite{harrow2013church}, which exploits the fact that $n$ identical copies of a pure state live in a small, highly symmetric subspace.
For arbitrary pure states this subspace is
\begin{align*}
    \Sym^n(\C^d)=\Span\{\ket{\psi}^{\ot n}:\ket{\psi}\in\C^d\},
\end{align*}
and it equals the subspace invariant under permutations of the $n$ copies.
It yields the identity
\begin{align}\label{eq:hayashi intro}
    \Pi_{n,d} = {n + d - 1 \choose n} \int \proj{\psi}^{\ot n} \dd \psi
\end{align}
where $\Pi_{n,d}$ is the projection onto the symmetric subspace, and the integral is the Haar integral over states in $\C^d$. This leads to the definition of the Hayashi measurement.
More generally, Schur-Weyl duality states that the actions of $\U(d)$ through $U^{\ot n}$, and $S_n$ through permutation of tensor factors, span each others commutants.
For this work, what is relevant is the tensor product action of \emph{Gaussian unitaries}.
Since this is strict subgroup of unitaries, we should find a \emph{larger} commutant, and indeed it turns out that it is spanned by an action that rotates modes between the copies.
The role of the symmetric subspace is played by a \emph{rotation-invariant subspace}: the span of tensor powers of pure Gaussian states.

The relevant dualities are analogues of Schur--Weyl duality.
The Gaussian unitaries are projective representations of $\SO(2m,\R)$ in the fermionic case and of $\Sp(2m,\R)$ in the bosonic case; equivalently, one can work with the double covers $\Spin(2m,\R)$ and $\Mp(2m,\R)$.
In the following, we will write $\Sp(2m)$, $\Mp(2m)$, $\SO(2m)$, and $\O(n)$ for these real Lie groups.
In the bosonic case, including displacements leads to the affine symplectic group $\ASp(2m)$, but the rotation-invariant subspace below concerns first the mean-zero orbit.
The associated dualities are Howe duality~\cite{howe1989remarks,howe1995perspectives} and Kashiwara--Vergne duality~\cite{saito1972representations,gelbart1973holomorphic,kashiwara1978segal}, see also~\cite{girardi2025gaussian} for a previous discussion in the context of quantum information theory.
For our purposes, the full decompositions are not needed, and we mostly use the summands  corresponding to the span of tensor products of Gaussian states (i.e. the rotation-invariant subspaces). We nevertheless explain these decompositions for context. 

\subsection{Fermions: Howe duality}
For $n$ copies of an $m$-mode fermionic system, write $c_{a,r}$ for the Majorana operator with mode index $a\in[2m]$ and copy index $r\in[n]$.
The diagonal action of $R\in\SO(2m)$ is
\begin{align*}
    c_{a,r}\mapsto \sum_{b=1}^{2m} R_{ba} c_{b,r},
\end{align*}
whereas the copy rotations $O\in\O(n)$ act as
\begin{align*}
    c_{a,r}\mapsto \sum_{s=1}^{n} O_{sr} c_{a,s}.
\end{align*}
These two actions commute (up to projective phases, or exactly after passing to the corresponding double covers).
Howe duality for this representation says that the two actions span each other's commutants.
Equivalently, there is a decomposition
\begin{align}\label{eq:howe intro}
    \cH^{\ot n}
    \cong
    \bigoplus_{\lambda\in\Lambda_{m,n}} \cV_\lambda\ot\cW_\lambda,
\end{align}
where the $\cV_\lambda$ are irreducible representations of $\Spin(2m)$ and the $\cW_\lambda$ are irreducible representations of $\Pin(n)$.
Recent quantum-information discussions of this fermionic commutant can be found in~\cite{sierant2026theory,braccia2026commutant}. Note that when identifying $\cH^{\ot n}$ with $n \times m$ modes, one has to take into account an isomorphism between standard and fermionic tensor products.

Let $\cG_m^+$ and $\cG_m^-$ denote the even- and odd-parity pure Gaussian states, respectively.
The two one-dimensional characters of $\O(n)$ relevant here are the trivial character and the determinant character.
With the convention $\chi_+(O)=1$ and $\chi_-(O)=\det O$, the two rotation-isotypic subspaces are
\begin{align}\label{eq:fermion rotation invariant subspace}
    \cV_{n,m}^{\pm}
    \coloneqq
    \{\ket v\in \cH_m^{\ot n}: R_O\ket v=\chi_\pm(O)\ket v\text{ for all }O\in\O(n)\}.
\end{align}
Equivalently, if one restricts from $\O(n)$ to $\SO(n)$, both spaces are invariant subspaces.
Howe duality identifies these spaces with the spans of tensor powers of fixed-parity pure Gaussian states:
\begin{align}\label{eq:fermion gaussian span}
    \cV_{n,m}^{\pm}
    =
    \Span\{\ket{\psi}^{\ot n}:\ket{\psi}\in\cG_m^{\pm}\}
\end{align}
where the sign $\pm$ refers to the parity of the single-copy Gaussian state.

Let $\Pi_{n,m}^{\pm}$ be the orthogonal projection onto $\cV_{n,m}^{\pm}$ and let $d_{n,m}=\dim\cV_{n,m}^{\pm}$, the two dimensions being equal.
Since each $\cV_{n,m}^{\pm}$ is an irreducible $\Spin(2m)$-module, Schur's lemma gives the Gaussian analogue of \cref{eq:hayashi intro}:
\begin{align}\label{eq:rot inv subspace integral fermions}
    \Pi_{n,m}^{\pm}
    =
    d_{n,m}\int_{\cG_m^{\pm}} \proj{\psi}^{\ot n}\,\dd\psi,
\end{align}
where $\dd\psi$ is the Gaussian-unitarily invariant probability measure on the fixed-parity orbit.
This identity is the starting point for the fermionic Hayashi measurement used in \cref{sec:result_sample_optimal_tomography_of_fermionic_gaussian_states}.

\subsection{Bosons: Kashiwara--Vergne duality}\label{sec:intro_KV}
For bosons, let $\cH=L^2(\R^m)$ and first restrict attention to mean-zero pure Gaussian states, i.e. the orbit of the vacuum under $\Mp(2m)$.
On $n$ copies we identify
\begin{align*}
    \cH^{\ot n}\cong L^2(\R^{m\times n}),
\end{align*}
whose phase space is $\R^{2m\times n}\cong \R^{2m\times}\otimes \R^{n}$.
The tensor-power action $U_S^{\ot n}$ of $S\in\Sp(2m)$ corresponds to left multiplication $x\mapsto (S\otimes I_n)x$ on this phase space, while $O\in\O(n)$ acts by right multiplication $x\mapsto (I_{2m}\otimes O)x$; denote the corresponding unitary by $R_O$.
These actions commute. Kashiwara--Vergne duality 
gives the multiplicity-free decomposition
\begin{align}\label{eq:kv intro}
    \cH^{\ot n}
    \cong
    \bigoplus_{\lambda\in\Sigma_{m,n}} \cV_\lambda\ot\cW_\lambda,
\end{align}
where the $\cV_\lambda$ are irreducible representations of $\Mp(2m)$ and the $\cW_\lambda$ are irreducible representations of $\O(n)$.
Thus the bosonic analogue of the symmetric subspace is the $\O(n)$-invariant summand
\begin{align}\label{eq:boson rotation invariant subspace}
    \cV_{n,m}
    \coloneqq
    \{\ket v\in\cH^{\ot n}:R_O\ket v=\ket v\text{ for all }O\in\O(n)\}.
\end{align}
Equivalently,
\begin{align}\label{eq:boson gaussian span}
    \cV_{n,m}
    =
    \Span\{\ket{\psi}^{\ot n}:\ket{\psi}\in\cG_m^0\},
\end{align}
where $\cG_m^0$ is the set of mean-zero pure bosonic Gaussian states.
This is an irreducible representation of $\Mp(2m)$.

There are two differences from the finite-dimensional symmetric subspace.
First, $\cV_{n,m}$ is infinite-dimensional.
Second, $\Sp(2m)$ is non-compact, so the Gaussian orbit carries an invariant measure but not a finite Haar probability measure.
Nevertheless, when $n\geq 2m+1$, the representation in \cref{eq:boson rotation invariant subspace} lies in the discrete series and has a formal degree $d_{n,m}$.
With the invariant measure normalized by this formal degree, one has the resolution of the identity
\begin{align}\label{eq:rot inv subspace integral bosons}
    \Pi_{n,m}
    =
    d_{n,m}\int_{\cG_m^0}\proj{\psi}^{\ot n}\,\dd\psi,
\end{align}
where $\Pi_{n,m}$ projects onto $\cV_{n,m}$.
This is the bosonic replacement for the symmetric-subspace identity and underlies the covariant measurement in \cref{sec:result_sample_optimal_tomography_of_bosonic_gaussian_states}.
The non-compactness is also why the random purification channel for bosonic Gaussian states needs a modified construction described in \cref{sec:result_sample_optimal_tomography_of_bosonic_gaussian_states}.

\subsection{Bosons: \texorpdfstring{$\U(n)\times\U(m)$}{U(n) x U(m)} duality}
There is a different duality for bosons where we do get compact groups.
Passive Gaussian unitaries preserve total photon number and form a compact group $\U(m)$ on $m$ modes.
For $n$ copies, the Hilbert space can be viewed as the bosonic Fock space over $\C^m\ot\C^n$, and passive transformations on all $mn$ modes give a representation of $\U(mn)$.
The subgroups $\U(m)$ and $\U(n)$ act on the mode and copy indices, respectively, and commute.
On the $N$-photon sector, compact Howe duality gives a decomposition
\begin{align}\label{eq:unitary intro}
    \Sym^N(\C^m\ot\C^n)
    \cong
    \bigoplus_{\substack{\lambda\vdash N\\ \ell(\lambda)\leq \min\{m,n\}}}
    \cV_\lambda^{(m)}\ot\cW_\lambda^{(n)},
\end{align}
where $\cV_\lambda^{(m)}$ and $\cW_\lambda^{(n)}$ are irreducible polynomial representations of $\U(m)$ and $\U(n)$, respectively~\cite{howe1989remarks,RevModPhys.84.711,rowe_simple_2011}.
Equivalently,
\begin{align*}
    L^2(\R^{mn})
    \cong
    \bigoplus_{N=0}^{\infty}
    \bigoplus_{\substack{\lambda\vdash N\\ \ell(\lambda)\leq \min\{m,n\}}}
    \cV_\lambda^{(m)}\ot\cW_\lambda^{(n)}.
\end{align*}
This is used to derive the bosonic random purification channel for gauge-invariant, or number preserving, Gaussian states, stated in \cref{cor:bosons}.

\section{Result: Sample-optimal tomography of fermionic Gaussian states}\label{sec:result_sample_optimal_tomography_of_fermionic_gaussian_states}
We use the tomography scheme described in \cref{sec:high level} as a template to derive optimal tomography for fermionic Gaussian states.
We explain the tomography procedure here, and leave detailed proofs and calculations to \cref{sec:fermions}.
First of all, using the rotation-invariant subspaces, from \cref{eq:rot inv subspace integral fermions} we derive the Hayashi measurement for Gaussian pure state tomography.
Its performance is determined by ratios of dimensions of the spaces $\cV_{m,n}^{\pm}$.
These spaces are irreducible representations of $\so(2m)$ and their dimensions $d_{n,m}$ are computed using Lie theory.
We show that if $\hat{\psi}$ is the outcome of the Hayashi measurement applied to $n$ copies of a pure Gaussian state $\psi$, the expected overlap satisfies
\begin{align*}
    \Ex \abs{\braket{\psi | \hat{\psi}}}^{2(k - n)} = \frac{d_{n,m}}{d_{k,m}} = \prod_{1 \leq i < j \leq m} \frac{2m + n - (i+j)}{2m + k - (i+j)}.
\end{align*}
From this, by an application of Markov's inequality, we find that $n = \bigO\mleft((m^2 + \log(\delta^{-1}))/ \eps^2 \mright)$ copies suffice for accuracy $\eps$ and probability of failure at most $\delta$.

To deal with mixed states, we specialize the random purification channel of \cref{thm:main simplified} to this setting, to obtain a quantum channel that sends $n$ copies of a fermionic Gaussian state to a random fermionic Gaussian purification:\footnote{Just like \cref{cor:og channel}, which applies to arbitrary permutation-invariant states, \cref{cor:fermions} generalizes to states invariant under an action of the orthogonal group~$\O(n)$, as a consequence of Howe duality.}

\begin{cor}[Fermionic Gaussian random purification]\label{cor:fermions}
For all $m $ and $n$, there is a channel $\cP_{n,m}^\text{Fermi}$ from $(\bigwedge \C^m)^{\ot n}$ to $(\bigwedge \C^{2m})^{\ot n}$ with the following property:
for any fermionic Gaussian state~$\sigma$ on $\bigwedge \C^m$,
\begin{align*}
    \cP_{n,m}^\text{Fermi}[\sigma^{\ot n}]
= \Ex_{\psi_\sigma \text{ Gaussian purification of } \sigma } \psi_\sigma^{\ot n}
= \int_{\SO(2m)} (I \ot U_R^{\ot n}) \psi^{\text{std},\ot n}_\sigma (I \ot U_R^{\dagger,\ot n}) \ \dd R,
\end{align*}
where the expectation is taken over the Gaussian purifications obtained by applying a uniformly random Gaussian unitary~$U_R$ to the standard purification, as in the right-hand side formula.
\end{cor}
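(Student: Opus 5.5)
We apply \cref{thm:main simplified} with $\cH = (\bigwedge\C^m)^{\ot n}$, $\tilde\cH$ a second copy of it, and $G = \{U_R^{\ot n} : R \in \Spin(2m)\}$, the $n$-fold tensor power of the (lifted) Gaussian unitary representation. This $G$ is a compact closed subgroup of $\U(\cH)$. The proof then reduces to three checks:
\begin{itemize}[leftmargin=2em]
\item[(i)] $\sigma^{\ot n} \in \C G$ for every fermionic Gaussian state $\sigma$;
\item[(ii)] the twirl in \cref{eq:twirl intro} over $g\tran$ equals the twirl over $U_R^{\ot n}$ appearing in the statement;
\item[(iii)] the standard purification of $\sigma^{\ot n}$ factorizes into $n$ copies of a Gaussian purification of $\sigma$, living on $(\bigwedge\C^{2m})^{\ot n}$.
\end{itemize}

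\emph{Step (i).} Every fermionic Gaussian state has the form $\sigma = \lim_{\beta}\, e^{-H}/Z$ with $H$ quadratic in the Majoranas. Degenerate cases, such as pure states or states with $\pm1$ eigenvalues of the covariance matrix, arise as such limits. We mimic the argument given before \cref{cor:og channel}. The span $\C G$ is a finite-dimensional space, hence closed, so it contains derivatives of $t\mapsto U_{e^{tA}}^{\ot n}$. These derivatives give the Lie algebra action $\sum_r h_r$ for quadratic $h$. Since $\C G$ is also an algebra, it contains the whole complexified group generated by these elements. In particular it contains $(e^{-H})^{\ot n} = e^{-\sum_r H_r}$ for Hermitian quadratic $H$, because $e^{-H}$ lies in the image of $\Spin(2m,\C)$ (projectively) acting on a single copy. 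Taking limits handles the degenerate Gaussian states.

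A subtlety is the projective phase, together with the fact that $e^{-H}$ is only proportional to a complexified group element. Both are harmless: $\C G$ is a linear span, so scalar factors do not matter. One should also check that the tensor-product identification $(\bigwedge\C^m)^{\ot n}$ with $nm$ modes does not spoil the claim $U^{\ot n}$ for the single-copy representation. Here I would work with the ordinary tensor product, with even Gaussian unitaries, so no sign issues arise.

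\emph{Step (ii).} \Cref{thm:main simplified} gives
\[
\cP[\sigma^{\ot n}]=\int (I\ot \bar g^{\dagger})\,\psi^{\text{std}}_{\sigma^{\ot n}}\,(I\ot \bar g)\,\dd g .
\]
We choose bases in which Majoranas are real or imaginary, e.g.\ the Jordan--Wigner computational basis. In such a basis the complex conjugate $\bar U_R$ is again a Gaussian unitary $U_{R'}$, with $R \mapsto R'$ an automorphism of $\SO(2m)$ (conjugation by a fixed reflection-type matrix). Haar measure is invariant under this automorphism, so the twirl becomes $\int (I\ot U_R^{\ot n})\,\cdot\,(I\ot U_R^{\dagger\ot n})\,\dd R$. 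Moreover the standard purification satisfies $\psi^{\text{std}}_{\sigma^{\ot n}}=(\psi^{\text{std}}_\sigma)^{\ot n}$, since $\sqrt{\sigma^{\ot n}}=\sqrt\sigma^{\ot n}$ and $\Gamma$ factorizes.

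\emph{Step (iii).} We identify $\bigwedge\C^m\ot\bigwedge\C^m\cong\bigwedge\C^{2m}$ via a parity-respecting Jordan--Wigner isomorphism. Under this identification we must verify that $\psi^{\text{std}}_\sigma=(\sqrt\sigma\ot I)\ket\Gamma$ is a pure Gaussian state on $2m$ modes. The state $\ket\Gamma$ is a Gaussian pure state: it is a product of Bell pairs between mode $i$ and mode $\tilde i$, each of which is a fermionic Gaussian state, up to Jordan--Wigner string signs that are implemented by Gaussian unitaries. The operator $\sqrt\sigma\ot I$ is the exponential of a quadratic operator, or a limit of such. Hence $\psi^{\text{std}}_\sigma$ is, after normalization, a Gaussian pure state.

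Applying $I\ot U_R$ preserves Gaussianity and purification. Conversely, all Gaussian purifications of $\sigma$ on $\tilde\cH$ are related by Gaussian unitaries on the purifying system. This follows because both purifications have $2m\times2m$ covariance matrices with the same $\sigma$-block, and the purifying blocks can be rotated into each other by $\O(2m)$. Parity restricts this to $\SO(2m)$, or more precisely to the correct parity class. This justifies interpreting the integral as the expectation over Gaussian purifications.

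I expect the main obstacle to be the careful bookkeeping of fermionic tensor products and signs in steps (ii) and (iii): that conjugation maps Gaussian unitaries to Gaussian unitaries, and that $\Gamma$ is Gaussian. Step (i) is essentially the same argument as before \cref{cor:og channel}.
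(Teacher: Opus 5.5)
Your steps (i) and (ii) are fine as far as they go; step (i) is exactly the paper's \cref{lem:gaussian state in algebra}. Step (iii), however, rests on a false claim: $\ket\Gamma$ is \emph{not} Gaussian ``up to Jordan--Wigner string signs implemented by Gaussian unitaries''.

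For $I\ot U_R$ to be Gaussian on $2m$ modes you need the block identification $\tilde c_j=c_j\ot I$, $\tilde c_{2m+j}=P\ot c_j$. Interleaving the modes would make $\Gamma$ Gaussian, but then $I\ot U_R$ is no longer Gaussian. In the block identification, the product of fermionic Bell pairs is $\prod_{i=1}^m(1+a_i^\dagger a_{m+i}^\dagger)\ket0=\sum_x(-1)^{\binom{\abs{x}}{2}}\ket{xx}$, not $\sum_x\ket{xx}$.

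Moreover, $\sum_x\ket{xx}$ is not Gaussian for $m\ge 2$. For $m=2$ it equals $\ket{0000}+\ket{0101}+\ket{1010}+\ket{1111}$. An even Gaussian state with nonzero vacuum amplitude has the form $\exp(\tfrac12\sum_{i,j}A_{ij}a_i^\dagger a_j^\dagger)\ket0$, and its amplitude on $\ket{1111}=a_1^\dagger a_2^\dagger a_3^\dagger a_4^\dagger\ket0$ is $\operatorname{Pf}(A)$. The two-particle amplitudes force $A_{13}=A_{24}=1$ and all other entries zero, so this amplitude must be $-1$, not $+1$.

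Gaussian unitaries map Gaussian states to Gaussian states, so no Gaussian unitary can repair this sign. The diagonal unitary $I\ot D$ with $D=\sum_x(-1)^{\binom{\abs{x}}{2}}\proj{x}$ does repair it, and for that reason it is not Gaussian. Concretely, in the computational basis you fixed in step (ii), the Gaussian state $\sigma=I/2^m$ has standard purification $\Gamma/2^{m/2}$. This is not Gaussian, so the middle expression of the corollary is false in your setup.

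The missing idea is the one the paper uses: take the basis of the purifying system to be $\ket{\tilde x}=(-1)^{\binom{\abs{x}}{2}}\ket{x}$, which is the paper's $(-1)^{\gamma(x)}$. This turns $\ket\Gamma$ into the Bell-pair product. With that choice, your argument that $\sqrt\sigma\ot I$ is (a limit of) an exponential of a quadratic operator is a legitimate alternative to the paper's route. The paper instead diagonalizes $\sigma$ by a Gaussian unitary and builds the purification by explicit rotations in \cref{lem:gaussian diagonal purification}.

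You must then redo step (ii). Transposes and conjugates are unchanged, because the signs are real. But the identification of operators on $\cH$ with operators on $\tilde\cH$ becomes $c\mapsto DcD^\dagger$, so the twirl runs over $(D\bar U_R^{\dagger}D^\dagger)^{\ot n}$ rather than $(\bar U_R^\dagger)^{\ot n}$. These are still Gaussian unitaries. Indeed $Da_j^\dagger D^\dagger=a_j^\dagger P$ and $Da_jD^\dagger=-a_jP$, so conjugation by $D$ fixes $a_i^\dagger a_j$ and flips the sign of $a_i^\dagger a_j^\dagger$ and $a_ia_j$. It therefore maps quadratic Hamiltonians to quadratic Hamiltonians, and Haar invariance concludes the argument as you intended.
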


We use \cref{cor:fermions} to construct a tomography protocol for fermionic Gaussian states, by composing the Hayashi measurement for pure state tomography with the random purification channel.

\begin{algorithm}
\caption{Learning fermionic Gaussian states}\label{alg:learn fermionic gaussian}
\begin{algorithmic}[1]
\Require $n$ copies of unknown Gaussian state $\sigma$.
\Ensure Estimate $\hat{\sigma}$ of $\sigma$.

\State Apply the fermionic Gaussian random purification channel to
$\sigma^{\otimes n}\rightarrow \cP_{n,m}^\text{Fermi}[\sigma^{\ot n}]$
\State Perform the $\SO(2m)$-Hayashi measurement on the output, $\rightarrow$ estimate $\hat{\psi}_{AA'}$
\State \Return $\hat \sigma = \Tr_{A'}[\hat{\psi}_{AA'}]$
\end{algorithmic}
\end{algorithm}

This yields the following result.

\begin{thm}[Fermionic Gaussian tomography, upper bound]\label{thm:tomo fermion}
There exists a tomography protocol for $m$-mode fermionic Gaussian states~$\sigma$ that outputs an estimate~$\hat\sigma$ with purified distance at most~$P(\sigma,\hat \sigma)\leq \eps$ with probability at least $1-\delta$, using $\bigO((m^2 + \log(\delta^{-1}))/\eps^2)$ copies of~$\sigma$.
\end{thm}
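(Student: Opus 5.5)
We follow \cref{alg:learn fermionic gaussian}: reduce mixed-state learning to pure-state learning with \cref{cor:fermions}, then analyze the $\SO(2m)$-Hayashi measurement on $2m$-mode pure Gaussian states. The measurement on $\cV^{\pm}_{n,2m}$ has POVM elements
\[
d_{n,2m}\,\proj{\hat\psi}^{\ot n}\,\dd\hat\psi,
\]
one family for each parity sector. These form a valid POVM on the support of $\cP^\text{Fermi}_{n,m}[\sigma^{\ot n}]$ by \cref{eq:rot inv subspace integral fermions}. To cover the orthogonal complement, we add a dummy outcome; it never occurs on the relevant inputs. A parity measurement first selects the sector, since a purification $\psi_\sigma$ of fixed parity yields $\psi_\sigma^{\ot n}\in\cV^{\pm}_{n,2m}$.

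\textbf{Pure-state guarantee.} For a fixed pure Gaussian $\psi$ on $2m$ modes, the outcome density is
\[
p(\hat\psi)=d_{n,2m}\,\abs{\braket{\psi|\hat\psi}}^{2n}.
\]
Hence, for $k\ge n$,
\[
\Ex\abs{\braket{\psi|\hat\psi}}^{2(k-n)}=d_{n,2m}\int\abs{\braket{\psi|\hat\psi}}^{2k}\dd\hat\psi=\frac{d_{n,2m}}{d_{k,2m}},
\]
where the last step applies \cref{eq:rot inv subspace integral fermions} at level $k$ and takes the expectation in $\psi^{\ot k}$. The dimensions come from the Weyl dimension formula for the highest weight $(n/2,\dots,n/2)$ of $\so(4m)$. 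This gives the product $\prod_{i<j\le 2m}\frac{4m+n-(i+j)}{4m+k-(i+j)}$, which has $M=\binom{2m}{2}=\Theta(m^2)$ factors.

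\textbf{Converting the moment to a tail bound.} Let $F=\abs{\braket{\psi|\hat\psi}}^2$ and $t=k-n$. Markov's inequality on $F^{t}$ gives
\[
\Pr[F\le 1-\eps^2]\le (1-\eps^2)^{-t}\frac{d_{n}}{d_{k}}.
\]
Using $4m+n-(i+j)\ge n$ in each factor,
\[
\frac{d_n}{d_k}\le\Bigl(1+\frac{t}{n}\Bigr)^{-M}\le e^{-Mt/(2n)}\quad\text{for } t\le n .
\]
Together this yields $\Pr[F\le1-\eps^2]\le\exp\bigl(t\eps^2\cdot O(1)-Mt/(2n)\bigr)$ when $\eps\le 1/2$. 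We choose $n=C M/\eps^2$, which makes the exponent at most $-c\,tM/n=-c'\,t\eps^2$. Taking $t=n$ gives failure probability $e^{-c''M}$. That is not yet $\delta$ when $\log\delta^{-1}\gg m^2$. In that regime we instead take $n=C(M+\log\delta^{-1})/\eps^2$ and $t=n$, and bound the ratio with the sharper inequality $\log(1+t/n)\ge t/(2n)$. This gives $\exp(-M/2+2\eps^2 n\cdot\ldots)$, which is not sufficient alone.

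The main obstacle is therefore the $\log(1/\delta)$ dependence. Each factor $(1+t/n_{ij})^{-1}$ with $n_{ij}=n+4m-(i+j)$ can at best shrink the bound by a constant, so Markov on a single moment gives failure $e^{-\Theta(M)}$, not arbitrary $\delta$. To handle large $\log\delta^{-1}$, we would use a median-of-means-style boosting step. We run the protocol $r=O(\log\delta^{-1})$ times on batches of $O(m^2/\eps^2)$ copies, each succeeding (fidelity $\ge1-\eps^2/9$) with probability $\ge 2/3$. We then output an estimate $\hat\sigma_i$ that is within purified distance $2\eps/3$ of more than half the others; the triangle inequality for the purified distance makes this within $\eps$ of $\sigma$. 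The total cost is $O((m^2+\log\delta^{-1})/\eps^2)$ copies. If a direct bound is preferred, one can instead optimize $t$ as a large multiple of $n$ and use $\prod(1+t/n_{ij})^{-1}\le (n/(n+t))^{M}$, which is where the real care is needed.

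\textbf{Mixed states.} By \cref{cor:fermions}, $\cP^\text{Fermi}_{n,m}[\sigma^{\ot n}]$ is a mixture of $\psi_\sigma^{\ot n}$ over Gaussian purifications $\psi_\sigma$ on $2m$ modes. The pure-state guarantee holds for each $\psi_\sigma$, hence for the mixture. Monotonicity of fidelity under partial trace then gives $F(\sigma,\Tr_{A'}\hat\psi)^2\ge\abs{\braket{\psi_\sigma|\hat\psi}}^2\ge1-\eps^2$, i.e.\ $P(\sigma,\hat\sigma)\le\eps$. Doubling the number of modes only changes constants.
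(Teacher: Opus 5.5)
There is a genuine gap in your conversion from moments to a tail bound. With $t=k-n\ge 0$ the map $F\mapsto F^t$ is increasing, so Markov's inequality applied to $F^t$ controls $\Pr[F\ge 1-\eps^2]$, not $\Pr[F\le 1-\eps^2]$. Your inequality $\Pr[F\le 1-\eps^2]\le(1-\eps^2)^{-t}d_n/d_k$ is therefore not valid for $t>0$. The correctly oriented version upper-bounds the success probability, which is the mechanism of the lower bound (\cref{thm:tomo fermion lower bound}). This is why your exponent $t\eps^2\cdot O(1)-Mt/(2n)$ becomes negative only when $n$ is \emph{small} compared to $M/\eps^2$, the opposite of what an upper bound needs. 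There is a second sign error: $4m+n-(i+j)\ge n$ gives the lower bound $d_n/d_k\ge(1+t/n)^{-M}$, not an upper bound. Your conclusion that a single moment can only give failure probability $e^{-\Theta(M)}$ is an artifact of these errors. The boosting fix also does not deliver the claimed rate: $O(\log\delta^{-1})$ independent batches of $O(m^2/\eps^2)$ copies cost $O(m^2\log(\delta^{-1})/\eps^2)$ copies, which is multiplicative rather than additive in $\log\delta^{-1}$.

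The missing idea is to use the moment identity with $k<n$, i.e.\ negative moments of the fidelity. Your derivation of $\Ex F^{k-n}=d_{n,2m}/d_{k,2m}$ (with $F=\abs{\braket{\psi|\hat\psi}}^2$) works verbatim for every integer $k\ge 1$, since $\Pi^{\pm}_{k,2m}=d_{k,2m}\int\proj{\phi}^{\ot k}\,\dd\phi$ holds at every level; there is no reason to restrict to $k\ge n$. Take $k=\ceil{n/2}$. Each factor $\frac{4m+n-(i+j)}{4m+k-(i+j)}=1+\frac{n-k}{4m+k-(i+j)}$ is at most $2$, because $n-k\le k<4m+k-(i+j)$. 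Hence $\Ex F^{-\floor{n/2}}\le 2^{M}$. Markov's inequality applied to the decreasing function $F^{-\floor{n/2}}$ then gives
\[
\Pr[F\le 1-\eps^2]\le(1-\eps^2)^{\floor{n/2}}\,2^{M}\le\exp\bigl(-\eps^2\floor{n/2}+M\log 2\bigr),
\]
which is at most $\delta$ for $n=\bigO((m^2+\log\delta^{-1})/\eps^2)$, with no boosting. This is exactly the paper's argument in \cref{cor:pure fermion tomography}, applied on $2m$ modes. Your reduction from mixed to pure states via \cref{cor:fermions}, together with monotonicity of fidelity under partial trace, is correct and coincides with the paper's.
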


This sample complexity holds for both pure and mixed states. We briefly comment on the reason for this. For general states, pure state tomography scales with the dimension of the Hilbert space $d$. Taking a purification increases the dimension to $d^2$. In the Gaussian case, pure state tomography scales with $m^2$, where $m$ is the number of modes. The number of modes of the purification is $2m$, so this only gives a multiplicative constant overhead.

We also show that this sample complexity is optimal, already in the case of pure states.

\begin{thm}[Fermionic Gaussian tomography, lower bound]\label{thm:tomo fermion lower bound}
Suppose there exists a tomography protocol for $m$-mode pure fermionic Gaussian states~$\psi$ that outputs an estimate~$\hat\psi$ with purified distance at most~$P(\psi,\hat \psi)\leq \eps$  with probability at least $1-\delta$ for any pure Gaussian state. Then this protocol needs to use at least $\Omega((m^2 + \log(\delta^{-1}))/\eps^2)$ copies of~$\psi$.
\end{thm}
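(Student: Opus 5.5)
The lower bound has two terms, and I would prove them separately: an $\Omega(m^2/\eps^2)$ term from a packing argument, and an $\Omega(\log(\delta^{-1})/\eps^2)$ term from a two-hypothesis argument. Their maximum is $\Omega((m^2+\log(\delta^{-1}))/\eps^2)$.

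\textbf{The $\log(\delta^{-1})$ term.} Take two pure Gaussian states $\psi_0,\psi_1$ with $P(\psi_0,\psi_1)=3\eps$. For example, apply a small single-mode rotation $e^{\theta c_1c_2/2}$ to a Gaussian state that is not an eigenstate of $ic_1c_2$, for instance after a beam-splitter mixing modes $1$ and $2$.
\begin{itemize}
\item The protocol distinguishes $\psi_0^{\ot n}$ from $\psi_1^{\ot n}$ with error at most $\delta$.
\item By Helstrom, $\delta \ge \tfrac12\bigl(1-\sqrt{1-|\langle\psi_0|\psi_1\rangle|^{2n}}\bigr)\ge \tfrac14 (1-9\eps^2)^{n}$.
\item Hence $n=\Omega(\log(\delta^{-1})/\eps^2)$.
\end{itemize}

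\textbf{The $m^2$ term: a packing.} This is the main part. Fix the vacuum $\ket{0}$ and consider the family $\ket{\psi_A}=U_{R_A}\ket 0$. Here $R_A=\exp(\eta K_A)$, and $K_A$ ranges over real antisymmetric $2m\times 2m$ generators of the form $\bigl(\begin{smallmatrix}0 & A\\ -A\tran & 0\end{smallmatrix}\bigr)$ in a basis where they generate pairing terms $\sum_{i<j}A_{ij}(a_i^\dagger a_j^\dagger - a_ja_i)$. I take $A$ antisymmetric, $m\times m$, with $\pm1$ entries in the upper triangle.

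These states are explicit: $\ket{\psi_A}\propto \exp(\tanh$-type$\,\sum_{i<j} Z_{ij}a_i^\dagger a_j^\dagger)\ket 0$, with $Z$ a function of $\eta A$. The overlap is the Pfaffian/determinant formula
\[
|\langle\psi_A|\psi_B\rangle|^2=\bigl|\det(I+Z_A^\dagger Z_B)\bigr|\,\bigl[\det(I+Z_A^\dagger Z_A)\det(I+Z_B^\dagger Z_B)\bigr]^{-1/2}.
\]
To second order in $\eta$, $1-|\langle\psi_A|\psi_B\rangle|^2 \approx c\,\eta^2\|A-B\|_F^2$.

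Next, choose $A$ from a Gilbert--Varshamov code of size $2^{\Omega(m^2)}$ whose pairwise Hamming distance is $\Omega(m^2)$. I expect the scaling to be the main obstacle. Taking $\eta\sim\eps/m$ gives pairwise purified distance $\Theta(\eps)$, above $2\eps$ after constant adjustment. It also gives $\eta\|A\|_{op}=O(\eps/\sqrt m)$ for random signs, so the perturbative regime is controlled uniformly. One must check that higher-order terms do not spoil the lower bound on distances; I would handle this with operator-norm bounds on $Z_A=\tan(\eta A)$-like expressions.

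\textbf{Converting the packing into a sample bound via Holevo.} With success probability $\ge 2/3$, the protocol identifies $A$, so by Fano's inequality the Holevo information of the ensemble $\{\psi_A^{\ot n}\}$ is $\Omega(m^2)$. A naive entropy bound only gives $\log$ of the dimension of $\cV^{+}_{n,m}$, which is $O(m^2\log n)$ and yields only a log-loss bound. To remove the logarithm, I would bound the Holevo quantity by the entropy of the average state $\bar\rho=\Ex_A\psi_A^{\ot n}$ and compare it to a Gaussian-smeared version. The average $\bar\rho$ is close to $\int \exp(\eta\sum g_{ij}(\cdot))$ with Gaussian $g$. Its $n$-copy entropy is at most that of the state obtained by a Gaussian ensemble of generators with variance $\eta^2$, and by a second-moment bound on the number operator $N_{\rm pair}$ per pair, this equals $\sum_{i<j} h(\text{mixing}\sim n\eta^2)=O(m^2\,n\eta^2)=O(n\eps^2)$. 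Fano then gives $n\eps^2=\Omega(m^2)$.

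If that entropy estimate proves delicate, I would fall back to a Bures/KL chain rule. Here $\chi \le \Ex_{A}D(\psi_A^{\ot n}\|\sigma^{\ot n})$ is infinite for pure states, so a smoothed reference is needed. Alternatively, I would use the known fidelity-based lower bound for pure-state estimation: $\sum_A |\langle\psi_A|\psi_B\rangle|^{2n}$ controls success probability via the pretty-good measurement Gram matrix. This gives success $\le$ (largest Gram eigenvalue)$\cdot$(#states)$^{-1}$. I would bound that eigenvalue using the per-entry product structure $|\langle\psi_A|\psi_B\rangle|^{2n}\approx\exp(-cn\eta^2\|A-B\|_F^2)$, which factorizes over the $\binom m2$ entries. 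The eigenvalues of a Hadamard-product Gram matrix on the hypercube then give success $\lesssim \bigl(\tfrac{1+e^{-4cn\eta^2}}{2}\bigr)^{\binom m2}\cdot 2^{\binom m2}/|\mathcal C|$, forcing $n\eta^2=\Omega(1)$, i.e.\ $n=\Omega(m^2/\eps^2)$.
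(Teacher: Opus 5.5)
Your treatment of the $\log(\delta^{-1})$ term is fine and is the same two-hypothesis reduction the paper uses. For the $m^2/\eps^2$ term you take a genuinely different route: a local packing plus Fano/Holevo. The paper instead averages the $2k$-th fidelity moment over the Haar measure on $\cG_m^+$. It uses $\Pi^+_{n+k,m}=d_{n+k,m}\int\proj{\psi}^{\ot(n+k)}\dd\psi$ together with $\Pi^+_{n+k,m}\le I$ to get the bound $d_{n,m}/d_{n+k,m}$, and the Weyl dimension formula with $k\approx 1/(4\eps^2)$ then gives the result directly.

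Your packing construction is sound. Note that $\eta\|A\|_{\mathrm{op}}\le\eta\|A\|_F=O(\eps)$ for every codeword, so the perturbative regime is uniform even without random signs. The gap is in the step that carries all the weight: the entropy bound on $\bar\rho=\Ex_A\psi_A^{\ot n}$.

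\begin{itemize}
\item Your remark on the naive bound is inaccurate. Comparing $\log d_{n,m}\approx\frac{m^2}{2}\log(n/m)$ with Fano's $\Omega(m^2)$ gives only $n=\Omega(m)$, with no $\eps$-dependence at all; it is not a log-loss bound.
\item The claim that $S(\bar\rho)$ is at most the entropy of a Gaussian-smeared ensemble is not a valid inequality.
\item A bound on the mean particle number does not by itself control the entropy. In the full $nm$-mode Fock space, mean particle number $O(n\eps^2)$ only yields $S(\bar\rho)\lesssim n\eps^2\log(m/\eps^2)$. That is exactly the log-loss bound you are trying to avoid.
\item There is no tensor-product structure over pairs $(i,j)$ in the $n$-copy fermionic space that would justify writing the entropy as $\sum_{i<j}h(\cdot)$.
\end{itemize}

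The missing idea is a dimension count for the low-particle sectors of $\cV^+_{n,m}$.

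\begin{itemize}
\item Since $(e^{i\theta N})^{\ot n}$ preserves $\cV^+_{n,m}$, so do the total-particle-number projections.
\item On the dense set of even Gaussian states with nonzero vacuum overlap, $\ket{\psi}^{\ot n}\propto\exp\bigl(\sum_{i<j}Z_{ij}B_{ij}^\dagger\bigr)\ket{0}^{\ot n}$, with mutually commuting pair operators $B_{ij}^\dagger=\sum_{r}a_{i,r}^\dagger a_{j,r}^\dagger$.
\item Hence the $2k$-particle sector of $\cV^+_{n,m}$ has dimension at most $\binom{M+k-1}{k}$, where $M=\binom m2$, exactly as for $k$ bosons in $M$ modes.
\item The Gibbs variational bound then gives $S(\bar\rho)\le M\,g(\bar K/M)$, with $g(x)=(1+x)\log(1+x)-x\log x$ and mean pair number $\bar K=O(n\eta^2 M)$.
\item Fano's $\Omega(M)$ then forces $\bar K/M=\Omega(1)$, i.e.\ $n\eta^2=\Omega(1)$ and $n=\Omega(m^2/\eps^2)$.
\end{itemize}

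Also, $Mg(x)$ is $\Theta(Mx\log(1/x))$ for small $x$, not $O(Mx)$ as you wrote. Monotonicity of $g$ makes this harmless.

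Your fallback does not close the gap either. The inequality ``success $\le\lambda_{\max}(G)/|\mathcal C|$'' is false: for $|\mathcal C|$ orthogonal states $G=I$ and the success probability is $1$. Moreover, the Gram matrix of $\{\psi_A^{\ot n}\}$ has entries $\braket{\psi_A|\psi_B}^n$, not $\abs{\braket{\psi_A|\psi_B}}^{2n}$, and it does not factorize exactly over the entries of $A$.

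Once repaired, your argument is a legitimate alternative. The paper's Haar-moment argument, however, needs neither the perturbative distance estimates nor Fano, because all of the counting is encoded in the exact ratio $d_{n,m}/d_{n+k,m}$.
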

This result follows in the same way as for tomography of general pure states: by averaging over random Gaussian states, we can relate to a ratio of dimensions of the spaces $\cV_{m,n}^{\pm}$.

\section{Result: Sample-optimal tomography of bosonic Gaussian states}
\label{sec:result_sample_optimal_tomography_of_bosonic_gaussian_states}
We also derive optimal sample complexity results for bosonic Gaussian states, which we report here, with the full derivations in \cref{sec:bosons}.
The Hilbert space of $m$ bosonic modes is the infinite dimensional space $\cH = \L^2(\R^m)$, and the pure mean zero Gaussian states are the orbit of the Gaussian unitaries, a projective representation of the symplectic group $\Sp(2m)$.
One important challenge is that the symplectic group is non-compact.
This means that there is no invariant probability measure---in particular, the notion of a uniformly random Gaussian unitary and twirl as in \cref{eq:twirl intro} cannot be interpreted as a convex combination of multi-copy states.
In particular, the random purification channel is not well-defined.

However, an invariant infinite-volume measure exists. Fortunately, if the number of copies $n$ is large enough, $n \geq 2m+1$, the product action of Gaussian unitaries is well-behaved, and we show that a variant of the same template as for general tomography and fermionic Gaussian tomography can be made to work.
The first step is to understand that a version of the Hayashi measurement is well defined when $n \geq 2m+1$: there exist a function $d_{n,m}$ such that $d_{n,m}\cdot d\mu(S) (U_{S}\ketbra{0}{0}U_S^{\dagger})^{\otimes n}$ is a POVM on the rotation-invariant space
\begin{align*}
    \mathcal V_{n,m} = \text{span} \{ \ket{\psi}^{\ot n} \, : \, \ket{\psi}  \text{ mean zero Gaussian state on } \cH \}.
\end{align*}
implementing a pretty good measurement on pure, zero-mean Gaussian states. The parameter $d_{n,m}$ is an example \textit{formal degree} of an irreducible representation of $\mathrm{Mp}(2m)$, which generalizes the role of the dimension of an irreducible representation of a compact group~\cite{harish1956representations,harish-chandra1966II,knapp2001representation}. While it is not a dimension of a Hilbert space (for non-compact groups, there are no non-trivial finite-dimensional unitary irreducible representations), it nevertheless determines the performance of the POVM for pure state learning, as there is still a meaningful `ratio of dimensions.' The key result is the computation of the following identity for the expected overlap of the estimate $\hat{\psi}$ of the Hayashi measurement with the input $\psi$:
\begin{align*}
    \Ex \abs{\braket{\psi | \hat{\psi}}}^{2(k - n)} = \frac{d_{n,m}}{d_{k,m}} = \prod_{1 \leq i \leq j \leq m} \frac{n - (i+j)}{k - (i+j)},
\end{align*}
which implies, by Markov's inequality applied to the moments of the fidelity, that $n
        =
        \bigO\mleft(
            \frac{m^2+\log(\delta^{-1})}{\eps^2}
        \mright)$    copies of a pure bosonic Gaussian state suffice to obtain that
    $ \abs{\braket{\psi|\hat\psi}}^2
        \geq
        1-\eps^2$ with probability at least \(1-\delta\).
The existence of covariant POVMs for generalized coherent states in discrete series representations was observed already by Hayashi himself~\cite{hayashi2017group}, but we are not aware of an explicit evaluation of the estimation performance for the case at hand.

The second key point we need to resolve is how to adapt the construction in~\eqref{thm:main simplified}. In particular, it turns out that directly following that construction would require one to prepare maximally mixed states on the irreducible representations $\cV_\lambda$ of the metaplectic group, but infinite-dimensional maximally mixed states are not well-defined.
We overcome this challenge by arguing that for the reduction from mixed-state tomography to pure-state tomography one can equally well use a channel that prepares some arbitrary, but fixed, state instead of a maximally mixed state on the $\cV_\lambda$.

We now explain this in more detail.
The Kashiwara--Vergne duality in~\cref{eq:kv intro} implies, as desired, that a standard, Gaussian purification $\psi_{\rho}$ of a Gaussian state $\rho$ is of the form
\begin{align}
   \rho^{\otimes n} \cong \bigoplus_{\lambda \in \Lambda}\rho_{\cV_\lambda}\ot \frac{I_{\cW_\lambda}}{\mathrm{dim}\mathcal{W}_{\lambda}} \,,\qquad    \psi_{\rho}^{\text{std},\otimes n} \cong \bigoplus_{\lambda \in \Lambda} (\psi_{\rho})_{\cV_\lambda\tilde{\cV}_\lambda}\ot\frac{\proj{\Gamma}_{\cW_\lambda \tilde \cW_\lambda}}{\dim \cW_{\lambda}}\,,
\end{align}
with the same notation as in~\cref{sec:intro_general_pur} and \cref{sec:intro_KV}.

The next step is to understand that performing the Hayashi measurement on $\psi_{\rho}^{\otimes n}$, obtaining an estimate $\hat{\psi}_{\rho}$ and outputting $\hat{V}$ as the covariance matrix of $\Tr_{\tilde{\mathcal{H}}}[\hat{\psi}_{\rho}]$ gives a density on $\hat{V}$ that does not depend on the specific Gaussian purification $\psi_{\rho}^{\otimes n}$. In fact, by decomposing the integral on the symplectic group as
\begin{align*}\int_{\mathrm{Sp}(4m)} f(S) \dd S = \int_{\mathrm{Sp}(4m)/\mathrm{Sp}(2m)} \left( \int_{\mathrm{Sp}(2m)} f(ST) \dd T \right) \dd \mu(SH),
\end{align*}
where the subgroup $\mathrm{Sp}(2m)$ is the one acting on the auxiliary modes, we obtain that the measure of the random variable $\hat{V}$ has the form
\begin{align*}
\dd P(\hat V \vert \psi_{\rho}^{\otimes n})=\Tr[\dd E({\hat{V})}[(\mathcal{T}_{\mathcal{H}}\otimes \mathrm{id}_{\tilde{\mathcal{H}}})(\psi_{\rho}^{\otimes n})]],
\end{align*}
where $\dd E({\hat{V}})$ is an unnormalized POVM and the twirling map
 \begin{align*}
    \mathcal{T}[X] = \int_{\Sp(2m)} U_S^{\ot n} X U_S^{\dagger, \ot n} \dd S
    \end{align*}
is not a channel, but it is a valid CP map. More in general, by applying the procedure to a general input state $\Omega$, we obtain
\begin{align*}
\dd P(\hat V\vert\Omega)=\Tr[\dd E(\hat{V})[(\mathcal{T}_{\mathcal{H}}\otimes \mathrm{id}_{\tilde{\mathcal{H}}})(\Omega)]].
\end{align*}
Via representation theoretic analysis, it can be shown that

\begin{align*}
   (\mathcal{T}_{\mathcal{H}}\otimes \mathrm{id}_{\tilde{\mathcal{H}}})\left[\bigoplus_{\lambda \in \Lambda}\psi_{\cV_\lambda\tilde{\cV}_\lambda}\otimes \frac{\proj{\Gamma}_{\cW_\lambda \tilde \cW_\lambda}}{\dim \cW_{\lambda}}\right] = \bigoplus_{\lambda \in \Lambda}c_{\lambda}\Tr_{\tilde{\cV}_{\lambda}}[\psi_{\cV_\lambda\tilde{\cV}_\lambda}]\ot I_{\tilde{\cV}_\lambda}\otimes \frac{\proj{\Gamma}_{\cW_\lambda \tilde \cW_\lambda}}{\dim \cW_{\lambda}},
    \end{align*}
for certain positive scalars $c_{\lambda}$ independent of $\rho$. This means that sampling from $\dd P(\hat\sigma \vert \rho)$ can be simulated by replacing access to
$\psi_{\rho}^{\otimes n}$ with access to $ \cQ_{n,m}^{\text{Bose}}[\rho^{\otimes n}]$, where $\cQ_{n,m}^{\text{Bose}}$ is the channel such that
\begin{align*}
  \cQ_{n,m}^{\text{Bose}}[\rho^{\otimes n}] = \bigoplus_\lambda \rho_{\cV_\lambda} \ot \proj{v_\lambda}_{\tilde{\cV}_{\lambda}} \ot \frac{\proj{\Gamma}_{\cW_\lambda \tilde \cW_\lambda}}{\dim \cW_{\lambda}}\,,
\end{align*}
where $\proj{v_\lambda}$ is any pure state in $\tilde{\cV}_{\lambda}$.
Indeed we have $(\mathcal{T}_{\mathcal{H}}\otimes \mathrm{id}_{\tilde{\mathcal{H}}})[\cQ_{n,m}^{\text{Bose}}[\rho^{\otimes n}]]=(\mathcal{T}_{\mathcal{H}}\otimes \mathrm{id}_{\tilde{\mathcal{H}}})[\psi_{\rho}^{\otimes n}]$. It follows that to analyze the performance of the protocol, it suffices to analyze the concentration of $\dd P(\hat V \vert \psi_{\rho}^{\otimes n})$. For zero-mean states, a learning guarantee follows directly from data-processing. For non-zero mean states, we need to extract a relative error guarantee on the covariance matrix, and use the estimate of $\hat{V}$ to act on the remaining copies $\rho_{\sqrt{2}\mu,V}^{\otimes n}$ and analyze the result using the concentration properties of generaldyne measurements.

\begin{algorithm}
\caption{Learning bosonic Gaussian states}\label{alg:learn bosonic gaussian}
\begin{algorithmic}[1] 
\Require $2n>4m+2$ copies of unknown Gaussian state $\rho$ with first and second moments $\mu,\,V$.
\Ensure Estimate $\hat{\rho}$ of $\rho$. 

\State Prepare the state $\rho_{\sqrt{2}\mu,V}^{\otimes n}\otimes \rho_{0,V}^{\otimes n} $ via $50\%$ beam splitters on $2n$ pairs $\rho^{\otimes 2}$,
\State Apply the bosonic quasi-purification channel to
$\rho_{0,V}^{\otimes n}\rightarrow \cQ_{n,m}^{\text{Bose}}[\rho_{0,V}^{\otimes n}]$
\State Perform $\mathrm{Sp}(2m)$-Hayashi measurement on $\cQ_{n,m}^{\text{Bose}}[\rho_{0,V}^{\otimes n}]$, $\rightarrow$ estimate $\hat{\psi}_{AA'}$
\State Set $\hat V$ equal to the covariance matrix of $\Tr_{A'}[\hat{\psi}_{AA'}]$
\State Measure $\rho_{\sqrt{2}\mu,V}^{\otimes n}$ with generaldyne with seed $\hat{V}$, i.e. sample $\sqrt{2}\hat{\mu}\sim \mathcal{N}\left(\sqrt{2}\mu,\frac{V+\hat{V}}{2n}\right)$.
\State Set $\hat{\rho}$ as the Gaussian state with first moment $\hat{\mu}$ and second moment $\hat{V}$.
\State \Return $\hat{\rho}$
\end{algorithmic}
\end{algorithm}

\begin{figure}
\centering
\def\svgwidth{0.85\linewidth}
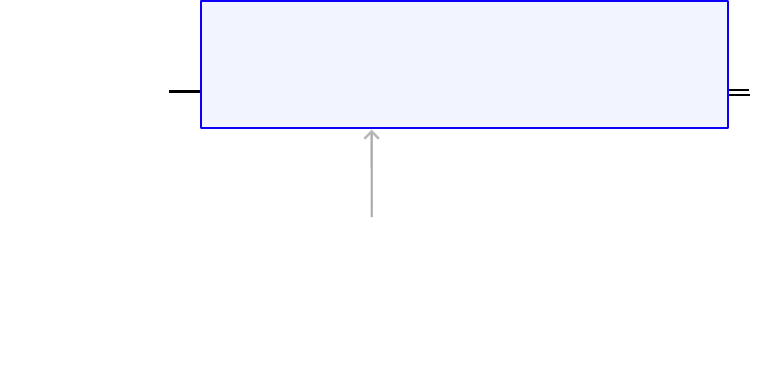
\caption{Scheme of the learning protocol for bosonic Gaussian states. After obtaining zero-mean copies through a beam-splitter, the operations in the green box generate a distribution over covariance-matrix estimates, equivalent to a) measuring a purification with Hayashi measurement b) outputting the covariance matrix of the partial trace of the Hayashi measurement outcome. The estimated covariance matrix can then be used to efficiently estimate the first moment on the remaining copies, via a generaldyne measurement.}
\end{figure}

The above considerations are summarized in the following result.

\begin{thm}[Bosonic Gaussian tomography, upper bound]\label{thm:tomo boson}
There exists a tomography protocol for $m$-mode bosonic Gaussian states~$\sigma$ that outputs an estimate~$\hat\sigma$ with purified distance at most~$P(\sigma,\hat \sigma) \leq \eps$ with probability at least $1-\delta$, using $\bigO((m^2 + \log(\delta^{-1}))/\eps^2)$ copies of~$\sigma$.
\end{thm}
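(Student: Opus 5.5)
We follow Algorithm~\ref{alg:learn bosonic gaussian} and split the error into a covariance part and a mean part. Let $\rho=\rho_{\mu,V}$ be the unknown state and use $2n$ copies.

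\emph{Step 1: reduction to zero-mean copies.} A $50\%$ beam splitter on pairs $\rho\ot\rho$ produces $\rho_{\sqrt2\mu,V}\ot\rho_{0,V}$. This is because passive Gaussian unitaries act linearly on the first moments and leave the covariance matrix $V\oplus V$ invariant. We therefore obtain $n$ copies of $\rho_{0,V}$ and $n$ copies of $\rho_{\sqrt2\mu,V}$.

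\emph{Step 2: covariance estimation on the zero-mean copies.} Apply the quasi-purification channel $\cQ^{\text{Bose}}_{n,m}$ and then the $\Sp(4m)$-Hayashi measurement on the $2m$-mode purification space, with $n\ge 4m+1$ so that the formal degree exists. By the twirling identity $(\mathcal T_{\cH}\ot\mathrm{id})[\cQ^{\text{Bose}}_{n,m}[\rho_{0,V}^{\ot n}]]=(\mathcal T_{\cH}\ot\mathrm{id})[\psi_\rho^{\ot n}]$ and the decomposition of the Haar measure over $\Sp(4m)/\Sp(2m)$, the law of $\hat V$ equals the law obtained by running the Hayashi measurement on $n$ copies of an actual pure Gaussian purification $\psi_\rho$ of $\rho_{0,V}$ and taking the reduced covariance matrix. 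The pure-state analysis then applies to $4m$ modes. From
\[
\Ex\abs{\braket{\psi|\hat\psi}}^{2(k-n)}=\prod_{1\le i\le j\le 2m}\frac{n-(i+j)}{k-(i+j)},
\]
we take $k-n\approx \eps^{-2}$ and apply Markov's inequality. This gives $\abs{\braket{\psi_\rho|\hat\psi}}^2\ge 1-\eps^2$ with probability $1-\delta$ once $n=\bigO((m^2+\log\delta^{-1})/\eps^2)$. By data processing under the partial trace, $P(\rho_{0,V},\rho_{0,\hat V})\le\eps$.

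\emph{Step 3: converting fidelity into a relative covariance bound.} This is the step I expect to be the main obstacle. We need a guarantee that is useful for the mean step and for the final fidelity, which requires a dimension-free statement. The plan is to show that $P(\rho_{0,V},\rho_{0,\hat V})\le\eps$ with $\eps$ small implies
\[
\norm{V^{-1/2}(\hat V-V)V^{-1/2}}_2\lesssim\eps,
\]
in an appropriate Hilbert--Schmidt sense that also handles nearly pure modes. I would first try the explicit Gaussian fidelity formula together with a second-order expansion, using the Bures metric for Gaussian states. If the mixed-state formula proves unwieldy, the fallback is to apply the bound to purifications, where the fidelity of pure Gaussian states is $\det(\cdot)^{-1/2}$ of the averaged covariance. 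Symplectic invariance allows us to reduce to a normal form, for example $V$ in Williamson form.

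\emph{Step 4: mean estimation and combination.} Apply generaldyne measurements with seed $\hat V$ to the $n$ copies of $\rho_{\sqrt2\mu,V}$ and average the outcomes. This yields $\hat\mu$ with $\sqrt2\hat\mu\sim\mathcal N(\sqrt2\mu,(V+\hat V)/(2n))$. By Step 3, $V+\hat V\preceq 3V$, so the Mahalanobis error $(\hat\mu-\mu)^{T}V^{-1}(\hat\mu-\mu)$ is a rescaled $\chi^2_{2m}$ variable of order $m/n$. It is therefore at most $\bigO((m+\log\delta^{-1})/n)\le\eps^2$ with probability $1-\delta$. Finally, we bound
\[
P(\rho_{\mu,V},\rho_{\hat\mu,\hat V})\le P(\rho_{\mu,V},\rho_{\mu,\hat V})+P(\rho_{\mu,\hat V},\rho_{\hat\mu,\hat V}).
\]
The first term is $\le\eps$ by displacement invariance and Step 2. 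The second term is controlled by the fidelity of equal-covariance displaced states, namely $1-F^2\lesssim(\hat\mu-\mu)^{T}\hat V^{-1}(\hat\mu-\mu)$, using $\hat V\succeq V/2$. A union bound over the two failure events and rescaling $\eps$ complete the proof.
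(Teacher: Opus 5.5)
Your architecture matches the paper's. You split with the beam splitter, apply quasi-purification followed by the $\Sp(4m)$ Hayashi measurement with $n\ge 4m+1$, and use the coset/twirl argument to couple its output law to that of an honest Gaussian purification. You then prove a multiplicative covariance bound, run generaldyne with seed $\hat V$, and finish with the triangle inequality for $P$; your choice of intermediate state $\rho_{\mu,\hat V}$ works equally well. However, two steps do not go through as written.

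First, the moment choice in Step 2 does not give the claimed $\log(\delta^{-1})$ dependence. With a positive moment $k-n=j$, Markov applied to $1-F^{2j}$ gives $\Pr(F^2\le 1-\eps^2)\le (1-\Ex F^{2j})/(1-(1-\eps^2)^j)$. For every $j$ this is no better than order $m^2/(n\eps^2)$, so it only yields $n=\bigO(m^2/(\eps^2\delta))$. A negative moment of order only $\eps^{-2}$ fails too, since it leaves the constant factor $(1-\eps^2)^{\eps^{-2}}\approx e^{-1}$. You need negative moments with $n-k$ proportional to $n$. Take $s=\lfloor n/4\rfloor$, $r=n-s$ and $n\ge 8m$. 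Then each of the $m(2m+1)$ factors $1+s/(r-(i+j))$ in $d_{n,2m}/d_{r,2m}$ is at most $2$, so $\Pr(F^2\le 1-\eps^2)\le (1-\eps^2)^{\lfloor n/4\rfloor}2^{m(2m+1)}$. This is \cref{cor:pure boson tomography} for $2m$ modes, which you can simply invoke.

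Second, Step 3 is the one ingredient beyond the pure-state analysis, and you leave it as a plan. The Hilbert--Schmidt bound is more than Step 4 uses; you only need $\hat V\succeq V/2$ and $V+\hat V\preceq 3V$. A Bures-type second-order expansion of the mixed-state fidelity is also not uniform near pure modes, where the infidelity is not quadratic in the covariance perturbation. Your fallback is the right route and closes at once:
\begin{itemize}
\item In the coupled experiment let $W,W'$ be the covariance matrices of $\psi_\rho$ and $\hat\psi$. Then $\det\bigl((W+W')/2\bigr)^{-1/2}=\abs{\braket{\psi_\rho|\hat\psi}}^2\ge 1-\eps^2$.
\item The matrix $A=W^{-1/2}W'W^{-1/2}$ is symmetric positive definite and symplectic, because $W^{-1/2}=O_1Z^{-1}O_1^{\mathsf T}$ is symplectic by Bloch--Messiah.
\item Hence its eigenvalues come in pairs $(t,t^{-1})$. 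Each pair contributes $\frac14(1+t)(1+t^{-1})\ge 1$ to $\det((I+A)/2)$, which equals $\det((W+W')/2)$ because $\det W=1$.
\item So $\lambda=\norm{A}_\infty$ satisfies $\frac14(1+\lambda)(1+\lambda^{-1})\le(1-\eps^2)^{-2}$. This gives $c_\eps^{-1}W\preceq W'\preceq c_\eps W$ with $c_\eps=1+\bigO(\eps)$.
\item Taking the principal $2m\times 2m$ block yields $c_\eps^{-1}V\preceq\hat V\preceq c_\eps V$. This event depends only on $\hat V$, so equality in law transfers it to the actual protocol.
\end{itemize}
This is \cref{lem:bosonic-covariance-multiplicative}. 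With these two repairs, your Step 4 (the $\chi^2_{2m}$ concentration of the Mahalanobis error, displacement invariance, and the triangle inequality) goes through as stated.
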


Finally, we also show a matching lower bound in sample complexity that already holds for pure states. It makes use of the same kind of averaging argument as in \cref{thm:tomo fermion lower bound}, except that we average over an appropriate weighted probability distribution instead of a uniform distribution.

\begin{thm}[Bosonic Gaussian tomography, lower bound]\label{thm:tomo boson lower bound}
Suppose there exists a tomography protocol for $m$-mode pure bosonic Gaussian states~$\psi$ that outputs an estimate~$\hat\psi$ with purified distance at most~$P(\sigma,\hat \sigma) \leq \eps$ with probability at least $1 - \delta$ for any pure Gaussian state. Then this protocol needs to use at least $\Omega((m^2 + \log(\delta^{-1}))/\eps^2)$ copies of~$\psi$.
\end{thm}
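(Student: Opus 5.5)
We prove the lower bound for pure mean-zero Gaussian states, which is enough since these form a subfamily of all pure Gaussian states. The two terms $m^2/\eps^2$ and $\log(\delta^{-1})/\eps^2$ are handled separately.

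For the $\log(\delta^{-1})/\eps^2$ term we use a two-point argument. Take two single-mode squeezed vacua $\psi_0,\psi_1$ (tensored with vacuum on the other modes) whose squeezing parameters differ by an amount chosen so that $P(\psi_0,\psi_1)>2\eps$ while $1-|\braket{\psi_0|\psi_1}|^2=\Theta(\eps^2)$. Any protocol that succeeds with probability $1-\delta$ then distinguishes $\psi_0^{\ot n}$ from $\psi_1^{\ot n}$ with error at most $\delta$. The Helstrom bound for pure states requires $|\braket{\psi_0|\psi_1}|^{2n}\le 4\delta(1-\delta)$, which forces $n=\Omega(\log(\delta^{-1})/\eps^2)$.

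For the $m^2/\eps^2$ term we follow the Hayashi template. The standard finite-dimensional argument goes as follows. Averaging over Haar-random inputs, the success probability of any POVM $\{M_{\hat\psi}\}$ on $\cV_{n,m}$ is at most
\begin{align*}
d_{n,m}\int \Tr\bigl[M_{\hat\psi}\,\Pi\bigr]\cdots \le \frac{d_{n,m}}{d_{n+k,m}}\cdot\text{(normalization)}.
\end{align*}
This uses the operator inequality $\int_{B_\eps(\hat\psi)}\psi^{\ot n}\,\dd\psi\le \mathrm{vol}(B_\eps)\,\|\cdot\|$, or equivalently the identity that $\psi^{\ot n}\le d_{n,m}^{-1}\cdot$(covariant structure). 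The result is that the success probability is bounded by $d_{n,m}\cdot\sup_{\hat\psi}\mu(\{\psi:|\braket{\psi|\hat\psi}|^2\ge1-\eps^2\})$.

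In the bosonic setting the group $\Sp(2m)$ is noncompact and has no uniform prior, so we replace it with a weighted prior. Concretely, draw $\psi=U_S\ket0$ with density proportional to $|\braket{0|U_S|0}|^{2k}$ with respect to the invariant measure $\dd\psi$. By \cref{eq:rot inv subspace integral bosons} with $k\ge 2m+1$, this is a normalized probability measure with normalizing constant $d_{k,m}^{-1}$. It concentrates around the vacuum at scale $1/\sqrt k$, which acts as the ``ball'' of candidate states.

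The averaged success probability is then
\begin{align*}
d_{k,m}\int |\braket{0|\psi}|^{2k}\,\Tr[M_{\hat\psi(\cdot)}\psi^{\ot n}]\,\dd\psi.
\end{align*}
We bound $|\braket{0|\psi}|^{2k}\psi^{\ot n}$ via the embedding $\ket0^{\ot k}\ot$ and \cref{eq:rot inv subspace integral bosons} at $n+k$ copies. This gives a bound of the form
\begin{align*}
\frac{d_{k,m}}{d_{n+k,m}}\cdot \sup_{\hat\psi}\Bigl(\text{mass of }\{|\braket{\psi|\hat\psi}|^2\ge1-\eps^2\}\text{ under the }(n+k)\text{-weighted law}\Bigr)\cdot d_{n+k,m}(\ldots).
\end{align*}
An alternative route is cleaner. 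The optimal covariant estimator gives the fidelity-moment formula $\Ex|\braket{\psi|\hat\psi}|^{2(k-n)}=d_{n,m}/d_{k,m}$ with the product $\prod_{i\le j}\frac{n-(i+j)}{k-(i+j)}$. Any estimator's performance on the weighted prior is dominated by this covariant-type quantity. Choosing $k$ with $k-n\approx c/\eps^2$, the product is about $\exp(-\Theta(m^2(k-n)/n))$, while success requires $|\braket{\psi|\hat\psi}|^{2(k-n)}\ge(1-\eps^2)^{k-n}=\Omega(1)$ with constant probability. These are consistent only if $m^2/(n\eps^2)=O(1)$, which gives $n=\Omega(m^2/\eps^2)$.

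\textbf{Main obstacle.} The hardest step is rigorously turning ``success with probability $2/3$ on every $\psi$'' into a bound against the weighted noncompact prior. This requires controlling the Haar-type volume of a purified-distance ball in $\Sp(2m)/\U(m)$ uniformly in its center. That uniformity is delicate because the weighting breaks invariance. I would first try a local argument. Restrict to states $U_S\ket0$ with $S=\exp(X/\sqrt{n})$ in a compact neighborhood of the identity. On that neighborhood the Fubini–Study metric is comparable to the flat metric on the $\Theta(m^2)$-dimensional tangent space $\sp(2m)/\mathfrak u(m)$. We then apply a packing argument there: an $\eps$-net of size $\exp(\Theta(m^2))$ with pairwise distance $>2\eps$. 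Fano's inequality combined with the Holevo bound $\chi\le\log\dim$ is not usable in infinite dimensions. Instead we bound the mutual information by $n$ times the average relative entropy to the mixture, which for pure-state ensembles at fidelity radius $\Theta(\eps)$ is $O(n\eps^2)$ per dimension via the Gaussian-state covariance Fisher information. This yields $n\eps^2\gtrsim m^2$.
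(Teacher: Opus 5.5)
Your two-point argument for the $\log(\delta^{-1})/\eps^2$ term (two nearby squeezed vacua plus the Helstrom bound) is fine and is what the paper does. Your weighted prior, with density $d_{k_0,m}\abs{\braket{0|\psi}}^{2k_0}$ with respect to $\dd\psi$ for some $k_0\geq 2m+1$, is exactly the paper's prior $Q$ (the paper takes $k_0=2m+1$).

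The gap is in the step that carries the whole $m^2/\eps^2$ bound. The sentence ``any estimator's performance on the weighted prior is dominated by this covariant-type quantity'' is asserted, not proved. The identity $\Ex\abs{\braket{\psi|\hat\psi}}^{2(k-n)}=d_{n,m}/d_{k,m}$ that you invoke holds only for the covariant measurement on a fixed input, and says nothing about an arbitrary POVM. The correct statement must also carry a shift, since the prior itself is worth about $k_0$ copies.

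Your first attempt bounds the success probability by the $Q$-mass of purified-distance balls. That is the compact-group template, and it is why you end up needing a volume bound that is uniform in the centre. The missing idea is to replace the indicator of the ball by $(1-\eps^2)^{-j}\abs{\braket{\psi|\hat\psi}}^{2j}$, i.e.\ to bound moments instead. A moment is a polynomial in $\psi$, so it can be absorbed into a tensor power. For an arbitrary POVM $\mu$ on $n$ copies,
\begin{align*}
\Ex_{\psi\sim Q}\,\Ex_{\hat\psi\sim P_\psi}\abs{\braket{\psi|\hat\psi}}^{2j}
= d_{k_0,m}\int\!\!\int \tr\bigl[(\dd\mu(\hat\psi)\ot\proj{0}^{\ot k_0}\ot\proj{\hat\psi}^{\ot j})\,\proj{\psi}^{\ot(n+k_0+j)}\bigr]\,\dd\psi .
\end{align*}
From here:
\begin{enumerate}
\item Do the $\psi$-integral with the resolution of the identity at $n+k_0+j$ copies.
\item Use $\Pi_{n+k_0+j,m}\leq\Pi_{n+k_0,m}\ot I$, which holds because $\cV_{n+k_0+j,m}\subseteq\cV_{n+k_0,m}\ot\cH^{\ot j}$, to discard the factor $\proj{\hat\psi}^{\ot j}$.
\item Integrate the POVM to the identity.
\item Evaluate $\tr[(I\ot\proj{0}^{\ot k_0})\Pi_{n+k_0,m}]=d_{n+k_0,m}/d_{k_0,m}$ by the resolution of the identity once more.
\end{enumerate}
This gives $\Ex_Q\Ex\abs{\braket{\psi|\hat\psi}}^{2j}\leq d_{n+k_0,m}/d_{n+k_0+j,m}$, with no uniformity issue at all. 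Your closing comparison with $0.99(1-\eps^2)^j$ for $j=\Theta(\eps^{-2})$ then yields $n=\Omega(m^2/\eps^2)$, exactly as in the paper.

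Your fallback packing/Fano route does not close the gap either. Fisher information controls relative entropies between nearby members of a family of faithful states. Here, however, you need $D(\psi_x\|\bar\sigma)=-\bra{\psi_x}\log\bar\sigma\ket{\psi_x}$ for a pure $\psi_x$ against the mixture $\bar\sigma$ of an $\eps$-packing spread over $\Theta(m^2)$ tangent directions. Such a $\bar\sigma$ has $\Theta(m^2)$ eigenvalues of order $\eps^2/m^2$, so this quantity is $\Theta(\eps^2\log(m^2/\eps^2))$, not $O(\eps^2)$. Fano then gives only $n=\Omega\bigl(m^2/(\eps^2\log\frac m\eps)\bigr)$, which is precisely the earlier bound of \cite{chen2026towards} that this theorem improves by removing the logarithm. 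Read literally, an $O(n\eps^2)$ bound ``per dimension'' summed over $\Theta(m^2)$ dimensions would give even less, namely $n=\Omega(\eps^{-2})$.
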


Together, \cref{thm:tomo fermion}, \cref{thm:tomo fermion lower bound}, \cref{thm:tomo boson}, and \cref{thm:tomo boson lower bound} constitute our main result, which fully determines the sample complexity of learning Gaussian quantum states.

\subsection{Random purification of bosonic gauge-invariant Gaussian states}
As discussed above, the random purification channel for Gaussian bosonic states is not well-defined since the group of Gaussian unitaries is non-compact.
However, we can also consider the maximally compact subgroup~$\Sp(2m) \cap \O(2m) \cong \U(m)$, corresponding to the so-called \emph{passive} Gaussian unitaries, which has an invariant probability measure.
The corresponding quantum states are known as the \emph{gauge-invariant} (or number-preserving) Gaussian states.
Similarly as before, if $\sigma$ is a gauge-invariant Gaussian state, then $\sigma^{\ot n}$ is in the weak operator closure of the span of the $n$-th tensor powers of passive Gaussian unitaries.
This yields the~following~result, proven in \cref{sec:passive bosons}.

\begin{cor}[Bosonic gauge-invariant Gaussian random purification]\label{cor:bosons}
For any $m,n$, there is a channel $\cP_{n,m}^\text{Bose}$ from $\L^2(\R^m)^{\ot n}$ to $\L^2(\R^{2m})^{\ot n}$ such that the following holds:
for any gauge-invariant bosonic Gaussian state~$\sigma$ on $\L^2(\R^m)$,
\begin{align*}
    \cP_{n,m}^\text{Bose}[\sigma^{\ot n}] =
\Ex_{\substack{\psi_\sigma \text{ Gaussian purification of } }\sigma} \psi_\sigma^{\ot n} =
\int_{\U(m)} (I \ot U_O^{\ot n}) \psi^{\text{std},\ot n}_\sigma (I \ot U_O^{\dagger,\ot n}) \ \dd O,
\end{align*}
where the expectation is over Gaussian purifications obtained by applying a uniformly random passive Gaussian unitary~$U_O$ to the standard purification, as in the right-hand side formula.
\end{cor}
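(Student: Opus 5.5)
The plan is to apply \cref{thm:main simplified} (in its infinite-dimensional form, \cref{thm:main technical}) with the compact group $G=\{U_O^{\ot n}: O\in\U(m)\}$ of tensor powers of passive Gaussian unitaries acting on $\cH^{\ot n}=\L^2(\R^m)^{\ot n}$. The Hilbert space is infinite-dimensional, so I will not work with it directly. Instead I will use the $\U(n)\times\U(m)$ duality \eqref{eq:unitary intro}. It shows that $\cH^{\ot n}$ splits into finite-dimensional blocks: one block for each total photon number $N$, and within that block, summands $\cV_\lambda^{(m)}\ot\cW_\lambda^{(n)}$ indexed by $\lambda\vdash N$.

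The first step is to check that $\sigma^{\ot n}$ lies in the block structure needed by the construction. Write a gauge-invariant Gaussian state in its Williamson/normal-mode form $\sigma=U_O\bigl(\bigotimes_i \tau_{\nu_i}\bigr)U_O^\dagger$, with thermal states $\tau_\nu\propto e^{-\beta_\nu a^\dagger a}$. Then $\sigma=\Gamma(A)/\Tr\Gamma(A)$ is the second quantization of a positive contraction $A$ on $\C^m$. Hence $\sigma^{\ot n}\propto\Gamma(A\ot I_n)$, which commutes with the $\U(n)$ action and acts on each $N$-photon sector as $\bigoplus_\lambda \pi_\lambda^{(m)}(A)\ot I_{\cW_\lambda}$, where $\pi_\lambda^{(m)}$ is the polynomial extension of the irrep to $\GL(m)$. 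Consequently $\sigma^{\ot n}=\bigoplus_{N,\lambda}\rho_\lambda\ot I_{\cW_\lambda}/\dim\cW_\lambda$. This is exactly the form used in the sketch after \cref{thm:main simplified}, and it is the substitute for "$\rho\in\C G$": block by block, $\pi_\lambda(A)$ lies in the span of $\pi_\lambda(\U(m))$ because polynomial irreps of $\GL(m)$ restrict irreducibly to $\U(m)$.

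Next I will define the channel blockwise, following the construction sketched after \cref{thm:main simplified}:
\begin{align*}
\cP_{n,m}^\text{Bose}[X]=\bigoplus_{N,\lambda\vdash N}\Tr_{\cW_\lambda}\bigl[\Pi_{N,\lambda}X\Pi_{N,\lambda}\bigr]\ot\frac{I_{\tilde\cV_\lambda}}{\dim\cV_\lambda^{(m)}}\ot\frac{\proj{\Gamma}_{\cW_\lambda\tilde\cW_\lambda}}{\dim\cW_\lambda}.
\end{align*}
The output lives in $\bigoplus_\lambda(\cV_\lambda\ot\tilde\cV_\lambda)\ot(\cW_\lambda\ot\tilde\cW_\lambda)\subseteq\L^2(\R^{2m})^{\ot n}$, where the tilde system is a second copy of $m$ modes. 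This map is a countable direct sum of finite-dimensional CP trace-preserving pieces applied after the pinching $\sum_{N,\lambda}\Pi_{N,\lambda}(\cdot)\Pi_{N,\lambda}$. It is therefore a well-defined channel on trace-class operators, with normality following from monotone convergence of the block sums.

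Then I will verify the twirl identity. The standard purification $\psi^\text{std}_\sigma=(\sqrt\sigma\ot I)\ket\Gamma$ is taken in the Fock basis, where $\ket\Gamma$ is replaced by the normalized purification $\sqrt\sigma\ot I$ applied formally (this converges since $\sqrt\sigma$ is Hilbert--Schmidt). It is itself Gaussian, being a two-mode squeezed state per normal mode, and $(\psi^\text{std}_\sigma)^{\ot n}=\psi^\text{std}_{\sigma^{\ot n}}$. Blockwise it has the form $\bigoplus_\lambda(\psi_{\rho_\lambda})\ot\proj\Gamma/\dim\cW_\lambda$. Twirling the tilde system by $U_O^{\ot n}$ with $O\in\U(m)$ depolarizes each irreducible $\tilde\cV_\lambda$ by Schur's lemma. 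This reproduces the channel output, and interchanging the integral with the infinite direct sum is justified by dominated convergence in trace norm. Finally, since every Gaussian purification of $\sigma$ obtainable this way has the form $(I\ot U_O)\psi^\text{std}_\sigma$, we get $(\psi_\sigma)^{\ot n}=(I\ot U_O^{\ot n})\psi^{\text{std},\ot n}_\sigma(I\ot U_O^{\dagger,\ot n})$, which gives the middle equality.

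The main obstacle I expect is the membership step: making rigorous the claim that $\sigma^{\ot n}$ acts as $\rho_\lambda\ot I$ on every block, including the convention conversion between $\Gamma(A)$ and the thermal parametrization and the handling of the isotypic multiplicities. Once the second-quantization formula is established, the rest reduces to finite-dimensional Schur-lemma arguments summed over $N$.
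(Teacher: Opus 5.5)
Your proposal is correct and is essentially the paper's proof: the paper obtains the statement from \cref{cor:passive-bosonic-random-purification}, i.e.\ by applying its infinite-direct-sum random purification theorem (\cref{thm:infinite-random-purification}) to $\U(m)$ acting by $U_O^{\ot n}$ on the finite-dimensional $\U(n)\times\U(m)$ Howe-duality blocks, which is exactly your blockwise channel. The only difference is the membership step: you get the block form from the second-quantization identity $\sigma^{\ot n}\propto\Gamma(A\ot I_n)$, whereas the paper either writes $(e^{-H})^{\ot n}$ as a weak limit of combinations of $(e^{-itH})^{\ot n}$, or checks $\U(n)$-invariance directly (\cref{passive_gaussian}) and uses that the commutant of the $\U(n)$ action is the block algebra $\cB$; both routes are equally valid.
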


\section{Discussion and open questions}
\label{sec:discussion_and_open_questions}

In this work, we solved the open problem of finding the optimal sample complexity of learning fermionic and bosonic Gaussian states on $m$ modes. Specifically, we proved that $\Theta(m^2)$ copies are necessary and sufficient for this task. We did this by introducing two main tools of independent interest: a generalization of the random purification channel~\cite{tang2025conjugate,pelecanos2025,girardi2025random} and a generalization of the Hayashi measurement~\cite{hayashi1998asymptotic,hayashi2017group,harrow2013church}.

The previous state-of-the-art algorithm~\cite{bittel2025energy} for learning $m$-mode bosonic Gaussian states has sample complexity that scales as $O(m^3+(m+\log\log\log E)\log\log E)$, where $E$ is the energy of the unknown Gaussian state. Our algorithm not only improves the dependence on the number of modes, reducing it from $m^3$ to $m^2$, but also completely removes the mild energy dependence $\log\log E$. Consequently, even if an unknown Gaussian state has arbitrarily large energy, a finite number of copies is sufficient to learn it. Regarding the dependence on the number of modes, we can ultimately attribute the improvement of our algorithm to the fact that it employs non-Gaussian operations: random purification and the bosonic version of the Hayashi measurement are not Gaussian operations. In fact, Ref.~\cite{chen2026towards} proved that $\Omega(m^3)$ copies are necessary to perform tomography of Gaussian states via Gaussian operations, matching the performance of the algorithm introduced in~\cite{bittel2025energy} up to the mild $\log\log E$ dependence, which indeed employed only Gaussian operations. Thus, our result shows that non-Gaussian operations provably outperform Gaussian operations for learning Gaussian states. Regarding the energy dependence, it remains open whether tomography algorithms based on Gaussian operations can achieve a fully energy-independent sample complexity, thus removing the mild $\log\log E$ dependence of~\cite{bittel2025energy}. Finally, a second difference between our optimal algorithm and that of~\cite{bittel2025energy} is that the former requires fully entangled measurements across the copies, whereas the latter employs only single-copy adaptive operations. We leave as an open question whether entangled measurements are fundamentally required to perform optimal tomography of bosonic and fermionic Gaussian states.

Another important open question is how to explicitly and efficiently implement our tomography algorithm via a quantum circuit with commonly used Gaussian and non-Gaussian operations. The optimal tomography scheme for qudits in~\cite{pelecanos2025}, based on the random purification channel and the Hayashi measurement, can be implemented explicitly: the random purification channel can be constructed by implementing the Schur transform, while the Hayashi measurement can be replaced by an explicit single-copy tomography algorithm. In our setting, it is not known whether the decompositions in Howe duality and KV duality can be implemented efficiently by a quantum circuit. In the case of KV duality, it is not even completely clear what the right computational model is, given the infinite-dimensional nature of the involved Hilbert spaces. Regarding the Gaussian variant of the Hayashi measurement, it is not yet clear how it can be replaced by an efficiently implementable single-copy protocol.

 Moreover, the dualities used in this paper open the way towards an in-depth study of entangled strategies for estimating parameters of Gaussian states, such as the eigenvalues or the symplectic eigenvalues of the covariance matrix, and the entropy of the state, which is a function of these eigenvalues. Lower bounds for estimating these quantities are also missing.

Another natural open question is the optimal query complexity of learning bosonic and fermionic Gaussian channels. At present, partial results are known only for Gaussian unitaries~\cite{oszmaniec2022fermionsampling,christensen2026learning,fanizza2026efficientlearningbosonicgaussian}. A promising route is to develop a Gaussian analogue of the random Stinespring superchannel constructions~\cite{chen2025quantum,girardi2026randomstinespringsuperchannelconverting,yoshida2026randomdilationsuperchannel}. We expect that the tools introduced here for Gaussian random purification can be adapted to this setting.

\section*{Acknowledgments}
We would like to thank Ben Lovitz, Sepehr Nezami, Mich\`ele Vergne, and John Wright for insightful discussions on highest weight orbit learning, geometric quantization, random purifications, and tomography.
We are grateful to Lennart Bittel, Yanbei Chen, Jens Eisert, Hsin-Yuan Huang, Lorenzo Leone, Alfred Li, Zachary Mann, Antonio Anna Mele, and John Preskill for inspiring discussions.
SC acknowledges funding provided by the Institute for Quantum Information and Matter, an NSF Physics Frontiers Center (NSF Grant PHY-2317110).
FG and LL acknowledge financial support from the European Union (ERC StG ETQO, Grant Agreement no.\ 101165230).
FW acknowledges support by the European Union (ERC Grant ASC-Q, Grant Agreement no.\ 101040624).
MW acknowledges support by the European Union (ERC Grant SYMOPTIC, Grant Agreement no.\ 101040907), by the Deutsche Forschungsgemeinschaft (DFG, German Research Foundation) under Germany's Excellence Strategy~--~EXC-2111~--~390814868 and grant~556164098 as well as the German Federal Ministry of Research, Technology and Space (QuSol, 13N17173).
MF thanks Giacomo De Palma for his kind hospitality at the University of Bologna, where part of this work was done.
MW thanks Q-FARM and the Leinweber Institute for Theoretical Physics at Stanford University and the Simons Institute for the Theory of Computing at UC Berkeley for their hospitality.

\bibliographystyle{alpha}
\bibliography{library}

\begin{appendix}
\clearpage

\section{Review of sample-optimal tomography via random purification}\label{sec:review general tomo}

We give a concise review of an analysis of the optimal sample complexity for tomography of arbitrary $d$-dimensional quantum states.
This section is not necessary to understand our results and proofs, but it outlines in a simpler setting the line of thought that we generalize to prove our main results on sample-optimal tomography of fermionic and bosonic Gaussian states.
There are two key ingredients:
\begin{enumerate}
\item the \emph{symmetric subspace}, which gives an elegant way to analyze the sample complexity of pure state tomography,
\item the \emph{random purification channel}, which lifts the results for pure states to mixed states.
\end{enumerate}
These two ingredients align with two important general principles in quantum information theory, often referred to colloquially as the church of the symmetric subspace and the church of the larger Hilbert space, respectively.

\subsection{The symmetric subspace and pure state tomography}
We consider $n$ copies of a Hilbert space $\cH = \C^d$.
Then the \emph{symmetric subspace} is defined as
\begin{align*}
    \Sym^n(\C^d) \coloneqq \{ \ket{v} \in (\C^d)^{\ot n} \, : \, \sigma \ket{v} = \ket{v} \text{ for all } \sigma \in S_n\}
\end{align*}
where the symmetric group $S_n$ acts by permuting the tensor factors.
Equivalently, this is the subspace spanned by the tensor powers of arbitrary pure states:
\begin{align*}
    \Sym^n(\C^d) = \text{span} \{ \ket{\psi}^{\ot n} \, : \, \ket{\psi} \in \C^d\}.
\end{align*}
The dimension of the symmetric subspace can be computed as
\begin{align}\label{eq:dim symmetric subspace}
    D_{n,d} \coloneqq \dim \Sym^n(\C^d) = \binom{n+d-1}{n} = \binom{n+d-1}{d-1}.
\end{align}
It is an irreducible representation of $\U(d)$ under the action of $U^{\ot n}$ for $U \in \U(d)$.
By Schur's lemma, this implies the following crucial fact:
\begin{align}\label{eq:symmetric subspace integral}
    \Pi_{n,d} = D_{n,d} \int \proj{\psi}^{\ot n} \dd \psi \, ,
\end{align}
where $\Pi_{n,d}$ denotes the orthogonal projection onto~$\Sym^n(\C^d)$, and the~$\dd\psi$ is the unique unitarily-invariant probability measure on the space of $d$-dimensional pure quantum states, often called the ``uniform'' or the ``Haar'' probability measure.

The integral formula in \cref{eq:symmetric subspace integral} implies that $\dd\mu(\hat\psi) = D_{n,d} \proj{\hat\psi}^{\ot n} d\hat\psi$ defines a POVM on the symmetric subspace with outcomes in the space of pure states (by adding $I - \Pi_{n,d}$, we can extend this to a POVM on all of $(\C^d)^{\ot n}$).
That is, the POVM has density $D_{n,d} \proj{\hat\psi}^{\ot n}$ with respect to the uniform measure~$\dd\hat\psi$.
We refer to this as the \emph{Hayashi measurement}.
If we apply this measurement to a state $\ket{\psi}^{\ot n}$, we interpret the outcome $\hat \psi$ as an estimate of $\ket{\psi}$. This turns out to be a good measurement for learning $\ket{\psi}$.
One can analyze its performance by computing the expected fidelity of the estimate with the target state.
Indeed, using \cref{eq:symmetric subspace integral} twice we find
\begin{align}\label{eq:second moment calculation}
\begin{split}
    \Ex_{\hat\psi} \abs{\braket{\psi | \hat \psi}}^2 &= \int \tr[\mu(\hat\psi) \proj{\psi}^{\ot n}] \abs{\braket{\psi | \hat \psi}}^2 \dd \hat\psi\\
    &= D_{n,d} \int \tr[\proj{\hat \psi}^{\ot n}  \proj{\psi}^{\ot n}] \tr[ \proj{\hat \psi} \proj{\psi}] \dd \hat\psi\\
    &= D_{n,d} \int \tr[\proj{\hat \psi}^{\ot (n+1)}  \proj{\psi}^{\ot (n+1)}] \dd \hat\psi \\
    &= \frac{D_{n,d}}{D_{n+1,d}} \tr[\Pi_{n+1} \proj{\psi}^{\ot(n+1)}] = \frac{D_{n,d}}{D_{n+1,d}} = \binom{n+d-1}{d-1} / \binom{n+d}{d-1}.
\end{split}
\end{align}
An easy calculation shows this is lower bounded by $1 - d/n$, so by applying Markov's inequality, we find that $n = \bigO(d/\eps^2)$ copies suffice to get an estimate for which $\abs{\braket{\psi | \hat \psi}}^2 \geq 1 - \eps^2$ with constant probability of error~$\delta$.

If one wants to get the dependence on~$\delta$ right, one can compute higher moments.
An argument to this end is given in \cite{pelecanos2025}, which is based on explicitly determining the distribution of the fidelity from its moments.
Here, we give an alternative argument that has the advantage that it can be easily be adapted to the case of Gaussian states.
An essentially identical calculation to \cref{eq:second moment calculation} shows that, for any integer~$k \geq 0$,
\begin{align}\label{eq:k-n}
    \Ex_{\hat\psi} \abs{\braket{\psi | \hat \psi}}^{2(k-n)} = \frac{D_{n,d}} {D_{k,d}}.
\end{align}
Taking $n$ even and $k = n/2$, we can bound the probability of failure as
\begin{align*}
    \Pr_{\hat\psi}\mleft(\abs{\braket{\psi | \hat \psi}}^2 \leq 1 - \eps^2 \mright) &= \Pr\mleft(\abs{\braket{\psi | \hat \psi}}^{-n} \geq (1 - \eps^2)^{-n/2}\mright) \\
    &\leq (1 - \eps^2)^{n/2} \Ex \abs{\braket{\psi | \hat \psi}}^{-n} \\
    &= (1 - \eps^2)^{n/2} \frac{D_{n,d}} {D_{n/2,d}} = (1 - \eps^2)^{n/2} \prod_{j=1}^{d-1} \underbrace{\frac{n+j}{n/2 + j}}_{\leq 2}\\
    &\leq (1 - \eps^2)^{n/2} 2^{d-1} \leq \exp\left(- \frac{n\eps^2}{2} + d\right),
\end{align*}
where we used Markov's inequality in the first inequality, next \cref{eq:k-n}, and lastly \cref{eq:dim symmetric subspace}.
We conclude that $n = \bigO\mleft(\frac{d + \log(\delta^{-1})}{\eps^2}\mright)$ copies suffice to obtain an $\eps$-accurate estimate with failure probability at most $\delta$.

The symmetric subspace, and in particular the integral formula \cref{eq:symmetric subspace integral}, can also be used to give a \emph{lower bound} on the sample complexity of pure state tomography.
The idea is as follows.
Suppose we are given an arbitrary POVM with outcomes in the space of pure states, and let~$P_{\psi}$ denote the distribution of measurement outcomes when applied to~$\psi^{\ot n}$.
By a calculation similar to \cref{eq:second moment calculation,eq:k-n} one can show that for a uniformly random pure state~$\psi$,
\begin{align*}
    \Ex_{\psi} \Ex_{\hat\psi \sim P_\psi} \abs{\braket{\psi | \hat \psi}}^{2k} \leq \frac{D_{n,d}} {D_{n+k,d}}.
\end{align*}
Looking at the $k$-th moment for $k = \Theta(\eps^{-2})$ then shows that at least $\Omega(d/\eps^2)$ copies are needed for constant probability of success.

Finally, to also get a matching lower bound in the regime of small $\delta$, we need to show that we need at least $\Omega(\log(\delta^{-1})/\eps^2)$ copies.
This follows from a reduction: tomography to precision~$\eps/2$ is at least as hard as distinguishing two specific quantum states $\ket{\psi}, \ket{\phi}$ with $\abs{\braket{\psi | \phi}}^2 = 1 - \eps^2$, given access to either $\ket{\psi}^{\ot n}$ or $\ket{\phi}^{\ot n}$. Indeed, $T(\psi,\phi) = \eps$, so it suffices to get an estimate that has accuracy $\eps/2$ in trace distance by the triangle inequality.
The optimal probability of failure for this task is given by the trace distance: it equals
\begin{align}\label{eq:distinguising two states}
    \begin{split}
        \delta = \frac12(1 - T(\psi^{\ot n},\phi^{\ot n})) = \frac12\left(1 - \sqrt{1 - (1 - \eps^2)^n})\right)
        \geq (1 - \eps^2)^n / 4 = \Omega(\exp(-\eps^2 n)).
    \end{split}
\end{align}
We conclude that pure state tomography has sample complexity $n = \Theta\mleft(\frac{d + \log(\delta^{-1})}{\eps^2}\mright)$.
This approach to pure state tomography was proposed by Hayashi~\cite{hayashi1998asymptotic}. See \cite{harrow2013church,brandao2016mathematics} for a detailed discussion of the symmetric subspace, tomography of pure states, and further applications. The lower bound argument we use is given by \cite{scharnhorst2025optimal}.

\subsection{Random purification and mixed state tomography}
Next, we consider the task of learning a state~$\rho$ on $\cH = \C^d$ that is not necessarily pure.

For any mixed state~$\rho$, there exists a pure state~$\ket{\psi_\rho}$ on $\cH \ot \tilde\cH$, where $\tilde\cH = \C^d$ is an auxiliary system of the same dimension, such that the reduced state on the first system equals~$\rho$.
This purification is not unique: applying an arbitrary unitary~$U \in \U(d)$ to the auxiliary system yields another purification; in fact, any two purifications of~$\rho$ differ by such a unitary.
This is a particularly useful principle in quantum information theory, where it is often much easier to reason about pure states rather than mixed states.
For example, the derivation of the optimal sample complexity for pure state tomography we just saw is rather elegant and elementary.
Recently, it has been shown that purifications not only exist, but can in fact be constructed in a meaningful way by a \emph{quantum channel} \cite{tang2025conjugate,pelecanos2025,girardi2025random}, see also \cite{soleimanifar2022testing,chen2024local}.
While no quantum channel can assign a fixed purification to every quantum state, there does exist a \emph{random purification channel} that prepares a purification with a random unitary applied to the auxiliary system.
Crucially, it can be consistently extended to multiple copies of the state. Namely, the random purification channel $\cP_{n,d}$ acts on $n$ copies of a quantum system with Hilbert space $\cH = \C^d$, and it has the property that for every state $\rho$ on $\C^d$,
\begin{align}\label{eq:random purification intro}
    \cP_{n,d}[\rho^{\ot n}] = \int_{\U(d)} \left((I \ot U) \psi^\text{std}_\rho (I \ot U^\dagger)\right)^{\ot n} \, \dd U = \Ex_{\psi_\rho \text{ purification of } \rho } \psi_\rho^{\ot n},
\end{align}
where $\psi^\text{std}_\rho$ is a fixed (``standard'') purification of $\rho$.

This result can be applied to tomography~\cite{pelecanos2025}.
Any tomography protocol for pure states can be extended to a tomography protocol for mixed states as follows:
\begin{enumerate}[noitemsep]
    \item Given $n$ copies of a mixed state~$\sigma$, apply the random purification channel.
    \item Apply the pure state tomography protocol to the output of the random purification channel.
    \item Return the reduced density matrix of the estimated pure state.
\end{enumerate}
To learn the mixed state, indeed, it clearly suffices to learn a purification of it, and it does not matter which one.
Formalizing this argument shows that the above protocol is correct.
Note that it is crucial that the random purification is consistent over multiple copies.
The above protocol, and more sophisticated versions of it, has been analyzed in detail in~\cite{pelecanos2025}.
Remarkably, while the analysis is \emph{easier} than for previous protocols, the resulting protocol is \emph{optimal} in terms of sample complexity.
Indeed, the number of copies needed is that for pure-state estimation on the Hilbert space $\cH \ot \cH$, which has dimension~$d^2$, so this shows achievability in \cref{eq:sample complexity tomography}.
One can show that this is indeed optimal \cite{Haah2016, scharnhorst2025optimal}, but the details of that argument are not relevant to the Gaussian problem, where we will find that a lower bound for pure states is already optimal.
Furthermore, the random purification channel can be efficiently implemented, yielding the first tomography scheme that is both computationally efficient and optimal in sample complexity.

The random purification channel in \cref{eq:random purification intro} can be derived from symmetry considerations.
In the discussion of the symmetric subspace, we already mentioned two group actions on~$\cH^{\ot n}$.
One is given by $\U(d)$, with unitaries acting identically on each copy by $U^{\ot n}$.
The other action is given by the permutation group~$S_n$, which permutes the copies.
These two actions are each others commutants, and this gives the Schur--Weyl decomposition, which decomposes the actions of $\U(d)$ and $S_n$ to a block-diagonal form:
\begin{align}\label{eq:sw appendix}
    \cH^{\ot n} \cong \bigoplus_{\lambda} \mathcal V_\lambda \ot \mathcal W_\lambda,
\end{align}
where $\mathcal V_\lambda$ and $\mathcal W_\lambda$ are irreducible representations of $\U(d)$ and $S_n$ respectively, labeled by Young diagrams~$\lambda$.
In particular, the symmetric subspace corresponds to the trivial representation of $S_n$.

The high-level idea of the random purification channel is that $\rho^{\ot n}$ is in the algebra generated by the action of~$\U(d)$, hence
\begin{align}\label{eq:rho n times}
    \rho^{\ot n} \cong \bigoplus_{\lambda} \rho_{\lambda} \ot \frac{\id_{\mathcal W_{\lambda}}}{\dim \mathcal W_{\lambda}},
\end{align}
so the state is maximally mixed on the irreducible representations of $S_n$.
If we identify
\begin{align*}
    (\cH \ot \tilde{\cH})^{\ot n} \cong \bigoplus_{\lambda, \mu} \mathcal{V}_\lambda \ot \tilde{\mathcal{V}}_{\mu} \ot \mathcal W_\lambda \ot \tilde{\mathcal W}_{\mu}
\end{align*}
then one can show that $n$ copies of the standard purification of~$\rho$ take the form
\begin{align*}
    \ket{\psi_\rho^{\text{std}}}^{\ot n} \cong \bigoplus_\lambda \underbrace{\ket{\psi_\rho}_{\mathcal V_\lambda \tilde{\mathcal V}_{\lambda}}}_{\text{purifies } \rho_\lambda} \ot \underbrace{\ket{\Phi}_{\mathcal W_\lambda \tilde{\mathcal W}_{\lambda}}}_{\substack{\text{purifies} \\ \text{max. mixed state}}},
\end{align*}
where each $\ket{\Phi}_{\mathcal W_\lambda \tilde{\mathcal W}_{\lambda}}$ is a maximally entangled state, and the $\ket{\psi_\rho}_{\mathcal V_\lambda \tilde{\mathcal V}_{\lambda}}$ are purifications of the (subnormalized) states $\rho_{\lambda}$.
Note that if we apply $U^{\ot n}$ to the $(\tilde{\cH})^{\ot n}$ this acts as $U_\lambda$ on $\tilde{\mathcal V}_{\lambda}$, and averaging over random unitaries~$U^{\ot n}$ corresponds to a depolarizing channel on the~$\tilde \cV_\lambda$, so we find that
\begin{align}\label{eq:random purification expression review}
    \int_{\U(d)} \left((I \ot U) \psi^\text{std}_\rho (I \ot U^\dagger)\right)^{\ot n} \, \dd U \cong \bigoplus_\lambda \rho_\lambda \ot \frac{I_{\tilde{\mathcal V}_{\lambda}}}{\dim \mathcal V_\lambda} \ot \Phi_{\mathcal W_\lambda \tilde{\mathcal W}_{\lambda}}.
\end{align}
The right-hand side can be obtained by applying a quantum channel to \cref{eq:rho n times}:
for each $\lambda$, discard the state on~$\cW_{\lambda}$ and prepare a maximally mixed state on~$\tilde\cV_\lambda$ as well as a maximally entangled state between~$\cW_{\lambda}$ and~$\tilde\cW_{\lambda}$.
This defines the random purification channel $\cP_{n,d}$.

In fact, one does not need the full strength of Schur--Weyl duality to derive this result and a simpler derivation is given in \cite{girardi2025random}.
The discussion after~\cref{thm:main simplified} explains how to recover the random purification channel without any use of Schur--Weyl duality as a special case of a more general result.
The advantage of using the explicit decomposition in \cref{eq:sw appendix} then is that it is efficiently implemented by the quantum Schur transform---this gives a computationally efficient implementation of the random purification channel, and the same applies in the more general setting discussed below.

\begin{rem}\label{rem:any staet on V_lambda}
Above we saw that the random purification channel prepares maximally mixed states on the registers~$\tilde \cV_\lambda$.
In the generalization to bosonic systems, which have an infinite-dimensional Hilbert space, we will see that the spaces analogous to $\cV_\lambda, \tilde\cV_\lambda$ will become infinite-dimensional (while the analogue to $\cW_\lambda, \tilde\cW_\lambda$ remains finite-dimensional).
A priori, this poses a problem, as there exists no maximally mixed state on an infinite-dimensional Hilbert space.

To that end, we make the observation that in order to reduce mixed-state tomography to pure-state tomography, it is \emph{not} necessary to prepare the maximally mixed states on the registers~$\tilde\cV_\lambda$ (both in the present setting as well as in the Gaussian settings discussed later).
Indeed, suppose we modify the channel~$\mathcal P_{n,d}$ to a channel $\mathcal Q_{n,d}$ such that
\begin{align}\label{eq:modified random puri}
    \mathcal Q_{n,d}[\rho^{\ot n}] \cong \bigoplus_\lambda \rho_\lambda \ot \nu_\lambda \ot \Phi_{\mathcal W_\lambda \tilde{\mathcal W}_{\lambda}},
\end{align}
using arbitrary fixed states~$\nu_\lambda$ on the registers~$\tilde \cV_\lambda$ in place of the maximally mixed states in \cref{eq:random purification expression review}.
We call this channel a \emph{quasi-purification channel}.
This quasi-purification channel no longer has the interpretation of outputting a random purification, it is not hard to see that we obtain the same distribution of measurement outcomes (i.e., estimates of the true state) when using it in the mixed-state tomography protocol.\footnote{In \cite{pelecanos2025}, a different channel is constructed that is also called a quasi-purification channel; while that channel is different and serves a different purpose, it has the same philosophy: it loses the interpretation of preparing copies of a random purification, but composes well with pure state tomography.}
Because the Hayashi measurement is covariant, the POVM defined by applying the Hayashi measurement and then outputting the reduced state is invariant under applying~$U^{\ot n}$ on the purifying system.
Accordingly, if we insert a Haar twirl over~$U^{\ot n}$ right before the measurement this does not change the statistics -- but such a twirl maps \cref{eq:modified random puri} to \cref{eq:random purification expression review}.

Alternatively, one can also use Schur's lemma to see that the invariance of the POVM implies that it is block-diagonal with respect to $\lambda=\mu$, with each block proportional to the identity operator on the~$\tilde\cV_\lambda$ register, hence it is insensitive to the choice of states~$\nu_\lambda$. While there is not a twirling channel for non-compact groups, an infinite-dimensional analogue of Schur's lemma will still imply that the POVM we use, when applied to $\bigoplus_\lambda \rho_{\cW_\lambda,\tilde{\cW}_{\lambda}}\ot \Phi_{\mathcal W_\lambda \tilde{\mathcal W}_{\lambda}}$, is only sensitive to $\Tr_{\tilde{\cW}_{\lambda}}[\rho_{\cW_\lambda,\tilde{\cW}_{\lambda}}]$.
\end{rem}

\section{A random purification channel for general symmetries}\label{sec:random purification general}
In this section we state and prove a random purification theorem for general symmetries in finite dimensions.
We first set up some notation and recall facts about finite-dimensional $*$-algebras in \cref{sec:algebras}.
We then prove the main result of this section in \cref{sec:random purification theorem}.
\Cref{subsec:examples} gives some direct applications, including the original random purification channel from~\cite{tang2025conjugate,pelecanos2025}.

\subsection{Algebras and commutants}\label{sec:algebras}
In this section, we consider finite-dimensional Hilbert spaces.
We call~$\Lin(\cH)$ the set of linear operators acting on a Hilbert space~$\cH$. We write $\PSD(\cH)$ for the positive semidefinite operators on~$\cH$, and~$\S(\cH)$ for the set of density matrices on~$\cH$.

Given two Hilbert spaces~$\cH$ and~$\cK$, a \emph{quantum channel} from~$\cH$ to~$\cK$ is defined as a completely positive trace-preserving map~$\Phi\colon\Lin(\cH)\to\Lin(\cK)$.
We recall that any closed subgroup~$G \subseteq \U(\cH)$ has a Haar probability measure, which we are going to denote by~$\dd g$.

Let us consider a $*$-algebra of linear operators~$\cA \subseteq \Lin(\cH)$. $\cA$ can be written as a direct sum of matrix algebras. More concretely, there exists a unitary~$\Upsilon$ implementing a decomposition
\begin{align}\label{eq:decomposition}
    \cH \cong \bigoplus_{\lambda \in \Lambda} \cL_\lambda \ot \cR_\lambda,
\end{align}
where~$\Lambda$ is a finite index set and $\cL_\lambda$ and $\cR_\lambda$ are Hilbert spaces such that
\begin{align}\label{eq:decomposition algebras}
    \cA \cong \bigoplus_{\lambda \in \Lambda} I_{\cL_\lambda} \ot \Lin(\cR_\lambda),
    \quad\text{and hence}\quad
    \cA' \cong \bigoplus_{\lambda \in \Lambda} \Lin(\cL_\lambda) \ot I_{\cR_\lambda},
\end{align}
where $S' \subseteq \Lin(\cH)$ is the commutant of a subset~$S = S^\dagger \subseteq \Lin(\cH)$, i.e., the $*$-algebra of operators that commute with all the operators in~$S$.

Let~$\E_\cA \colon \Lin(\cH) \to \cA$ be the orthogonal projection onto~$\cA$ with respect to the Hilbert-Schmidt inner product. In terms of~\eqref{eq:decomposition}, $\E_\cA$ is~given~by
\begin{align}\label{eq:E}
    \E_\cA[M] \cong \bigoplus_{\lambda \in \Lambda}  \frac{I_{\cL_\lambda}}{\dim \cL_\lambda} \ot \tr_{\cL_\lambda}[ P_\lambda M P_\lambda ]
    \qquad (M \in \Lin(\cH)),
\end{align}
where $P_\lambda$ denotes the orthogonal projections onto the summands in \cref{eq:decomposition}.
That is, $\E_\cA$ acts as a measurement of~$\lambda$ composed with completely depolarizing channels on each~$\cL_\lambda$.
Alternatively, the orthogonal projection can be written as follows:
Take any closed subgroup~$G \subseteq \U(\cH)$ that generates (equivalently, spans) the commutant algebra~$\cA'$.%
\footnote{Equivalently, by the double commutant theorem, the closed subgroup should satisfy~$G' = \cA$.}
Then the orthogonal projection is given by the twirl with respect to the Haar probability measure~$\dd g$ on~$G$:
\begin{align}\label{eq:twirl}
    \E_\cA[M] = \int_G g M g^\dagger \,\dd g.
\end{align}
A choice that always works is to take~$G = \U(\cA')$, the group of unitaries in the $*$-algebra, but often~$G$ can be taken to be much smaller.%
\footnote{For example, the Pauli group spans the $*$-algebra of $2\times 2$-matrices. More generally, any ensemble of unitaries satisfying~\eqref{eq:twirl} is called a \emph{unitary 1-design} for~$\cA'$.}
A well-known and motivating example is the following: if~$\cH = \cK^{\ot t}$ and $\cA \subseteq \Lin(\cK^{\ot t})$ is spanned by the permutation operators, then $\cA'$ is spanned by the group~$G = \{ U^{\ot t} : U \in \U(\cK) \}$.
This result is known as Schur--Weyl duality.

\subsection{Random purification theorem}\label{sec:random purification theorem}

While purifications can be canonically defined by choosing the purifying Hilbert space~$\tilde\cH$ as the dual of the original Hilbert space~$\cH$, in applications we often need to choose~$\tilde\cH$ and the purification in a specific way (so that, e.g., it can be interpreted as a Gaussian state).
Accordingly we will work with the following concrete setup.
Let $\cH, \tilde{\cH}$ be Hilbert spaces of the same dimension, with fixed orthonormal bases~$\{\ket x\}_{x \in I}$, $\{\ket{\tilde{x}}\}_{x \in I}$, for some index set~$I$.%
\footnote{This is the same data as a basis of $\cH$ and a unitary $\cH \to \tilde\cH$.}
The transpose and conjugation are defined with respect to these bases, which also allow us to associate to any operator~$c \in \Lin(\cH)$ an operator~$\tilde c \in \Lin(\tilde{\cH})$, so $\braket{\tilde{x} | \tilde c | \tilde{y}} = \braket{x | c | y}$.
We write~$c$ instead of~$\tilde c$ when it is clear on which Hilbert space the operator acts (as will often be the case).
Now let~$\ket\Gamma = \sum_x \ket{x\tilde{x}} \in \cH \ot \tilde{\cH}$ denote the unnormalized maximally entangled state corresponding to these bases.
It satisfies the following version of the transpose trick:
$(c \ot I) \ket\Gamma = (I \ot c\tran) \ket\Gamma$.
In particular, $(u \ot \bar u) \ket\Gamma = \ket\Gamma$ for any unitary~$u \in \U(\cH)$.
We define the \emph{standard purification} of a quantum state~$\rho \in \S(\cH)$ by the formula $\ket{\psi^\text{std}_\rho} \coloneqq (\sqrt\rho \ot I) \ket\Gamma$.
Of course, it is only standard once the two bases have been chosen.

We now discuss the interplay of algebras, transposes, and basis choices.
Let $\cA \subseteq \Lin(\cH)$ be $*$-algebra of linear operators, as in \cref{sec:algebras}.
Let $\Upsilon \colon \mathcal H \to \bigoplus_{\lambda \in \Lambda} \cL_\lambda \ot \cR_\lambda$ be a unitary implementing the decomposition in \cref{eq:decomposition,eq:decomposition algebras}, and fix a product basis $\{\ket{\lambda,lr}\}_{l,r}$ for each direct summand~$\cL_\lambda \ot \cR_\lambda$.
Then, if we expand an arbitrary operator~$c \in \Lin(\cH)$ as
\begin{align}\label{eq:transpose}
    \Upsilon c \Upsilon^\dagger = \bigoplus_{\lambda\in\Lambda} b_\lambda \ot a_\lambda,
\quad\Rightarrow\quad
    \bar\Upsilon c\tran (\bar\Upsilon)^\dagger
= (\Upsilon c \Upsilon^\dagger )\tran
= \bigoplus_{\lambda\in\Lambda} b_\lambda\tran \ot a_\lambda\tran,
\end{align}
where $\bar\Upsilon$ denotes the conjugate with respect to the chosen bases.
Thus, $\bar\Upsilon$ implements decompositions as in \cref{eq:decomposition,eq:decomposition algebras} for the transposed algebra~$\cA\tran$ and its commutant. 
The right-hand side formula holds verbatim if we consider $c\tran$ as an operator in~$\Lin(\tilde\cH)$ and also identify~$\bar\Upsilon$ with a unitary $\tilde\cH \to \bigoplus_{\lambda \in \Lambda} \tilde\cL_\lambda \ot \tilde\cR_\lambda$, where $\tilde{\cL}_\lambda = \cL_\lambda$ and $\tilde{\cR}_\lambda = \cR_\lambda$ (we use tildes to distinguish the purifying space from the original one), by using the chosen bases.
If we use~$\Upsilon$ and~$\bar\Upsilon$ to decompose $\cH$ and~$\tilde\cH$, respectively, the joint Hilbert spaces decompose as
\begin{align*}
    \Upsilon \ot \bar\Upsilon
\colon
    \cH \ot \tilde\cH
\rightarrow \bigoplus_{\lambda\in\Lambda} ( \cL_\lambda \ot \cR_\lambda ) \ot \bigoplus_{\mu\in\Lambda} ( \tilde\cL_\mu \ot \tilde\cR_\mu ) \cong \bigoplus_{\lambda,\mu\in\Lambda} ( \cL_\lambda \ot \tilde\cL_\mu ) \ot ( \cR_\lambda \ot \tilde\cR_\mu ).
\end{align*}
Furthermore, the maximally entangled state~$\ket\Gamma$ identifies with a direct sum of maximally entangled states of the corresponding tensor factors:
\begin{align}\label{eq:max ent decomposition}
    (\Upsilon \ot \bar\Upsilon) \ket{\Gamma}
= \sum_x \Upsilon \ket x \otimes \overline{\Upsilon \ket x}
= \sum_{\lambda, l, r} \ket{\lambda,lr}\otimes \ket{\lambda,lr}
\cong \bigoplus_\lambda \ket{\Gamma}_{\cL_\lambda\tilde{\cL}_\lambda} \ot \ket{\Gamma}_{\cR_\lambda\tilde{\cR}_\lambda},
\end{align}
where~$\ket{\Gamma}_{\cL_\lambda\tilde{\cL}_\lambda}$, $\ket{\Gamma}_{\cR_\lambda\tilde{\cR}_\lambda}$ denote unnormalized maximally entangled states in~$\cL_\lambda \ot \tilde{\cL}_\lambda$ and~$\cR_\lambda \ot \tilde{\cR}_\lambda$ (with respect to the local bases chosen to define the product basis $\ket{\lambda,lr}$), with squared norm equal to $\dim(\cL_\lambda)$ and $\dim(\cR_\lambda)$, respectively.
The second equality in \cref{eq:max ent decomposition} holds due to the well-known $u \otimes \bar u$ invariance of maximally entangled states with respect to the chosen bases (here, the $\ket{\lambda,lr}$ basis).

We are now ready to present our general random purification theorem.
In the statement of the theorem, we use the fixed bases chosen above to identify operators on~$\cH$ and~$\tilde\cH$ and to take conjugates and transposes of such operators, as well as to define the standard purification.
\Cref{thm:main simplified} in the introduction is an immediate corollary (choose the algebra~$\cA$ as the commutant of the group~$G$).

\begin{thm}[Random purification for general symmetries]\label{thm:main technical}
For any $*$-algebra $\cA \subseteq \Lin(\cH)$, there is a quantum channel
$\cP_\cA \colon \Lin(\cH) \to \Lin(\cH \ot \tilde{\cH})$
with the following properties:
\begin{enumerate}
\item \emph{Random purification of invariant input states:}
For any input state~$\rho \in \S(\cH)$ that commutes with~$\cA$, i.e., $\rho \in \cA'$, we have
\begin{align}\label{eq:sym action}
    \cP_\cA[\rho]
= \parens*{ \idCh \ot \E_{\cA\tran} }[ \psi^\text{std}_\rho ]
= \int_G (I \ot g\tran) \psi^\text{std}_\rho (I \ot \bar g) \,\dd g,
\end{align}
where $G \subseteq \U(\cH)$ is any closed subgroup with~$\C G = \cA'$.

\item \emph{Symmetry of output states:}
For any input state~$\rho \in \S(\cH)$, the output state $\cP_\cA[\rho]$ is supported on the $\cA$-symmetric subspace of~$\cH \ot \tilde{\cH}$ (defined as the subspace of vectors invariant by~$u \ot \bar u$ for all unitaries~$u$ in the algebra~$\cA$),
and it is also $G$-invariant on the purifying system (i.e., commutes with~$I \ot g\tran$ for all $g \in G$).

\item \emph{Efficiency:}
If unitaries implementing the decomposition~\eqref{eq:decomposition}, the basis change~$\ket x \mapsto \ket{\tilde{x}}$ can be implemented efficiently, and one can efficiently prepare maximally mixed states on $\cL_\lambda$ and maximally entangled states on two copies of $\cR_\lambda$, then $\cP_\cA$ can be implemented efficiently.
\item \emph{Recovery map:}
The channel~$\cP_\cA$ is the Petz recovery map for the partial trace over the purifying system and the maximally mixed state on the~$\cA$-symmetric subspace.

\item \emph{Explicit formulas:}
For all $\rho \in \S(\cH)$,
\begin{align}\label{eq:explicit}
    \cP_\cA[\rho]
&\cong \bigoplus_{\lambda\in\Lambda} \tr_{\cR_\lambda}[P_\lambda \rho P_\lambda] \ot \frac{I_{\tilde{\cL}_\lambda}}{\dim \cL_\lambda} \ot \tfrac1{\dim\cR_\lambda} \ket{\Gamma}_{\cR_\lambda\tilde{\cR}_\lambda}\bra{\Gamma}_{\cR_\lambda\tilde{\cR}_\lambda}\\
&= \Pi_\cA \parens*{ \rho \ot \Omega } \Pi_\cA
= \sqrt{\cP_\cA[I]} \parens*{ \rho \ot I } \sqrt{\cP_\cA[I]},
\end{align}
where $\Pi_\cA$ denotes the projection onto the~$\cA$-symmetric subspace, and~$\Omega \coloneqq \bigoplus_{\lambda\in\Lambda} \frac {\dim\cR_\lambda}{\dim\cL_\lambda} P_\lambda\tran$, with~$P_\lambda$ the orthogonal projections onto the direct summands in~\eqref{eq:decomposition}.
\end{enumerate}
\end{thm}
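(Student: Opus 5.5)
The plan is to \emph{define} $\cP_\cA$ by the first line of \cref{eq:explicit} and then verify every property from that formula. Concretely, in the decomposition $\Upsilon$ of \cref{eq:decomposition}, the channel does the following. It measures $\lambda$ with the projections $P_\lambda$, traces out $\cR_\lambda$, and prepares a maximally mixed state on $\tilde\cL_\lambda$ together with a normalized maximally entangled state on $\cR_\lambda\tilde\cR_\lambda$. Finally it maps back using $\Upsilon^\dagger\ot\bar\Upsilon^\dagger$.

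This is manifestly a composition of a measurement (an instrument), partial traces and state preparations, so it is CPTP. The same description gives the efficiency claim (item 3) directly under the stated assumptions. Item 2 also follows from the formula. Unitaries of $\cA$ act as $I\ot a_\lambda$ on $\cR_\lambda$, so $u\ot\bar u$ acts as $a_\lambda\ot\bar a_\lambda$ on $\ket\Gamma_{\cR_\lambda\tilde\cR_\lambda}$, which leaves it invariant. The elements $g\tran$ with $g\in G\subseteq\cA'$ act by \cref{eq:transpose} as $b_\lambda\tran\ot I$ on $\tilde\cL_\lambda\ot\tilde\cR_\lambda$, and these commute with $I_{\tilde\cL_\lambda}$.

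For item 1, take $\rho\in\cA'$, so that $\rho\cong\bigoplus_\lambda\rho_\lambda\ot I_{\cR_\lambda}/\dim\cR_\lambda$. Then $\sqrt\rho\cong\bigoplus_\lambda\sqrt{\rho_\lambda}\ot I/\sqrt{\dim\cR_\lambda}$. Combining this with \cref{eq:max ent decomposition} gives
\[
\ket{\psi^{\rm std}_\rho}\cong\bigoplus_\lambda(\sqrt{\rho_\lambda}\ot I)\ket\Gamma_{\cL_\lambda\tilde\cL_\lambda}\ot\ket\Gamma_{\cR_\lambda\tilde\cR_\lambda}/\sqrt{\dim\cR_\lambda}.
\]
Next apply $\E_{\cA\tran}$ on $\tilde\cH$. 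By \cref{eq:transpose}, $\bar\Upsilon$ implements the decomposition for $\cA\tran$, so \cref{eq:E} says that $\E_{\cA\tran}$ kills the off-diagonal blocks $\mu\neq\lambda$ and completely depolarizes $\tilde\cL_\lambda$. Depolarizing $\tilde\cL_\lambda$ in $(\sqrt{\rho_\lambda}\ot I)\proj\Gamma(\sqrt{\rho_\lambda}\ot I)$ yields $\rho_\lambda\ot I/\dim\cL_\lambda$. Since $\tr_{\cR_\lambda}[P_\lambda\rho P_\lambda]=\rho_\lambda$, this matches \cref{eq:explicit}. The twirl formula \cref{eq:twirl}, applied to $\cA\tran$ and the group $G\tran$ (which spans $(\cA')\tran=(\cA\tran)'$), gives the integral expression. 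We use $(g\tran)^\dagger=\bar g$ and the fact that the Haar measure is unchanged under $g\mapsto g\tran$, which is a bijection onto the group $\{g\tran\}$ up to anti-homomorphism.

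For item 5, compute $\Pi_\cA$ explicitly. I claim it is $\bigoplus_\lambda I_{\cL_\lambda\tilde\cL_\lambda}\ot\proj\Gamma_{\cR_\lambda\tilde\cR_\lambda}/\dim\cR_\lambda$, i.e.\ supported only on the diagonal blocks $\lambda=\mu$. The argument is that invariance under $u\ot\bar u$ for $u=\bigoplus_\lambda e^{i\theta_\lambda}I\ot v_\lambda$ forces $\lambda=\mu$ via the phases. On the diagonal blocks, the $v\ot\bar v$-invariant vectors of $\cR_\lambda\ot\tilde\cR_\lambda$ are multiples of $\ket\Gamma$ by Schur's lemma. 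Sandwiching $\rho\ot\Omega$, with $\Omega$ acting as $\frac{\dim\cR_\lambda}{\dim\cL_\lambda}I$ on the $\lambda$ block of $\tilde\cH$, gives
\[
\frac{\dim\cR_\lambda}{\dim\cL_\lambda}\cdot\frac{1}{(\dim\cR_\lambda)^2}\,\rho_{\lambda\lambda}\ot I\ot\ket\Gamma\bra\Gamma(\rho_{\lambda\lambda}'\ot I)\ket\Gamma\bra\Gamma .
\]
The identity $\bra\Gamma(X\ot I)\ket\Gamma=\tr X$ on $\cR_\lambda$ turns this into the first line of \cref{eq:explicit}. Setting $\rho=I$ gives $\cP_\cA[I]=\bigoplus_\lambda \frac{\dim\cR_\lambda}{\dim\cL_\lambda}I\ot I\ot\proj\Gamma/\dim\cR_\lambda$, a multiple of $\Pi_\cA$ on each block. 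Its square root is then $\Pi_\cA(I\ot\Omega^{1/2})=(I\ot\Omega^{1/2})\Pi_\cA$, because $\Omega$ is block-scalar and hence commutes with $\Pi_\cA$. This yields the last expression.

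For item 4, the Petz map for $\tr_{\tilde\cH}$ with reference state $\sigma=\Pi_\cA/\tr\Pi_\cA$ is $X\mapsto\sigma^{1/2}\big((\tr_{\tilde\cH}\sigma)^{-1/2}X(\tr_{\tilde\cH}\sigma)^{-1/2}\ot I\big)\sigma^{1/2}$. Here $\tr_{\tilde\cH}\Pi_\cA\cong\bigoplus_\lambda\frac{\dim\cL_\lambda}{\dim\cR_\lambda}I$, which equals $\Omega^{-1}$ transposed to $\cH$. Commuting the block scalars through $\Pi_\cA$ recovers $\Pi_\cA(\rho\ot\Omega)\Pi_\cA$.

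I expect the main obstacle to be bookkeeping rather than ideas. The delicate points are characterizing the $\cA$-symmetric subspace, in particular ruling out the $\lambda\neq\mu$ blocks when some $\dim\cR_\lambda=1$ and the relevant unitaries are only phases, and keeping the transposes and conjugations consistent between $\Upsilon$ and $\bar\Upsilon$, which is needed to identify the twirl with $\E_{\cA\tran}$.
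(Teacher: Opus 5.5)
Your proposal is correct and follows essentially the same route as the paper. You define $\cP_\cA$ by the block formula in \cref{eq:explicit}, obtain item~1 by computing the standard purification block-wise and applying \cref{eq:E} (and the twirl \cref{eq:twirl}) for $\cA\tran$, and verify items~2, 4 and~5 from the explicit form of $\Pi_\cA$ and the identity $\sqrt{\cP_\cA[I]} = (I\ot\sqrt\Omega)\Pi_\cA = ((\tr_{\tilde\cH}\Pi_\cA)^{-1/2}\ot I)\Pi_\cA$. Your phase-plus-Schur characterization of $\Pi_\cA$ and your direct check that $I\ot g\tran$ commutes with the output for \emph{arbitrary} inputs (the paper argues this only via the twirl identity, stated for $\rho\in\cA'$) fill in steps the paper treats as immediate.
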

\begin{proof}
By the discussion above, we may assume that the decompositions~\eqref{eq:decomposition} and~\eqref{eq:decomposition algebras} hold with equality:
\begin{align*}
    \cH = \bigoplus_{\lambda \in \Lambda} \cL_\lambda \ot \cR_\lambda, \quad
    \cA = \bigoplus_{\lambda \in \Lambda} I_{\cL_\lambda} \ot \Lin(\cR_\lambda),
    \quad
    \cA' = \bigoplus_{\lambda \in \Lambda} \Lin(\cL_\lambda) \ot I_{\cR_\lambda}.
\end{align*}
and similarly, using \cref{eq:transpose},
\begin{align*}
    \tilde{\cH} = \bigoplus_{\lambda \in \Lambda} \tilde{\cL}_\lambda \ot \tilde{\cR}_\lambda, \quad
    \cA\tran = \bigoplus_{\lambda \in \Lambda} I_{\tilde{\cL}_\lambda} \ot \Lin(\tilde{\cR}_\lambda),
    \quad
    (\cA')\tran = \bigoplus_{\lambda \in \Lambda} \Lin(\tilde{\cL}_\lambda) \ot I_{\tilde{\cR}_\lambda},
\end{align*}
where $\tilde{\cL}_\lambda = \cL_\lambda$ and $\tilde{\cR}_\lambda = \cR_\lambda$, and we think of the transposed algebra~$\cA\tran$ as operators on $\tilde{\cH}$ by using the chosen bases.
In particular, if $P_\lambda$ denotes the projections onto the direct summands of~$\cH$, then~$P_\lambda\tran$ are the projections onto the direct summands of~$\tilde{\cH}$.
The joint Hilbert space decomposes as
\begin{align*}
    \cH \ot \tilde\cH
= \bigoplus_{\lambda\in\Lambda} ( \cL_\lambda \ot \cR_\lambda ) \ot \bigoplus_{\mu\in\Lambda} ( \tilde\cL_\mu \ot \tilde\cR_\mu )
= \bigoplus_{\lambda,\mu\in\Lambda} \cL_\lambda \ot \tilde\cL_\mu \ot \cR_\lambda \ot \tilde\cR_\mu,
\end{align*}
where we write the right-hand side identification as an equality since it is canonical.
Then \cref{eq:max ent decomposition} also holds with equality:
\begin{align}\label{eq:EPR}
    \ket\Gamma = \bigoplus_\lambda \ket{\Gamma}_{\cL_\lambda\tilde{\cL}_\lambda} \ot \ket{\Gamma}_{\cR_\lambda\tilde{\cR}_\lambda}
\in \cH \ot \tilde{\cH}.
\end{align}
\begin{enumerate}
\item We will compute $\parens*{ \idCh \ot \E_{\cA\tran} }[ \psi^\text{std}_\rho ]$ and see that it can be realized by a natural quantum channel acting on~$\rho$.
We start by the computation of the standard purification with respect to the above decomposition.
Since $\rho \in \cA'$, we see that
\begin{align}\label{eq:rho}
    \rho = \bigoplus_{\lambda\in\Lambda} \rho_\lambda \ot I_{\cR_\lambda},
\quad\text{where}\quad
    \rho_\lambda = \frac1{\dim\cR_\lambda} \tr_{\cR_\lambda}[P_\lambda \rho P_\lambda] \in \PSD(\cL_\lambda).
\end{align}
Using \cref{eq:EPR}, we see that the standard purification is given by
\begin{align*}
    \ket{\psi^\text{std}_\rho} = (\sqrt\rho \ot I) \ket\Gamma = \bigoplus_{\lambda\in\Lambda} \ket{\psi_\lambda}_{\cL_\lambda\tilde{\cL}_\lambda} \ot \ket{\Gamma}_{\cR_\lambda\tilde{\cR}_\lambda},
\end{align*}
where each $\ket{\psi_\lambda}$ is the standard purification of~$\rho_\lambda$ (with respect to the bases chosen above).
As a density operator,
\begin{align*}
    \proj{\psi^\text{std}_\rho} = \bigoplus_{\lambda,\mu\in\Lambda} \ket{\psi_\lambda}_{\cL_\lambda\tilde{\cL}_\lambda}\bra{\psi_\mu}_{\cL_\mu\tilde{\cL}_\mu} \ot \ket{\Gamma}_{\cR_\lambda\tilde{\cR}_\lambda}\bra{\Gamma}_{\cR_\mu\tilde{\cR}_\mu},
\end{align*}
From \cref{eq:E}, we see that
\begin{align*}
    \parens*{ \idCh \ot \E_{\cA\tran} }\mleft[ \proj{\psi^\text{std}_\rho} \mright]
&= \bigoplus_{\lambda\in\Lambda}  \tr_{\tilde{\cL}_\lambda}[\ket{\psi_\lambda}_{\cL_\lambda\tilde{\cL}_\lambda}\bra{\psi_\lambda}_{\cL_\lambda\tilde{\cL}_\lambda}] \ot \frac{I_{\tilde{\cL}_\lambda}}{\dim \cL_\lambda} \ot \ket{\Gamma}_{\cR_\lambda\tilde{\cR}_\lambda}\bra{\Gamma}_{\cR_\lambda\tilde{\cR}_\lambda}\\
&= \bigoplus_{\lambda\in\Lambda} \rho_{\lambda} \ot \frac{I_{\tilde{\cL}_\lambda}}{\dim \cL_\lambda} \ot \ket{\Gamma}_{\cR_\lambda\tilde{\cR}_\lambda}\bra{\Gamma}_{\cR_\lambda\tilde{\cR}_\lambda}\\
&= \bigoplus_{\lambda\in\Lambda} \tr_{\cR_\lambda}[P_\lambda \rho P_\lambda] \ot \frac{I_{\tilde{\cL}_\lambda}}{\dim \cL_\lambda} \ot \frac{\ket{\Gamma}_{\cR_\lambda\tilde{\cR}_\lambda}\bra{\Gamma}_{\cR_\lambda\tilde{\cR}_\lambda}}{\dim\cR_\lambda}.
\end{align*}
We may take the last line as the \emph{definition} of the desired channel: for all $\rho \in \S(\cH)$, we define
\begin{align}\label{eq:def in proof}
    \cP_\cA[\rho] \coloneqq \bigoplus_{\lambda\in\Lambda} \tr_{\cR_\lambda}[P_\lambda \rho P_\lambda] \ot \frac{I_{\tilde{\cL}_\lambda}}{\dim \cL_\lambda} \ot \frac{\ket{\Gamma}_{\cR_\lambda\tilde{\cR}_\lambda}\bra{\Gamma}_{\cR_\lambda\tilde{\cR}_\lambda}}{\dim\cR_\lambda},
\end{align}
which satisfies the first equation in~\eqref{eq:sym action}; the second equation follows at once from~\eqref{eq:twirl}.

\item
Because $\U(\cA) = \bigoplus_{\lambda\in\Lambda} I_{\cL_\lambda} \ot \U(\cR_\lambda)$, it is clear from \cref{eq:transpose} that
the projection onto the $\cA$-symmetric subspace 
takes the form
\begin{align}\label{eq:pi}
    \Pi_\cA = \bigoplus_{\lambda\in\Lambda}  I_{\cL_\lambda} \ot I_{\tilde{\cL}_\lambda} \ot \frac{\ket{\Gamma}_{\cR_\lambda\tilde{\cR}_\lambda}\bra{\Gamma}_{\cR_\lambda\tilde{\cR}_\lambda}}{\dim\cR_\lambda}.
\end{align}
Thus, \eqref{eq:def in proof} shows that the output states of~$\cP_\cA$ are supported on this subspace.
Moreover, the second equation in~\eqref{eq:sym action} shows that, for all~$g \in G$, the output states commute with~$(I \ot g\tran)$.

\item
Up to the decompositions of~$\cH$ and of~$\tilde{\cH}$ (which can be obtained from the one for~$\cH$ and the unitary~$\ket x \mapsto \ket{\tilde{x}}$,
the channel~\eqref{eq:def in proof} can be implemented by the following efficient steps:
given as input an arbitrary quantum state~$\rho$,
\begin{enumerate}[noitemsep]
    \item Measure~$\lambda$ to obtain the (unnormalized) post-measurement state $P_\lambda \rho P_\lambda$ on~$\cL_\lambda \ot \cR_\lambda$.
    \item Discard the register~$\cR_\lambda$, while keeping the register~$\cL_\lambda$.
    \item Prepare the maximally entangled state $\frac1{\sqrt{\dim\cR_\lambda}} \ket{\Gamma}_{\cR_\lambda\tilde{\cR}_\lambda}$, with $\tilde{\cR}_\lambda$ an additional register.
    \item Prepare the maximally mixed state on an additional register~$\tilde{\cL}_\lambda$.
\end{enumerate}

\item[4.]
We need to verify the following formulas:
\begin{align*}
    \cP_\cA[\rho]
&=   \Pi_\cA \parens*{ \rho \ot \Omega } \Pi_\cA
=   \sqrt{\cP_\cA[I]} \parens*{ \rho \ot I } \sqrt{\cP_\cA[I]} \\
&=   \Pi_\cA \parens*{ \tr_{\tilde{\cH}}(\Pi_\cA)^{-1/2} \rho \tr_{\tilde{\cH}}(\Pi_\cA)^{-1/2} \ot I_{\tilde{\cH}} } \Pi_\cA,
\end{align*}
where the right-hand side is the Petz map described in item~4.
The first equality immediately follows from \cref{eq:rho,eq:pi,eq:def in proof}.
To establish the other equalities, we only need to verify that
\begin{align*}
    \parens*{ I \ot \sqrt\Omega} \Pi_\cA = \sqrt{\cP_\cA[I]} = \parens*{ \tr_{\tilde{\cH}}(\Pi_\cA)^{-1/2} \ot I } \Pi_\cA;
\end{align*}
but these are all equal to
\begin{align*}
    \bigoplus_{\lambda\in\Lambda}  I_{\cL_\lambda} \ot I_{\tilde{\cL}_\lambda} \ot \frac{\ket{\Gamma}_{\cR_\lambda\tilde{\cR}_\lambda}\bra{\Gamma}_{\cR_\lambda\tilde{\cR}_\lambda}}{\sqrt{\dim\cL_\lambda \dim\cR_\lambda}},
\end{align*}
as follows readily from \cref{eq:pi,eq:def in proof}.
\qedhere
\end{enumerate}
\end{proof}

\subsection{Examples}\label{subsec:examples}
We discuss some easy examples of symmetries to illustrate the theorem.
In all examples, we take~$\cH = \tilde{\cH}$ with the computational basis, and we have $\cA = \cA\tran$ and $G = G\tran$ with respect to this choice.

\paragraph{Classical states:}
Choose $\cH = \C^d$, and let~$\cA$ be the algebra given by the diagonal matrices.
This algebra is its own commutant, i.e.\ $\cA' = \cA$, and the terms~$\cL_\lambda \ot \cR_\lambda$ in the decomposition~\eqref{eq:decomposition} are one-dimensional and spanned by~$\proj x$ for $x \in [d]$.
Note that $\rho \in \cA'$ if and only if $\rho$ is a diagonal density matrix, i.e., a classical state.
Furthermore, $\E_{\cA\tran}$ is the standard basis measurement channel.
The random purification channel obtained from \cref{thm:main technical} is given by the classical copying channel:
\[ \cP^\text{Copy}[\rho] = \sum_{x=1}^d \braket{x | \rho | x} \proj{xx}. \]

For the next three examples we let $\cH = (\C^d)^{\ot n}$.
The unitary group $\U(d)$ acts by operators~$U^{\ot n}$, with $U \in \U(d)$, while the permutation group~$S_n$ acts by operators~$R_\pi$, where $\pi \in S_n$ maps $R_\pi \ket{x_1,\dots,x_n} = \ket{x_{\pi^{-1}(1)},\dots,x_{\pi^{-1}(n)}}$ for~$x \in [d]^n$.
These two actions clearly commute. 
By Schur--Weyl duality, they generate each other's commutant:
if we denote by~$\cA_{\U(d)}$ and~$\cA_{S_n}$ the algebras generated by these group actions, then $\cA_{\U(d)}' = \cA_{S_n}$ and vice versa.
We have
\begin{align*}
    (\C^d)^{\ot n} \cong \bigoplus_{\lambda} \cV_{\lambda} \ot \cW_{\lambda},
\end{align*}
where the $\cV_\lambda$ are irreducible representations of~$\U(d)$ and the $\cW_\lambda$ are irreducible representations of~$S_n$, labeled by Young diagrams~$\lambda$ with $n$ boxes and no more than~$d$ rows.
For either algebra, the decomposition in \cref{eq:decomposition} can be obtained from this at once.

\paragraph{Permutation-invariant states:}
Let $\cA = \cA_{S_n}$, so that $\cA' = \cA_{\U(d)}$.
The states  $\rho \in \cA'$ are \emph{permutation-invariant}, i.e., $R_\pi \rho = \rho R_\pi$ for all~$\pi \in S_n$.
For instance, this is the case if $\rho = \sigma^{\ot n}$ consists of $n$ copies of the same state~$\sigma$.
Then the channel in \cref{thm:main technical} is the original random purification channel from~\cite{tang2025conjugate,pelecanos2025,girardi2025random}, which produces a state that is a purification of~$\rho$ with a random unitary tensor power~$U^{\ot n}$ applied, see \cref{cor:og channel}.

\paragraph{Werner states:}
We can also consider $\cA = \cA_{\U(d)}$, so $\cA' = \cA_{S_n}$.
Then the states $\rho \in \cA'$ are invariant under all product unitaries, i.e. $U^{\ot n} \rho = \rho U^{\ot n}$ for all~$U \in \U(d)$.
These states are called \emph{(multipartite) Werner states}.
The channel in \cref{thm:main technical} then produces a state that is a purification of~$\rho$ with a random permutation applied.

\paragraph{Symmetric Werner states:}
Finally, let $\cA$ be the algebra generated by both~$\cA_{\U(d)}$ and~$\cA_{S_n}$.
Then $\rho \in \cA'$ if it is both permutation-invariant and unitarily invariant---such states are also called \emph{symmetric Werner states.}
They can be written as
$\rho = \sum_{\lambda} p_\lambda \tau_{\lambda}$, where $(p_\lambda)$ is some probability distribution over Young diagrams~$\lambda$, and $\tau_{\lambda} = P_\lambda / \!\rk P_\lambda$ is the maximally mixed state on the isotypical component labeled by~$\lambda$.
The channel in \cref{thm:main technical} measures~$\lambda$ and prepares a copy of~$\tau_\lambda$:
\begin{align*}
    \cP[\rho] = \sum_\lambda P_\lambda \rho P_\lambda \ot \tau_{\lambda}.
\end{align*}

\section{Random purification and tomography for fermionic Gaussian states}\label{sec:fermions}
In this section we explain how to use \cref{thm:main technical} to obtain a random purification theorem for fermionic Gaussian states---also known as \emph{(quasi)free fermionic states}---and apply it to design a tomography protocol achieving optimal sample complexity.

\subsection{Gaussian states and unitaries}
Let us start by defining Gaussian states and related concepts, see e.g.\ \cite[App.~B]{helsen2022matchgate}.

\paragraph{Fermionic Hilbert space and operators:}
If we consider $m$ fermionic modes, then the fermionic Hilbert space is the so-called \emph{fermionic Fock space}, defined as
\begin{align*}
    \cH_m = \bigwedge \C^m = \bigoplus_{k=0}^m \bigwedge^k \C^m. 
\end{align*}
A natural algebra on $\cH_m$ is the algebra generated by the fermionic \emph{creation} and \emph{annihilation operators}, denoted $a_j^\dagger$ and~$a_j$ for $j \in [m]$, which satisfy the canonical anticommutation relations
\begin{align*}
    \{a_j, a_k^\dagger\} = \delta_{jk} \id, \qquad \{a_j^\dagger, a_k^\dagger\} = \{a_j, a_k\} = 0.
\end{align*}
We also define number operators as $N_j = a_j^\dagger a_j$. They satisfy $[N_j, N_k] = 0$.
For $j \in [2m]$ we define the \emph{Majorana operators}~$c_j$ by
$c_{2j-1} = a_j^\dagger + a_j$ and $c_{2j} =  i(a_j^\dagger - a_j)$.
They satisfy
\begin{align*}
    \{c_j, c_k\} = 2\delta_{jk} \id.
\end{align*}
The \emph{parity operator} is defined by $P = i^m c_1 c_2 \dots c_{2m} = (-1)^{\sum_i N_i}$, and it anticommutes with all Majorana operators.
The Hilbert space splits into the~$\pm 1$ eigenspaces of~$P$, which are called the \emph{even} and \emph{odd parity subspaces}~$\cH_m^+$ and~$\cH_m^-$. They are spanned by basis states~$\ket x$ with respectively even or odd Hamming weight $\abs{x}$.

The fermionic Fock space can be identified with the space of $m$ qubits,
\begin{align*}
    \cH_m \cong (\C^2)^{\ot m}.
\end{align*}
We write~$\ket x$ for the computational basis state with $x\in\bit^m$.
The Jordan--Wigner map allows us to define the fermionic algebra on the Hilbert space of $m$-qubits.
The mapping is determined by its action on Majorana operators
\begin{align*}
    c_{2j-1} \mapsto Z_1 \dots Z_{j-1} X_j\qquad \text{and} \qquad c_{2j} \mapsto Z_1\dots Z_{j-1} Y_j,
\end{align*}
where $P = X,Y,Z$ are the Pauli operators, and $P_j$ denotes the Pauli operator on the $j$th qubit.
Under this identification, the number operator $N_j$ acts as $\proj{1}$ on the $j$th qubit, while the parity operator is represented as
\begin{align*}
    P \mapsto Z_1 Z_2 \cdots Z_m.
\end{align*}

\paragraph{Gaussian unitaries:}
For each $R \in \SO(2m)$, there exists a unitariy $U_R$ satisfying
\begin{align}\label{eq:gaussian U}
    U_R c_j U_R^\dagger = \sum_{k=1}^{2m} R_{kj} c_k.
\end{align}
The map $R \mapsto U_R$ defines a projective representation\footnote{There is a $\pm 1$ sign ambiguity; the projective representation can be lifted to a genuine representation of $\Spin(2m)$, the double cover of $\SO(2m)$.} of the group $\SO(2m)$ acting on $\cH_m$.
The parity sectors $\cH_m^+$ and $\cH_m^-$ are irreducible subrepresentations.
We call any unitary satisfying~\eqref{eq:gaussian U} a Gaussian unitary.
A \emph{quadratic Hamiltonian} is a Hermitian operator $H$ of the form
\begin{align}\label{eq:quadratic hamiltonian}
    H = i\sum_{j,k=1}^{2m} h_{jk} c_j c_k = 2i \sum_{j<k} h_{jk} c_j c_k,
\end{align}
where $h_{jk} = -h_{kj} \in \RR$ (so $h \in \so(2m)$).
Then $U_R = \exp(i H t)$ is a Gaussian unitary, and all Gaussian unitaries can be written in this way, up to an irrelevant phase.

\paragraph{Gaussian states:}
A pure state in~$\cH_m$ is Gaussian if it is the nondegenerate ground state of a quadratic Hamiltonian of the form~\eqref{eq:quadratic hamiltonian}.
We write $\cG_m$ for the set of pure Gaussian states, and $\cG_m^+$ and $\cG_m^-$ for the subsets of even- and odd-parity Gaussian states, respectively.

More generally, a mixed Gaussian state is a Gibbs state of a quadratic Hamiltonian,
\begin{align}\label{eq:fermionic gibbs state}
    \rho = \frac1Z e^{-H}, \quad Z = \tr[e^{-H}],
\end{align}
or a limit of such states.
A suitable choice of Gaussian unitary transforms the Hamiltonian $H$ to
\begin{align}\label{eq:std H}
    H = \sum_{j=1}^m \beta_j N_j
\end{align}
up to an additive constant that is irrelevant to the Gibbs state.
Hence every Gaussian state admits a diagonal standard form.

Moreover, we will fix pure Gaussian states~$\ket{\psi_m^{\pm}} \in \cH_m^{\pm}$ as highest-weight vectors for $\so(2m)$ relative to a choice of positive roots. Every pure Gaussian state can then be expressed as $U_R\ket{\psi_+}$ or $U_R\ket{\psi_-}$ for some Gaussian unitary~$U_R$.
The Haar measure on~$\SO(2m)$ then defines unique probability measures that are invariant under Gaussian unitaries on~$\cG_m^+$ and~$\cG_m^-$; we denote both by $d\phi$.
Concretely, $\so(2m)$ (and its complexification) are realized through \cref{eq:quadratic hamiltonian}.
We may take a Cartan subalgebra $\mathfrak h$ spanned by the operators $H_j = \frac{i}{2} c_{2j-1} c_{2j} = N_j - \frac12$.
If $\eps_j \in \mathfrak{h}^*$ is defined by $\eps_j(H_k) = \delta_{jk}$, the roots of $\so(2m)$ are given by
\begin{align*}
    \Delta = \{\pm \eps_i \pm \eps_j \text{ for } 1 \leq i < j \leq m\}
\end{align*}
and we choose as the simple roots $L_i = \eps_i - \eps_{i+1}$ for $i=1,\dots,m-1$ together with $L_m = \eps_{m-1} + \eps_m$.
The basis states $\ket{x}$ are weight vectors, and $\ket{x}$ has weight $\sum_j (x_j - \frac12)\eps_j$.
With this choice, the vacuum state $\ket{0}^{\ot m}$ is a lowest-weight vector.
The highest weight vectors are given by $\ket{1}^{\ot m}$ (the fully occupied state) and $\ket{1}^{\ot (m-1)} \ot \ket{0}$, and the corresponding highest weights are given~by
\begin{align}\label{eq:highest weight}
    \alpha = (\tfrac12, \dots, \tfrac12, \tfrac12), \qquad  \beta = (\tfrac12, \dots, \tfrac12, -\tfrac12).
\end{align}
For even $m$, the state $\ket{1}^{\ot m}$ has even parity, so $\cG_m^+$ has highest weight~$\alpha$ and $\cG_m^-$ has highest weight~$\beta$, while for odd $m$ this is reversed.
Further information on these representations can be found in~\cite{fulton2013representation}, see Prop.~20.15 therein for a detailed description.

\subsection{Random purification of Gaussian states}\label{sub:gaussian purification}
Our goal in this section is to construct a random purification channel that sends every mixed Gaussian state to a random Gaussian purification.
This will follow directly from \cref{thm:main technical} but for a small subtlety that we have to address carefully---namely, when we take the tensor product~$\cH_m \ot \tilde \cH_m$ to purify the state, we want to identify this with the Hilbert space~$\cH_{2m}$ of~$2m$ fermionic modes, and we want the purification to be a Gaussian state on~$2m$ modes.
To this end we take $\tilde{\cH}_m = \cH_m$, so that $\cH_m \ot \tilde{\cH}_m \cong (\C^2)^{\ot m} \ot (\C^2)^{\ot m} \cong (\C^2)^{\ot 2m} \cong \cH_{2m}$.
Using the Jordan--Wigner representation, the Majorana operators~$\tilde c_j$ for $j \in [4m]$ take the form
\begin{align*}
    \tilde c_j &= c_j \ot \id \quad \text{ for } j \in [2m], \\
    \tilde c_{2m + j} &= P \ot c_{j}  \quad \text{ for } j \in [2m].
\end{align*}
We will now choose the following basis on~$\tilde{\cH}_m$,
\begin{align*}
    \ket{\tilde{x}} = (-1)^{\gamma(x)} \ket x \qquad (x \in \bit^m),
\end{align*}
where
\begin{align*}
    \gamma(x) = \begin{cases}
        0 & \text{ if } \abs{x} \equiv 0 \text{ or } 1 \pmod 4, \\
        1 & \text{ if } \abs{x} \equiv 2 \text{ or } 3 \pmod 4.
    \end{cases}
\end{align*}
Note that the transpose and conjugate with respect to this basis equal the transpose and conjugate, respectively, with respect to the standard basis.
With respect to our choice of bases, the standard purification of a state~$\rho \in \S(\cH_m)$ is given by
\begin{align*}
    \ket{\psi^\text{std}_\rho}
= \left( \sqrt\rho \ot \id \right) \left( \sum_{x \in \bit^m} \ket{x\tilde{x}} \right)
= \left( \sqrt\rho \ot \id \right) \left( \sum_{x \in \bit^m} (-1)^{\gamma(x)} \ket{xx} \right).
\end{align*}
We will now show that if $\rho$ is Gaussian, then its standard purification is again a Gaussian state on~$\cH_{2m}$.

\begin{lem}\label{lem:gaussian purification}
    Let $\cH_m$ denote the Hilbert space of an $m$-mode fermionic system.
    \begin{enumerate}
        \item For a Gaussian unitary $U$, both $\overline{U}$ and $U\tran$ are Gaussian unitaries. The tensor-product unitaries $U \ot \id$ and $\id \ot U$ are Gaussian on $\cH_{2m}$ as well.
        \item If $\rho \in \S(\cH_m)$ is a Gaussian state, then $\ket{\psi^\text{std}_\rho}$ is a pure Gaussian state in $\cH_{2m}^+$.
    \end{enumerate}
\end{lem}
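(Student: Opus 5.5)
The plan is to use the Jordan--Wigner representation throughout. Part~1 is checked directly on Majorana operators. Part~2 is then reduced, by a Gaussian unitary, to the diagonal standard form \eqref{eq:std H}, where the purification can be written down explicitly.

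\emph{Part 1.} Complex conjugation in the computational basis fixes $X$ and $Z$ and sends $Y\mapsto -Y$. Hence $\bar c_{2j-1}=c_{2j-1}$ and $\bar c_{2j}=-c_{2j}$, i.e.\ $\bar c_j=\sum_k D_{kj}c_k$ with $D=\diag(1,-1,\dots,1,-1)$. Conjugating \eqref{eq:gaussian U} gives
\[
\bar U_R c_j \bar U_R^\dagger=\sum_k (DRD)_{kj}c_k .
\]
Since $DRD\in\SO(2m)$, the conjugate $\bar U_R$ is Gaussian. Next, $U_R^\dagger$ is Gaussian with matrix $R\tran$, so $U_R\tran=\overline{U_R^\dagger}$ is Gaussian as well. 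For the tensor products we use the $4m$ Majoranas $\tilde c_j=c_j\ot\id$ and $\tilde c_{2m+j}=P\ot c_j$.
\begin{itemize}
\item $U_R\ot\id$ rotates the first block by $R$. Because $U_R$ commutes with $P$ (as $R\in\SO(2m)$), it fixes the second block. Its matrix is therefore $R\oplus I\in\SO(4m)$.
\item $\id\ot U_R$ fixes the first block and maps $P\ot c_j\mapsto \sum_k R_{kj}\,P\ot c_k$. Its matrix is therefore $I\oplus R$.
\end{itemize}

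\emph{Part 2, reduction.} Write $\rho=U\rho_0U^\dagger$, with $U$ Gaussian and $\rho_0$ the Gibbs state of $\sum_j\beta_jN_j$ (or a limit). Then $\rho_0=\bigotimes_j\diag(p_j,1-p_j)$ with $p_j\in[0,1]$. We have $\sqrt\rho=U\sqrt{\rho_0}U^\dagger$. The paper notes that transposition with respect to the tilde basis equals the ordinary transpose, so the transpose trick gives $(U^\dagger\ot\id)\ket\Gamma=(\id\ot\bar U)\ket\Gamma$. Therefore
\[
\ket{\psi^\text{std}_\rho}=(U\ot\bar U)\,\ket{\psi^\text{std}_{\rho_0}} .
\]
By Part~1, $U\ot\bar U$ is a Gaussian unitary on $\cH_{2m}$. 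It also preserves total parity, since each factor does. So it suffices to treat $\rho_0$.

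\emph{Part 2, diagonal case.} Here
\[
\ket{\psi^\text{std}_{\rho_0}}=\sum_x \sqrt{p_x}\,(-1)^{\gamma(x)}\ket{xx},
\]
where $p_x=\prod_j p_j^{1-x_j}(1-p_j)^{x_j}$. Pair mode $j$ with mode $m+j$ and choose $\theta_j$ with $\cos\theta_j=\sqrt{p_j}$, $\sin\theta_j=\sqrt{1-p_j}$. The candidate Gaussian state is
\[
\ket{\Phi}=\prod_{j=1}^m\bigl(\cos\theta_j+\sin\theta_j\,a_j^\dagger a_{m+j}^\dagger\bigr)\ket{0}^{\ot 2m}
=\exp\Bigl(\sum_j\theta_j\bigl(a_j^\dagger a_{m+j}^\dagger-a_{m+j}a_j\bigr)\Bigr)\ket{0}^{\ot 2m}.
\]
The quadratic terms for different $j$ commute, and each acts as a rotation on $\Span\{\ket{00},\ket{11}\}$ of its pair. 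So $\ket\Phi$ is a Gaussian unitary applied to the vacuum. This includes the boundary cases $p_j\in\{0,1\}$, so no limiting argument is needed. The vacuum is the nondegenerate ground state of $\sum_j N_j$, and every state in the superposition has even total weight, so $\ket\Phi\in\cH_{2m}^+$.

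\emph{Main obstacle: the signs.} It remains to check that $\ket\Phi$ equals $\ket{\psi^\text{std}_{\rho_0}}$, including signs, and this is where the choice of $\gamma$ enters. Fix $x$ with $|x|=k$ and occupied modes $j_1<\dots<j_k$. Expanding the product, the coefficient of $\ket{xx}$ comes from $\prod_i a_{j_i}^\dagger a_{m+j_i}^\dagger\ket{0}$. The Jordan--Wigner normal order (in which a basis state $\ket{y}$ has coefficient $+1$) places all creators for modes $\le m$ first, followed by those for modes $>m$. Moving the interleaved pairs into this order requires $0+1+\dots+(k-1)=k(k-1)/2$ transpositions of anticommuting operators. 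The resulting sign $(-1)^{k(k-1)/2}$ equals $+1$ for $k\equiv0,1$ and $-1$ for $k\equiv2,3 \pmod 4$, which is exactly $(-1)^{\gamma(x)}$. The magnitudes $\prod_j\cos\theta_j^{1-x_j}\sin\theta_j^{x_j}=\sqrt{p_x}$ also agree, so $\ket\Phi=\ket{\psi^\text{std}_{\rho_0}}$.

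I would double-check this sign computation against the convention that $\ket{y}$ corresponds to $\prod a^\dagger$ applied in increasing mode order, since that convention is what the whole identification rests on.
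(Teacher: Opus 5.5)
Your route is the paper's. Part~1 checks how conjugation and transposition act on the Majoranas; you verify the defining relation directly and get the matrix $DRD$, instead of arguing via quadratic Hamiltonians. Part~2 reduces by a Gaussian $U$ to the diagonal form, then writes the diagonal purification as commuting two-mode pairing rotations applied to the vacuum. \cref{lem:gaussian diagonal purification} does the same inductively with $H_k=-i\tilde c_{2k}\tilde c_{2m+2k}$, whose pairing part is your generator, and tracks the phase in Pauli language. Your closed-form count $(-1)^{k(k-1)/2}$ agrees with the paper's Jordan--Wigner convention. Allowing $p_j\in\{0,1\}$ avoids the density argument, though you still need compactness of $\SO(2m)\times[0,1]^m$ to write every limit of Gibbs states as $U\rho_0U^\dagger$.

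One step is false as written, and the paper's one-line appeal to the transpose trick glosses over the same point. Since $\ket\Gamma=\sum_x(-1)^{\gamma(x)}\ket{xx}$ carries signs, the trick actually reads $(A\ot\id)\ket\Gamma=(\id\ot D_\gamma A\tran D_\gamma)\ket\Gamma$. Here $A\tran$ is the computational-basis transpose and $D_\gamma=\sum_x(-1)^{\gamma(x)}\proj{x}=(-1)^{N(N-1)/2}$. The remark that the tilde-basis transpose equals the standard one is true for operators on $\cH_m$, but not for the operator transported to the purifying factor, which is what the trick produces. So $(U^\dagger\ot\id)\ket\Gamma=(\id\ot\bar U)\ket\Gamma$ fails in general; it holds only when $\bar U$ commutes with $D_\gamma$, e.g.\ for number-preserving $U$. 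For instance, take $m=2$ and the real Gaussian unitary $U=\exp(\theta(a_1^\dagger a_2^\dagger-a_2a_1))$: the left side equals $(\id\ot\bar U^\dagger)\ket\Gamma$, not $(\id\ot\bar U)\ket\Gamma$.

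The conclusion survives. $D_\gamma$ agrees with $(-i)^N=e^{-i\pi N/2}$ up to a phase that depends only on parity: $1$ on the even sector and $i$ on the odd sector. Since $\bar U$ preserves parity, $D_\gamma\bar UD_\gamma=e^{-i\pi N/2}\,\bar U\,e^{i\pi N/2}$, which is Gaussian. Equivalently, conjugation by $D_\gamma$ flips the sign of the pairing terms in the generator and fixes the hopping terms. Hence $\ket{\psi^\text{std}_\rho}=(U\ot D_\gamma\bar UD_\gamma)\ket{\psi^\text{std}_{\rho_0}}$, and the rest of your argument goes through unchanged.
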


\begin{proof}
\begin{enumerate}
\item
    Up to an overall phase, every Gaussian unitary can be written as $U = \exp(iHt)$, with $H$ quadratic in the Majorana operators.
    Because $c_j\tran = \overline{c_j} = \pm c_j$ for every $j$, both $\overline{H}$ and $H\tran$ remain quadratic in the Majorana operators; consequently, $\overline{U}$ and~$U\tran$ are Gaussian unitaries.
    Moreover, for $H$ as in \cref{eq:quadratic hamiltonian}, the identity~$P^2 = \id$ gives
    \begin{align*}
        \id \ot H = i\sum_{j,k} h_{jk} \id \ot c_j c_k = i\sum_{j,k} h_{jk} (P \ot c_j)(P \ot c_k) =  i\sum_{j,k} h_{jk} \tilde c_{2m+j} \tilde c_{2m+k}
    \end{align*}
    which is quadratic. It is clear that $H \ot \id$ is also quadratic. It follows that both $U \ot \id$ and $\id \ot U$ are Gaussian unitaries.

\item
    Consider an arbitrary Gaussian state $\rho \in \S(\cH_m)$.
    Continuity and compactness allow us, without loss of generality, to take $\rho$ to have the form in \cref{eq:fermionic gibbs state}.
    There must then exist a Gaussian unitary~$U$ such that (see \cref{eq:std H}),
    \begin{align*}
        \sigma \coloneqq U \rho U^\dagger = \frac{e^{- H}}{\tr[e^{- H}]}, \qquad H = \sum_{j=1}^m \beta_j N_j.
    \end{align*}
    The standard purifications of $\rho$ and $\sigma$ are related by
    \begin{align*}
        \ket{\psi^\text{std}_\rho} = (U^\dagger \ot U\tran) \ket{\psi^\text{std}_\sigma}
    \end{align*}
    by the transpose trick.
   
    By the first part of the lemma, $U^\dagger \ot U\tran$ is Gaussian; it therefore remains only to prove that $\ket{\psi^\text{std}_\sigma}$ is Gaussian.
    To do so, we start by writing
    \begin{align*}
        \sqrt{\sigma} = \frac{\exp(- H/2)}{\sqrt{\tr[\exp(- H])}} = \prod_{i=1}^m \frac{\exp(- \beta_i N_i / 2)}{\sqrt{1 + e^{- \beta_i }}} .
    \end{align*}
    Using the identity
    \begin{align*}
        \exp(- \beta_i N_i / 2) = \proj{0}_i + e^{- \beta_i / 2} \proj{1}_i .
    \end{align*}
    we see that
    \begin{align*}
        \ket{\psi^\text{std}_\sigma} = \sum_{x \in \bit^m} \parens*{ \prod_{j=1}^m \cos(\theta_j)^{1 - x_j}\sin(\theta_j)^{x_j} } (-1)^{\gamma(x)} \ket{xx}
    \end{align*}
    for angles $\theta_j \in \RR$ chosen such that
    \begin{align*}
        \cos(\theta_j) = \frac{1}{\sqrt{1 + e^{- \beta_j}}}, \qquad \sin(\theta_j) = \frac{e^{- \beta_j / 2}}{\sqrt{1 + e^{- \beta_j}}} \, .
    \end{align*}
    The claim then follows from \cref{lem:gaussian diagonal purification} below.
\end{enumerate}
\end{proof}

\begin{lem}\label{lem:gaussian diagonal purification}
    Let $\theta_j$ for $j \in [m]$ be real numbers.
    Then, the state
    \begin{align*}
        \ket{\psi} = \sum_{x \in \bit^m} \parens*{ \prod_{j=1}^m \cos(\theta_j)^{1 - x_j}\sin(\theta_j)^{x_j} } (-1)^{\gamma(x)} \ket{xx}
    \end{align*}
    is a pure Gaussian state with even parity, i.e. $\ket{\psi} \in \cH_{2m}^+$.
\end{lem}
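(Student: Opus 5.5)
The plan is to exhibit $\ket\psi$ explicitly as a Gaussian unitary applied to the vacuum of $2m$ modes, using commuting two-mode rotations. Order the modes of $\cH_{2m}\cong\cH_m\ot\tilde\cH_m$ as $1,\dots,m$ (first copy) and $m+1,\dots,2m$ (second copy). For each $j\in[m]$ consider the quadratic generator $K_j = a_j^\dagger a_{m+j}^\dagger + a_{m+j}a_j$, or the corresponding Majorana bilinear in $\tilde c_{2j-1},\tilde c_{2j},\tilde c_{2m+2j-1},\tilde c_{2m+2j}$. It preserves parity and acts on the four-dimensional span of $\ket{0}_j\ket{0}_{m+j}$ and $\ket{1}_j\ket{1}_{m+j}$ (with Jordan--Wigner strings) as a rotation generator. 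The unitary $V_j(\theta)=\exp(\theta(a_j^\dagger a_{m+j}^\dagger - a_{m+j}a_j))$ is Gaussian because its exponent is $i$ times a quadratic Hamiltonian. It maps the vacuum pair to $\cos\theta\,\ket{00}\pm\sin\theta\,\ket{11}$, up to Jordan--Wigner signs. Since distinct $j$ involve disjoint pairs of modes, the even bilinears $K_j$ commute, so $V=\prod_j V_j(\theta_j)$ is a Gaussian unitary. Moreover, $V\ket{0}^{\ot 2m}$ has the product amplitudes $\prod_j\cos(\theta_j)^{1-x_j}\sin(\theta_j)^{x_j}$ on $\ket{x}\ket{x}$, up to a sign $s(x)$.

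The second step is to compute $s(x)$ and check that $s(x)=(-1)^{\gamma(x)}$, possibly after replacing some $\theta_j$ by $-\theta_j$ or choosing the sign convention of $K_j$. In qubit language $a_j^\dagger a_{m+j}^\dagger$ carries the string $Z_1\cdots Z_{j-1}$ from $a_j^\dagger$ and the string $Z_1\cdots Z_{m+j-1}$ from $a_{m+j}^\dagger$. Their product restricted to relevant qubits is a $Z$-string on qubits $j,\dots,m+j-1$. Acting successively with $j=1,\dots,m$ on states of the form $\ket{x}\ket{x}$, the string over first-copy qubits $j+1,\dots,m$ and second-copy qubits $1,\dots,j-1$ evaluates to $(-1)$ raised to the number of previously created pairs $\#\{k<j: x_k=1\}$. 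The first-copy positions $>j$ are still empty, and the second-copy positions $k<j$ equal $x_k$. Creating the $r$-th pair therefore contributes $(-1)^{r-1}$, and the total sign is $(-1)^{\binom{|x|}{2}}$. This is exactly $(-1)^{\gamma(x)}$, since $\binom{w}{2}$ is even iff $w\equiv0,1\pmod 4$. This matches the choice of $\gamma$ in the paper, which is reassuring. The extra factor $(-1)$ or $i$ from the order of operators within $a_j^\dagger a_{m+j}^\dagger$ is uniform in $j$ and can be absorbed into the sign of $\theta_j$. The main obstacle is this careful sign bookkeeping, including the convention $\tilde c_{2m+j}=P\ot c_j$, which gives the second-copy strings an extra full $Z$-string on the first copy. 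I would redo the count with that convention: acting on $\ket{x}\ket{x}$ built up so far, $P$ evaluates to $(-1)^{\#\{k<j:x_k=1\}}$ and contributes the same factor. I would adjust conventions until the result is $(-1)^{\binom{|x|}{2}}$.

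Finally, the vacuum $\ket0^{\ot 2m}$ is a pure Gaussian state; it is the ground state of $\sum N_j$. Gaussian unitaries map pure Gaussian states to pure Gaussian states: if $\ket\phi$ is the nondegenerate ground state of $H$, then $V\ket\phi$ is that of $VHV^\dagger$, which is quadratic by \cref{eq:gaussian U}. Hence $\ket\psi=V\ket0^{\ot 2m}$ is Gaussian. Even parity holds because every basis state $\ket{xx}$ has Hamming weight $2|x|$. Equivalently, $V$ commutes with $P$ and the vacuum is even.
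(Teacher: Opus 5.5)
Your proposal is correct and is essentially the paper's own argument. The paper also builds $\ket\psi$ from the vacuum by successive two-mode Gaussian rotations pairing mode $k$ with mode $m+k$; it uses the single bilinear $H_k=-i\tilde c_{2k}\tilde c_{2m+2k}$, which acts like your generator on the span of $\ket{00},\ket{11}$. At each step it picks up the string sign $(-1)^{|x|}$, with $|x|$ the number of pairs already created, which is your $(-1)^{\binom{|x|}{2}}=(-1)^{\gamma(x)}$ in recursive form.

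One clarification: your first count, with string $Z_1\cdots Z_{m+j-1}$ for $a_{m+j}^\dagger$, already is the convention $\tilde c_{2m+j}=P\ot c_j$, so no recount or change of conventions is needed. Multiplying in the factor from $P$ a second time would cancel the sign and give the wrong answer.
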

\begin{proof}
    For $k=0,\dots,m$, define
    \begin{align*}
        \ket{\psi_k} = \sum_{\substack{y \in \bit^k,\\ x = (y,0^{m-k})}} \parens*{ \prod_{j=1}^k \cos(\theta_j)^{1 - x_j}\sin(\theta_j)^{x_j} } (-1)^{\gamma(x)} \ket{xx}.
    \end{align*}
    We prove by induction on $k$ that every $\ket{\psi_k}$ is Gaussian.
    The base case is immediate: $\ket{\psi_0} = \ket{0}^{\ot 2m}$ is Gaussian.
    Assume now that $\ket{\psi_{k-1}}$ is Gaussian.
    Set $H_k = -i \tilde c_{2k} \tilde c_{2m+2k}$, then
    \begin{align*}
        H_k = -i Y_k Z_{k} Z_{k+1}\dots Z_{m + k - 1} Y_{m+k} = X_k Z_{k+1}\dots Z_{m + k - 1} Y_{m+k}.
    \end{align*}
    In order to show that $\ket{\psi_k}$ is Gaussian, we will now argue that the Gaussian unitary $U_k = \exp(-i H_k \theta_k)$ sends $\ket{\psi_{k-1}}$ to $\ket{\psi_k}$.
    Indeed, acting on $\ket{\psi_{k-1}}$ gives
    \begin{align*}
        U_k \ket{\psi_{k-1}} = \sum_{\substack{y \in \bit^{k-1},\\ x = (y,0^{m-k +1})}} \parens*{ \prod_{j=1}^{k-1} \cos(\theta_j)^{1 - x_j}\sin(\theta_j)^{x_j} } (-1)^{\gamma(x)}\left(\cos(\theta_k) \id - (-1)^{\abs{x}} i \sin(\theta_k) X_k Y_{m+k} \right)\ket{xx}.
    \end{align*}
    We regroup the expression as a sum over $\tilde y \in \bit^k$, with $\tilde x = (\tilde y,0^{m-k})$.
    The first summand corresponds to~$\tilde y = (y,0)$, whereas the second corresponds to~$\tilde y = (y,1)$.
    In the former case we have $\ket{xx} = \ket{\tilde x \tilde x}$, while in the latter case we have $X_k Y_{m+k} \ket{xx} = i \ket{\tilde x \tilde x}$, so we can rewrite this as
    \begin{align*}
        U_k \ket{\psi_{k-1}} = \sum_{\substack{\tilde y \in \bit^{k},\\ \tilde x = (\tilde y,0^{m-k})}} \parens*{ \prod_{j=1}^{k} \cos(\theta_j)^{1 - \tilde x_j}\sin(\theta_j)^{\tilde x_j} } \alpha(\tilde x) \ket{\tilde x \tilde x}.
    \end{align*}
    Here $\alpha(\tilde x) = (-1)^{\gamma(x) + \tilde x_k \abs{x}}$.
    Checking the two cases $\tilde x_k \in \bit$ shows that $\alpha(\tilde x) = (-1)^{\gamma(\tilde x)}$.
    Hence $U_k \ket{\psi_{k-1}} = \ket{\psi_k}$.
    Since $\ket{\psi_m} = \ket{\psi}$, the induction completes the proof.
\end{proof}

{It remains to establish the required symmetry property of Gaussian states.}
The following lemma avoids appealing to Howe duality.

\begin{lem}\label{lem:gaussian state in algebra}
    If $\sigma$ is an $m$-mode fermionic Gaussian state, then $\sigma^{\ot n} \in \C\{U_R^{\ot n}  :  R \in \SO(2m) \}$.
\end{lem}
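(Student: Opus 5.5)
Let $\cB := \C\{U_R^{\ot n} : R \in \SO(2m)\}$. The plan mirrors the elementary argument the paper gives for $\U(d)$ before \cref{cor:og channel}: show that $\cB$ contains $e^{-H}$ tensored $n$ times for every quadratic Hamiltonian $H$, up to scalars.

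First, $\cB$ is a finite-dimensional subspace of $\Lin(\cH_m^{\ot n})$, so it is closed. It is also an algebra, since it is spanned by a (projective) group, and the $\pm$ phase ambiguity of $U_R$ only affects scalars. Fix a quadratic Hamiltonian $H$ as in \cref{eq:quadratic hamiltonian}. For real $t$, the operators $e^{itH}$ are Gaussian unitaries, so $(e^{itH})^{\ot n} = e^{it H^{(n)}} \in \cB$, where
\begin{align*}
    H^{(n)} := \sum_{r=1}^n I^{\ot (r-1)} \ot H \ot I^{\ot (n-r)}.
\end{align*}
Differentiating at $t=0$ and using closedness gives $H^{(n)} \in \cB$. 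Since $\cB$ is an algebra, every polynomial in $H^{(n)}$ lies in $\cB$. Because $H^{(n)}$ is Hermitian on a finite-dimensional space, every function of it is such a polynomial (interpolate on its spectrum). In particular
\begin{align*}
    e^{-H^{(n)}} = (e^{-H})^{\ot n} \in \cB.
\end{align*}
Here I use that $e^{-H^{(n)}}$ factorizes because the summands of $H^{(n)}$ commute.

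Dividing by $Z^n$ shows that every Gibbs state $\sigma$ of the form \cref{eq:fermionic gibbs state} satisfies $\sigma^{\ot n} \in \cB$. A general Gaussian state is a limit of such Gibbs states $\sigma_k$. Then $\sigma_k^{\ot n} \to \sigma^{\ot n}$, and closedness of $\cB$ gives $\sigma^{\ot n} \in \cB$.

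The main point needing care is the tensor structure. The statement is about the plain tensor power $U_R^{\ot n}$ on $\cH_m^{\ot n}$, not about a fermionic tensor product, so no Jordan--Wigner signs between copies enter. The differentiation step therefore only requires the one-parameter group $t \mapsto e^{itH}$ to be a genuine representative of $U_{R(t)}$, which holds up to a phase. Those phases are harmless, because $\cB$ is a linear span and therefore closed under scalar multiplication. Everything else is routine.
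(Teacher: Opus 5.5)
Your proof is correct and follows the same route as the paper: differentiate $(e^{itH})^{\ot n}$ at $t=0$ to get $\sum_r H_r \in \cB$, exponentiate to obtain $(e^{-H})^{\ot n}\in\cB$, and pass to limits of Gibbs states. You also make explicit what the paper leaves implicit, namely that $\cB$ is a closed unital algebra (the projective phases only contribute scalars), so the functional calculus of $H^{(n)}$ stays in $\cB$.
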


\begin{proof}
    It is enough to work with the dense family of fermionic Gaussian states of the form~\eqref{eq:fermionic gibbs state}, which are Gibbs states with Hamiltonian~$H$.
    Set $\cB = \C\{U_R^{\ot n}  : R \in \SO(2m) \}$.
    Since $V(t) = \exp(iHt)^{\ot n} \in \cB$, differentiation at $t=0$ shows that
    \begin{align*}
         -i\partial_t \big|_{t=0} V(t) = \sum_{j=1}^n H_j \in \cB
    \end{align*}
    where $H_j$ acts as $H$ on the $j$th tensor factor of $\cH_m$ and as $\id$ on all remaining factors.
    We conclude that $\exp(- \sum_j H_j) = \exp(-H)^{\ot n} \in \cB$ as well.
\end{proof}

We thus obtain the Gaussian random purification result as stated in the introduction.

\begin{proof}[Proof of \cref{cor:fermions}]
The result follows directly from \cref{thm:main technical} and \cref{lem:gaussian state in algebra}, together with \cref{lem:gaussian purification}, which shows that the standard purification is Gaussian.
\end{proof}

\subsection{Tomography of pure Gaussian states}\label{sub:gaussian pure tomo}
We first propose a tomography scheme for pure Gaussian states, analogous to Hayashi's POVM for ordinary pure state tomography~\cite{hayashi1998asymptotic}.
Recall that $\cG_m^{\pm}$ consists of the even- and odd-parity pure Gaussian states, respectively.
Given $n$ copies of the system, let $\cV_{n,m}^\pm$ denote the span of~$\ket{\psi}^{\ot n}$ for~$\ket{\psi} \in \cG_m^{\pm}$.
The space $\cV_{1,m}^\pm = \cH_m^\pm$ is an irreducible $\so(2m)$-representation with highest weight~$\alpha$ or~$\beta$ (according to the parity of $m$; see \cref{eq:highest weight}) and highest-weight vector $\ket{\psi_m^\pm} \in \cG_m^{\pm}$.
Consequently, for every~$n$, $\cV_{n,m}^\pm$ is an irreducible $\so(2m)$-representation with highest weight~$n\alpha$ or~$n\beta$ and highest-weight vector~$\ket{\psi_m^{\pm}}^{\ot n}$.

\begin{lem}\label{lem:dimension multicopy fermions}
    The two parity sectors have equal dimension, $\dim \cV_{n,m}^+ = \dim \cV_{n,m}^- = d_{n,m}$, where
    \begin{align}\label{eq:dimension multicopy fermions}
        d_{n,m} = \prod_{1 \leq j < k \leq m} \frac{2m + n - (j+k)}{2m - (j+k)}
    \end{align}
\end{lem}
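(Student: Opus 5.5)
The plan is to apply the Weyl dimension formula for $\so(2m)$ to the highest weights $n\alpha$ and $n\beta$. As noted just before the statement, $\cV_{n,m}^\pm$ is irreducible with highest weight $n\alpha$ or $n\beta$, depending on parity and on $m$. So $\dim\cV_{n,m}^\pm=\prod_{\gamma\in\Delta^+}\frac{\langle n\alpha+\rho,\gamma\rangle}{\langle\rho,\gamma\rangle}$, and likewise with $\beta$ in place of $\alpha$. Here $\Delta^+=\{\eps_j\pm\eps_k: j<k\}$ is the set of positive roots for the chosen simple roots $L_i$, and $\rho=(m-1,m-2,\dots,1,0)$ in the $\eps$-coordinates is the Weyl vector. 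The coordinates are orthonormal with respect to the invariant form, up to a scalar that cancels in the ratio.

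First I check the convention. The simple roots $\eps_i-\eps_{i+1}$ and $\eps_{m-1}+\eps_m$ give exactly $\Delta^+=\{\eps_j-\eps_k,\eps_j+\eps_k : j<k\}$, and $\rho=\sum_j (m-j)\eps_j$. Next, for $\alpha=(\tfrac12,\dots,\tfrac12)$:
\begin{align*}
\langle n\alpha+\rho,\eps_j-\eps_k\rangle&=k-j=\langle\rho,\eps_j-\eps_k\rangle,\\
\langle n\alpha+\rho,\eps_j+\eps_k\rangle&=n+2m-(j+k),\qquad \langle\rho,\eps_j+\eps_k\rangle=2m-(j+k).
\end{align*}
The factors from the roots $\eps_j-\eps_k$ therefore cancel, and the product is exactly \eqref{eq:dimension multicopy fermions}.

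For $\beta$ I avoid recomputing. The outer automorphism of $\so(2m)$ that flips the sign of $\eps_m$ is conjugation by a reflection in $\O(2m)\setminus\SO(2m)$. On the Fock space it is implemented by a Majorana operator, e.g.\ $c_{2m}$, which maps $\cH_m^+$ to $\cH_m^-$. It sends Gaussian states to Gaussian states of the opposite parity, since $c_{2m}$ is itself a Gaussian (orthogonal) transformation up to a phase. Therefore $c_{2m}^{\ot n}$ maps $\cV^+_{n,m}$ bijectively onto $\cV^-_{n,m}$, giving equal dimensions directly. As a cross-check, the Weyl formula with $\beta$ yields the same products, because the swap $\eps_m\mapsto-\eps_m$ permutes $\Delta^+$ and fixes $\rho$, whose last coordinate is $0$.

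The main obstacle is bookkeeping: getting $\rho$ and the positive-root system right for the nonstandard listing of simple roots, and making sure the highest weight of $\cV_{n,m}^\pm$ is $n$ times the single-copy one. The latter holds because $\ket{\psi^\pm_m}^{\ot n}$ is annihilated by all raising operators, which act as derivations on tensor powers, and it generates the irreducible span.
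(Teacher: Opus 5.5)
Your proof is correct and follows the paper's route. The paper also applies the Weyl dimension formula for $\so(2m)$ to the highest weights $n\alpha$ and $n\beta$, in the form $\prod_{j<k}(M_j^2-M_k^2)/(K_j^2-K_k^2)$, which factors into exactly your contributions from the roots $\eps_j-\eps_k$ and $\eps_j+\eps_k$. For the equality $\dim\cV_{n,m}^+=\dim\cV_{n,m}^-$, the paper only observes that $M_m^2=\lambda_m^2$ does not depend on the sign of $\lambda_m$, which is equivalent to your cross-check; your main argument via the parity-flipping unitary $c_{2m}^{\ot n}$ is a valid, more structural alternative. One caveat: $c_{2m}$ implements an element of $\O(2m)\setminus\SO(2m)$, so it is not a Gaussian unitary in the paper's sense. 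It still works because it conjugates quadratic Hamiltonians to quadratic Hamiltonians, and so maps $\cG_m^+$ onto $\cG_m^-$.
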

\begin{proof}
This follows from the Weyl character formula.
For $\so(2m)$, it gives the following formula for the dimension of an irreducible representation with highest weight~$\lambda$~\cite[Eq.~(24.41)]{fulton2013representation}:
\begin{align*}
    d_{n,m} = \prod_{1 \leq j < k \leq m} \frac{M_j^2 - M_k^2}{K_j^2 - K_k^2} \qquad K_j = m - j, \ M_j = \lambda_j + m - j.
\end{align*}
Note that $M_m^2 = \lambda_m^2$ does not depend on the sign of $\lambda_m$, and in particular $\dim \cV_+ = \dim \cV_-$.
By substituting~$\lambda = n\alpha$ or~$n\beta$, we obtain \cref{eq:dimension multicopy fermions}.
\end{proof}

Recall that we write $\dd \phi$ for the unique probability measure on~$\cG_m^\pm$ that is invariant under Gaussian unitaries.
Consider the operator
\begin{align*}
    \Pi_{n,m}^\pm = d_{n,m} \, \int_{\cG_m^{\pm}} \dd \phi \, \proj{\phi}^{\ot n}
\end{align*}
which commutes with every Gaussian unitary and is therefore an intertwiner for the action of~$\so(2m)$.
By Schur's lemma, we have that $\Pi_{n,m}^\pm$ must be proportional to the projection on $\cV_{n,m}^\pm$.
Then, by taking the trace, we see that~$\Pi_{n,m}^{\pm}$ is exactly equal to the projection operator on $\cV_{n,m}^\pm$.
This implies that $\mu_{n,\pm}$, defined by
\begin{align}\label{eq:haar measure projection}
    \dd \mu_{n,m}^\pm(\phi) = d_{n,m} \,\dd \phi \, \proj{\phi}^{\ot n},
\end{align}
is a POVM on~$\cV_{n,m}^\pm$ with outcomes in~$\cG_m^\pm$, where $d_{n,m}$ is given in \cref{eq:dimension multicopy fermions}.
Clearly, we can combine $\mu_{n,m}^\pm$ into a single POVM~$\mu_{n,m}$ on~$\cV_{n,m}^+ \oplus \cV_{n,m}^-$ with outcomes in~$\cG_m$.

For a fixed $\psi$, let us now consider the fidelity $F = \abs{\braket{\psi | \hat{\psi}}}$ as a random variable (it depends on the measurement outcome, which is a random variable for fixed $\psi$), and compute moments of this random variable.

\begin{lem}\label{lem:moments pure fermion measurement}
Let $\psi \in \cG_m$ be a pure fermionic Gaussian state.
Let $\hat\psi$ be the outcome resulting from the POVM~$\mu_{n,m}$ on the input $\psi^{\ot n}$.
Then for integer $k \geq 1$, we can evaluate
\begin{align*}
    \Ex F^{2(k-n)} = \Ex \abs{\braket{\psi | \hat{\psi}}}^{2(k - n)} = \frac{d_{n,m}}{d_{k,m}} = \prod_{1 \leq i < j \leq m} \frac{2m + n - (i+j)}{2m + k - (i+j)}.
\end{align*}
\end{lem}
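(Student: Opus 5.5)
The plan is to mimic the calculation in \cref{eq:second moment calculation,eq:k-n}, with the symmetric subspace replaced by $\cV_{n,m}^\pm$ and the identity $\Pi_{n,m}^\pm = d_{n,m}\int \proj{\phi}^{\ot n}\dd\phi$ established just above the statement.

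First, reduce to a single parity sector. A pure Gaussian state $\psi$ has definite parity, say $\psi\in\cG_m^\pm$. Then $\psi^{\ot n}\in\cV_{n,m}^\pm$. Any opposite-parity outcome $\hat\psi$ has $\braket{\psi|\hat\psi}=0$, so for $n\ge1$ the POVM element $d_{n,m}\proj{\hat\psi}^{\ot n}$ has zero overlap with $\psi^{\ot n}$. Hence the outcome $\hat\psi$ lies in $\cG_m^\pm$ almost surely, with density
\[
d_{n,m}\abs{\braket{\psi|\hat\psi}}^{2n}
\]
with respect to $\dd\hat\psi$. This also shows the expectation is over a set where $F>0$ almost surely, so negative powers cause no trouble: the integrand below never contains a negative power.

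Next, compute
\[
\Ex F^{2(k-n)} = d_{n,m}\int_{\cG_m^\pm}\abs{\braket{\psi|\hat\psi}}^{2n}\abs{\braket{\psi|\hat\psi}}^{2(k-n)}\,\dd\hat\psi
= d_{n,m}\int_{\cG_m^\pm}\tr\bigl[\proj{\hat\psi}^{\ot k}\proj{\psi}^{\ot k}\bigr]\,\dd\hat\psi .
\]
Applying the resolution of the identity at level $k$ (valid for all $k\ge1$, since $\cV_{k,m}^\pm$ is irreducible with dimension $d_{k,m}$ by \cref{lem:dimension multicopy fermions}) turns this into
\[
\frac{d_{n,m}}{d_{k,m}}\tr\bigl[\Pi_{k,m}^\pm\psi^{\ot k}\bigr] = \frac{d_{n,m}}{d_{k,m}},
\]
because $\psi^{\ot k}\in\cV_{k,m}^\pm$. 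Note that $d_{k,m}$ is the same for both parities, so the answer does not depend on the sector.

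Finally, substitute \cref{eq:dimension multicopy fermions}. The denominators $2m-(j+k)$ cancel, leaving the stated product.

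The only delicate point is the first step: justifying that opposite-parity outcomes carry zero probability, so that the combined POVM on $\cV^+\oplus\cV^-$ behaves as the single-sector POVM. Orthogonality of the parity sectors handles this directly.
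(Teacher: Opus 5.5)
Your proposal is correct and follows the paper's proof essentially line by line: you restrict to one parity sector, combine the outcome density $d_{n,m}\abs{\braket{\psi|\hat\psi}}^{2n}$ with $F^{2(k-n)}$ to get $d_{n,m}\int \abs{\braket{\psi|\hat\psi}}^{2k}\,\dd\hat\psi$, and then apply the level-$k$ identity $\Pi_{k,m}^\pm = d_{k,m}\int\proj{\phi}^{\ot k}\,\dd\phi$ to $\psi^{\ot k}\in\cV_{k,m}^\pm$. You also make explicit two points the paper leaves implicit: opposite-parity outcomes occur with probability zero, and the set where $F=0$ is null under the outcome distribution, so the negative moments are well defined.
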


\begin{proof}
    We consider the case where $\ket\psi \in \cG_m^+$ has even parity (the odd case is identical):
    \begin{align*}
        \Ex \abs{\braket{\psi | \hat{\psi}}}^{2(k - n)}
        &= \int_{\phi \in \cG_m^+} \tr[ \dd \mu_{n,m}(\phi) \proj{\psi}^{\ot n} ] \ \abs{\braket{\psi|\phi}}^{2(k - n)} \\
        &= d_{n,m} \int_{\phi \in \cG_m^+} \dd\phi \, \tr[ \proj\phi^{\ot n} \proj{\psi}^{\ot n} ] \ \abs{\braket{\psi|\phi}}^{2(k - n)} \\
        &= d_{n,m} \int_{\phi \in \cG_m^+} \dd\phi \, \abs{\braket{\psi|\phi}}^{2k} \\
        &= \frac {d_{n,m}} {d_{k,m}} \int_{\phi \in \cG_m^+} \tr[ \dd \mu_{k,m}(\phi) \proj{\psi}^{\ot k} ] \\
        &= \frac {d_{n,m}} {d_{k,m}} \bra\psi^{\ot k} \Pi_{k,m}^+ \ket\psi^{\ot k}  \\
        &= \frac {d_{n,m}} {d_{k,m}}.
    \end{align*}
    This ratio of dimensions can be given an explicit expression by \cref{lem:dimension multicopy fermions} as
\begin{align*}
    \frac{d_{n,m}}{d_{k,m}} &= \prod_{1 \leq i < j \leq m} \frac{2m + n - (i+j)}{2m + k - (i+j)}. \qedhere
\end{align*}
\end{proof}
Let us first bound the expectation value of the squared fidelity. For the fidelity $F\coloneqq\abs{\braket{\psi|\hat\psi}}$, by taking \(k=n+1\) in \cref{lem:moments pure fermion measurement}, we obtain
\begin{align*}
    \Ex F^2
    &=
    \frac{d_{n,m}}{d_{n+1,m}}                                                     \\
    &=
    \prod_{1 \leq i < j \leq m}
    \frac{2m+n-(i+j)}{2m+n+1-(i+j)}                                                \\
    &=
    \prod_{1 \leq i < j \leq m}
    \left(
        1-\frac{1}{2m+n+1-(i+j)}
    \right)                                                                        \\
    &\geq
    \left(1-\frac1n\right)^{m(m-1)/2}.
\end{align*}
Therefore,
\[
    \Ex(1- F^2)
    \leq
    1-
    \left(1-\frac1n\right)^{m(m-1)/2}
    \leq
    \frac{m(m-1)}{2n}.
\]
Applying Markov's inequality to the nonnegative random variable \(1-F\), we
obtain
\begin{align*}
    \Pr\mleft(F^2\leq 1-\eps^2\mright)
    =
    \Pr\mleft(1-F^2\geq \eps^2\mright)
    \leq
    \frac{\Ex(1-F^2)}{\eps^2}
    \leq
    \frac{m(m-1)}{2n\eps^2}.
\end{align*}
Thus \(n=\bigO(m^2/(\eps^2\delta))\) copies already suffice to obtain an
\(\eps\)-accurate estimate with probability at least \(1-\delta\).  This is the optimal scaling in $m$ and $\eps$. Taking into account different moments, we can prove a slightly sharper statement that also has optimal dependence on $\delta$.

\begin{cor}[Tomography of pure fermionic Gaussian states]\label{cor:pure fermion tomography}
    Let \(0<\eps,\delta<1\).  Given
    \begin{align*}
        n
        =
        \bigO\mleft(
            \frac{m^2+\log(\delta^{-1})}{\eps^2}
        \mright)
    \end{align*}
    copies of a pure fermionic Gaussian state~\(\psi\in\cG_m\), the measurement
    \(\mu_{n,m}\) outputs an estimate~\(\hat\psi\) such that
    \[
        \abs{\braket{\psi|\hat\psi}}^2
        \geq
        1-\eps^2
    \]
    with probability at least \(1-\delta\).
\end{cor}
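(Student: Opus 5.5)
We follow the negative-moment Markov argument from \cref{sec:review general tomo}, substituting the fermionic dimension ratio from \cref{lem:moments pure fermion measurement}. Take $n$ even and $k=n/2$. Since $F=\abs{\braket{\psi|\hat\psi}}\le 1$, the event $F^2\le 1-\eps^2$ is the same as $F^{-n}\ge(1-\eps^2)^{-n/2}$. Markov's inequality applied to the nonnegative random variable $F^{-n}=F^{2(k-n)}$ then gives
\begin{align*}
    \Pr\mleft(F^2\le 1-\eps^2\mright)\le (1-\eps^2)^{n/2}\,\Ex F^{-n} = (1-\eps^2)^{n/2}\,\frac{d_{n,m}}{d_{n/2,m}}.
\end{align*}
Here we use \cref{lem:moments pure fermion measurement} with $k=n/2\ge1$. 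That lemma is proved by the same Schur-lemma computation for any integer $k\ge1$, including $k<n$. The only point to check is that $F^{-n}$ is integrable, and the lemma's computation shows this, since the integrand $\abs{\braket{\psi|\phi}}^{2k}$ is bounded.

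Next we bound the dimension ratio factor by factor. For each pair $1\le i<j\le m$, set $a=2m-(i+j)\ge 0$. The corresponding factor is
\begin{align*}
    \frac{n+a}{n/2+a}\le 2.
\end{align*}
There are $m(m-1)/2$ such pairs, so $d_{n,m}/d_{n/2,m}\le 2^{m(m-1)/2}$. Combining this with $(1-\eps^2)^{n/2}\le e^{-n\eps^2/2}$, we get
\begin{align*}
    \Pr\mleft(F^2\le 1-\eps^2\mright)\le \exp\mleft(-\tfrac{n\eps^2}{2}+\tfrac{m^2\ln 2}{2}\mright).
\end{align*}

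Finally, the right-hand side is at most $\delta$ once $n\ge (m^2\ln 2+2\ln(\delta^{-1}))/\eps^2$, rounded up to an even integer. This is $n=\bigO((m^2+\log(\delta^{-1}))/\eps^2)$, as claimed. Since $\mu_{n,m}$ is a POVM on $\cV_{n,m}^+\oplus\cV_{n,m}^-$ and $\psi^{\ot n}$ lies in the summand matching the parity of $\psi$, the parity case distinction causes no trouble.

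There is no real obstacle here. The only care needed is the exponent bookkeeping, so that $k=n/2$ is a valid integer with $k\ge 1$, together with the observation that the bound by $2$ holds uniformly over all pairs $(i,j)$.
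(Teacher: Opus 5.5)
Your proposal is correct and follows essentially the same route as the paper: Markov's inequality on the negative moment $F^{2(k-n)}$ with $k\approx n/2$, evaluated via \cref{lem:moments pure fermion measurement}, with each factor of $d_{n,m}/d_{k,m}$ bounded by $2$ to get $2^{m(m-1)/2}$. The only difference is cosmetic: you restrict to even $n$ and take $k=n/2$, whereas the paper takes $r=\lceil n/2\rceil$ so that arbitrary $n$ is covered directly.
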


\begin{proof}
    Write $F\coloneqq\abs{\braket{\psi|\hat\psi}}$. Let $r\coloneqq\ceil*{\frac n2}$.
    Then \(n-r=\floor*{n/2}\).  By
    \cref{lem:moments pure fermion measurement}, applied with \(k=r\), we have
    \[
        \Ex F^{2(r-n)}
        =
        \frac{d_{n,m}}{d_{r,m}}.
    \]
    We first estimate the dimension ratio.  Using
    \cref{eq:dimension multicopy fermions},
    \begin{align*}
        \frac{d_{n,m}}{d_{r,m}}
        &=
        \prod_{1\leq i<j\leq m}
        \frac{2m+n-(i+j)}{2m+r-(i+j)}=
        \prod_{1\leq i<j\leq m}
        \left(
            1+
            \frac{n-r}{2m+r-(i+j)}
        \right).
    \end{align*}
    Since \(i+j\leq 2m-1\), the denominator in each factor satisfies
\[
    2m+r-(i+j)\geq r+1.
\]
Moreover, \(n-r\leq r\). Hence every factor in the product is at most \(2\). Since there are \(m(m-1)/2\) factors,
    \[
        \frac{d_{n,m}}{d_{r,m}}
        \leq
        2^{m(m-1)/2}.
    \]
We now apply Markov's inequality to the nonnegative random variable
    \(F^{-2(n-r)}\):
    \begin{align*}
        \Pr\mleft(F^2\leq 1-\eps^2\mright)
        &=
        \Pr\mleft(
            F^{-2(n-r)}
            \geq
            (1-\eps^2)^{-(n-r)}
        \mright)                                                          \\
        &\leq
        (1-\eps^2)^{n-r}
        \Ex F^{-2(n-r)}                                                     \\
        &=
        (1-\eps^2)^{n-r}
        \frac{d_{n,m}}{d_{r,m}}                                             \\
        &\leq
        (1-\eps^2)^{\floor*{n/2}}
        2^{m(m-1)/2}.
    \end{align*}
    Using \(1-\eps\leq e^{-\eps}\), we obtain
    \[
        \Pr\mleft(F^2\leq 1-\eps^2\mright)
        \leq
        \exp\mleft(
            -\eps^2\floor*{\frac n2}
            +
            \frac{m(m-1)}2\log 2
        \mright).
    \]
    Thus it suffices to choose $n
        =
        \bigO\mleft(
            \frac{m^2+\log(\delta^{-1})}{\eps^2}
        \mright)$, to make the failure probability at most \(\delta\).  This is as claimed.
\end{proof}

This sample complexity is optimal, as stated in
\cref{thm:tomo fermion lower bound}, proven in
\cref{sec:fermions lower bound}.

\subsection{Tomography of mixed Gaussian states}
Finally, we can combine the random Gaussian purification channel (\cref{sub:gaussian purification}) with our result on tomography for Gaussian pure states (\cref{sub:gaussian pure tomo}) to obtain a tomography algorithm for Gaussian mixed states with quadratic sample complexity.

\begin{proof}[Proof of \cref{thm:tomo fermion}]
The tomography algorithm proceeds as follows:
\begin{enumerate}[noitemsep]
\item First apply the random purification channel $\cP^{\text{Fermi}}_{n,m}$ of \cref{cor:fermions} to $\sigma^{\ot n}$
\item Then perform the pure-state tomography protocol of \cref{cor:pure fermion tomography} on the output, to obtain an estimate $\hat{\psi}$ of a $2m$-mode pure fermionic Gaussian state.
\item Return $\hat\sigma$, which is obtained by taking the partial trace of $\hat\psi$ over the purifying system.
\end{enumerate}
The analysis is completely straightforward. By \cref{cor:fermions}, applying $\cP^{\text{Fermi}}_{n,m}$ is equivalent to preparing $n$ copies of a random Gaussian purification of $\sigma$.
By \cref{cor:pure fermion tomography}, it follows that the state $\psi$ satisfies $\abs{\braket{\psi | \hat \psi}}^2 \geq 1 - \eps^2$ for \emph{some} Gaussian purification~$\ket{\psi}$ of~$\sigma$, with high probability.
By taking a partial trace we obtain an estimate~$\hat\sigma$ such that $F(\sigma,\hat \sigma)^2 \geq \abs{\braket{\psi | \hat \psi}}^2 \geq 1 - \eps^2$, and hence $P(\sigma, \hat \sigma) \leq \eps$.
\end{proof}

For a general full-rank state on $\C^d$, a purification has $O(d^2)$ degrees of freedom.
Accordingly, pure-state tomography uses a number of samples scaling as $d$, whereas mixed-state tomography scales as $d^2$.
For fermions, purifying an $m$-mode system produces a Hilbert space with $2m$ modes, yielding only a constant multiplicative overhead.
This again squares the Hilbert space dimension, but since an $m$-mode fermionic state is specified by~$O(m^2)$ parameters, it is natural to find that fermionic Gaussian tomography has a sample complexity scaling with $m^2$ for both pure and mixed states.

\subsection{Sample complexity lower bound}\label{sec:fermions lower bound}
We now show that $\bigO(m^2/\eps^2)$ is in fact has a matching lower bound, and hence is the \emph{optimal} sample complexity.
Because pure and mixed states have the same scaling of sample complexity, it suffices to derive the lower bound for pure-state estimation, substantially simplifying the argument.
We adapt the lower-bound argument of~\cite{scharnhorst2025optimal} (see also \cite{harrow2013church}) for arbitrary pure-state tomography.
Again, the role of the symmetric subspace is now taken by the spaces $\cV_{n,m}^{\pm}$, otherwise only minor modifications are needed.

Suppose we have an algorithm taking $n$ copies of an even-parity pure Gaussian fermionic state $\psi \in \cG_m^+$, and returning an estimate $\ket{\hat{\psi}} \in \cH_m^+$.
We can assume without loss of generality that the estimate is itself an even-parity pure Gaussian fermionic state.
Such a procedure is described by a POVM~$\mu$ and induces the outcome distribution $P_{\psi}$ defined by
\begin{align*}
    \dd P_{\psi}(\phi) = \tr[\dd \mu(\phi) \proj{\psi}^{\ot n}]
\end{align*}
when the input consists of $n$ copies of~$\psi$.
Given this measurement, we are now going to bound its average performance when the target state is chosen uniformly at random.

\begin{lem}\label{lem:moments fermion fidelity}
    Suppose that $\psi$ is distributed according to the Haar measure on $\cG_m^+$ or on $\cG_m^-$, and $\hat{\psi}$ denotes the estimate from the measurement~$\mu$ on $n$ copies of $\psi$.
    One then has the following bound on the expected fidelity:
    \begin{align*}
        \Ex_{\psi} \, \Ex_{\hat\psi \sim P_{\psi}} \, \abs{\braket{\psi | \hat\psi}}^{2k} \leq \frac{d_{n,m}}{d_{n+k,m}},
    \end{align*}
    with $d_{n,m}$ as in \cref{eq:dimension multicopy fermions}.
\end{lem}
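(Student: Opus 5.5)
We follow the standard argument for arbitrary pure-state tomography, with the symmetric subspace replaced by $\cV_{n,m}^{\pm}$. We treat the even case $\psi\in\cG_m^+$; the odd case is identical. Let $\mu$ be the POVM with outcomes in $\cG_m^+$ (the restriction to Gaussian outcomes was already argued to be without loss of generality). Write the left-hand side as
\begin{align*}
    \Ex_{\psi}\Ex_{\hat\psi\sim P_\psi}\abs{\braket{\psi|\hat\psi}}^{2k}
    = \int_{\cG_m^+}\dd\psi\int \tr\mleft[\dd\mu(\hat\psi)\,\proj{\psi}^{\ot n}\mright]\,\tr\mleft[\proj{\hat\psi}^{\ot k}\proj{\psi}^{\ot k}\mright].
\end{align*}
We combine the two traces into a single trace on $n+k$ copies, $\tr\mleft[(\dd\mu(\hat\psi)\ot\proj{\hat\psi}^{\ot k})\,\proj{\psi}^{\ot(n+k)}\mright]$, and exchange the two integrals, which is legitimate since everything is positive and bounded.

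Next we perform the $\psi$ integral first. By \cref{eq:haar measure projection} and the Schur-lemma identity preceding it, this gives
\begin{align*}
    \int\dd\psi\,\proj{\psi}^{\ot(n+k)}=\frac{\Pi^+_{n+k,m}}{d_{n+k,m}}.
\end{align*}
The quantity therefore becomes
\begin{align*}
    \frac{1}{d_{n+k,m}}\int\tr\mleft[(\dd\mu(\hat\psi)\ot\proj{\hat\psi}^{\ot k})\,\Pi^+_{n+k,m}\mright].
\end{align*}
We then use the operator inequality $\Pi^+_{n+k,m}\le \Pi^+_{n,m}\ot \Pi^+_{k,m}$. This holds because $\cV_{n+k,m}^+$ is spanned by the vectors $\ket\phi^{\ot(n+k)}=\ket\phi^{\ot n}\ot\ket\phi^{\ot k}$, so it sits inside $\cV^+_{n,m}\ot\cV^+_{k,m}$, and the inequality is between projections onto nested subspaces. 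Since $\dd\mu(\hat\psi)\ot\proj{\hat\psi}^{\ot k}$ is positive, we may replace $\Pi^+_{n+k,m}$ by the larger projection $\Pi^+_{n,m}\ot\Pi^+_{k,m}$.

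After this replacement the trace factorizes. The $k$-copy factor is $\bra{\hat\psi}^{\ot k}\Pi^+_{k,m}\ket{\hat\psi}^{\ot k}=1$, because $\hat\psi\in\cG_m^+$. The remaining integral is $\int\tr[\dd\mu(\hat\psi)\,\Pi^+_{n,m}]=\tr[\Pi^+_{n,m}]=d_{n,m}$, using that $\mu$ is a POVM, so $\int\dd\mu=I$. Together these give the claimed bound $d_{n,m}/d_{n+k,m}$.

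The step needing the most care is the projection inequality combined with the measurement restriction. If one does not want to assume Gaussian outcomes, $\proj{\hat\psi}^{\ot k}$ need not lie in $\cV^+_{k,m}$. The fix is to keep the projector $\Pi^+_{k,m}$ inside the trace instead of dropping it: bound $\bra{\hat\psi}^{\ot k}\Pi^+_{k,m}\ket{\hat\psi}^{\ot k}\le 1$, which holds for any unit vector $\hat\psi$. The argument then goes through unchanged. The same bound applies to mixtures of the two parity sectors.
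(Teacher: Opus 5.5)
Your proof is correct and follows essentially the same route as the paper: merge the two traces into one trace on $n+k$ copies, integrate over $\psi$ using \cref{eq:haar measure projection}, and bound the resulting projector. The paper uses $\Pi^+_{n+k,m}\le I$ together with the without-loss-of-generality assumption that $\mu$ is supported on $\cV^+_{n,m}$; your sharper inequality $\Pi^+_{n+k,m}\le\Pi^+_{n,m}\ot\Pi^+_{k,m}$ makes that support assumption and the Gaussian-outcome assumption unnecessary, much as the paper does in the bosonic \cref{lem:moments fidelity boson}.
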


\begin{proof}
    We treat only the even-parity case, since the odd-parity case is identical.
    Expanding the two expectations gives
    \begin{align*}
        \Ex_{\psi} \, \Ex_{\hat\psi \sim P_{\psi}} \, \abs{\braket{\psi | \hat\psi}}^{2k}
        &= \int_{\cG_m^+} \int_{\cG_m^+} \tr\bigl[ \dd \mu(\hat\psi) \proj{\psi}^{\ot n} \bigr] \, \abs{\braket{\psi | \hat\psi}}^{2k} \dd \psi \\
        &= \int_{\cG_m^+} \int_{\cG_m^+} \tr\bigl[ (\dd \mu(\hat\psi) \ot \proj{\hat{\psi}}^{\ot k}) \proj{\psi}^{\ot(n+k)} \bigr] \dd \psi \\
        &= \frac{1}{d_{n+k,m}} \int_{\cG_m^+} \tr\bigl[ (\dd \mu(\hat\psi) \ot \proj{\hat{\psi}}^{\ot k}) \Pi_{n+k,m}^+ \bigr],
    \end{align*}
    where the last identity uses \cref{eq:haar measure projection}.
    The result now follows from
    \begin{align*}
        \int_{\cG_m^+} \tr[(\dd \mu(\hat\psi) \ot \proj{\hat{\psi}}^{\ot k}) \Pi_{n+k,m}^+]
        &\leq \int_{\cG_m^+} \tr[\dd \mu(\hat\psi) \ot \proj{\hat{\psi}}^{\ot k}] \\
        &= \int_{\cG_m^+} \tr[\dd \mu(\hat\psi)]
        = \tr[\Pi_{n,m}^+] = d_{n,m},
    \end{align*}
    where we use that $\Pi_{n+k,m}^+ \leq \id$ and we may assume without loss of generality that $\mu$ is supported on $\cV_{n,m}^+$, the space spanned by $n$~identical copies of even parity Gaussian fermionic states.
\end{proof}

We now use this to bound the required number of copies for tomography of Gaussian fermionic states, proving \cref{thm:tomo fermion lower bound}.

\begin{proof}[Proof of \cref{thm:tomo fermion lower bound}]
It suffices to restrict to even parity states.
    Let $\mu$ be an $n$-copy measurement, for which we assume that for any $\psi \in \cG_m^+$, the resulting estimate $\hat{\psi}$ satisfies $\abs{\braket{\psi|\hat\psi}}^2 \geq 1 - \eps^2$ with probability at least 0.99.
        Together with the fact that $\abs{\braket{\psi | \hat\psi}}^{2k} \geq 0$, we have that for every $\psi \in \cG_m^+$,
    \begin{align*}
        \Ex_{\hat\psi \sim P_{\psi}} \, \abs{\braket{\psi | \hat\psi}}^{2k} \geq  0.99(1-\eps^2)^{k} .
    \end{align*}
    On the other hand, by \cref{lem:moments fermion fidelity} and \cref{lem:dimension multicopy fermions} we have
    \begin{align*}
        \Ex_{\psi} \, \Ex_{\hat\psi \sim P_{\psi}} \, \abs{\braket{\psi | \hat\psi}}^{2k} &\leq \frac{d_{n,m}}{d_{n+k,m}}
        = \prod_{1 \leq i < j \leq m} \frac{2m + n - (i+j)}{2m + n + k - (i+j)}\\
        &= \prod_{1 \leq i < j \leq m} \left(1 - \frac{k}{2m + n + k - (i+j)}\right)
        \leq \left(1 - \frac{k}{2m + n + k}\right)^{m(m-1)/2}.
    \end{align*}
    We conclude that for every $k$
    \begin{align*}
        0.99(1-\eps^2)^{k} \leq \Ex_{\psi} \, \Ex_{\hat\psi \sim P_{\psi}} \, \abs{\braket{\psi | \hat\psi}}^{2k} \leq \left(1 - \frac{k}{2m + n + k}\right)^{m(m-1)/2}.
    \end{align*}
    Using $1 + xy \leq (1+x)^y \leq e^{xy}$, which holds for $x \geq -1$, we deduce that
    \begin{align*}
        0.99(1-k\eps^2) \leq \exp\mleft( -\frac{km(m-1)}{2(2m + n + k)}\mright).
    \end{align*}
    We now choose $k = \lfloor 1/(4\eps^2) \rfloor$, so $0.99(1-k\eps^2) \geq e^{-1/2}$ and taking logarithms we get
    \begin{align*}
        \frac{km(m-1)}{2(2m + n + k)} \leq \frac12,
    \end{align*}
    which we can rewrite as
    \begin{equation*}
        n \geq k(m^2 - m - 1) - 2m = \Omega(m^2/\eps^2).
        \qedhere
    \end{equation*}
        This proves the result for constant probability of error $\delta$. The scaling of the lower bound with $\delta$ follows from the reduction to distinguishing two nearby states as in \cref{eq:distinguising two states}.
\end{proof}

\section{Random purification and tomography for bosonic Gaussian states}\label{sec:bosons}
The goal of this section is to follow a similar approach as for fermionic Gaussian states to derive the optimal sample complexity for bosonic Gaussian states. Along the way, we deal with some technical challenges arising from the fact that the Hilbert spaces are infinite-dimensional and the groups of Gaussian unitaries are non-compact. In particular, there is no random purification channel for arbitrary Gaussian states, but we show it does exist for gauge-invariant bosonic Gaussian states.

\subsection{Gaussian states and unitaries}\label{sub:boson prelims}
We start by defining these states and related concepts, see e.g.\ \cite{weedbrook2012gaussian, BUCCO, HOLEVO}.

\paragraph{Bosonic Hilbert space and operators:}
Consider $m$ bosonic modes. The associated Hilbert space is given by
\begin{align*}
    \cH_m = \L^2(\R^m).
\end{align*}
There is a natural algebra on $\cH_m$ generated by the bosonic creation and annihilation operators -- denoted as $a_j^\dagger$ and~$a_j$ for $j \in [m]$, respectively -- which satisfy the canonical commutation relations
\begin{align*}
    [a_j, a_k^\dagger] = \delta_{jk} \id, \qquad [a_j^\dagger, a_k^\dagger] = [a_j, a_k] = 0.
\end{align*}
The corresponding position and momentum quadratures are $q_j = (a_j^\dagger + a_j)/\sqrt2$ and $p_j = i(a_j^\dagger - a_j)/\sqrt2$, respectively, for~$j\in[m]$.
The total number operator is defined as~$N = \sum_j a_j^\dagger a_j$.
We call the associated unitaries~$e^{i\theta N}$ gauge transformations, they satisfy $e^{i\theta N} a_j e^{-i\theta N} = e^{-i\theta} a_j$ for every~$j\in[m]$. 

\paragraph{Gaussian unitaries:}
For $S \in \Sp(2m)$, there are unitaries $U_S$ that map
\begin{align}\label{eq:gaussian U boson}
    U_S r_j U_S^\dagger = \sum_{k=1}^{2m} S_{kj} r_k,
\end{align}
where $(r_1,\dots,r_{2m}) = (q_1,p_1,\dots,q_m,p_m)$ are the quadrature operators.
This defines a projective representation of $\Sp(2m)$ on~$\cH_m$, which can be lifted to a genuine representation of the metaplectic group~$\Mp(2m)$, the double cover of $\Sp(2m)$.
For $S \in \Sp(2m)$ we denote the associated symplectic Gaussian unitary by $U_S$.
A unitary is called a Gaussian unitary if it is a product of a unitary as in~\eqref{eq:gaussian U boson} with an element of the Heisenberg--Weyl group (that is, a unitary generated by a displacement operator).
This way, the group of Gaussian unitaries form a projective representation of the affine symplectic group~$\ASp(2m)$.

A Hamiltonian~$H$ is quadratic when it has degree at most two in the creation and annihilation operators, or equivalently in the quadrature operators.
For such Hamiltonians $U_S = \exp(i H t)$ is Gaussian, and, up to an overall phase, every Gaussian unitary admits such a representation.
We say that $H$ is gauge invariant, or number preserving, or \emph{passive} when it commutes with every gauge transformation; equivalently, it commutes with the number operator.
Such Hamiltonians are of the form
\begin{align}\label{eq:quadratic gauge invariant hamiltonian}
    H = \sum_{j,k=1}^m h_{jk} a_j^\dagger a_k
\end{align}
where $h = h^\dagger$ (so $ih$ is in the Lie algebra~$\mathfrak{u}(m)$), up to an irrelevant additive constant.
These gauge-invariant unitaries define a projective representation of~$\U(m)$, the maximal compact subgroup of~$\Sp(2m)$.
Every number-preserving Hamiltonian of the form~\eqref{eq:quadratic gauge invariant hamiltonian} can be diagonalized by a passive Gaussian unitary as
\begin{align}\label{eq:diag gauge invariant hamiltonian}
    H = \sum_{j=1}^m \beta_j a_j^\dagger a_j + c
\end{align}
with $\beta_j \in \R$ and $c \in \R$.

\paragraph{Gaussian states:}
A Gaussian state is a Gibbs state
\begin{align}\label{eq:bosonic gibbs state}
    \rho = \frac1Z e^{-H}, \quad Z = \tr[e^{-H}],
\end{align}
for a quadratic Hamiltonian $H$ which is bounded from below, or a (trace norm) limit of such states.
One can always use a Gaussian unitary that transforms the Hamiltonian~$H$ into that of independent harmonic oscillators, i.e.,
\begin{align}\label{eq:std H boson}
    H = \sum_{j=1}^m \beta_j \left( a_j^\dagger a_j + \frac12 \right)
\end{align}
where $\beta_j > 0$ for every $j$, up to an irrelevant additive constant.

A Gaussian state $\rho$ has mean zero if $\tr[\rho a_i] = \tr[\rho a_i^\dagger] = 0$ for $i=1,\dots,m$.
We denote by $\cG_m^0$ the set of pure bosonic $m$-mode Gaussian states of mean zero. It is the orbit of the vacuum state under the symplectic Gaussian unitaries.

A Gaussian state is called \emph{gauge-invariant} (or \emph{number-preserving)} if it commutes with all gauge transformations (equivalently, with the number operator).
In this case, $H$ must be gauge-invariant and, by a passive Gaussian unitary, it can be brought into the form \cref{eq:std H boson}.

\begin{lem}\label{lem:gaussian_purification_bos}
    Let $\cH_m$ be the Hilbert space of $m$ bosonic modes, with the Fock basis.
    \begin{enumerate}
        \item If $U$ is a symplectic Gaussian unitary, then $\overline{U}$ and $U\tran$ are also Gaussian unitaries. In addition, $U \ot \id$ and $\id \ot U$ are Gaussian unitaries on $\cH_{2m}$.
        \item If $\rho \in \S(\cH_m)$ is Gaussian, then $\ket{\psi^\text{std}_\rho}$ is a pure Gaussian state in $\cH_{2m}$. If $\rho$ is a gauge-invariant state, then the mean photon number of the standard purification is twice that of $\rho$.
    \end{enumerate}
\end{lem}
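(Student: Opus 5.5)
We follow the template of \cref{lem:gaussian purification}, replacing fermionic modes by bosonic ones and working in the Fock basis, so that $\cH_m \ot \cH_m \cong \cH_{2m}$ canonically, with no parity twists needed.

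For part~1, we write a symplectic Gaussian unitary (up to phase) as $U = \exp(iHt)$ with $H$ quadratic in $a_j, a_j^\dagger$. In the Fock basis $a_j$ and $a_j^\dagger$ have real matrix entries, so $\bar a_j = a_j$ and $a_j\tran = a_j^\dagger$. Hence $\bar H$ and $H\tran$ are again quadratic (with conjugated or transposed coefficients), and in particular $\bar H$ is still Hermitian. Since $\bar U = \exp(-i\bar H t)$ and $U\tran = \exp(iH\tran t)$, both are Gaussian. The statements for $U\ot\id$ and $\id\ot U$ are immediate, because $a_j\ot\id$ and $\id\ot a_j$ are the annihilation operators of the $2m$-mode system and bosonic modes commute, so no Jordan--Wigner string appears. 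One technical point is that these operators are unbounded. We would handle it by noting that the Gaussian unitaries are characterized by their action \eqref{eq:gaussian U boson} on quadratures, and checking that conjugation by $\bar U$ (respectively $U\tran$) implements $\bar S$-type linear maps on $(q,p)$. Concretely, $\bar q = q$, $\bar p = -p$, so $\bar U$ acts by $DSD$ with $D = \diag(1,-1,\dots)$, which is symplectic up to the antisymplectic flip and hence symplectic.

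For part~2, we first reduce to the normal form. By density/continuity in trace norm (standard purifications depend continuously on $\rho$ via $\sqrt\rho$), it suffices to treat Gibbs states. We then pick a Gaussian unitary $U$ with $U\rho U^\dagger = \sigma = e^{-H}/Z$ and $H$ as in \eqref{eq:std H boson}. Displacements require care: we would first handle symplectic $U$ via $\ket{\psi^\text{std}_\rho} = (U^\dagger\ot U\tran)\ket{\psi^\text{std}_\sigma}$ and part~1, and treat displacements $D(\alpha)$ separately, noting that $\bar D(\alpha)$ is again a displacement. For $\sigma$ a product of thermal states, $\sqrt\sigma\ket\Gamma = \bigotimes_j \sqrt{1-e^{-\beta_j}} \sum_k e^{-\beta_j k/2}\ket{kk}$, which is a product of two-mode squeezed vacua with $\tanh r_j = e^{-\beta_j/2}$. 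Each factor equals $\exp(r_j(a^\dagger b^\dagger - ab))\ket{00}$, a Gaussian unitary applied to the vacuum, so the purification is Gaussian.

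For the photon-number claim, in the gauge-invariant case $U$ can be taken passive. Then $U\tran \ot U^\dagger$ commutes with total number on $2m$ modes, because a passive $U$ has passive $U\tran$. So it suffices to compute on the normal form: each two-mode squeezed vacuum has $\langle a^\dagger a\rangle = \langle b^\dagger b\rangle = \sinh^2 r_j$, so the total is twice the mean photon number of $\sigma$, which equals that of $\rho$.

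The main obstacle is the rigorous handling of unboundedness and transposes in infinite dimensions: defining $U\tran$ and $\bar U$ properly, and justifying the density/limit argument for states that are only trace-norm limits of Gibbs states. We would resolve this through the quadrature characterization and trace-norm continuity of $\rho\mapsto\psi^\text{std}_\rho$.
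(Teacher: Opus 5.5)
Your proposal is correct and follows the paper's proof. Part~1 rests on quadratic Hamiltonians staying quadratic under conjugation and transpose in the Fock basis. Part~2 uses the normal form, the transpose trick $\ket{\psi^\text{std}_\rho}=(U\ot\bar U)\ket{\psi^\text{std}_\tau}$ (your $(U^\dagger\ot U\tran)$ is the same identity with the opposite convention), the product of two-mode squeezed vacua, and passivity of $U$ for the photon-number count. Your extra steps (the $DSD$ check on quadratures, displacements, the density argument) only fill in details the paper leaves implicit. You can drop the density step by allowing $\beta_j=\infty$, i.e.\ $r_j=0$ vacuum factors, in the normal form. That also spares you from showing that norm limits of pure Gaussian states are Gaussian, which your sketch does not address.
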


\begin{proof}
Complex conjugation and transposition are defined with respect to the Fock basis. With this choice, the first statement follows from the fact that quadratic Hamiltonians remain quadratic under transpose and complex conjugation.
Let $\ket{x}$ denote the $m$-mode Fock state associated with $x \in \N^m$. Defining the unnormalized maximally entangled state on two systems with $m$ modes as the formal vector
\begin{align}\label{def_Gamma}
    \ket{\Gamma}_{AB} \coloneqq \sum_{x\in \N^m} \ket{x}_{A} \ket{x}_{B}\,,
\end{align}
then the \emph{standard purification} of an $m$-mode state $\rho$ is given by
\begin{equation}\label{def_can_pur}
    \ket{\psi^\text{std}_\rho}_{AB} \coloneqq (\sqrt{\rho_A} \otimes I_B) \, \ket{\Gamma}_{AB}.
\end{equation}
Note that this is well-defined since $\rho$ is a trace-class operator.
In particular, if $\tau$ is the Gibbs state of a Hamiltonian as in \cref{eq:std H boson}, so $\tau$ is a tensor product state of $m$ bosonic modes, each of which is in a state proportional to
\begin{align*}
    \sum_{x_j=0}^{\infty} e^{-\beta_j x_j} \proj{x_j},
\end{align*}
and the standard purification $\ket{\psi^\text{std}_\tau}_{AB}$ is a tensor product of states proportional to
\begin{align*}
    \sum_{x_j=0}^{\infty} e^{-\beta_j x_j/2} \ket{x_j x_j}.
\end{align*}
It is easy to see that the mean particle number of $\psi^\text{std}_\tau$ is twice that of $\tau$.
More generally, let $\rho$ be an arbitrary Gaussian state, and write $\rho = U \tau U^\dagger$ for a Gaussian unitary $U$, where $\tau$ is the Gibbs state of a Hamiltonian as in \cref{eq:std H boson}.
Then
\begin{align}\label{eq:purification_trick}
    \ket{\psi^\text{std}_\rho}_{AB}=(\sqrt{\rho_A}\otimes I_B)\ket{\Gamma} =(U\sqrt{\tau} U^\dagger\otimes I_B) \ket{\Gamma}_{AB} = (U \sqrt{\tau} \otimes \bar{U}) \ket{\Gamma}_{AB} = (U\otimes \bar{U} )\ket{\psi^\text{std}_\tau}_{AB}\,,
\end{align}
by the transpose trick.
Now, $U\otimes \bar{U}$ is a Gaussian unitary, so $\ket{\psi^\text{std}_\rho}_{AB}$ is a Gaussian state. If $\rho$ is gauge-invariant, we may take $U = U_O$ to be a passive Gaussian unitary, which preserves particle number, and the mean particle number of $\ket{\tau}$ is twice that of $\tau$.
\end{proof}

\subsection{Challenges for tomography for bosonic Gaussian states}
Trying to follow the same tomography strategy as in the fermionic case for general bosonic Gaussian states runs into two difficulties, closely related:
\begin{enumerate}
    \item There is no (normalized) Haar integral, since the group is not compact.
    \item In the random purification channel, one needs to prepare maximally mixed states, but in the bosonic case these would be on infinite-dimensional spaces, which is not well-defined.
\end{enumerate}
The high-level solution to these two problems is as follows.
\begin{enumerate}
    \item For $n \geq 2m+1$, the matrix elements of $\proj{\psi}^{\ot n}$ define integrable functions over the non-compact symplectic group; this means that the covariant measurement can be defined.
    \item In fact, for the sake of tomography it is not necessary to prepare a maximally mixed state. Any fixed state on this register suffices. This can easily be seen in the finite-dimensional case as well, as argued in \cref{rem:any staet on V_lambda}.
\end{enumerate}
We will now introduce the relevant facts from the representation theory of $\Sp(2m)$, and then construct sample-optimal tomography protocols.

\subsection{Representation theory background}\label{sec:rep theory semisimple}

The set-up of (Gaussian) bosonic states is as in \cref{sub:boson prelims}.
The Hilbert space has a projective representation of $\Sp(2m)$, or equivalently a representation of the metaplectic group $\Mp(2m)$, the double cover of $\Sp(2m)$. These are reductive semisimple Lie groups with Lie algebra $\sp(2m)$. The group $\Sp(2m)$ has a well-defined Haar measure that is both left- and right-invariant (since it is a connected semisimple Lie group).

An irreducible unitary representation $\phi: G \to U(\cH)$ of a Lie group $G$ is in the discrete series if
\begin{align*}
    f(g) \coloneqq \bra{v} \phi(g) \ket{w}
\end{align*}
is square-integrable for any $\ket{v}, \ket{w} \in \cH$.
We then have the following version of Schur orthogonality, see Theorem 9.2 in \cite{knapp2001representation}, and Theorem 1 in \cite{harish1956representations}.

\begin{thm}[Schur~orthogonality]\label{lem:schur ortho}
    Let $\phi: G \to U(\cH_1)$ and $\psi: G \to U(\cH_2)$ be inequivalent irreducible representations in the discrete series.
    Then
    \begin{align*}
        \int  \bra{v_1} \phi(g) \ket{u_1}\overline{\bra{v_2} \psi(g) \ket{u_2}} \dd g= 0
    \end{align*}
    for all $u_i, v_i \in \cH_i$.
    If $\phi: G \to U(\cH)$ is an irreducible representation in the discrete series,
    \begin{align*}
        \int \bra{v_1} \phi(g) \ket{u_1} \overline{\bra{v_2} \phi(g) \ket{u_2}} \dd g = d_{\phi}^{-1} \braket{v_1|v_2} \braket{u_2|u_1} .
    \end{align*}
    where $d_{\phi}$ is the formal degree of the representation.
\end{thm}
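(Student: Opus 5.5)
The plan is to adapt the classical proof of Schur orthogonality for compact groups: form an averaged operator, show that it intertwines the two representations, and apply Schur's lemma. Square-integrability of matrix coefficients takes the place of compactness in ensuring the averaged operator exists. Concretely, fix $u_1\in\cH_1$ and $u_2\in\cH_2$, and define an operator $T=T_{u_1,u_2}\colon\cH_2\to\cH_1$ weakly through the sesquilinear form
\begin{align*}
    B(v_1,v_2) \coloneqq \int_G \bra{v_1}\phi(g)\ket{u_1}\,\overline{\bra{v_2}\psi(g)\ket{u_2}}\,\dd g .
\end{align*}
Formally, $T=\int_G \phi(g)\ket{u_1}\bra{u_2}\psi(g)^\dagger\,\dd g$, and the two claimed identities are statements about the matrix elements $\braket{v_1|T|v_2}$.

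The first step is to show that $T$ is a bounded operator. This is the main technical obstacle. By the discrete-series assumption, the map $A_\phi\colon v\mapsto \bra{v}\phi(\cdot)\ket{u_1}$ takes $\cH_1$ into $L^2(G)$, and it is conjugate-linear. Its graph is closed: if $v_k\to v$ and $A_\phi v_k\to f$ in $L^2$, then $A_\phi v_k\to A_\phi v$ pointwise, and a subsequence converges to $f$ almost everywhere, so $f=A_\phi v$. The closed graph theorem therefore gives $\norm{A_\phi v}_{L^2}\le C\norm{v}$, and the same argument applies to $A_\psi$. Cauchy--Schwarz in $L^2(G)$ then yields $\abs{B(v_1,v_2)}\le C C'\norm{v_1}\norm{v_2}$, so $B$ defines a bounded $T$.

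Next I would show that $T$ intertwines the two representations. Replacing $v_1,v_2$ by $\phi(h)^\dagger v_1$ and $\psi(h)^\dagger v_2$ and substituting $g\mapsto h^{-1}g$ in the Haar integral gives $\phi(h)^\dagger T\psi(h)=T$, that is, $T\psi(h)=\phi(h)T$ for all $h$. Schur's lemma for irreducible unitary representations now applies: for bounded intertwiners it follows from the spectral theorem applied to $T^\dagger T$ and $TT^\dagger$. If $\phi$ and $\psi$ are inequivalent, it gives $T=0$, which is the first claim. If $\phi=\psi$, it gives $T=c(u_1,u_2)\,I$, so
\begin{align*}
    \int_G \bra{v_1}\phi(g)\ket{u_1}\,\overline{\bra{v_2}\phi(g)\ket{u_2}}\,\dd g = c(u_1,u_2)\braket{v_1|v_2}.
\end{align*}

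To identify $c(u_1,u_2)$, I would use that $\Sp(2m)$ (and likewise $\Mp(2m)$) is unimodular, so $\dd g$ is invariant under $g\mapsto g^{-1}$. Since $\bra{v}\phi(g)\ket{u}=\overline{\bra{u}\phi(g^{-1})\ket{v}}$, inverting the integration variable shows that the left-hand side also equals $\overline{c(v_1,v_2)}\braket{u_2|u_1}$, by the same Schur argument with the roles of the $u$'s and $v$'s exchanged. Choosing unit vectors with $v_1=v_2$ and $u_1=u_2$ shows that $c(u,u)$ is independent of $u$; call this value $d_\phi^{-1}$. Comparing the two expressions for general vectors then gives $c(u_1,u_2)=d_\phi^{-1}\braket{u_2|u_1}$. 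Finally, $d_\phi^{-1}=\int_G\abs{\bra{u}\phi(g)\ket{u}}^2\,\dd g$ is finite by square-integrability and strictly positive because the matrix coefficient is continuous and equal to $1$ at $g=e$, so the formal degree $d_\phi\in(0,\infty)$ is well defined.
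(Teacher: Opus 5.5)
The paper does not prove this theorem itself; it quotes it from \cite{knapp2001representation} (Theorem~9.2) and \cite{harish1956representations} (Theorem~1). Your argument is correct and is essentially the standard proof found in those sources. You bound the averaged operator via the closed graph theorem, get the intertwining property from left invariance of Haar measure, apply Schur's lemma, and use inversion invariance to identify $c(u_1,u_2)=d_\phi^{-1}\braket{u_2|u_1}$.

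What your self-contained version adds over the citation is that it makes two hypotheses visible that the paper leaves implicit.

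\emph{Unimodularity.} The paper states the result for ``a Lie group'', but unimodularity is genuinely needed for the second identity. For non-unimodular groups the right-hand side must be corrected by a (generally unbounded) Duflo--Moore operator. You rightly supply this hypothesis and note that it holds for $\Sp(2m)$ and $\Mp(2m)$.

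\emph{The definition of the discrete series.} The paper requires \emph{every} matrix coefficient to be square-integrable, and this is exactly what makes your closed-graph step immediate. Under the more common definition, which asks for only one nonzero square-integrable coefficient, one must first show that square-integrability propagates to all coefficients.

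A cosmetic point: the substitution you describe actually yields $\phi(h)T\psi(h)^\dagger=T$ rather than $\phi(h)^\dagger T\psi(h)=T$. Since the identity holds for all $h$, the two are equivalent.
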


We will need a finer characterization of the formal degree in our calculations. To establish it, we first recall a few concepts and facts from the theory of
reductive semisimple Lie groups, with~\cite{knapp2001representation} as the main reference.
A real Lie algebra $\mathfrak{g}$ of a Lie group $G$ can be decomposed as $\mathfrak{g}=\mathfrak{k}\oplus\mathfrak{p}$, with $\mathfrak{k}$ being the antisymmetric part and $\mathfrak{p}$ being the symmetric part. The complexification of $\mathfrak{g}$ is denoted as $\mathfrak{g}^{\C}=\mathfrak{k}^{\C}\oplus\mathfrak{p}^{\C}$. Let $\mathfrak{b}$ be an abelian Lie algebra.
For each element $\alpha$ in its dual $\mathfrak{b}'$, we can define
\[
\mathfrak{g}_{\alpha}\coloneqq \{X\in\mathfrak{g}: [A,X]=\braket{\alpha, A}X\, \, \forall A\in \mathfrak{b} \}.
\]

The vectors $\alpha\neq0$ such that $\mathfrak{g}_{\alpha}\neq 0$ form the \textit{root system} of the pair $(\mathfrak{g}^{\C},\mathfrak{b})$, denoted as $\Delta(\mathfrak{b},\mathfrak{g})$. We note that each root appears with both signs. Given a root system $\Delta$, a set of \textit{positive roots} $\Delta_+$ is a subset of $\Delta$ such that:
\begin{itemize}
    \item For each root $\alpha \in \Delta$, exactly one of the roots $\alpha$, $-\alpha$ is contained in $\Delta_+$.
    \item For any two $\alpha, \beta \in \Delta_+$ such that $\alpha + \beta$ is a root, $\alpha + \beta \in \Delta_+$.
\end{itemize}
Given a set of positive roots one can define an order on $\mathfrak{b}$:
\begin{align}\label{eq:order}
\mu\geq\lambda \Leftrightarrow\mu=\lambda+\sum_{\alpha\in\Delta_+}n_{\alpha}\alpha,  \, n_{\alpha}\geq 0.
\end{align}
The symplectic groups $\mathrm{Sp}(2m,\RR)$ have the important property that there exists a Cartan subalgebra $\mathfrak{b}$ satisfying $\mathfrak{b}\subseteq \mathfrak{k}\subseteq \mathfrak{g}$, therefore one can consider the root systems $\Delta\coloneqq\Delta(\mathfrak{b}^{\C},\mathfrak{g}^{\C})$ and $\Delta_K=\Delta(\mathfrak{b}^{\C},\mathfrak{k}^{\C})$. For complex Lie algebras and Cartan subalgegras the root spaces are one-dimensional, and since $[\mathfrak{b}^{\C}, \mathfrak{k}^{\C}]\subseteq\mathfrak{k}^{\C}$, $[\mathfrak{b}^{\C}, \mathfrak{p}^{\C}]=\mathfrak{p}^{\C}$, the root spaces are either inside $\mathfrak{k}^{\C}$ or $\mathfrak{p}^{\C}$. The corresponding roots are then said to be compact and non-compact, respectively, and the compact roots coincide with $\Delta_K$.

The following is a simplification of Theorem 16 in~\cite{harish-chandra1966II}.
\begin{prp}[Discrete series classification via the Harish-Chandra parameters]\label{prp:HC}
There exists a set of points $L'\subseteq (i\mathfrak{b})'$ with the property that, for any $\lambda\in L'$, $\braket{\lambda,\frac{2\alpha}{\braket{\alpha,\alpha}}}\neq 0$ is an integer for every $\alpha\in \Delta$, and a surjective function from $L'$ to representations of the discrete series of a semisimple Lie group $G$. Let $\phi_\lambda$ be the representation identified by $\lambda\in L'$. We say that $\lambda$ is a Harish-Chandra parameter of $\phi_\lambda$. The formal degree of $\phi_\lambda$ can be computed as
\begin{align}\label{eq:HC-formaldegree}
d_{\lambda}=c\prod_{\alpha\in\Delta_+}|\braket{\lambda,\alpha}|,
\end{align}
for $c$ depending only on the normalization of the Haar measure. It follows that $\min_{\lambda} d_{\lambda}>0$.
\end{prp}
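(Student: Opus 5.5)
The first part of \cref{prp:HC} (existence of the parameter set $L'$, the surjection onto the discrete series and the product formula \eqref{eq:HC-formaldegree}) I would not reprove. I would import it from Harish-Chandra's classification~\cite{harish-chandra1966II} (see also~\cite{knapp2001representation}), and spend the work on translating conventions. The only genuinely new claim is $\min_\lambda d_\lambda>0$, which I would derive from the integrality condition by an elementary argument.

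\emph{Step 1 (matching the setup).} The hypothesis of Harish-Chandra's theorem is that $G$ has a compact Cartan subgroup, i.e.\ $\mathrm{rank}\, G=\mathrm{rank}\, K$. For $\Sp(2m)$ this holds because $\mathfrak b\subseteq\mathfrak k$, as noted above. I would take $L'$ to be the set of Harish-Chandra parameters. These are the elements $\lambda\in(i\mathfrak b)'$ that are regular, i.e.\ $\braket{\lambda,\alpha}\neq0$ for all $\alpha\in\Delta$. They also satisfy the integrality condition that $\lambda+\rho$ (with $\rho$ the half-sum of positive roots) exponentiates to a character of the compact Cartan subgroup of $G$ or of its cover $\Mp(2m)$.

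\emph{Step 2 (integrality).} From this condition I would deduce that each coroot pairing $\braket{\lambda,2\alpha/\braket{\alpha,\alpha}}$ is an integer. Analytic integrality of $\lambda+\rho$ makes its pairings with coroots integers. The pairings of $\rho$ with coroots are integers for simple roots (they equal $1$), hence for all roots by linearity of coroots over the simple ones in type $C_m$. Combined with regularity, each of these integers is nonzero.

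\emph{Step 3 (the formula).} Harish-Chandra's formula reads
\[
d_\lambda = c_0\prod_{\alpha\in\Delta_+}\frac{\abs{\braket{\lambda,\alpha}}}{\abs{\braket{\rho,\alpha}}}.
\]
Absorbing $\prod_\alpha\abs{\braket{\rho,\alpha}}^{-1}$ and the Haar normalization into $c$ gives \eqref{eq:HC-formaldegree}. The absolute values make the formula independent of the Weyl chamber containing $\lambda$, so the fixed choice of $\Delta_+$ suffices.

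\emph{Step 4 (positivity of the minimum).} For every $\alpha\in\Delta_+$, the nonzero integer $\braket{\lambda,2\alpha/\braket{\alpha,\alpha}}$ has absolute value at least $1$, so
\[
\abs{\braket{\lambda,\alpha}}\ge \tfrac12\braket{\alpha,\alpha}.
\]
Since $\Delta_+$ is finite and independent of $\lambda$, this yields the uniform bound
\[
d_\lambda\ge c\prod_{\alpha\in\Delta_+}\tfrac12\braket{\alpha,\alpha}>0.
\]
Hence $\inf_\lambda d_\lambda>0$. The infimum is attained because the values $d_\lambda$ form a discrete set, each $\braket{\lambda,\alpha}$ lying in a lattice.

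The main obstacle is Step 2 together with the $\rho$-shift conventions. Different references state integrality for $\lambda$ or for $\lambda+\rho$. For the metaplectic cover, $\lambda$ may be half-integral with respect to the lattice of $\Sp(2m)$. I would check directly that coroot pairings remain integers, which is all Step 4 needs. If this fails for some convention, I would instead bound $\abs{\braket{\lambda,\alpha}}$ from below by half the lattice spacing, which gives the same conclusion with a smaller constant.
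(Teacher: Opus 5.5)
Your route is the paper's. The paper imports the classification and the formal-degree formula from Harish-Chandra without further proof, and the conclusion $\min_\lambda d_\lambda>0$ is obtained as in your Step~4: regularity plus integrality bound each factor $\abs{\braket{\lambda,\alpha}}$ from below.

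There is one correction, and it matters in exactly the case the paper uses. The check you propose at the end of Step~2 fails for genuine representations of $\Mp(2m)$, which is the case of odd $n$ in \cref{lem:formal degree n copies}.

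\begin{itemize}
\item The long roots $2\eps_k$ are noncompact; their root spaces are the single-mode squeezing generators.
\item Exponentiating $2\pi i$ times the corresponding coroot gives $e^{2\pi i(a_k^\dagger a_k+\frac12)}=-I$ in the oscillator representation, not the identity.
\item Hence analytic integrality on the compact Cartan of $\Mp(2m)$ only gives $\lambda+\rho\in\mathbb{Z}^m\cup(\tfrac12+\mathbb{Z})^m$.
\item So for $\alpha=2\eps_k$ the coroot pairing $\braket{\lambda,\tfrac{2\alpha}{\braket{\alpha,\alpha}}}=\lambda_k$ can be a half-integer. The paper's own parameter $\lambda=(\tfrac n2-m,\dots,\tfrac n2-1)$ is an example when $n$ is odd.
\end{itemize}

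Your implication ``analytic integrality $\Rightarrow$ integral coroot pairings'' is valid for linear $G$ such as $\Sp(2m)$, where one can pass through the complexification. It is not valid for nonlinear covers. The integrality clause of the statement is correct only in that linear reading; it fails for genuine discrete series of $\Mp(2m)$.

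Your fallback is therefore needed, not optional, and it works cleanly. For $\lambda$ in either coset, $\braket{\lambda,\eps_i\pm\eps_j}=\lambda_i\pm\lambda_j\in\mathbb{Z}$ and $\braket{\lambda,2\eps_k}=2\lambda_k\in\mathbb{Z}$. Regularity then gives $\abs{\braket{\lambda,\alpha}}\geq 1$ for every root, so $d_\lambda\geq c$ uniformly. This is all that is used later in \cref{lem:twirl symplectic}.
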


Harish-Chandra's original characterization, while exhaustive, does not make it clear how to determine $\lambda$ when given a representation, and thus to compute the formal degree. A related characterization is given in terms of the Blattner parameter and fills this gap. It is based on the notion of minimal $K$-type of a representation $\phi$ of $G$, e.g. a highest weight of a maximal compact subgroup $K$ of $G$, in the reduction of $\phi$ into irreducible representations of $K$, such that all the other highest weights are larger according to the order~\cref{eq:order} with respect to positive compact roots.  While the correspondence classical, it is not easy to locate a complete statement in the literature. We provide such a statement as a consequence of Theorem 9.20, 12.21, 15.5 and related bibliographical notes in~\cite{knapp2001representation}.

\begin{prp}
For every Harish-Chandra parameter $\lambda$ and the choice of positive roots $\Delta_+(\lambda)=\{\alpha\in \Delta:\braket{\lambda,\alpha}>0\}$, the Blattner parameter is the minimal $K$-type of $\phi_{\lambda}$ as a representation of the maximal compact subgroup $K$, and it is equal to 
\begin{align}\label{eq:HCparameter}
\lambda=\Lambda+\rho_c-\rho_n,
\end{align}
where
\begin{align*}
        \rho_c = \frac12 \sum_{\alpha \in \Delta_{+,c}} \alpha, \quad \rho_n = \frac12 \sum_{\alpha \in \Delta_{+,n}} \alpha\,.
    \end{align*}
For any choice of compact positive roots $\Delta_{+,c}$ there is a valid Harish-Chandra parameter $\lambda$ such that $\Delta_{+,c}\subseteq\Delta_{+}(\lambda)$ and the corresponding minimal $K$-type completely determines the representation $\phi_{\lambda}$.

\end{prp}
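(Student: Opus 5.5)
The plan is to assemble the statement from three results in \cite{knapp2001representation}, with \cref{prp:HC} as the starting point, rather than to prove anything new. We fix $G=\Mp(2m)$ (or $\Sp(2m)$), its maximal compact subgroup $K$ (a cover of $\U(m)$), and the compact Cartan subalgebra $\mathfrak b\subseteq\mathfrak k$. Then $\Delta=\Delta(\mathfrak b^\C,\mathfrak g^\C)$ splits into the compact roots $\Delta_K$ and the noncompact roots. Given a Harish-Chandra parameter $\lambda$, \cref{prp:HC} says that $\braket{\lambda,\alpha}\neq0$ for all $\alpha\in\Delta$, so $\lambda$ is regular. Hence $\Delta_+(\lambda)=\{\alpha:\braket{\lambda,\alpha}>0\}$ is a genuine positive system, and it splits as $\Delta_{+,c}\sqcup\Delta_{+,n}$. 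This defines $\rho_c$ and $\rho_n$, and we set $\Lambda\coloneqq\lambda-\rho_c+\rho_n$, the Blattner parameter.

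\emph{Step 1 (minimal $K$-type).} I would invoke Knapp's Theorem~9.20, the Blattner-type theorem, for the discrete series representation $\phi_\lambda$. It states that the $K$-type with highest weight $\Lambda$ occurs in $\phi_\lambda|_K$ with multiplicity one. It also states that every other $K$-type has highest weight of the form $\Lambda+\sum_{\alpha\in\Delta_{+,n}}n_\alpha\alpha$ with $n_\alpha\geq0$. Relative to the order \cref{eq:order}, this means $\Lambda$ is minimal. I would also check minimality in Vogan's norm sense, using the $\Delta_K$-dominance of $\Lambda+2\rho_c$, so that ``minimal $K$-type'' is unambiguous; this is what Theorem~12.21 supplies.

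\emph{Step 2 (existence for a prescribed compact chamber).} Fix any compact positive system $\Delta_{+,c}$. By \cref{prp:HC} the set $L'$ is stable under the compact Weyl group $W_K$, and $\phi_{w\lambda}\cong\phi_\lambda$ for $w\in W_K$, since these are inner conjugations by elements of $K$. Given any $\lambda\in L'$, I would choose $w\in W_K$ so that $w\lambda$ is dominant for $\Delta_{+,c}$. Such a $w$ exists because $W_K$ acts simply transitively on compact chambers and $\lambda$ is regular. Then $\Delta_{+,c}\subseteq\Delta_+(w\lambda)$, and Step~1 applied to $w\lambda$ gives the formula \eqref{eq:HCparameter}.

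\emph{Step 3 (determination).} I would use Theorem~15.5 and the bibliographical notes to Chapter~XV. These give that two discrete series representations with the same minimal $K$-type are equivalent; this is the injectivity of the Blattner parameter, due to Schmid and Hecht--Schmid. Equivalently, from $\Lambda$ and the dominance condition one recovers $\lambda$ up to $W_K$, by solving $\lambda=\Lambda+\rho_c-\rho_n(\lambda)$. Uniqueness of the positive system compatible with $\Lambda$ is the content of Knapp's argument.

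I expect Step~3 to be the main obstacle. The relation $\lambda=\Lambda+\rho_c-\rho_n$ is implicit, because $\rho_n$ depends on $\lambda$ through $\Delta_+(\lambda)$, and showing that $\Lambda$ pins down the noncompact part of the chamber is exactly where the literature is scattered. My fallback is to cite Schmid's proof of Blattner's conjecture for the multiplicity-one and minimality facts. In the concrete case of $\Sp(2m)$, the KV representations we need in \cref{eq:kv intro} are holomorphic discrete series. For these I would verify the claim directly: the lowest $K$-type is the $\U(m)$-weight $(n/2,\dots,n/2)$, and the positive system is the holomorphic one. This case suffices for the formal-degree computation that follows.
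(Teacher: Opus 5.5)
Your proposal takes the same approach as the paper. The paper gives no independent argument: it presents the proposition as a consequence of Theorems~9.20, 12.21, 15.5 and the related bibliographical notes in \cite{knapp2001representation}, which is exactly the citation assembly you describe. Your extra steps are consistent with how the result is used in \cref{lem:formal degree n copies}: conjugating by $W_K$ into the prescribed compact chamber, and the direct check for the holomorphic vacuum component with lowest $K$-type $(n/2,\dots,n/2)$.
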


We can now compute the formal degree of the irreducible representation of $\mathrm{Mp}(2m,\RR)$ in $L^2(\RR^{m})^{\otimes n}$ containing $n$ copies of the vacuum state.
In the KV decomposition, we denote by $\cV_{n,m}$ the component that corresponds to the trivial one-dimensional representation of $\O(n)$, and we denote by $\Pi_{n,m}$ the projection onto the corresponding isotypical component, which is an irreducible representation of $\Mp(2m)$.

We denote the formal degree of this representation by $d_{n,m}$, and by \cref{lem:schur ortho} and \cref{eq:half positive roots} a short calculation (recall that we normalize the Haar measure so $c=1$ in \cref{eq:HC-formaldegree}) gives

\begin{lem}\label{lem:formal degree n copies}
The irreducible representation $\cV_{n,m}$ has formal degree
\begin{align}\label{eq:vacuum_formaldegree}
    d_{n,m} = \left(\prod_{i<j}j-i \right)\left(\prod_{1 \leq i \leq j \leq m} (n - (i+j))\right)
\end{align}
for $n \geq 2m+1$.
\end{lem}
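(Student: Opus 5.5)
The plan is to compute $d_{n,m}$ from the Harish-Chandra formula \cref{eq:HC-formaldegree}. This requires identifying the Harish-Chandra parameter of $\cV_{n,m}$ through its minimal $K$-type, via \cref{eq:HCparameter}.

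\emph{Setup.} The maximal compact subgroup is $K\cong\U(m)$ (more precisely its double cover inside $\Mp(2m)$). Its Cartan subalgebra $\mathfrak b$ is spanned by the number operators $a_j^\dagger a_j+\tfrac12$, and $\eps_j$ denotes the dual basis.

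\emph{Roots.} The compact roots are $\pm(\eps_i-\eps_j)$ for $i<j$; they come from the passive generators $a_i^\dagger a_j$. The non-compact roots are $\pm(\eps_i+\eps_j)$ for $i\le j$; they come from the squeezing generators $a_i^\dagger a_j^\dagger$ and $a_ia_j$.

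\emph{Minimal $K$-type.} On $\cH^{\ot n}$, the operator $H_j$ acts as $\sum_{r}(a_{j,r}^\dagger a_{j,r}+\tfrac12)$. The vector $\ket{0}^{\ot mn}$ lies in $\cV_{n,m}$, is annihilated by all lowering operators $a_{i}a_j$ (summed over copies), and is $K$-invariant up to a character. Hence $\cV_{n,m}$ is a lowest-weight module, and its minimal $K$-type is the one-dimensional representation with weight $\Lambda=\tfrac n2(\eps_1+\dots+\eps_m)$. Using the $K$-types in the $\U(n)\times\U(m)$ decomposition \cref{eq:unitary intro}, restricted to $\O(n)$-invariants, every other $K$-type is obtained by adding sums of $\eps_i+\eps_j$. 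This confirms that $\Lambda$ is minimal with respect to the order \cref{eq:order}.

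\emph{Harish-Chandra parameter.} Fix the compact positive roots $\eps_i-\eps_j$ for $i<j$. Matching the holomorphic-type representation requires the non-compact positive roots to be $-(\eps_i+\eps_j)$, so that the minimal $K$-type sits at the bottom. Equivalently, after switching signs, take $\Delta_{+,n}=\{\eps_i+\eps_j\}$ together with the lowest-weight convention. Then
\begin{align*}
\rho_c=\tfrac12\sum_{j}(m+1-2j)\eps_j,\qquad \rho_n=\pm\tfrac{m+1}{2}\sum_j\eps_j .
\end{align*}
This gives
\begin{align*}
\lambda=\Lambda+\rho_c-\rho_n=\sum_j\Bigl(\tfrac n2-j\Bigr)\eps_j
\end{align*}
up to an overall sign and a Weyl-group relabeling. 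I will check that $\langle\lambda,\alpha\rangle\neq0$ for all roots exactly when $n\ge 2m+1$; this is the discrete series condition. The non-compact pairings $\tfrac{n}{2}-i+\tfrac n2-j=n-(i+j)$, for $i\le j$, must all be positive, and the worst case $i=j=m$ forces $n>2m$.

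\emph{Evaluation.} Plugging into \cref{eq:HC-formaldegree} with $c=1$, the compact roots give the factors $\langle\lambda,\eps_i-\eps_j\rangle=j-i$, and the non-compact roots give the factors $n-(i+j)$ for $1\le i\le j\le m$. This yields \cref{eq:vacuum_formaldegree}.

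The main obstacle is getting the signs and conventions right: the choice of $\Delta_+$ relative to the lowest- versus highest-weight convention, the half-integer shift from the metaplectic $\tfrac12$ in $H_j$, and the sign of $\rho_n$. I would cross-check against the case $m=1$. There $\cV_{n,1}$ is the holomorphic discrete series of $\mathrm{SL}(2,\R)$ with lowest $K$-weight $n/2$, whose formal degree is known to be proportional to $n/2-1$, i.e.\ to $n-2$, consistent with the formula.
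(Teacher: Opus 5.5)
Your proposal is correct and follows essentially the same route as the paper: take the Blattner parameter $\Lambda=\tfrac n2(\eps_1+\dots+\eps_m)$ from the vacuum, form $\lambda=\Lambda+\rho_c-\rho_n$ with compact positive roots $\eps_i-\eps_j$ ($i<j$) and non-compact positive roots $\eps_i+\eps_j$ ($i\le j$), check that $\lambda$ is dominant exactly when $n\ge 2m+1$, and evaluate the Harish-Chandra product. One correction: the sign of $\rho_n$ is not a free $\pm$. Requiring $\lambda$ to be dominant for the chosen positive system forces $\rho_n=+\tfrac{m+1}{2}\sum_j\eps_j$; the other sign gives $\lambda_j=\tfrac n2+m+1-j$, which is not a signed Weyl image of your $\lambda$. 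With the correct sign, your $\lambda=\sum_j(\tfrac n2-j)\eps_j$ is right. The paper displays $\rho_c-\rho_n$ with its entries in the opposite order, which only permutes $\lambda$ by a Weyl element and leaves the product unchanged.
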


\begin{proof}
A maximal compact subgroup of $\mathrm{Sp}(2m,\RR)$ is $\mathrm{Sp}(2m,\RR)\cap O(2m)$, which is isomorphic to $U(m)$. The associated Cartan subalgebra has the form $\{A \, \: \, A=\mathrm{diag}(i\theta_1 \sigma_{Y},..., i\theta_m \sigma_{Y})\}$. The metaplectic group has the same Lie algebra. 
Let $\varepsilon_i\in \mathfrak{b}^{\C}{}' $ be defined from $\braket{\varepsilon_i,A}=\theta_i$, with the above notation for $A$. 
One can show that the roots are
\begin{align}\label{eq:rootsys}
\pm\varepsilon_{i}\pm\varepsilon_j \,\,, i\neq j,  \qquad \pm 2\varepsilon_k,\, k\in[m],
\end{align}

A valid choice of positive roots is
\begin{align*}
    \Sigma_+ = \{ \varepsilon_{i}\pm\varepsilon_j \; \text{ for } i < j \} \cup \{2\varepsilon_k,\; \text{ for } k\in[m]\}.
\end{align*}
The set of compact positive roots $\Sigma_{+,c}$ consists of $\eps_i - \eps_j$ for $i < j$, and the set of noncompact positive roots $\Sigma_{+,n}$ consists of $\eps_i + \eps_j$ for $i \leq j$. We do not report the computation, but it can be helpful to recall that compact root spaces are given by the generators of beam-splitters, while non-compact root spaces are given by algebra terms corresponding to $a^{\dagger}_ia^{\dagger}_j$ and $a_ia_j$, related to single-mode and two-mode squeezing.
We make the choice of positive roots $\Sigma_{+}$, we compute the corresponding Blattner parameter $\Lambda$ and we check that the Harish-Chandra parameter $\lambda=\Lambda+\rho_c-\rho_n$  satisfies $\braket{\lambda,\alpha}>0$ for every $\alpha\in \Sigma_{+}$. The minimum highest weight in the whole $L^2(\RR^{m})^{\otimes n}$ is the list of minimum eigenvalues of the single-mode energy operators $\frac{X_i^2+P_i^2}{2}=a_i^{\dagger}a_i+\frac{1}{2}$, i.e. $\Lambda_0=(\frac{n}{2},...\frac{n}{2})$, corresponding to the vacuum state $\ket{0,...,0}^{\otimes n}$, which hosts a one-dimensional representation of the maximal compact subgroup. Since this state is also invariant under $O(n)$, it is in $\cV_{n,m}$, and the Blattner parameter of $\cV_{n,m}$ is $\Lambda_0$.
For $\Sp(2m)$, with our choice of positive roots, we have
\begin{align}\label{eq:half positive roots}
\begin{split}
    \rho_c - \rho_n &=  \frac12  \sum_{i < j} \eps_i - \eps_j - \frac12 \sum_{i \leq j} \eps_i + \eps_j =
    -\sum_{i=1}^m (m+1-i)\eps_i = (-m, -m+1, \dots, -1).
\end{split}
\end{align}
By inspection, we then indeed have $\braket{\lambda,\alpha}>0$ for every $\alpha\in \Sigma_{+}$, and therefore the $\lambda$ is a valid Harish-Chandra parameter for the representation. The value of the formal degree can then be computed using~\cref{eq:HC-formaldegree}.
\end{proof}

\begin{rem}
For the special case of irreducible representations with highest weight $\Lambda$ with respect to a set of positive roots $\Delta_{+}$ (including non-compact ones), there is a simpler characterization of the formal degree, i.e.
\[
d_{\Lambda}=\prod_{\alpha \in \Delta^+}\left|\frac{\braket{\Lambda+\rho_c+\rho_{n},\alpha}}{\braket{\rho_c+\rho_{n},\alpha}}\right|\,,
\]
see Theorem 4 in \cite{harish1956representations} and its remarks. The apparent discrepancy with~\cref{eq:HC-formaldegree} and~\cref{eq:HCparameter} is solved by changing the sign of positive non-compact roots. In the case of the oscillator representation of the metaplectic group, all the irreducible representations  appearing in the KV decomposition for $n\geq 2m+1$ are of this form.. The above calculation can be equivalently done following this approach and noting that the vacuum is a highest weight when the roots $-\varepsilon_i-\varepsilon_j$ are positive. Nonetheless, the general treatment based on the Blattner parameter may be useful in other contexts.
\end{rem}

We now focus on the bosonic Hilbert space $\cH = L^2(\RR^m)$, as a representation of $\Mp(2m)$. The tensor product representation has the following decomposition.

\begin{thm}[{\cite[Theorem~(7.2), (8.2)]{kashiwara1978segal}}]\label{thm:KV}
    As a representation of $\Mp(2m) \times \O(n)$, one has the multiplicity-free decomposition
    \begin{align*}
        \cH^{\ot n} \cong L^2(\R^{m \times n}) \cong \bigoplus_{\lambda \in \Sigma_{m,n}} \cV_{\lambda} \ot \cW_{\lambda}
    \end{align*}
    where $\cV_\lambda$ and $\cW_\lambda$ are irreducible representations of $\Mp(2m)$ and $O(n)$ respectively, for some explicit index set $\Sigma_{m,n}$ described in~\cite{kashiwara1978segal}. If $n \geq 2m+1$, all $\cV_{\lambda}$ are in the discrete series.
\end{thm}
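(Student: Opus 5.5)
The plan is to derive the statement from Howe's theory of reductive dual pairs for the oscillator representation. The decomposition has two parts: a multiplicity-free decomposition, and membership of the summands in the discrete series. I would treat them separately.

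\emph{Setup and multiplicity freeness.} Identify $\cH^{\ot n}\cong L^2(\R^{m\times n})$ with the metaplectic (oscillator) representation of $\Mp(2mn)$, with phase space $\R^{2m}\ot\R^n$. The image of $\Sp(2m)\times\O(n)$ acting by $x\mapsto (S\ot O)x$ is a reductive dual pair, i.e.\ the two subgroups are each other's centralizers in $\Sp(2mn)$. I would then pass to the infinitesimal picture.
\begin{enumerate}[noitemsep]
\item The operators on $L^2(\R^{m\times n})$ are generated by the Weyl algebra in the $q_{ir},p_{ir}$.
\item By the first fundamental theorem of invariant theory for $\O(n)$, the $\O(n)$-invariant elements of this Weyl algebra are generated by the contractions $\sum_r r_{ir}r_{jr}$.
\item These quadratic contractions are exactly the image of $\sp(2m)$.
\end{enumerate}
Since $\O(n)$ is compact, decompose $\cH^{\ot n}=\bigoplus_{\sigma}\cH_\sigma$ into $\O(n)$-isotypic components, $\cH_\sigma\cong \cV_\sigma\ot\cW_\sigma$. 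The commutant of $\O(n)$ is generated by $\sp(2m)$, so the algebra generated by $\sp(2m)$ acts on each multiplicity space $\cV_\sigma$ with commutant equal to the scalars. Concretely, I would work in the Fock model, with polynomials on $\C^{m\times n}$ and $K$-finite vectors.
\begin{enumerate}[noitemsep]
\item Show that $\cV_\sigma$ is generated over $U(\sp(2m))$ by the space of $\O(n)$-harmonic polynomials of type $\sigma$, i.e.\ those killed by all $\Delta_{ij}=\sum_r\partial_{ir}\partial_{jr}$.
\item Show that this harmonic space is an irreducible $\U(m)\times\O(n)$-module. This is the classical multiplicity-free $\GL(m)\times\O(n)$ harmonic decomposition.
\end{enumerate}
Together these give that each $\cV_\sigma$ is an irreducible lowest-weight $(\mathfrak g,K)$-module. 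Distinct $\sigma$ give inequivalent $\cV_\sigma$, because their minimal $K$-types (the harmonic $\U(m)$-types) differ. This yields the multiplicity-free decomposition with index set $\Sigma_{m,n}$, the set of $\O(n)$-types occurring in harmonics. Unitarity and closedness of the summands follow from taking closures in $L^2$.

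\emph{Discrete series.} I would verify square integrability of one matrix coefficient per summand, taking the coefficient of a lowest-$K$-type vector. By irreducibility, square integrability of one nonzero $K$-finite matrix coefficient implies it for all matrix coefficients.
\begin{enumerate}[noitemsep]
\item Use the Cartan decomposition $S=k\,a_r\,k'$ with $a_r=\diag(e^{r_i},e^{-r_i})$. The Haar density is $\prod_{i<j}\sinh|r_i-r_j|\sinh(r_i+r_j)\prod_k\sinh 2r_k$, which grows like $\exp\bigl(\sum_i 2(m+1-i)r_i\bigr)$.
\item For the $\O(n)$-trivial summand, the coefficient of the vacuum is explicit: $\bra{0}U_{a_r}\ket{0}^{\ot n}=\prod_i(\cosh r_i)^{-n/2}$. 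Its square decays like $\exp(-n\sum_i r_i)$.
\item Integrability therefore requires $n>2(m+1-i)$ for the leading direction, i.e.\ $n>2m$, hence $n\ge 2m+1$.
\end{enumerate}
For the other summands, the lowest-$K$-type vectors are harmonic polynomials times the Gaussian. Their coefficients are the vacuum coefficient multiplied by polynomially bounded or decaying factors in $\tanh r_i$ and $(\cosh r_i)^{-1}$, so the same estimate applies.

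As an alternative or cross-check, I would compute the Harish-Chandra parameter $\Lambda+\rho_c-\rho_n$ from the lowest $K$-type $\Lambda$, as in \cref{lem:formal degree n copies}. Then check that it is regular and that the lowest-weight module is unitarizable with $\Lambda$ sufficiently far from the walls, which holds exactly when $n\ge 2m+1$.

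\emph{Main obstacle.} I expect the hardest step to be making the discrete-series claim uniform over all $\lambda\in\Sigma_{m,n}$, not just the $\O(n)$-trivial summand. For the nontrivial summands, the lowest $K$-type is not one-dimensional, so bounding the matrix coefficients requires control of the harmonic factors along $a_r$. Here I would try the Harish-Chandra/Blattner criterion first, since it reduces the question to a linear inequality on $\Lambda$. Establishing irreducibility via the first fundamental theorem is classical, and I would cite it rather than redo it.
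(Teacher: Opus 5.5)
Your outline is correct, but it can only be compared with a citation: the paper gives no argument of its own for this statement and imports it from Kashiwara--Vergne. For the multiplicity-free half you are essentially re-deriving what is cited, so that half rests on the same classical invariant theory the citation packages. Concretely, you use:
\begin{itemize}
\item the first fundamental theorem for $\O(n)$, to identify the $\O(n)$-invariants of the Weyl algebra with the image of $\sp(2m)$;
\item the Fock-model harmonic decomposition;
\item irreducibility of each $\cV_\sigma$ from unitarity, together with the fact that its $\mathfrak p^-$-null vectors are exactly the single harmonic $K$-type.
\end{itemize}
The genuinely different part is the discrete-series claim. The paper, through the citation and its later use of Harish-Chandra and Blattner parameters in \cref{lem:formal degree n copies}, treats it representation-theoretically. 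You instead bound matrix coefficients directly in $KAK$ coordinates, which is the qualitative version of the Selberg-integral computation in \cref{sec:explicit haar}. Your route is more elementary and makes the threshold $n>2m$ transparent. The citation additionally supplies the explicit index set $\Sigma_{m,n}$ and the identification of the $\cV_\lambda$ as holomorphic discrete series with computable parameters, which the paper needs for formal degrees.

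The step you flag as the main obstacle is in fact easy. Passive unitaries preserve photon number. Each single-mode element $\bra{l}S(r)\ket{k}$ equals $(\cosh r)^{-1/2}$ times a polynomial in $\tanh r$ and $1/\cosh r$. Hence every matrix coefficient between finite-photon-number vectors of $L^2(\R^{m\times n})$ is bounded by $C\prod_i(\cosh r_i)^{-n/2}$. So all $K$-finite coefficients of the whole space are square integrable for $n\ge 2m+1$, and every summand is settled at once, with no control of the harmonic factors beyond boundedness. No uniformity in $\lambda$ is needed for this theorem either; that only enters the lower bound on formal degrees in \cref{lem:twirl symplectic}.

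If you still run the Harish-Chandra cross-check, note that for the paper's positive system one gets $\rho_c-\rho_n=-(1,2,\dots,m)$. The vector displayed in \cref{eq:half positive roots} lists these entries in reverse order. Take the lowest $K$-type to be $\det^{n/2}$ times a $\U(m)$-type of highest weight $\lambda_1\ge\dots\ge\lambda_m\ge 0$. Then the parameter is $(n/2+\lambda_k-k)_k$, and the criterion becomes $n/2+\lambda_m>m$. This holds for every summand exactly when $n\ge 2m+1$, since the $\O(n)$-trivial summand has $\lambda=0$.
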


Note that the representations $\cW_\lambda$ are irreducible representations of the compact group $\O(n)$ and thereby finite dimensional.
Crucially, for $n \geq 2m+1$, all representations that appear in the KV decomposition in \cref{thm:KV} are in the discrete series, and hence the matrix coefficients of the $n$-fold tensor product action of Gaussian unitaries are square integrable.
In this case, we can define a twirling map on trace-class operators.
We write $B(\cH)$ and $\trcl(\cH)$ for bounded and trace class operators on a Hilbert space $\cH$ respectively.

\begin{lem}\label{lem:twirl symplectic}
    Let $n \geq 2m+1$. Then, there exists a map $\mathcal{T} : \trcl(\cH^{\ot n}) \to \bounded(\cH^{\ot n})$, defined by
    \begin{align*}
        \bra{\psi} \mathcal{T}[X] \ket{\phi} = \int_{\Sp(2m)} \bra{\psi} U_S^{\ot n} X U_S^{\dagger, \ot n} \ket{\phi} \dd S
    \end{align*}
    for $X \in \trcl(\cH^{\ot n})$ and for all $\ket{\psi}, \ket{\phi} \in \cH^{\ot n}$.
    It satisfies $\norm{\mathcal{T}[X]} \leq C \norm{X}_1$ where $C = (\min_{\lambda \in \Sigma_{m,n}} d_\lambda)^{-1}$. With respect to the KV decomposition it can be written as
    \begin{align}\label{eq:expression Phi}
        \mathcal{T}[X] \cong \bigoplus_\lambda d_{\lambda}^{-1} \id_{\cV_\lambda} \ot X_{\cW_\lambda}
    \end{align}
    where $X_{\cW_\lambda} = \tr_{\cV_\lambda}[\Pi_\lambda X \Pi_\lambda]$ and $\Pi_\lambda$ is the orthogonal projection onto the isotypical component $\cV_\lambda \ot \cW_\lambda$.
\end{lem}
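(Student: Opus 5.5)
We will prove \cref{lem:twirl symplectic} by reducing everything to Schur orthogonality (\cref{lem:schur ortho}) applied blockwise in the Kashiwara--Vergne decomposition of \cref{thm:KV}.

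\emph{Rank-one inputs.} Take $X=\ketbra{u}{w}$ with $u,w\in\cH^{\ot n}$. The integrand is
\[
\bra{\psi}U_S^{\ot n}\ket{u}\,\overline{\bra{\phi}U_S^{\ot n}\ket{w}}.
\]
Under \cref{thm:KV} we decompose $\ket u=\sum_\lambda \ket{u_\lambda}$ with $u_\lambda\in\cV_\lambda\ot\cW_\lambda$, and similarly for $w,\psi,\phi$. Since $U_S^{\ot n}\cong\bigoplus_\lambda \phi_\lambda(S)\ot \id_{\cW_\lambda}$, each matrix coefficient is a finite sum of products of a coefficient of $\phi_\lambda$ with an inner product in $\cW_\lambda$. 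Each $\cW_\lambda$ is finite-dimensional, so I expand in an orthonormal basis $\{e_a\}$ of $\cW_\lambda$.

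\emph{Applying Schur orthogonality.} For $n\ge 2m+1$ all $\cV_\lambda$ are discrete series, so the coefficients are in $L^2(\Sp(2m))$ and the product is integrable, by Cauchy--Schwarz. Cross terms with $\lambda\neq\mu$ vanish: the $\cV_\lambda$ are pairwise inequivalent because the decomposition is multiplicity-free and Howe-type dualities give a bijection of labels. The $\lambda=\mu$ terms give $d_\lambda^{-1}\braket{\psi_\lambda^{(a)}|\phi_\lambda^{(b)}}\braket{w_\lambda^{(b)}|u_\lambda^{(a)}}$. Summing over $a,b$ then yields
\[
\bra\psi\Bigl(\bigoplus_\lambda d_\lambda^{-1}\id_{\cV_\lambda}\ot \tr_{\cV_\lambda}[\Pi_\lambda\ketbra uw\Pi_\lambda]\Bigr)\ket\phi,
\]
which is \cref{eq:expression Phi} for rank-one $X$.

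\emph{Infinite sums over $\lambda$.} Here $\Sigma_{m,n}$ is infinite, so I will interchange sum and integral by dominated convergence. The bound I aim for is
\[
\sum_\lambda \int |\text{coefficient}_\lambda|\,\dd S \le \sum_\lambda d_\lambda^{-1/2}\cdots,
\]
obtained via Cauchy--Schwarz from the Schur norms $\int|\bra{v}\phi_\lambda\ket{u}|^2=d_\lambda^{-1}\|v\|^2\|u\|^2$. Combining this with $d_\lambda\ge\min_\lambda d_\lambda>0$ (\cref{prp:HC}), I expect it to give $\sum_\lambda d_\lambda^{-1}\|\Pi_\lambda u\|\|\Pi_\lambda w\|\|\Pi_\lambda\psi\|\|\Pi_\lambda\phi\|$. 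That sum is finite, bounded by $C\|u\|\|w\|\|\psi\|\|\phi\|$, using $\sum_\lambda a_\lambda b_\lambda\le \|a\|_2\|b\|_2$ together with the sup-bound on the remaining factors.

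\emph{Norm bound and extension.} The right-hand side of \cref{eq:expression Phi} is block-diagonal, so its operator norm is
\[
\sup_\lambda d_\lambda^{-1}\|X_{\cW_\lambda}\|\le C\sum_\lambda\|X_{\cW_\lambda}\|_1\le C\|X\|_1,
\]
using that pinching followed by partial trace is trace-norm contractive. For general trace-class $X=\sum_i s_i\ketbra{u_i}{w_i}$ (singular value decomposition), linearity gives the formula on finite sums. The bound $|\text{integrand}|$ summed over $i$ is dominated as above with total weight $\sum_i s_i=\|X\|_1$, so sum and integral commute and the formula extends.

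\emph{Main obstacle.} The technical crux is justifying absolute integrability and the interchange of the infinite sum over $\lambda$ with the integral over the noncompact group. The hoped-for estimate is the uniform lower bound on formal degrees combined with Cauchy--Schwarz in $L^2(\Sp(2m))$. A smaller check is the inequivalence of distinct $\cV_\lambda$, which I would take from \cite{kashiwara1978segal}.
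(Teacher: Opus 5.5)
Your overall structure matches the paper's proof: blockwise Schur orthogonality in the KV decomposition, the uniform bound $\min_\lambda d_\lambda>0$ from \cref{prp:HC}, Cauchy--Schwarz in $L^2(\Sp(2m))$, and extension to trace class via a singular value decomposition. However, the estimate you propose for the step you yourself call the crux does not dominate the integrand. For $X=\ketbra{u}{w}$ the integrand is a double series
\[
\bra{\psi}U_S^{\ot n}\ket{u}\,\overline{\bra{\phi}U_S^{\ot n}\ket{w}}=\sum_{\lambda,\mu}a_\lambda(S)\,\overline{b_\mu(S)},\qquad a_\lambda(S)=\bra{\psi_\lambda}U_S^{\ot n}\ket{u_\lambda},\quad b_\mu(S)=\bra{\phi_\mu}U_S^{\ot n}\ket{w_\mu}.
\]
The off-diagonal terms $\lambda\neq\mu$ vanish only \emph{after} integration. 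Their absolute integrals $\int\abs{a_\lambda}\abs{b_\mu}\,\dd S$ are in general nonzero, so they must be controlled before you can integrate termwise and then invoke orthogonality of inequivalent blocks. The quantity $\sum_\lambda d_\lambda^{-1}\norm{\Pi_\lambda u}\norm{\Pi_\lambda w}\norm{\Pi_\lambda\psi}\norm{\Pi_\lambda\phi}$ bounds only the diagonal terms. It is essentially the size of what survives once Schur orthogonality has already been applied, so it cannot justify the interchange that Schur orthogonality needs.

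The repair uses only your own ingredients. Apply Schur orthogonality within one block, together with $\tr[PQ]\le\tr P\,\tr Q$ for the Gram matrices of the $\cW_\lambda$-components. This gives $\norm{a_\lambda}_{L^2}\le d_\lambda^{-1/2}\norm{\psi_\lambda}\norm{u_\lambda}$, and therefore
\[
\sum_{\lambda,\mu}\int\abs{a_\lambda}\abs{b_\mu}\,\dd S\le\Bigl(\sum_\lambda\norm{a_\lambda}_{L^2}\Bigr)\Bigl(\sum_\mu\norm{b_\mu}_{L^2}\Bigr)\le C\,\norm{\psi}\norm{u}\norm{\phi}\norm{w},
\]
using $\sum_\lambda\norm{\psi_\lambda}\norm{u_\lambda}\le\norm{\psi}\norm{u}$. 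With this bound, Fubini--Tonelli justifies termwise integration. The rest of your argument then goes through: the vanishing cross terms, the rank-one formula, the norm bound read off from the block-diagonal form, and the SVD extension with total weight $\norm{X}_1$.

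The paper avoids the double series differently. It first proves $\int\abs{\bra{\psi}U_S^{\ot n}\ket{\xi}}^2\,\dd S\le C\norm{\psi}^2\norm{\xi}^2$ for vectors supported on finitely many $\lambda$, where the cross terms drop out inside $\int\abs{\cdot}^2$. It passes to general vectors by Fatou's lemma and bounds each rank-one SVD term by Cauchy--Schwarz. Finally, it checks the block formula on product vectors and extends by linearity and continuity.
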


\begin{proof}
    We need to show that the integral defining $\bra{v} \mathcal{T}[X] \ket{w}$ is well-defined, and we bound it.
    We first show that the function $f_{\psi,\xi}(S) = \abs{\bra{\psi} U_S^{\ot n} \ket{\xi}}^2$ is integrable with respect to the Haar measure for arbitrary $\ket{\psi}$ and $\ket{\xi}$.
    First consider vectors $\ket{\psi}$, $\ket{\xi}$ such that $\Pi_\lambda \ket{\psi} = 0 = \Pi_\lambda \ket{\xi}$ for all $\lambda \notin L$, for a finite set $L$.
    Choose orthonormal bases $\{\ket{i,\lambda}\}_{i=1}^{\dim \cW_\lambda}$ for the $\cW_\lambda$.
    For $\lambda \in L$, write $\Pi_\lambda \ket{\psi} = \ket{\psi_\lambda} = \sum_i \ket{\psi_{i,\lambda}} \ket{i,\lambda}$ for vectors $\ket{\psi_{i,\lambda}} \in \cV_\lambda$.
    The unitaries $U_S^{\ot n}$ act block diagonally,
    \begin{align*}
        U_S^{\ot n} \cong \bigoplus_\lambda S_\lambda \ot I_{\cW_\lambda},
    \end{align*}
    from which it follows that
    \begin{align*}
        \bra{\psi} U_S^{\ot n} \ket{\xi} = \sum_{\lambda \in L} \sum_{i=1}^{\dim \cW_\lambda} \bra{\psi_{i,\lambda}} S_\lambda \ket{\xi_{i,\lambda}}.
    \end{align*}
    We can then expand
    \begin{align*}
        \int_{\Sp(2m)} \abs{\bra{\psi} U_S^{\ot n} \ket{\xi}}^2 \dd S 
        &= \sum_{\mu,\lambda \in L} \sum_{i,j} \int_{\Sp(2m)} \bra{\psi_{i,\lambda}} \bra{i,\lambda} U_S^{\ot n} \ket{\xi_{i,\lambda}} \ket{i,\lambda} \overline{\bra{\psi_{j,\mu}} \bra{j,\mu} U_S^{\ot n} \ket{\xi_{j,\mu}} \ket{j,\mu}} \dd S \\
        &= \sum_{\mu,\lambda \in L} \sum_{i,j} \int_{\Sp(2m)} \bra{\psi_{i,\lambda}} S_\lambda \ket{\xi_{i,\lambda}}  \overline{\bra{\psi_{j,\mu}}  S_\mu \ket{\xi_{j,\mu}}}  \dd S \\
        &= \sum_{\mu,\lambda \in L} \sum_{i,j} \delta_{\mu,\lambda} d_{\lambda}^{-1} \braket{\psi_{i,\lambda} | \psi_{j,\lambda}} {\braket{\xi_{j,\mu} | \xi_{i,\mu}}} \\
        &\leq C_L \sum_{\lambda \in L} \sum_{i,j}  \braket{\psi_{i,\lambda} | \psi_{j,\lambda}} \braket{\xi_{j,\lambda} | \xi_{i,\lambda}}
    \end{align*}
    where we noted that the sums over $\mu,\lambda$ and over $i,j$ are all finite, so that we can exchange the integral and the sum, we used \cref{lem:schur ortho}\footnote{Strictly speaking, the integral should be on $\mathrm{Mp}(2m)$ rather than $\mathrm{Sp}(2m)$ to make the following representation theoretic arguments go through. In fact there is no difference apart from a multiplicative constant, since $\mathrm{Mp}(2m)$ is a double cover of $\mathrm{Sp}(2m)$, the Haar measure of $\mathrm{Mp}(2m)$ is just the pull-back of that of $\mathrm{Sp}(2m)$, and the function we are integrating depends only on the $\mathrm{Sp}(2m)$ projection since the phases cancel.}, and we let $C_L = (\min_{\lambda \in L} d_\lambda)^{-1}$.
    If we define Gram matrices $P$ and $Q$ by $P_{ij} = \braket{\psi_{i,\lambda} | \psi_{j,\lambda}}$, $Q_{ij} = \braket{\xi_{i,\lambda} | \xi_{j,\lambda}}$, then $P$ and $Q$ are positive and we can bound
    \begin{align*}
        \sum_{i,j}  \braket{\psi_{i,\lambda} | \psi_{j,\lambda}} \braket{\xi_{j,\lambda} | \xi_{i,\lambda}} = \tr[PQ] \leq \tr[P]\tr[Q] \leq \norm{\psi_\lambda}^2 \norm{\xi_\lambda}^2.
    \end{align*}
    This yields
        \begin{align}
        \int_{\Sp(2m)} \abs{\bra{\psi} U_S^{\ot n} \ket{\xi}}^2 \dd S
        &\leq C_L \sum_{\lambda \in L} \norm{\psi_\lambda}^2 \norm{\xi_{\lambda}}^2 \leq C_L \norm{\psi}^2 \norm{\xi}^2
    \end{align}
    Next, $C_L$ is uniformly bounded by $C = (\min_{\lambda \in \Sigma_{m,n}} d_{\lambda})^{-1} > 0$ where the minimum is over all weights in the KV decomposition. Here, $\Sigma_{m,n}$ is an infinite set, but for representations in the discrete series, the weights satisfy integrality conditions (see Proposition 4.14, Theorem 9.20 in \cite{knapp2001representation}) that guarantee a uniform lower bound, see \cref{prp:HC}.
    We may now consider arbitrary $\ket{\psi}$ and $\ket{\xi}$ (which do not necessarily have finite support on the $\lambda$). They can be approximated by vectors with finite support, and by Fatou's lemma and the uniform bound we derived, $g_{\psi,\xi}(S) = \abs{\bra{\psi} U_S^{\ot n} \ket{\xi}}^2$ is integrable, and
    \begin{align}\label{eq:norm f}
        \norm{f_{\psi,\xi}}_{L^1(\Sp(2m))} \leq \sqrt{C} \norm{\psi} \norm{\xi}.
    \end{align}
    To see this explicitly, consider an enumeration $\lambda_1,\lambda_2,\ldots$ of the elements of $\Sigma_{m,n}$; setting $P_N \coloneqq \sum_{k=1}^N \Pi_{\lambda_k}$, we can write
    \begin{align}
    \int_{\Sp(2m)} \abs{\bra{\psi} U_S^{\ot n} \ket{\xi}}^2 \dd S &= \int_{\Sp(2m)} \lim_{N\to\infty} \abs{\bra{\psi} P_N U_S^{\ot n} P_N \ket{\xi}}^2 \dd S \\
    &\leq \liminf_{N\to\infty} \int_{\Sp(2m)} \abs{\bra{\psi} P_N U_S^{\ot n} P_N \ket{\xi}}^2 \dd S \\
    &\leq C \|\psi\|^2\|\xi\|^2 ,
    \end{align}
    where the first inequality holds by Fatou's lemma, and the second by the above calculation, due to the fact that $P_N\ket{\psi}$ and $P_N\ket{\xi}$ are supported on finitely many subspaces $\cV_\lambda \otimes \cW_\lambda$.

    Now, consider a trace class operator $X \in \cT(\cH^{\ot n})$, and take a singular value decomposition
    \begin{align*}
        X = \sum_i \alpha_i \ket{e_i}\bra{f_i}, \qquad \sum_i \abs{\alpha_i} = \norm{X}_1
    \end{align*}
    with orthonormal bases $\{\ket{e_i}\}$ and $\{\ket{f_i}\}$.
    Then, our goal is to show that $g_{\psi,\phi,X}(S) = \bra{\psi} U_S^{\ot n} X U_S^{\dagger, \ot n} \ket{\phi}$ is integrable.
    \begin{align*}
        \int_{\Sp(2m)} \abs{\bra{\psi} U_S^{\ot n} X U_S^{\dagger, \ot n} \ket{\phi}} \dd S &\leq \sum_i \abs{\alpha_i} \int_{\Sp(2m)} \abs{\bra{\psi} U_S^{\ot n} \ket{e_i} \bra{f_i} U_S^{\dagger, \ot n} \ket{\phi}} \dd S \\
        &\leq \sum_i \abs{\alpha_i} \norm{f_{\psi,e_i}}_{L^2(\Sp(2m))} \norm{f_{\xi,f_i}}_{L^2(\Sp(2m))}\\
        &\leq C \norm{X}_1
    \end{align*}
    using the Cauchy-Schwarz inequality and \cref{eq:norm f}.
    This shows that $\mathcal{T}[X]$ is well-defined.
    Next, consider vectors
    \begin{align*}
        \ket{e_i} = \ket{v_i} \ket{w_i} \qquad \text{ for } \ket{v_i} \in \cV_{\lambda_i}, \ket{w_i} \in \cW_{\lambda_i}
    \end{align*}
    for $i = 1,2,3,4$ in the $\lambda_i$-isotypical component of the KV decomposition.
    Then for the rank-one operator $X = \ket{e_1}\bra{e_2}$ we have
    \begin{align*}
        \bra{e_3} \mathcal{T}[X] \ket{e_4} = \begin{cases}
            d_{\lambda}^{-1} \braket{v_2 | v_1} \braket{v_3 | v_4} \braket{w_3 | w_1} \braket{w_2 | w_4} & \text{ if } \lambda_1 = \lambda_2 = \lambda_3 = \lambda_4 = \lambda\\
            0 & \text{ otherwise.}
        \end{cases}
    \end{align*}
    This shows that \cref{eq:expression Phi} is valid for $X$ of this form, and hence for arbitrary $X$ by linearity.
\end{proof}

\subsection{Tomography of pure Gaussian states}
We now show that there exists a tomography protocol for pure bosonic Gaussian states proceeding exactly analogously to the fermionic case. The only additional subtlety is the appearance of integrals over non-compact sets, but this can be dealt with by the square-integrability of matrix coefficients for sufficiently large $n$.

From the Haar measure on the symplectic group, we also get a unique (up to constant scaling) measure on Gaussian mean-zero states.
It now follows from \cref{lem:schur ortho} that for $n \geq 2m+1$,
\begin{align}\label{eq:projection random gaussian}
    d_{n,m}\int_{\cG_m^0} \proj{\phi}^{\ot n} \dd \phi = d_{n,m}\int_{\Sp(2m)} U_S^{\ot n} \proj{0}^{\ot n} U_S^{\dagger, \ot n} \dd S = \Pi_{n,m}.
\end{align}
This implies that $\mu_{n,m}$ defined by
\begin{align}\label{eq:haar measure projection-bos}
    \dd \mu_{n,m}(\phi) = d_{n,m} \,\dd \phi \, \proj{\phi}^{\ot n},
\end{align}
where $d_{n,m}$ is given in \cref{lem:formal degree n copies}, is a POVM on~$\cV_{n,m}$ with outcomes in~$\cG_m^0$.

We now proceed exactly as in the fermionic case, with the only difference that $\mu_{n,m}$ and $d_{n,m}$ are only well-defined for $n \geq 2m+1$.
For a fixed $\psi$, we now consider the fidelity $\abs{\braket{\psi | \hat{\psi}}}$ as a random variable, and compute moments of this random variable.

\begin{lem}\label{lem:moments pure boson measurement}
Let $\psi \in \cG_m^0$ be a pure bosonic Gaussian state with mean zero.
Let $\hat\psi$ be the outcome resulting from the POVM~$\mu_{n,m}$ on input $\psi^{\ot n}$.
Then for $n \geq 2m + 1$ and integer $k \geq 2m + 1$, we can evaluate
\begin{align*}
    \Ex \abs{\braket{\psi | \hat{\psi}}}^{2(k - n)} = \frac{d_{n,m}}{d_{k,m}} = \prod_{1 \leq i \leq j \leq m} \frac{n - (i+j)}{k - (i+j)}.
\end{align*}
\end{lem}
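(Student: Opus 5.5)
The plan is to copy the proof of \cref{lem:moments pure fermion measurement} line by line. The Haar probability measure is replaced by the (infinite-volume) invariant measure $\dd\phi$ on $\cG_m^0$, normalized as in \cref{eq:projection random gaussian}. The dimensions are replaced by the formal degrees of \cref{lem:formal degree n copies}.

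By covariance, we may assume $\psi = U_T\ket{0}$ for some $T \in \Sp(2m)$. Writing out the outcome distribution with the POVM \cref{eq:haar measure projection-bos} gives
\begin{align*}
    \Ex \abs{\braket{\psi|\hat\psi}}^{2(k-n)}
    = d_{n,m}\int_{\cG_m^0} \abs{\braket{\psi|\phi}}^{2n}\abs{\braket{\psi|\phi}}^{2(k-n)}\dd\phi
    = d_{n,m}\int_{\cG_m^0} \abs{\braket{\psi|\phi}}^{2k}\dd\phi .
\end{align*}
This holds for arbitrary integer $k$, including $k<n$, since the integrand $\abs{\braket{\psi|\phi}}^{2n}\abs{\braket{\psi|\phi}}^{2(k-n)}$ equals $\abs{\braket{\psi|\phi}}^{2k}$ pointwise. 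The only caveat is the null set where the overlap vanishes. Pure Gaussian states never have zero overlap, because the overlap of two mean-zero Gaussian states is $\det(\cdot)^{-1/4}$-type and strictly positive. So no issue arises.

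The remaining integral equals $\bra{\psi}^{\ot k}\bigl(\int \proj{\phi}^{\ot k}\dd\phi\bigr)\ket{\psi}^{\ot k}$. Since $k \ge 2m+1$, \cref{eq:projection random gaussian} applied with $n$ replaced by $k$ gives $\int\proj{\phi}^{\ot k}\dd\phi = d_{k,m}^{-1}\Pi_{k,m}$. This requires the weak (matrix-element) meaning of the integral, which is exactly how the identity is established via \cref{lem:schur ortho}. Since $\ket{\psi}^{\ot k}\in\cV_{k,m}$, its expectation under $\Pi_{k,m}$ is $1$. This yields $d_{n,m}/d_{k,m}$.

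It is important that the same Haar normalization is used for both $n$ and $k$. This holds because $\dd\phi$ is the pushforward of one fixed Haar measure $\dd S$, with $c=1$ in \cref{eq:HC-formaldegree}, and $d_{n,m}$ is defined relative to it. The explicit product then follows from \cref{eq:vacuum_formaldegree}: the prefactor $\prod_{i<j}(j-i)$ cancels in the ratio, leaving $\prod_{i\le j}\frac{n-(i+j)}{k-(i+j)}$.

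The main obstacle is justifying the interchange of the integral with the matrix element against $\ket{\psi}^{\ot k}$ in the non-compact setting. The identity $\Pi_{k,m} = d_{k,m}\int U_S^{\ot k}\proj{0}^{\ot k}U_S^{\dagger\ot k}\dd S$ should be read weakly: evaluated on $\ket{\psi}^{\ot k}$, it is precisely Schur orthogonality (\cref{lem:schur ortho}) with $v_1=v_2=\ket\psi^{\ot k}$ and $u_1=u_2=\ket0^{\ot k}$ in the discrete-series representation $\cV_{k,m}$. This gives directly
\begin{align*}
    \int\abs{\bra{\psi}^{\ot k}U_S^{\ot k}\ket{0}^{\ot k}}^2\dd S = d_{k,m}^{-1}.
\end{align*}
Since the integrand is nonnegative, Tonelli's theorem handles all exchanges. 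The identical computation with $k$ replaced by $n$ confirms that $\mu_{n,m}$ is normalized on $\psi^{\ot n}$. So I would phrase the whole proof directly through \cref{lem:schur ortho}, avoiding operator-valued integrals entirely.
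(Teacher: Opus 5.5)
Your proposal is correct and follows the paper's proof essentially verbatim. You write the outcome density as $d_{n,m}\abs{\braket{\psi|\phi}}^{2n}\dd\phi$, merge the exponents into $\abs{\braket{\psi|\phi}}^{2k}$, evaluate the resulting integral through the level-$k$ identity \cref{eq:projection random gaussian} to get $d_{n,m}/d_{k,m}$, and read the product from \cref{lem:formal degree n copies}. Your additional remarks are sound and make explicit points the paper leaves implicit: Gaussian overlaps are strictly positive, so the negative moments with $k<n$ are well defined, and the operator integral should be read weakly via \cref{lem:schur ortho} with one fixed Haar normalization for both $n$ and $k$.
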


\begin{proof}
The proof is exactly the same as that of \cref{lem:moments pure fermion measurement}.
    We compute
    \begin{align*}
        \Ex \abs{\braket{\psi | \hat{\psi}}}^{2(k - n)}
        &= \int_{\phi \in \cG_m^0} \tr[ \dd \mu_{n,m}(\phi) \proj{\psi}^{\ot n} ] \ \abs{\braket{\psi|\phi}}^{2(k - n)} \\
        &= d_{n,m} \int_{\phi \in \cG_m^0} \dd\phi \, \tr[ \proj\phi^{\ot n} \proj{\psi}^{\ot n} ] \ \abs{\braket{\psi|\phi}}^{2(k - n)} \\
        &= d_{n,m} \int_{\phi \in \cG_m^0} \dd\phi \, \abs{\braket{\psi|\phi}}^{2k} \\
        &= \frac {d_{n,m}} {d_{k,m}} \int_{\phi \in \cG_m^0} \tr[ \dd \mu_{k,m}(\phi) \proj{\psi}^{\ot k} ] \\
        &= \frac {d_{n,m}} {d_{k,m}} \bra\psi^{\ot k} \Pi_{k,m} \ket\psi^{\ot k}  = \frac {d_{n,m}} {d_{k,m}}.
    \end{align*}
    This ratio of dimensions can be given an explicit expression by \cref{lem:formal degree n copies} as
\begin{align*}
    \frac{d_{n,m}}{d_{k,m}} &= \prod_{1 \leq i \leq j \leq m} \frac{n - (i+j)}{k - (i+j)}. \qedhere
\end{align*}
\end{proof}
Let us first bound the expectation value of the squared fidelity $F\coloneqq \abs{\braket{\psi|\hat\psi}}$. Assume \(n\geq 4m\).  Taking \(k=n+1\) in
\cref{lem:moments pure boson measurement}, we obtain
\begin{align*}
    \Ex F^2
    &=
    \frac{d_{n,m}}{d_{n+1,m}}                                                     \\
    &=
    \prod_{1 \leq i \leq j \leq m}
    \frac{n-(i+j)}{n+1-(i+j)}                                                      \\
    &=
    \prod_{1 \leq i \leq j \leq m}
    \left(
        1-\frac{1}{n+1-(i+j)}
    \right).
\end{align*}
Since \(i+j\leq 2m\) and \(n\geq 4m\), we have
\(n+1-(i+j)\geq n/2\).  Hence
\[
    \Ex F^2
    \geq
    \left(1-\frac{2}{n}\right)^{m(m+1)/2}.
\]
Therefore,
\[
    \Ex(1-F^2)
    \leq
    1-
    \left(1-\frac{2}{n}\right)^{m(m+1)/2}
    \leq
    \frac{m(m+1)}{n}.
\]
Applying Markov's inequality to the nonnegative random variable \(1-F\), we get
\[
    \Pr\mleft(F^2\leq 1-\eps^2\mright)
    =
    \Pr\mleft(1-F^2\geq \eps^2\mright)
    \leq
    \frac{m(m+1)}{n\eps^2}.
\]
Thus \(n=\bigO(m^2/(\eps^2\delta))\) copies already suffice to obtain an
\(\eps\)-accurate estimate with probability at least \(1-\delta\).  As in the
fermionic case, the optimal dependence on \(\delta\) follows by using negative
moments of the fidelity.
\begin{cor}[Tomography of pure bosonic Gaussian states]\label{cor:pure boson tomography}
    Let \(0<\eps,\delta<1\).  Given
    \begin{align*}
        n
        =
        \bigO\mleft(
            \frac{m^2+\log(\delta^{-1})}{\eps^2}
        \mright)
    \end{align*}
    copies of a pure bosonic Gaussian state~\(\psi\in\cG_m^0\), the measurement
    \(\mu_{n,m}\) outputs an estimate~\(\hat\psi\) such that
    \[
        \abs{\braket{\psi|\hat\psi}}^2
        \geq
        1-\eps^2
    \]
    with probability at least \(1-\delta\).
\end{cor}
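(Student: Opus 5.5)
We mirror the proof of \cref{cor:pure fermion tomography}: we apply Markov's inequality to a negative moment of the fidelity $F\coloneqq\abs{\braket{\psi|\hat\psi}}$, using the exact moment formula of \cref{lem:moments pure boson measurement}. The one new constraint is that this lemma is only valid when both $n$ and the auxiliary index $k$ are at least $2m+1$.

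Fix $n \geq 4m+2$ and set $r\coloneqq\ceil*{n/2}$, so that $r\geq 2m+1$ and $n-r=\floor*{n/2}$. Applying \cref{lem:moments pure boson measurement} with $k=r$ gives
\begin{align*}
    \Ex F^{-2(n-r)} = \frac{d_{n,m}}{d_{r,m}} = \prod_{1\leq i\leq j\leq m}\left(1+\frac{n-r}{r-(i+j)}\right).
\end{align*}

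The main step is to bound each factor by a constant. Since $i+j\leq 2m$, each denominator satisfies $r-(i+j)\geq r-2m$. We strengthen the assumption to $n\geq 8m$, which gives $r\geq 4m$ and hence $r-2m\geq r/2$. Together with $n-r\leq r$, each factor is then at most $1+2=3$. There are $m(m+1)/2$ factors, so
\begin{align*}
    \frac{d_{n,m}}{d_{r,m}}\leq 3^{m(m+1)/2}.
\end{align*}
This is the only place where the bosonic case differs from the fermionic one: there, the shift $2m$ in the denominators helped, whereas here it hurts and must be absorbed by taking $n=\Omega(m)$.

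Markov's inequality applied to $F^{-2(n-r)}$ then yields
\begin{align*}
    \Pr\mleft(F^2\leq 1-\eps^2\mright)\leq (1-\eps^2)^{\floor*{n/2}}\,3^{m(m+1)/2}\leq \exp\mleft(-\eps^2\floor*{n/2}+\tfrac{m(m+1)}{2}\log 3\mright).
\end{align*}
Choosing $n=C\,(m^2+\log(\delta^{-1}))/\eps^2$ with a sufficiently large constant $C$ makes the right-hand side at most $\delta$. Because $\eps<1$, this choice also satisfies $n\geq 8m$ automatically, since $m^2/\eps^2\geq m$ for $m\geq1$ and $C$ can be taken at least $8$. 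This establishes the claim.
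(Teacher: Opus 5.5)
Your proof is correct and follows essentially the same route as the paper: Markov's inequality applied to the negative moment $F^{-2(n-r)}$, computed via \cref{lem:moments pure boson measurement}, with the formal-degree ratio bounded by $c^{m(m+1)/2}$. The only difference is bookkeeping: the paper takes $r=n-\floor*{n/4}$ under $n\geq 4m$ and gets factors at most $2$, whereas you take $r=\ceil*{n/2}$ under $n\geq 8m$ and get factors at most $3$, which affects only the constants.
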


\begin{proof}
    Set \(F^2\coloneqq \abs{\braket{\psi|\hat\psi}}^2\).  We assume
    \(n\geq 4m\), which is absorbed in the claimed sample complexity.
    Define
    \[
        s\coloneqq \floor*{\frac n4},
        \qquad
        r\coloneqq n-s .
    \]
    Then \(r\geq 2m+1\), so \cref{lem:moments pure boson measurement} applies
    with \(k=r\), and gives
    \[
        \Ex F^{-2s}
        =
        \Ex F^{2(r-n)}
        =
        \frac{d_{n,m}}{d_{r,m}}.
    \]
    We now bound this ratio using the explicit formula for \(d_{n,m}\):
    \begin{align*}
        \frac{d_{n,m}}{d_{r,m}}
        &=
        \prod_{1\leq i\leq j\leq m}
        \frac{n-(i+j)}{r-(i+j)}                                      \\
        &=
        \prod_{1\leq i\leq j\leq m}
        \left(
            1+
            \frac{s}{r-(i+j)}
        \right).
    \end{align*}
    Since \(i+j\leq 2m\) and \(n\geq 4m\), we have
    \[
        r-(i+j)
        \geq
        r-2m
        \geq
        \frac{3n}{4}-2m
        \geq
        \frac n4
        \geq
        s .
    \]
    Hence every factor in the product is at most \(2\).  Since there are
    \(m(m+1)/2\) factors,
    \[
        \frac{d_{n,m}}{d_{r,m}}
        \leq
        2^{m(m+1)/2}.
    \]
    Markov's inequality applied to \(F^{-2s}\) now yields
    \begin{align*}
        \Pr\mleft(F^2\leq 1-\eps^2\mright)
        &=
        \Pr\mleft(
            F^{-2s}
            \geq
            (1-\eps^2)^{-s}
        \mright)                                                        \\
        &\leq
        (1-\eps^2)^s \Ex F^{-2s}                                           \\
        &=
        (1-\eps^2)^s
        \frac{d_{n,m}}{d_{r,m}}                                         \\
        &\leq
        (1-\eps^2)^{\floor*{n/4}}
        2^{m(m+1)/2}                                                     \\
        &\leq
        \exp\mleft(
            -\eps^2\floor*{\frac n4}
            +
            \frac{m(m+1)}2\log 2
        \mright).
    \end{align*}
    Thus it suffices to choose $n
        =
        \bigO\mleft(
            \frac{m^2+\log(\delta^{-1})}{\eps^2}
        \mright)$, to make the failure probability at most \(\delta\).  This is as claimed.
\end{proof}

This sample complexity is optimal, as stated in \cref{thm:tomo boson lower bound}.

\subsection{Tomography of mixed Gaussian states and random purification}

We consider the natural identification $\cH^{\ot 2} \cong L^2(\RR^{2m})$, so $\Sp(4m)$ acts on $\cH^{\ot 2}$.
We also consider $H = \Sp(2m)$ as a subgroup of $G = \Sp(4m)$, acting only on the last $m$ modes, so on $\cH^{\ot 2}$ these act as $I \ot U_T$ for $T \in \Sp(2m)$.
Then there exists a unique measure $\mu$ that is invariant under $G$ on $G/H$ such that we can write
\begin{align}\label{eq:integral quotient}
     \int_G f(S) \dd S = \int_{G/H} \left( \int_H f(ST) \dd T \right) \dd \mu(SH)\,.
\end{align}
for any function $f \in L^1(G)$.
See e.g.\ Theorem 1.5.2 in \cite{deitmar2009principles}, and note that both $G$ and $H$ are unimodular, and $H$ is a closed subgroup of $G$.
We also need that, if the transpose is an antiautomorphism (up to irrelevant phases) for a unitary representation of an unimodular group $G$, then
\begin{align}\label{eq:transposetwirl}
\int_G U_gXU_g^{\dagger}\,\dd g =\int_G U_g^{\intercal}X\overline{U_g}\,\dd g,
\end{align}
as one can see from a change of variables and using that the measure induced by this change on the group is left and right invariant, and therefore equal to the Haar measure.

Analogous to the setting of \cref{thm:main technical}, we consider the KV decomposition for $\cH^{\ot n}$ as in \cref{thm:KV}, choose a product basis in that decomposition, and use the complex conjugate to decompose $\tilde \cH^{\ot n}$.
If $\Upsilon$ is the unitary implementing the decomposition, we (formally) have
\begin{align*}
    (\Upsilon \ot \bar\Upsilon) \ket{\Gamma}^{\ot n} = \bigoplus_\lambda \ket{\Gamma}_{\cV_\lambda \tilde \cV_\lambda} \ket{\Gamma}_{\cW_\lambda \tilde \cW_\lambda}.
\end{align*}
We will need the following lemma:
\begin{lem}\label{lem:gaussian-sym}
Let $\rho$ be a Gaussian state with zero mean, and let $\ket{\psi^\text{std}_\rho}_{AB}$ be its standard purification. Let $\Upsilon$, $\bar{\Upsilon}$ (as in \cref{eq:transpose}) be the unitary operators implementing KV decompositions for the representation of $\mathcal{H}^{\otimes n}$ and its transpose on $\tilde \cH^{\ot n}$.
Then we have
\begin{align}\label{eq:diagonal}
    \Upsilon\rho^{\ot n}\Upsilon^{\dagger} = \bigoplus_\lambda \rho_{\cV_\lambda} \ot \frac{I_{\cW_{\lambda}}}{\dim \cW_\lambda}\,,
\end{align}
and
\begin{align*}
(\Upsilon\otimes\bar{\Upsilon})\ket{\psi^{\text{std}}_{\rho}}^{\ot n} = \bigoplus_\lambda \ket{\psi_\lambda}_{\cV_\lambda \tilde \cV_\lambda} \ot \frac{\ket{\Gamma}_{\cW_{\lambda}\tilde\cW_\lambda}}{\sqrt{\dim \cW_\lambda}}\,,
\end{align*}
\end{lem}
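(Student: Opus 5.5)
The plan is to mirror the finite-dimensional argument in the proof of \cref{thm:main technical}, with the algebraic inclusion $\rho\in\cA'$ supplied by the bosonic analogue of \cref{lem:gaussian state in algebra} and finite dimensionality replaced by the KV decomposition of \cref{thm:KV}. The first step is to show that $\rho^{\ot n}$ commutes with the rotation action $R_O$ for all $O\in\O(n)$. For a zero-mean Gaussian state, write $\rho$ as a Gibbs state $e^{-H}/Z$ with $H$ a quadratic form $\frac12 r\tran h r$ in the quadratures, with no linear term. Then $\rho^{\ot n}=e^{-\sum_s H_s}/Z^n$, and $\sum_{s=1}^n H_s=\frac12\sum_{a,b}h_{ab}\sum_s r_{a,s}r_{b,s}$. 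Under $R_O$ the operators transform as $r_{a,s}\mapsto\sum_t O_{ts} r_{a,t}$, so $\sum_s r_{a,s}r_{b,s}$ is invariant by orthogonality of $O$. Hence $R_O\rho^{\ot n}R_O^\dagger=\rho^{\ot n}$, and limits of Gibbs states are handled by trace-norm continuity. I would also record that each $\rho^{\ot n}$ commutes with the permutation-type operators, but the $\O(n)$ invariance is what is needed.

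Next, I invoke \cref{thm:KV}: the commutant of $\{R_O\}$ in $\bounded(\cH^{\ot n})$ is $\bigoplus_\lambda \bounded(\cV_\lambda)\ot I_{\cW_\lambda}$. This follows from multiplicity-freeness together with irreducibility and pairwise inequivalence of the $\cW_\lambda$, via a Schur lemma for the compact group $\O(n)$ with finite-dimensional $\cW_\lambda$. Applying it to $\rho^{\ot n}$ gives \eqref{eq:diagonal}, with $\rho_{\cV_\lambda}=\tr_{\cW_\lambda}[\Pi_\lambda\rho^{\ot n}\Pi_\lambda]$ trace class and the normalization $1/\dim\cW_\lambda$ absorbed.

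For the purification, I use $\ket{\psi^{\text{std}}_\rho}^{\ot n}=(\sqrt{\rho^{\ot n}}\ot I)\ket\Gamma^{\ot n}$, since $\sqrt{\rho}^{\ot n}=\sqrt{\rho^{\ot n}}$. By \eqref{eq:diagonal}, $\sqrt{\rho^{\ot n}}\cong\bigoplus_\lambda\sqrt{\rho_{\cV_\lambda}}\ot (\dim\cW_\lambda)^{-1/2}I_{\cW_\lambda}$. Combined with the formal decomposition $(\Upsilon\ot\bar\Upsilon)\ket\Gamma^{\ot n}=\bigoplus_\lambda\ket\Gamma_{\cV_\lambda\tilde\cV_\lambda}\ket\Gamma_{\cW_\lambda\tilde\cW_\lambda}$, this gives the claim with $\ket{\psi_\lambda}=(\sqrt{\rho_{\cV_\lambda}}\ot I)\ket\Gamma_{\cV_\lambda\tilde\cV_\lambda}$, a genuine vector because $\sqrt{\rho_{\cV_\lambda}}$ is Hilbert--Schmidt.

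The main obstacle is making the formal vector $\ket\Gamma$ rigorous in infinite dimensions. I would avoid it by working with the antiunitary/Hilbert--Schmidt isomorphism $\cH\ot\tilde\cH\cong \mathrm{HS}(\cH)$, $X\mapsto (X\ot I)\ket\Gamma$. This is basis-dependent but well defined for Hilbert--Schmidt $X$. Under this map, $\Upsilon\ot\bar\Upsilon$ corresponds to $X\mapsto\Upsilon X\Upsilon^\dagger$, so the identity reduces to the block decomposition of the Hilbert--Schmidt operator $\sqrt{\rho^{\ot n}}$, which is already established. The only remaining point is to check that the product bases chosen on $\cV_\lambda\ot\cW_\lambda$ and their conjugates are compatible, exactly as in \cref{eq:max ent decomposition}.
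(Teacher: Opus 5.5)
Your proposal is correct and follows essentially the paper's route: $\O(n)$-invariance of $\rho^{\ot n}$, then Schur's lemma for the block form, then $\sqrt{\rho^{\ot n}}$ applied to the decomposed $\ket{\Gamma}$. The differences are minor: you obtain invariance from the quadratic Hamiltonian rather than from the paper's covariance identity $(I_{2m}\ot O\tran)(V\ot I_n)(I_{2m}\ot O)=V\ot I_n$, and you use the commutant of $\O(n)$ with pairwise inequivalent $\cW_\lambda$ rather than Schur's lemma on the finite-dimensional eigenspaces of $\rho^{\ot n}$. Your Hilbert--Schmidt treatment of $\ket{\Gamma}$ usefully makes rigorous a step the paper leaves formal, though the map $X\mapsto(X\ot I)\ket{\Gamma}$ is a linear unitary, not antiunitary.
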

\begin{proof}
If $\rho \in \S(\cH)$ is a Gaussian state with zero mean, $\rho^{\ot n}$ commutes with the action of $\O(n)$.
To see this, note that $\sigma = U_O \rho^{\ot n} U_O^\dagger$ is again a Gaussian state with zero mean, so it suffices to see it has the same covariance matrix. Indeed, if $\rho$ has covariance matrix $V$, then $\rho^{\ot n}$ has covariance matrix $V \ot I_n$, and $\sigma$ has covariance matrix $(I_{2m}\otimes O^{\intercal})(V\otimes I_n)(I_{2m}\otimes O)=V\otimes I_n$.
By Schur's lemma applied to the (finite-dimensional)  eigenspaces in the support of $\rho^{\otimes n}$, this implies that
\begin{align}\label{eq:diagonalrho}
    \rho^{\ot n} \cong \bigoplus_\lambda \rho_{\lambda} \ot \frac{I_{\cW_\lambda}}{\dim \cW_\lambda}.
\end{align}
Analogously to \cref{eq:max ent decomposition}, this yields
\begin{align}
(\Upsilon\otimes\bar{\Upsilon})\ket{\psi^{\text{std}}_{\rho}}^{\ot n} &=(\Upsilon\otimes\bar{\Upsilon})(\sqrt{\rho^{\otimes n}}\otimes I)\ket{\Gamma}\\
&=\bigoplus_\lambda (\sqrt{\rho_{\cV_\lambda}} \otimes I_{\cV_{\lambda}})\ket{\Gamma}_{\cV_{\lambda}\tilde{\cV}_{\lambda}}\otimes\frac{\ket{\Gamma}_{\cW_{\lambda}\tilde{\cW}_{\lambda}}}{\sqrt{\dim \cW_\lambda}}\,,
\end{align}
proving the statement.
\end{proof}

We consider the following quantum channel. Choose an arbitrary fixed state $\ket{v_\lambda}$ on each $\tilde \cV_\lambda$ (a mixed state would also work).
\begin{align}\label{eq:prepurification bosons}
    \cQ_{n,m}^{\text{Bose}}[X] = \Upsilon^{\dagger}\otimes \Upsilon^{\intercal}\left(\bigoplus_\lambda X_\lambda \ot \proj{v_\lambda} \ot \frac{\proj{\Gamma}_{\cW_\lambda \tilde \cW_\lambda}}{\dim \cW_{\lambda}}\right)\Upsilon\otimes \bar{\Upsilon}, \qquad X_{\lambda} = \tr_{\cW_\lambda}[\Pi_\lambda \Upsilon X\Upsilon^{\dagger} \Pi_\lambda].
\end{align}
Note that this has the same structure as a random purification channel, except that on $\tilde \cV_\lambda$ we have the state $\ket{v_\lambda}$ instead of a maximally mixed state (which would not be well-defined, since the $\cV_\lambda$ are infinite-dimensional).
While this channel \emph{does not have the interpretation of a random purification channel}, it will suffice for the sake of tomography.

\begin{proof}[Proof of \cref{thm:tomo boson}]
We begin with the case in which the unknown Gaussian state has zero first moment. The tomography algorithm proceeds as follows:
\begin{enumerate}[noitemsep]
\item First apply the channel $\cQ_{n,m}^{\text{Bose}}$ of \cref{eq:prepurification bosons} to $\sigma^{\ot n}$
\item Then perform the pure-state tomography protocol of \cref{cor:pure boson tomography} on the output, to obtain an estimate $\hat{\psi}$ of a $2m$-mode pure bosonic Gaussian state.
\item Return $\hat\sigma$, which is obtained by taking the partial trace of $\hat\psi$ over the purifying system.
\end{enumerate}

We now analyze the result of the last two steps, applied to an arbitrary state $\rho$ on $n$ copies of $\cH^{\ot 2}$. This yields a probability distribution $\dd P(\hat\sigma \vert \rho)$.
We claim that this distribution only depends on $(I \ot \mathcal{T})[\rho]$, where we apply the map $\mathcal{T}$ from \cref{lem:twirl symplectic} to the second system.
That is,
\begin{align*}
    \dd P(\hat\sigma \vert \rho) = \dd P(\hat\sigma \vert \rho') \quad \text{ if }\quad  (I \ot \mathcal{T})[\rho] = (I \ot \mathcal{T})[\rho'].
\end{align*}
We now prove this claim.
Denote by $\hat \sigma(U_S)$ the reduced state of $U_S \ket{0}$.
We denote by $G = \Sp(4m)$ the symplectic group on $2m$ modes, acting on $\cH^{\ot 2}$, and by $H$ the subgroup that only acts on the last $m$ modes, so $H \cong \Sp(2m)$ and acts as $I \ot U_S$ on $\cH^{\ot 2}$.
Then for an integrable function $f$ we have
\begin{align*}
    &\int f(\hat\sigma) \dd P(\hat\sigma \vert \rho) \\
    &\quad = d_{n,2m} \int_G \tr[U_S^{\ot n} \proj{0} U_S^{\dagger,\ot n} \rho] f(\hat \sigma(U_S)) \dd S \\
    &\quad = d_{n,2m} \int_{G/H} \left( \int_H \tr[((I \ot U_T) U_S)^{\ot n} \proj{0} (U_S^\dagger (I \ot U_T^\dagger))^{\ot n} \rho] f(\hat \sigma((I \ot U_T)U_S)) \dd T \right) \dd \mu(SH) \\
    &\quad = d_{n,2m} \int_{G/H} \left( \int_H \tr[U_S^{\ot n} \proj{0} U_S^{\dagger,\ot n} (I \ot U_T^\dagger)^{\ot n} \rho  (I \ot U_T)^{\ot n}] f(\hat \sigma(U_S)) \dd T \right) \dd \mu(SH) \\
    &\quad = d_{n,2m} \int_{G/H} \tr[U_S^{\ot n} \proj{0} U_S^{\dagger,\ot n} (I \ot \mathcal{T})[\rho]] f(\hat \sigma(U_S)) \dd \mu(SH),
\end{align*}
where we use \cref{eq:integral quotient}, the fact that $\hat \sigma((I \ot U_T)U_S) = \hat \sigma(U_S)$, and the definition of $\mathcal{T}$. Note that the integral over $(I \ot U_T^\dagger)^{\ot n} \rho  (I \ot U_T)^{\ot n}$ is evaluated against a pure state, so the integral is well-defined by \cref{lem:twirl symplectic} and equals $(I \ot \mathcal{T})[\rho]$. This proves the claim.

Now, consider the state
\begin{align*}
    \sigma^{\ot n} \cong \bigoplus_\lambda \rho_{\cV_\lambda} \ot \frac{I_{\cW_{\lambda}}}{\dim \cW_\lambda},
\end{align*}
and consider the standard purification $\ket{\psi^{\text{std}}}$ of $\sigma$ such that
\begin{align*}
    \ket{\psi^{\text{std}}}^{\ot n} \cong \bigoplus_\lambda \ket{\psi}_{\cV_\lambda \tilde \cV_\lambda} \ot \frac{\ket{\Gamma}_{\cW_{\lambda}\tilde\cW_\lambda}}{\sqrt{\dim \cW_\lambda}},
\end{align*}
where the $\ket{\psi}_{\cV_\lambda \tilde \cV_\lambda}$ purify $\rho_{\cV_\lambda}$, via~\cref{lem:gaussian-sym}.
Since the symplectic unitaries are closed under conjugate, and thus transpose (~\cref{lem:gaussian_purification_bos}), and transposition is an antiautomorphism of the unitary channels, we can apply~\cref{eq:transposetwirl}.
Then from \cref{lem:twirl symplectic} it follows directly that
\begin{align*}
    (I \ot \mathcal{T})[\cQ_{n,m}^{\text{Bose}}[\sigma^{\ot n}]] = (I \ot \mathcal{T})[\proj{\psi^{\text{std}}}^{\ot n}].
\end{align*}
This implies that the following procedure yields the same outcome distribution as our proposed tomography scheme:
\begin{enumerate}[noitemsep]
\item Assume we are given $n$ copies of a fixed purification $\ket{\psi}$ of $\sigma$.
\item Perform the pure-state tomography protocol of \cref{cor:pure boson tomography} on these purifications, to obtain an estimate $\hat{\psi}$ of a $2m$-mode pure bosonic Gaussian state.
\item Return $\hat\sigma$, which is obtained by tracing $\hat\psi$ over the purifying subsystem.
\end{enumerate}
This procedure directly inherits the sample complexity scaling from \cref{cor:pure boson tomography}, using that the standard purification is Gaussian by \cref{lem:gaussian_purification_bos}, and the fact that $F(\sigma,\hat \sigma)^2 \geq \abs{\braket{\psi | \hat \psi}}^2$.

This proves the statement in the case where the unknown Gaussian state is promised to have zero first moment. In the next paragraph, we show how to remove this assumption, thereby proving the theorem for arbitrary mixed Gaussian states.
\end{proof}

\subsection{Tomography of mixed Gaussian states with non-zero first moments}
\label{sec:bosonic-nonzero-first-moments}
We now explain how to remove the zero-mean assumption from the mixed Gaussian
tomography protocol developed above. The key idea is that first moments can be
separated from the covariance matrix by a physical operation, namely a balanced
beam splitter: starting from two identical copies, one output arm has the same
covariance matrix and zero first moment, while the other has the same covariance
matrix and a first moment amplified by a factor \(\sqrt2\). After this reduction,
the zero-mean tomography procedure developed above is used to estimate the
covariance matrix on one arm, while the first moment is estimated separately on
the other arm by a generaldyne measurement with the estimated covariance as
seed.

In this subsection we write \(\rho(\mu,V)\) for the \(m\)-mode bosonic Gaussian
state with first moment \(\mu\in\R^{2m}\) and covariance matrix \(V\).

\begin{lem}[Balanced beam splitter reduction]
\label{lem:balanced-bs-first-moments}
Let \(\rho(\mu,V)\) be an \(m\)-mode bosonic Gaussian state.  Let \(B\) be the
Gaussian unitary corresponding to the balanced beam splitter acting on two
\(m\)-mode systems, i.e. to the symplectic matrix
\[
    S_{\mathrm{bs}}
    =
    \frac1{\sqrt2}
    \begin{pmatrix}
        I & -I \\
        I & I
    \end{pmatrix},
\]
where \(I\) is the \(2m\times 2m\) identity matrix.  Then
\[
    B\bigl(\rho(\mu,V)\ot \rho(\mu,V)\bigr)B^\dagger
    =
    \rho(0,V)\ot \rho(\sqrt2\mu,V).
\]
\end{lem}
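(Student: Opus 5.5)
The plan is to use the standard fact that a Gaussian unitary $U_S$ acts on Gaussian states by transforming first and second moments: $U_S\,\rho(\mu,V)\,U_S^\dagger=\rho(S\mu, SVS\tran)$, which follows from \cref{eq:gaussian U boson} together with the definitions $\mu_j=\tr[\rho r_j]$ and $V_{jk}=\tr[\rho\{r_j-\mu_j,r_k-\mu_k\}]$. Since a Gaussian state is fully determined by its first moment and covariance matrix, it will suffice to compute the moments of the output state and check that they coincide with those of $\rho(0,V)\ot\rho(\sqrt2\mu,V)$. Throughout, I use the ordering of quadratures in which the first system's $2m$ quadratures precede the second system's, so that $S_{\mathrm{bs}}$ acts as a block matrix on $\R^{2m}\oplus\R^{2m}$.

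First I will check that $S_{\mathrm{bs}}$ is indeed symplectic: with $\Omega_{4m}=\Omega_{2m}\oplus\Omega_{2m}$, the identity $S_{\mathrm{bs}}\Omega_{4m}S_{\mathrm{bs}}\tran=\Omega_{4m}$ holds because $S_{\mathrm{bs}}$ is orthogonal and of the form $O\ot I_{2m}$ with $O\in\O(2)$, so it commutes with $I_2\ot\Omega_{2m}$. This also shows that $B$ is a passive Gaussian unitary of the kind already discussed in the paper.

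Next, the input $\rho(\mu,V)\ot\rho(\mu,V)$ has first moment $(\mu,\mu)$ and covariance matrix $V\oplus V=I_2\ot V$. The output first moment is
\[
S_{\mathrm{bs}}(\mu,\mu)=\tfrac1{\sqrt2}(\mu-\mu,\mu+\mu)=(0,\sqrt2\mu),
\]
and the output covariance matrix is
\[
(O\ot I)(I_2\ot V)(O\tran\ot I)=I_2\ot V=V\oplus V.
\]
The off-diagonal blocks vanish, so the output is a Gaussian state whose moments are those of the product $\rho(0,V)\ot\rho(\sqrt2\mu,V)$. Uniqueness of a Gaussian state given its moments (equivalently, its characteristic function) then concludes the argument.

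There is no real obstacle here. The only point requiring care is the sign and transpose convention in \cref{eq:gaussian U boson} ($U_S r_j U_S^\dagger=\sum_k S_{kj}r_k$), which could turn $S$ into $S\tran$ or $S^{-1}$ in the moment transformation. If that happens, the first-moment calculation would instead give $(0,\sqrt2\mu)$ or $(0,-\sqrt2\mu)$ up to a relabeling of the arms, so I would fix the convention carefully, or note that replacing $S_{\mathrm{bs}}$ by its inverse merely swaps the roles and signs of the arms, which the lemma's statement tolerates after adjusting which arm is which.
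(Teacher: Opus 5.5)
Your proposal is correct and is essentially the paper's proof: both compute the output first moment $(0,\sqrt2\mu)$ and covariance $V\oplus V$, then use block-diagonality and the fact that a Gaussian state is determined by its moments. Your worry about conventions is moot. Under the paper's convention $U_S r_j U_S^\dagger=\sum_k S_{kj}r_k$, moments transform by $S^{-\mathsf{T}}$, and since $S_{\mathrm{bs}}$ is orthogonal, $S_{\mathrm{bs}}^{-\mathsf{T}}=S_{\mathrm{bs}}$, so no relabeling of the arms is needed.
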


\begin{proof}
The input state is Gaussian with first moment \((\mu,\mu)\) and covariance
matrix $\begin{pmatrix}
        V & 0 \\
        0 & V
    \end{pmatrix}$. Since the beam splitter is a Gaussian unitary, the output is also a Gaussian
state. The output first moment is
\[
    S_{\mathrm{bs}}
    \binom{\mu}{\mu}
    =
    \binom{0}{\sqrt2\mu},
\]
while the output covariance matrix is
\[
    S_{\mathrm{bs}}
    \begin{pmatrix}
        V & 0 \\
        0 & V
    \end{pmatrix}
    S_{\mathrm{bs}}^{\mathsf T}
    =
    \begin{pmatrix}
        V & 0 \\
        0 & V
    \end{pmatrix}.
\]
The output is therefore a Gaussian state with block-diagonal covariance matrix,
and hence it factorizes as claimed.
\end{proof}

We will use the following standard form of generaldyne measurements; see, e.g.,
\cite{BUCCO}.

\begin{lem}[Generaldyne measurement]
\label{lem:generaldyne-measurement}
Let \(W\) be a valid \(2m\times 2m\) bosonic covariance matrix.  There exists a
POVM on \(\R^{2m}\), called the \emph{generaldyne measurement with seed \(W\)},
with the following property.  If the input is an \(m\)-mode Gaussian state
\(\rho(\mu,V)\), then the measurement outcome \(X\in\R^{2m}\) is a classical
Gaussian random variable with
\[
    \Ex X=\mu,
    \qquad
    \operatorname{Cov}(X)=\frac{V+W}{2}.
\]
Equivalently,
\[
    X\sim \mathcal N\!\left(\mu,\frac{V+W}{2}\right).
\]
\end{lem}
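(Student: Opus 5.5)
This is a standard fact, and I would prove it by the usual ancilla-plus-homodyne construction. Since $W$ is a valid covariance matrix, there is a zero-mean Gaussian state $\rho(0,W)$ on an auxiliary $m$-mode system (convention: the vacuum has covariance $I$, matching $\Ex X=\mu$, $\operatorname{Cov}(X)=(V+W)/2$ for heterodyne when $W=I$). The POVM is then defined as follows: prepare the ancilla in $\rho(0,W)$ (or, via purification, a pure Gaussian state with covariance $W$ on a larger system), interfere input and ancilla on the balanced beam splitter $S_{\mathrm{bs}}$ of \cref{lem:balanced-bs-first-moments}, and then measure commuting quadratures. Specifically, homodyne the $q$-quadratures on one output arm and the $p$-quadratures on the other. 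This gives $2m$ real outcomes, and after rescaling by $\sqrt2$ and applying suitable sign flips we take them as $X\in\R^{2m}$.

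The key steps, in order, are these. First, record that the joint input $\rho(\mu,V)\ot\rho(0,W)$ is Gaussian with mean $(\mu,0)$ and covariance $V\oplus W$. Second, propagate through $S_{\mathrm{bs}}$ exactly as in the proof of \cref{lem:balanced-bs-first-moments}. The output mean is $\tfrac1{\sqrt2}(\mu,\mu)$ up to signs, and the covariance is $\tfrac12\begin{pmatrix}V+W & \pm(V-W)\\ \pm(V-W) & V+W\end{pmatrix}$. Third, use the standard fact that jointly measuring a commuting family of quadrature operators $x\mapsto L^{\mathsf T}r$ on a Gaussian state produces a classical Gaussian vector. Its mean is $L^{\mathsf T}$ times the mean and its covariance is $\tfrac12 L^{\mathsf T}(\text{cov})L$, with the factor depending on the convention. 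Choosing $L$ to pick $q_j$ from arm one and $p_j$ from arm two, the cross terms $V-W$ combine with the same-arm terms so that the $q$-block and $p$-block reconstruct the full matrix $(V+W)/2$. Here the sign choice in the beam splitter is what makes the $\pm(V-W)$ contributions cancel between the selected quadratures.

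The main obstacle is verifying that the selected quadratures really commute and that the off-diagonal $q$–$p$ correlations of $V+W$ are reproduced rather than lost. A cleaner route is the equivalent description via the ``dual-homodyne'' combination $r^{(1)}+\Theta r^{(2)}$, with $\Theta=\operatorname{diag}(1,-1,\dots)$ acting on the ancilla. One checks $[r^{(1)}_j+(\Theta r^{(2)})_j,\,r^{(1)}_k+(\Theta r^{(2)})_k]=\Omega_{jk}-\Omega_{jk}=0$, using $\Theta\Omega\Theta=-\Omega$. Because these $2m$ operators commute, their joint spectral measure is a well-defined POVM. The characteristic function of the outcome is $\tr[\rho\ot\rho_W\,e^{i\xi\cdot(r^{(1)}+\Theta r^{(2)})}]$, which factorizes into the product of the Gaussian characteristic functions of $\rho(\mu,V)$ at $\xi$ and of the ancilla at $\Theta\xi$. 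This gives mean $\mu$ and covariance $(V+\Theta W\Theta)/2$. Replacing the ancilla covariance by $\Theta W\Theta$, which is valid because it is the covariance of the time-reversed (transposed) state, yields exactly $\mathcal N(\mu,(V+W)/2)$, as claimed.
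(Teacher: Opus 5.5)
Your final argument is correct. The paper gives no proof of this lemma and simply cites \cite{BUCCO}, so your derivation is a genuinely different route. The argument uses the strongly commuting combination $r^{(1)}+\Theta r^{(2)}$, notes that its characteristic function factorizes into $\chi_{\rho(\mu,V)}(\xi)\,\chi_{\mathrm{anc}}(\Theta\xi)$, and prepares the ancilla in $\rho(0,\Theta W\Theta)$. That ancilla is valid because $\Theta W\Theta+i\Omega=\Theta\,\overline{(W+i\Omega)}\,\Theta\geq 0$. Strong commutativity is most easily seen in the beam-splitter picture, where the operators become $\sqrt2\,q$ and $\sqrt2\,p$ on different output modes.

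In \cite{BUCCO} the generaldyne measurement is defined as the POVM $\{(2\pi)^{-m}D(x)\rho(0,W)D(x)^\dagger\}_{x\in\R^{2m}}$. The outcome density $\tr[\rho(\mu,V)\,D(x)\rho(0,W)D(x)^\dagger]$ is essentially the convolution of two centred-Gaussian Wigner functions, which gives $\mathcal N(\mu,(V+W)/2)$ in one line. No transposition appears because the seed enters directly as the POVM's fiducial state. Your route instead gives an explicit implementation with a Gaussian ancilla, a beam splitter and homodyne detection. It yields the same POVM, since the outcome characteristic functions agree for every input, at the cost of the $\Theta$ bookkeeping.

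Your first paragraph is wrong as written, though. With the ancilla in $\rho(0,W)$, homodyning $q$ on one arm and $p$ on the other gives $q$--$p$ cross-covariances $\tfrac12(V-W)_{qp}$ rather than $\tfrac12(V+W)_{qp}$. No sign choice in the beam splitter fixes this, because changing signs replaces $\Theta$ by $-\Theta$ and leaves $\Theta W\Theta$ unchanged. Only replacing the ancilla by its transpose $\rho(0,\Theta W\Theta)$ cures it, which is what your last step correctly does.
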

As a paradigmatic example of a generaldyne measurement, the usual heterodyne
measurement is the special case in which the seed is the vacuum covariance
matrix, \(W=I\).

We also record a slightly stronger consequence of the zero-first-moment tomography protocol from the previous section. Besides returning a Gaussian state with high fidelity to the unknown Gaussian state, the protocol also yields a multiplicative estimate of its covariance matrix.

\begin{lem}[Covariance estimation in multiplicative error]
\label{lem:bosonic-covariance-multiplicative}
Let \(0<\varepsilon,\delta<1\) and $m\in\mathbb{N}^+$. Let \(\rho=\rho(0,V)\) be an unknown \(m\)-mode bosonic Gaussian state with zero first moment.
There is a procedure (i.e.~the centered tomography algorithm outlined above) that, using
\[
    M
    =
    \mathcal O\!\left(
        \frac{m^2+\log(\delta^{-1})}{\varepsilon^2}
    \right)
\]
copies of \(\rho\), outputs a covariance matrix \(\widehat V\) such that, with
probability at least \(1-\delta\),
\[
    c_\varepsilon^{-1}V
    \preceq
    \widehat V
    \preceq
    c_\varepsilon V \,,
\]
where \[
    c_\varepsilon
    \coloneqq
    \frac{1+\sqrt{\varepsilon^2(2-\varepsilon^2)}}
         {1-\sqrt{\varepsilon^2(2-\varepsilon^2)}}
    =
    1+\mathcal O({\varepsilon}) .
\]
Moreover, if \(\widehat\rho=\rho(0,\widehat V)\), then $F(\rho,\widehat\rho)^2 \geq 1-\varepsilon^2$ with probability at least \(1-\delta\).
\end{lem}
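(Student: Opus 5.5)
The proof will run the zero-mean tomography protocol from the proof of \cref{thm:tomo boson} with $M=\mathcal O((m^2+\log\delta^{-1})/\varepsilon^2)$ copies. That protocol outputs a Gaussian state $\widehat\rho=\rho(0,\widehat V)$ with $F(\rho,\widehat\rho)^2\geq 1-\varepsilon^2$ with probability at least $1-\delta$; the ``moreover'' clause is exactly this guarantee. The partial trace of the mean-zero Gaussian pure estimate $\hat\psi$ is indeed a mean-zero Gaussian state, so $\widehat V$ is well defined. What remains is the deterministic implication
\[
F\bigl(\rho(0,V),\rho(0,\widehat V)\bigr)^2\geq 1-\varepsilon^2 \;\Longrightarrow\; c_\varepsilon^{-1}V\preceq \widehat V\preceq c_\varepsilon V .
\]

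To prove this implication I will reduce to a one-dimensional classical problem using data processing. Fix a unit vector $x\in\R^{2m}$ and measure the quadrature $x\cdot r$. On $\rho(0,V)$ the outcome is distributed as $\mathcal N(0,\tfrac12 x\tran Vx)$ in the paper's convention where the vacuum covariance is $I$. Likewise it is $\mathcal N(0,\tfrac12 x\tran\widehat Vx)$ on $\widehat\rho$. The measurement is a homodyne POVM; an ideal projective measurement of a quadrature is fine for data processing of fidelity, and one can argue via limits of finite-resolution POVMs if needed. Fidelity can only increase under this measurement, so the Bhattacharyya coefficient of the two centered normals with variances $a=x\tran Vx$ and $b=x\tran\widehat Vx$ satisfies
\[
\mathrm{BC}^2=\frac{2\sqrt{ab}}{a+b}\geq F^2\geq 1-\varepsilon^2 .
\]

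Setting $t=\sqrt{b/a}$, the condition reads $\frac{2t}{1+t^2}\geq 1-\varepsilon^2=:f$. Solving the quadratic $ft^2-2t+f\leq 0$ gives $t\in\bigl[(1-\sqrt{1-f^2})/f,\,(1+\sqrt{1-f^2})/f\bigr]$. Squaring, $b/a$ lies in $[c^{-1},c]$ with
\[
c=\frac{(1+\sqrt{1-f^2})^2}{f^2}=\frac{1+\sqrt{1-f^2}}{1-\sqrt{1-f^2}} .
\]
Since $1-f^2=\varepsilon^2(2-\varepsilon^2)$, this $c$ is exactly $c_\varepsilon$. Because $x$ is arbitrary, $c_\varepsilon^{-1}x\tran Vx\leq x\tran\widehat Vx\leq c_\varepsilon x\tran Vx$ for all $x$, which is the operator inequality claimed.

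The main obstacle is bookkeeping rather than conceptual. First, I must confirm that the estimate output by the mixed protocol is centered. Second, I must check that the fidelity guarantee from \cref{thm:tomo boson} is stated for the squared fidelity in the same convention. Third, I must handle the infinite-dimensional data-processing inequality for homodyne measurement; I would cite monotonicity of fidelity under arbitrary channels, using a quantum-to-classical channel with continuous output. Finally, I will check the algebra identifying $c$ with $c_\varepsilon$ and note $c_\varepsilon=1+\mathcal O(\varepsilon)$ by expanding $\sqrt{\varepsilon^2(2-\varepsilon^2)}\approx\sqrt2\,\varepsilon$.
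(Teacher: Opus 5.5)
Your proof is correct, and it takes a genuinely different route from the paper.

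The paper works inside the auxiliary purification experiment. It uses the closed-form overlap $|\langle\psi|\widehat\psi\rangle|^2=\det\bigl(\tfrac{W+W'}{2}\bigr)^{-1/2}$ for centered pure Gaussian states. It observes that $A=W^{-1/2}W'W^{-1/2}$ is symmetric, positive definite and symplectic, so its eigenvalues come in reciprocal pairs. It lower-bounds the determinant by the contribution of the pair $(\lambda,\lambda^{-1})$ with $\lambda=\|A\|_\infty$. This gives $c_\varepsilon^{-1}W\preceq W'\preceq c_\varepsilon W$ on the full $2m$-mode covariance, and the paper then passes to $V,\widehat V$ by taking principal submatrices.

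You instead take the already-established mixed-state guarantee $F(\rho,\widehat\rho)^2\ge 1-\varepsilon^2$ from the zero-mean case of \cref{thm:tomo boson}, which is not circular. You derive the covariance bound as a deterministic consequence: project onto one quadrature direction at a time, then use monotonicity of fidelity under measurement and the Bhattacharyya coefficient of two centered normals.

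Both routes reduce to the same scalar inequality $2\sqrt{\lambda}/(1+\lambda)\ge 1-\varepsilon^2$, so they give the same constant $c_\varepsilon$. Incidentally, the paper's displayed bound $(1-\varepsilon)^{-2}$ should read $(1-\varepsilon^2)^{-2}$ for that constant to come out; your algebra gets it right.

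What your approach buys:
\begin{itemize}
\item It is more elementary: it needs neither the purification nor the symplectic eigenvalue pairing.
\item The two-sided Loewner bound comes for free, because the roots of $ft^2-2t+f$ are reciprocal.
\item It is more general: any procedure that outputs a centered Gaussian estimate with squared fidelity at least $1-\varepsilon^2$ automatically yields the multiplicative covariance guarantee.
\end{itemize}

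What the paper's approach buys: a slightly stronger intermediate statement, namely closeness of the full purification covariances. It also avoids data processing for a continuous-outcome homodyne POVM. You rightly flag that step, but it is standard, for example via finite binnings and a limit, since $F\le$ the binned Bhattacharyya coefficient, which converges to the continuous one.
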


\begin{proof}
Run the centered mixed-state tomography construction from the previous
subsection with accuracy parameter \(\varepsilon\) and failure probability \(\delta\). Although the
protocol itself does not prepare a purification of \(\rho\), its output
distribution coincides with the following auxiliary experiment: we have \(M\) copies of a Gaussian
purification \(\psi\) of \(\rho\); we perform the pure-state tomography protocol of
\cref{cor:pure boson tomography} on \(M\) copies of \(\psi\); we obtain a pure
Gaussian estimate \(\widehat\psi\), and then trace out the purifying system.

Thus, in order to analyze the covariance matrix returned by the actual centered
protocol, it suffices to analyze this auxiliary experiment.  Let \(W\) be the
covariance matrix of the fixed purification \(\psi\), and let \(W'\) be the
covariance matrix of the pure Gaussian estimate \(\widehat\psi\).  By
\cref{cor:pure boson tomography}, choosing
\[
    M
    =
    \mathcal{O}\!\left(
        \frac{m^2+\log(\delta^{-1})}{\varepsilon^2}
    \right)
\]
ensures that $F(\psi,\widehat\psi)^2
    \geq
    1-\varepsilon^2$ with probability at least \(1-\delta\). The fidelity formula for pure Gaussian states with zero first moments~\cite{BUCCO} gives
\[
    F(\psi,\widehat\psi)^2
    =
    \frac{1}{\sqrt{\det\left(\frac{W+W'}{2}\right)}} .
\]
Hence, on the above probability event,
\begin{equation}\label{eq_1a}
    \frac{1}{\sqrt{\det\left(\frac{W+W'}{2}\right)}}
    \geq
    1-\varepsilon^2 .
\end{equation}
Set
\[
    A \coloneqq W^{-1/2}W'W^{-1/2}.
\]
This is well defined because \(W\) is the covariance matrix of a pure Gaussian
state and is therefore real symmetric and positive definite.  We will use that
\(A\) is real symmetric positive definite and that its eigenvalues come in
reciprocal pairs.  The first property is immediate from the definition.  We now
justify the second one.

Since \(W\) is the covariance matrix of a pure Gaussian state, there exists a
symplectic matrix \(S\) such that $W=SS^{\mathsf T}$. By the Euler, or Bloch--Messiah, decomposition~\cite{BUCCO}, we may write $S=O_1ZO_2$, where \(O_1,O_2\) are orthogonal symplectic matrices and
\[
    Z=\operatorname{diag}(z_1,\ldots,z_m,z_1^{-1},\ldots,z_m^{-1}),
    \qquad z_1,\ldots,z_m>0 .
\]
Since \(O_2\) is orthogonal, this gives
\[
    W=O_1Z^2O_1^{\mathsf T},
    \qquad
    W^{-1/2}=O_1Z^{-1}O_1^{\mathsf T}.
\]
Thus \(W^{-1/2}\) is symplectic.  Moreover, \(W'\) is symplectic as well, since
it is also the covariance matrix of a pure Gaussian state.  Hence $A=W^{-1/2}W'W^{-1/2}$ is symplectic.  Since \(A\) is also real symmetric and positive definite, its eigenvalues are positive and come in reciprocal pairs.

Moreover, since \(W\) is symplectic, \(\det W=1\), and therefore
\[
    \det\left(\frac{W+W'}{2}\right)
    =
    \det\left(\frac{I+A}{2}\right).
\]
Let \(\lambda=\|A\|_\infty\).  Since the eigenvalues of \(A\) are positive and
come in reciprocal pairs, both \(\lambda\) and \(\lambda^{-1}\) are eigenvalues
of \(A\).  Every other reciprocal pair \((t,t^{-1})\) of eigenvalues of \(A\) contributes $\frac{(1+t)(1+t^{-1})}{4}\geq 1$ to \(\det((I+A)/2)\).  Therefore
\begin{equation}\label{eq_1b}
    \det\left(\frac{I+A}{2}\right)
    \geq
    \frac14(1+\lambda)(1+\lambda^{-1}) .
\end{equation}
Combining \eqref{eq_1a} and \eqref{eq_1b}, we obtain
\[
    \frac14(1+\lambda)(1+\lambda^{-1})
    \leq
    (1-\varepsilon)^{-2}.
\]
Solving this inequality gives
\[
    \lambda
    =
    \|A\|_\infty
    \leq
    \frac{1+\sqrt{\varepsilon^2(2-\varepsilon^2)}}
         {1-\sqrt{\varepsilon^2(2-\varepsilon^2)}}
    =
    c_\varepsilon .
\]
Since \(A\) is real symmetric and positive definite, the bound
\(\|A\|_\infty\leq c_\varepsilon\) is equivalent to \(A\preceq c_\varepsilon I\). Moreover, because the eigenvalues of \(A\) come in reciprocal pairs, the same
bound also implies the lower spectral bound \(c_\varepsilon^{-1}I
    \preceq
    A\). Hence
\[
    c_\varepsilon^{-1}I
    \preceq
    A
    \preceq
    c_\varepsilon I .
\]
Equivalently, by substituting \(A=W^{-1/2}W'W^{-1/2}\), we obtain
\[
    c_\varepsilon^{-1}W
    \preceq
    W'
    \preceq
    c_\varepsilon W .
\]
The covariance matrix \(\widehat V\) returned by the centered protocol has the
same distribution as the covariance matrix obtained by restricting \(W'\) to
the original \(m\)-mode subsystem in the auxiliary experiment.  Likewise, \(V\)
is the restriction of \(W\) to that subsystem.  Since taking principal
submatrices preserves the Loewner order, we obtain
\[
    c_\varepsilon^{-1}V
    \preceq
    \widehat V
    \preceq
    c_\varepsilon V
\]
with probability at least \(1-\delta\).

Finally, the reduced Gaussian state produced by the
actual protocol has the same distribution as \(\rho(0,\widehat V)\), which is the reduction of
\(\widehat\psi\) in the auxiliary experiment.  By monotonicity of fidelity
under the partial trace,
\[
    F\bigl(\rho(0,V),\rho(0,\widehat V)\bigr)^2
    \geq
    F(\psi,\widehat\psi)^2
    \geq
    1-\varepsilon^2
\]
with probability at least \(1-\delta\).  This concludes the proof.
\end{proof}

We now combine all the above preliminary results to conclude the proof of \cref{thm:tomo boson} in the case of mixed Gaussian states with non-zero first moments.
\begin{thm}[Tomography of mixed bosonic Gaussian states with non-zero first moments]
\label{thm:bosonic-mixed-nonzero-first-moments}
Let \(\rho=\rho(\mu,V)\) be an arbitrary \(m\)-mode bosonic Gaussian state.
For every \(0<\eps,\delta<1\), there is a tomography protocol which uses
\[
    \mathcal{O}\!\left(
        \frac{m^2+\log(\delta^{-1})}{\eps^2}
    \right)
\]
copies of \(\rho\) and outputs a Gaussian state $\widehat\rho=\rho(\widehat\mu,\widehat V)$ such that, with probability at least \(1-\delta\),
\[
    F(\rho,\widehat\rho)^2
    \geq
    1-\eps^2 .
\]
Moreover, on the same event, \(\widehat\mu\) and \(\widehat V\) satisfy
\[
    (1-f_\eps)V
    \preceq
    \widehat V
    \preceq
    (1+f_\eps)V,
\]
where \(f_\eps=\mathcal{O}(\eps)\), and
\[
    (\widehat\mu-\mu)^{\mathsf T}
    V^{-1}
    (\widehat\mu-\mu)=
    \mathcal O\!\left(
        \eps^2\,
        \frac{m+\log(\delta^{-1})}{m^2+\log(\delta^{-1})}
    \right)\,.
\]
\end{thm}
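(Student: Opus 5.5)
We follow Algorithm~\ref{alg:learn bosonic gaussian}. Take $2n$ copies of $\rho(\mu,V)$ and group them into $n$ pairs. By \cref{lem:balanced-bs-first-moments}, applying the balanced beam splitter to each pair produces $\rho(0,V)^{\ot n}\ot\rho(\sqrt2\mu,V)^{\ot n}$. We then proceed in two stages.

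In the first stage we run the centered procedure of \cref{lem:bosonic-covariance-multiplicative} on the $n$ copies of $\rho(0,V)$, with accuracy $\eps'=c\,\eps$ for a small absolute constant $c$ and failure probability $\delta/2$. This costs $n=\bigO((m^2+\log\delta^{-1})/\eps^2)$ copies. With probability at least $1-\delta/2$ it returns $\widehat V$ satisfying $c_{\eps'}^{-1}V\preceq\widehat V\preceq c_{\eps'}V$ and $F(\rho(0,V),\rho(0,\widehat V))^2\ge 1-\eps'^2$. Setting $f_\eps=c_{\eps'}-1=\bigO(\eps)$ gives the claimed Loewner sandwich.

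In the second stage we measure $\rho(\sqrt2\mu,V)^{\ot n}$ with the $n$-fold product generaldyne measurement of \cref{lem:generaldyne-measurement}, using seed $\widehat V$ on each copy. Averaging the $n$ outcomes and dividing by $\sqrt2$ defines $\widehat\mu$. Conditioned on $\widehat V$, which is independent of these copies, we have
\[
\widehat\mu-\mu\sim\mathcal N\!\left(0,\tfrac{V+\widehat V}{4n}\right)\preceq \mathcal N\!\left(0,\tfrac{(1+c_{\eps'})V}{4n}\right)\text{ in covariance.}
\]
Therefore $Z:=V^{-1/2}(\widehat\mu-\mu)$ is Gaussian with covariance at most $\frac{1+c_{\eps'}}{4n}I_{2m}\preceq \frac1n I$. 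A Laurent--Massart bound for $\chi^2_{2m}$ (or a Gaussian Lipschitz concentration argument) then gives
\[
Z^{\mathsf T}Z\le \frac{C(m+\log\delta^{-1})}{n}
\]
with probability at least $1-\delta/2$. Substituting $n\asymp(m^2+\log\delta^{-1})/\eps^2$ yields the stated bound $\eps^2\frac{m+\log\delta^{-1}}{m^2+\log\delta^{-1}}$. A union bound makes both events hold simultaneously with probability at least $1-\delta$.

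It remains to turn these parameter estimates into a fidelity bound. The main step is to show
\[
F(\rho(\mu,V),\rho(\widehat\mu,\widehat V))^2\gtrsim F(\rho(0,V),\rho(0,\widehat V))^2\cdot \exp\!\big(-\bigO(\Delta^{\mathsf T}V^{-1}\Delta)\big),\qquad \Delta=\widehat\mu-\mu .
\]
We first translate both states by $-\mu$, which is a Gaussian unitary and leaves the fidelity invariant. The states become $\rho(0,V)$ and $\rho(\Delta,\widehat V)$. Next we use a triangle inequality for the purified distance:
\[
P(\rho(0,V),\rho(\Delta,\widehat V))\le P(\rho(0,V),\rho(0,\widehat V))+P(\rho(0,\widehat V),\rho(\Delta,\widehat V)).
\]
The first term is at most $\eps'$. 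For the second term, the two states differ only by a displacement $D_\Delta$, so we need $1-F(\sigma,D_\Delta\sigma D_\Delta^\dagger)^2\lesssim\Delta^{\mathsf T}\widehat V^{-1}\Delta$ for Gaussian $\sigma$ with covariance $\widehat V$. I would prove this via the Uhlmann fidelity of Gaussian states, which factors into a covariance term times an exponential $\exp(-\tfrac14\Delta^{\mathsf T}(\cdot)^{-1}\Delta)$ whose matrix is bounded below by a multiple of $\widehat V$. If the general mixed-state formula is awkward, a simpler route is to pass to the standard purification and use the pure-state formula, where the overlap is $\exp(-\tfrac12\Delta^{\mathsf T}(\cdot)^{-1}\Delta)$ with a matrix $\succeq$ the covariance. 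Combining this with $\widehat V\succeq c_{\eps'}^{-1}V$ bounds the second term by $\bigO(\sqrt{\Delta^{\mathsf T}V^{-1}\Delta})=\bigO(\eps)$ with a small constant. Choosing $c$ small then gives total purified distance at most $\eps$.

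I expect this last step to be the main obstacle: obtaining a clean, dimension-free bound on the displacement fidelity in terms of the Mahalanobis norm $\Delta^{\mathsf T}V^{-1}\Delta$, which is the quantity the mean estimate controls. Everything else is a direct assembly of \cref{lem:balanced-bs-first-moments,lem:generaldyne-measurement,lem:bosonic-covariance-multiplicative} together with chi-square concentration.
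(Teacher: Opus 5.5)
Your plan is essentially the paper's proof: the same beam-splitter split, \cref{lem:bosonic-covariance-multiplicative} on the centered copies, product generaldyne with seed $\widehat V$ plus Gaussian ($\chi^2$) concentration for the mean, and a triangle inequality for the purified distance. The differences are minor. The paper inserts $\rho(\widehat\mu,V)$ as the intermediate state, so the displacement term is measured directly in $V^{-1}$ via the equal-covariance formula $F(\rho(\mu,V),\rho(\nu,V))^2=\exp(-\tfrac12(\mu-\nu)\tran V^{-1}(\mu-\nu))$, which it cites from Banchi et al. Your intermediate state $\rho(\mu,\widehat V)$ instead costs the extra factor from $\widehat V^{-1}\preceq c_{\eps'}V^{-1}$. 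Your explicit choice $\eps'=c\eps$ is the clean way to land at purified distance exactly $\le\eps$; the paper gets there by enlarging the constant in $M$.

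The step you flag as the main obstacle is a known closed form. Your first route settles it: the general Gaussian fidelity formula has a displacement factor that is an exponential in $\Delta\tran(V_1+V_2)^{-1}\Delta$. It even gives your multiplicative inequality directly, without the triangle inequality.

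Your ``simpler'' fallback, however, fails as stated. Let $\psi$ be the standard purification of $\rho(0,\widehat V)$, with $4m\times 4m$ covariance $W$. If you displace only the system modes, the overlap is $\exp(-\tfrac12\Delta\tran (W^{-1})_{AA}\Delta)$. Here $\bigl((W^{-1})_{AA}\bigr)^{-1}=W_{AA}-W_{AB}W_{BB}^{-1}W_{BA}\preceq W_{AA}=\widehat V$ is a Schur complement. So the relevant matrix is \emph{smaller} than the covariance, not larger.

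A concrete case shows the damage. For a single-mode thermal state with $\widehat V=\nu I$, purified by a two-mode squeezed vacuum, one gets $(W^{-1})_{AA}=\nu I$. This route then only yields $1-F^2\lesssim\nu\abs{\Delta}^2$, while your mean estimate controls $\abs{\Delta}^2/\nu$. That is an energy-dependent loss of $\nu^2$, which would destroy the energy-independent guarantee.

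To repair it, also displace the purifying modes by some $\Delta_B$, which leaves the reduced state unchanged, and optimize. This gives $\min_{\Delta_B}(\Delta,\Delta_B)\tran W^{-1}(\Delta,\Delta_B)=\Delta\tran W_{AA}^{-1}\Delta=\Delta\tran\widehat V^{-1}\Delta$, recovering the exact formula.
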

\begin{proof}
Choose  $2M=\mathcal O\!\left(\frac{m^2+\log(\delta^{-1})}{\eps^2}\right)$ input copies. We use them in pairs: applying the balanced beam splitter from
\cref{lem:balanced-bs-first-moments} to each pair transforms the \(2M\) copies
of \(\rho(\mu,V)\) into \(M\) copies of
\(\sigma\coloneqq \rho(0,V)\) and \(M\) copies of
\(\tau\coloneqq \rho(\sqrt2\mu,V)\).

First, apply \cref{lem:bosonic-covariance-multiplicative} to the \(M\) copies
of \(\sigma\), with accuracy parameter \(\eps\) and failure probability
\(\delta/2\).  With probability at least \(1-\delta/2\), this outputs a
covariance matrix \(\widehat V\) such that
\begin{equation}
\label{eq:nonzero-cov-good-event-fidelity}
    c_\eps^{-1}V
    \preceq
    \widehat V
    \preceq
    c_\eps V,
    \qquad
    F\bigl(\rho(0,V),\rho(0,\widehat V)\bigr)^2
    \geq
    1-\eps^2 .
\end{equation}
Since $c_\eps=1+\mathcal O(\eps)$, the Loewner estimate can be rewritten as
\[
    (1-f_\eps)V
    \preceq
    \widehat V
    \preceq
    (1+f_\eps)V
\]
for some \(f_\eps=\mathcal O(\eps)\).

We now estimate the first moment.  Condition on the event
\eqref{eq:nonzero-cov-good-event-fidelity}.  On each of the \(M\) copies of
\(\tau\), perform the generaldyne measurement with seed covariance
\(\widehat V\).  By \cref{lem:generaldyne-measurement}, this produces
independent samples \(X_1,\ldots,X_M\in\R^{2m}\) from the Gaussian distribution
with mean \(\sqrt2\mu\) and covariance matrix
\[
    \Sigma
    =
    \frac{V+\widehat V}{2}.
\]
Define
\[
    \overline X
    \coloneqq
    \frac1M\sum_{j=1}^M X_j.
\]
A standard concentration bound for the empirical mean of a Gaussian vector (see e.g.~\cite[Lemma~9]{bittel2025energy}) gives,
with conditional probability at least \(1-\delta/2\),
\[
    \left\|
        \Sigma^{-1/2}
        \left(
            \overline X-\sqrt2\mu
        \right)
    \right\|_2
    \leq
    \frac{\sqrt{2m}+\sqrt{2\log(4/\delta)}}{\sqrt M}.
\]
On the event \eqref{eq:nonzero-cov-good-event-fidelity}, we have
\[
    \Sigma
    =
    \frac{V+\widehat V}{2}
    \preceq
    \frac{1+c_\eps}{2}V .
\]
By setting
\[
    \widehat\mu
    \coloneqq
    \frac{\overline X}{\sqrt2},
\]
we obtain
\[
    \left\|
        V^{-1/2}
        \left(
            \widehat\mu-\mu
        \right)
    \right\|_2
    \leq
    \frac12\sqrt{1+c_\eps}\,
    \frac{\sqrt{2m}+\sqrt{2\log(4/\delta)}}{\sqrt M}.
\]
Since $M
    =
    \mathcal O\!\left(
        \frac{m^2+\log(\delta^{-1})}{\eps^2}
    \right)$ with the implicit constant chosen large enough, we obtain
\[
    (\widehat\mu-\mu)^{\mathsf T}
    V^{-1}
    (\widehat\mu-\mu)
    =
    \left\|
        V^{-1/2}(\widehat\mu-\mu)
    \right\|_2^2
    =
    \mathcal O\!\left(
        \eps^2\,
        \frac{m+\log(\delta^{-1})}{m^2+\log(\delta^{-1})}
    \right).
\]
Since \(m\geq 1\), this is in particular always \(\mathcal O(\eps^2)\). Hence
\[
    (\widehat\mu-\mu)^{\mathsf T}
    V^{-1}
    (\widehat\mu-\mu)
    =
    \mathcal O(\eps^2).
\]
By the union bound, the covariance estimate and the first-moment estimate hold
simultaneously with probability at least \(1-\delta\).

It remains to prove the fidelity guarantee.  By the Gaussian fidelity formula
for states with the same covariance matrix~\cite{Banchi2015},
\[
    F\bigl(\rho(\mu,V),\rho(\nu,V)\bigr)^2
    =
    \exp\!\left(
        -\frac12(\mu-\nu)^{\mathsf T}V^{-1}(\mu-\nu)
    \right).
\]
Using \(1-e^{-x}\leq x\), the first-moment estimate gives
\[
    1-
    F\bigl(\rho(\mu,V),\rho(\widehat\mu,V)\bigr)^2
     \leq
    \frac12
    (\widehat\mu-\mu)^{\mathsf T}
    V^{-1}
    (\widehat\mu-\mu)
    =
    \mathcal O(\eps^2).
\]
On the other hand, by \eqref{eq:nonzero-cov-good-event-fidelity} and invariance
of fidelity under displacement unitaries,
\[
    1-
    F\bigl(\rho(\widehat\mu,V),\rho(\widehat\mu,\widehat V)\bigr)^2
    =
    \mathcal O(\eps).
\]
By the triangle inequality for the purified distance
\[
    P(\omega,\zeta)
    \coloneqq
    \sqrt{1-F(\omega,\zeta)^2},
\]
we have
\begin{align*}
    P\bigl(\rho(\mu,V),\rho(\widehat\mu,\widehat V)\bigr)
    &\leq
    P\bigl(\rho(\mu,V),\rho(\widehat\mu,V)\bigr)
    +
    P\bigl(\rho(\widehat\mu,V),\rho(\widehat\mu,\widehat V)\bigr) \\
    &=
    \sqrt{
        1-F\bigl(\rho(\mu,V),\rho(\widehat\mu,V)\bigr)^2
    }
    +
    \sqrt{
        1-F\bigl(\rho(\widehat\mu,V),\rho(\widehat\mu,\widehat V)\bigr)^2
    } \\
    &=
    \mathcal O(\eps).
\end{align*}
Thus
\[
    1-
    F\bigl(\rho(\mu,V),\rho(\widehat\mu,\widehat V)\bigr)^2
    =
    \mathcal O(\eps^2).
\]
By increasing the implicit universal constant in the choice $M
    =
    \mathcal O\!\left(
        \frac{m^2+\log(\delta^{-1})}{\eps^2}
    \right)$, we can make the implicit constant in the bound above smaller than $1$. This proves all the claimed guarantees.
\end{proof}

\subsection{Sample complexity lower bound}
We will now show a tight sample complexity lower bound for estimating pure Gaussian states (which then also serves as a tight lower bound for mixed Gaussian states). The approach is the same as in the fermionic case in \cref{thm:tomo fermion lower bound}, following the strategy of \cite{harrow2013church,scharnhorst2025optimal}.
There is one subtlety: we cannot follow exactly the same argument since we cannot draw a uniformly random bosonic Gaussian state. The solution is to draw a random Gaussian state according to a probability distribution $Q$ defined by its overlap with the vacuum state.
To be precise,
\begin{align}\label{eq:define Q}
    \dd Q(\psi) = d_{2m+1,m}  \abs{\braket{\psi | 0}}^{4m+2}  \dd \psi
\end{align}
defines a probability measure, since
\begin{align*}
    \int_{\cG_m^0} \dd Q(\psi) &= d_{2m+1,m} \int_{\cG_m^0} \abs{\braket{\psi | 0}}^{4m+2}  \dd \psi \\
    &= d_{2m+1,m}\int_{\cG_m^0} \tr[ \proj{\psi}^{\ot (2m+1)} \proj{0}^{\ot (2m+1)} ]\\
    &= \tr[\Pi_{2m+1,m} \proj{0}^{\ot (2m+1)}] = 1
\end{align*}
using \cref{eq:projection random gaussian}.

Now consider an arbitrary POVM~$\mu$, with probability distribution $P_{\psi}$ for the measurement outcomes~given~by
\begin{align*}
    \dd P_{\psi}(\phi) = \tr[\dd \mu(\phi) \proj{\psi}^{\ot n}]
\end{align*}
when applied to $n$ copies of the state~$\psi$.
Given such a measurement, we now bound its average performance when the target state is chosen according to $Q$.

\begin{lem}\label{lem:moments fidelity boson}
    Suppose that $\psi$ is distributed according to $Q$ as in \cref{eq:define Q}, and $\hat{\psi}$ denotes the estimate from the measurement~$\mu$ on $n$ copies of $\psi$.
    Then,
    \begin{align*}
        \Ex_{\psi \sim Q} \, \Ex_{\hat\psi \sim P_{\psi}} \, \abs{\braket{\psi | \hat\psi}}^{2k} \leq \frac{d_{n + 2m + 1,m}}{d_{n+ 2m + 1 + k,m}},
    \end{align*}
    with $d_{n,m}$ given in \cref{lem:formal degree n copies}.
\end{lem}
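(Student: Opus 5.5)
The plan is to mirror the proof of \cref{lem:moments fermion fidelity}, absorbing the weight $\abs{\braket{\psi|0}}^{4m+2}$ of $Q$ as $2m+1$ extra copies of the vacuum. Write $n' \coloneqq n+2m+1$. Expanding both expectations and using $\abs{\braket{\psi|\hat\psi}}^{2k} = \tr[\proj{\hat\psi}^{\ot k}\proj{\psi}^{\ot k}]$ and $\abs{\braket{\psi|0}}^{4m+2} = \tr[\proj{0}^{\ot(2m+1)}\proj{\psi}^{\ot(2m+1)}]$, we get
\begin{align*}
    \Ex_{\psi \sim Q} \Ex_{\hat\psi\sim P_\psi} \abs{\braket{\psi|\hat\psi}}^{2k}
    = d_{2m+1,m} \int_{\cG_m^0} \int_{\cG_m^0} \tr\bigl[\bigl(\dd\mu(\hat\psi)\ot \proj{0}^{\ot(2m+1)} \ot \proj{\hat\psi}^{\ot k}\bigr) \proj{\psi}^{\ot(n'+k)}\bigr]\,\dd\psi .
\end{align*}
Since $n'+k \geq 2m+1$, \cref{eq:projection random gaussian} replaces the $\psi$-integral by $d_{n'+k,m}^{-1}\Pi_{n'+k,m}$. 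This step needs a brief justification of Fubini/Tonelli. The integrand is nonnegative, so I would first test against finite-outcome coarse-grainings of $\mu$ and pass to the limit by monotone convergence.

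Next, bound $\Pi_{n'+k,m} \leq \id_{\cH^{\ot n'}} \ot \Pi_{k,\cdot}$ is too crude. Instead I would use only the first $n'$ tensor factors. The projection onto $\cV_{n'+k,m}$ is dominated by $\Pi_{n',m} \ot \id$, because every vector $\ket{\phi}^{\ot(n'+k)}$ lies in $\cV_{n',m}\ot \cH^{\ot k}$. Also, $\tr_{k}$ of $\proj{\hat\psi}^{\ot k}$ is $1$. This gives the upper bound
\begin{align*}
    \frac{d_{2m+1,m}}{d_{n'+k,m}} \int_{\cG_m^0} \tr\bigl[\bigl(\dd\mu(\hat\psi)\ot\proj{0}^{\ot(2m+1)}\bigr)\Pi_{n',m}\bigr].
\end{align*}
Summing over outcomes, $\int\dd\mu = \id$ on $\cH^{\ot n}$. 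Without loss of generality we may even restrict $\mu$ to $\cV_{n,m}$, but that is not needed. The remaining quantity is $\tr[(\id_{\cH^{\ot n}}\ot \proj{0}^{\ot(2m+1)})\Pi_{n',m}]$.

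\textbf{Main obstacle.} The crux is showing
\begin{align*}
    \tr\bigl[(\id\ot\proj{0}^{\ot(2m+1)})\Pi_{n',m}\bigr] = \frac{d_{n',m}}{d_{2m+1,m}} ,
\end{align*}
which is an infinite-dimensional partial trace where the naive identity trace diverges. I would prove it by writing $\Pi_{n',m} = d_{n',m}\int \proj{\phi}^{\ot n'}\dd\phi$. The trace then becomes $d_{n',m}\int \abs{\braket{\phi|0}}^{2(2m+1)}\dd\phi$, using $\tr\proj{\phi}^{\ot n}=1$ and Tonelli. Applying \cref{eq:projection random gaussian} with $2m+1$ copies evaluates the remaining integral as $d_{2m+1,m}^{-1}\bra{0}^{\ot(2m+1)}\Pi_{2m+1,m}\ket{0}^{\ot(2m+1)} = d_{2m+1,m}^{-1}$. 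This is precisely why the reference measure $Q$ was chosen with exponent $4m+2$: it places the extra $2m+1$ copies in the discrete-series range where the formal-degree identity holds.

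Combining the steps, the $d_{2m+1,m}$ factors cancel and the bound becomes $d_{n',m}/d_{n'+k,m} = d_{n+2m+1,m}/d_{n+2m+1+k,m}$, as claimed. The only point needing care is the domination $\Pi_{n'+k,m}\leq \Pi_{n',m}\ot\id$. It holds because the range of the left side is the closed span of $\ket{\phi}^{\ot(n'+k)}$, which is contained in the range of the right side.
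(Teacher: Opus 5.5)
Your proposal is correct and follows the paper's proof essentially step by step: you absorb the weight of $Q$ as $2m+1$ vacuum copies, apply \cref{eq:projection random gaussian} to $n+2m+1+k$ copies, use $\Pi_{n+2m+1+k,m}\leq \Pi_{n+2m+1,m}\ot I$, sum the POVM, and evaluate $\tr[(I\ot\proj{0}^{\ot(2m+1)})\Pi_{n+2m+1,m}] = d_{n+2m+1,m}/d_{2m+1,m}$ with two more uses of the same identity. Your extra justifications, namely Tonelli/monotone convergence for exchanging the integrals and the span argument for the projection domination, make explicit what the paper leaves implicit.
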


\begin{proof}
    We compute
    \begin{align*}
        \Ex_{\psi \sim Q} \, \Ex_{\hat\psi \sim P_{\psi}} \, \abs{\braket{\psi | \hat\psi}}^{2k}
        &= d_{2m+1,m} \int_{\cG_m^0} \int_{\cG_m^0} \tr\bigl[ \dd \mu(\hat\psi) \proj{\psi}^{\ot n} \bigr] \, \, \abs{\braket{\psi | 0}}^{4m+2} \abs{\braket{\psi | \hat\psi}}^{2k} \dd \psi \\
        &= d_{2m+1,m}\int_{\cG_m^0} \int_{\cG_m^0} \tr\bigl[ (\dd \mu(\hat\psi) \ot \proj{0}^{\ot(2m+1)} \proj{\hat{\psi}}^{\ot k}) \proj{\psi}^{\ot(n+2m+1+k)} \bigr] \dd \psi \\
        &= \frac{d_{2m+1,m}}{d_{n+2m+1+k,m}} \int_{\cG_m^0} \tr\bigl[ (\dd \mu(\hat\psi) \ot \proj{0}^{\ot(2m+1)} \ot \proj{\hat{\psi}}^{\ot k}) \Pi_{n+2m+1+k,m} \bigr],
    \end{align*}
    where in the last equality we use \cref{eq:projection random gaussian}.
    We now use $\Pi_{n+2m+1+k,m} \leq \Pi_{n+2m+1,m} \ot I$ to upper bound
    \begin{align*}
        &\tr\bigl[ (\dd \mu(\hat\psi) \ot \proj{0}^{\ot(2m+1)} \ot \proj{\hat{\psi}}^{\ot k}) \Pi_{n+2m+1+k,m} \bigr] \\
        &\qquad \leq \tr\bigl[ (\dd \mu(\hat\psi) \ot \proj{0}^{\ot(2m+1)} \ot \proj{\hat{\psi}}^{\ot k}) (\Pi_{n+2m+1,m} \ot I) \bigr]\\
        &\qquad = \tr\bigl[ (\dd \mu(\hat\psi) \ot \proj{0}^{\ot(2m+1)}) \Pi_{n+2m+1,m}\bigr]
    \end{align*}
    The result now follows from
    \begin{align*}
    \Ex_{\psi \sim Q} \, \Ex_{\hat\psi \sim P_{\psi}} \, \abs{\braket{\psi | \hat\psi}}^{2k}
    &\leq \frac{d_{2m+1,m}}{d_{n+2m+1+k,m}} \int_{\cG_m^0} \tr\bigl[ (\dd \mu(\hat\psi) \ot \proj{0}^{\ot(2m+1)}) \Pi_{n+2m+1,m}\bigr] \\
    &= \frac{d_{2m+1,m}}{d_{n+2m+1+k,m}} \tr\bigl[ (I \ot \proj{0}^{\ot(2m+1)}) \Pi_{n+2m+1,m}\bigr] \\
    &= \frac{d_{n+2m+1,m}d_{2m+1,m}}{d_{n+2m+1+k,m}} \int_{\cG_m^0} \tr\bigl[ (I \ot \proj{0}^{\ot(2m+1)}) \proj{\psi}^{\ot(n+2m+1)} \bigr] \dd \psi\\
    &= \frac{d_{n+2m+1,m}d_{2m+1,m}}{d_{n+2m+1+k,m}} \int_{\cG_m^0} \tr\bigl[ \proj{0}^{\ot(2m+1)} \proj{\psi}^{\ot(2m+1)} \bigr] \dd \psi \\
    &= \frac{d_{n+2m+1,m}}{d_{n+2m+1+k,m}} \tr\bigl[ \proj{0}^{\ot(2m+1)} \Pi_{2m+1,m} \bigr] = \frac{d_{n+2m+1,m}}{d_{n+2m+1+k,m}}
    \end{align*}
    where we use that $\mu$ is a POVM, and apply \cref{eq:projection random gaussian} three times.
\end{proof}

We now use this to bound the required number of copies for tomography of Gaussian bosonic states, proving \cref{thm:tomo boson lower bound}.

\begin{proof}[Proof of \cref{thm:tomo boson lower bound}]
    Let $\mu$ be an $n$-copy measurement, for which we assume that for any $\psi \in \cG_m^0$, the resulting estimate $\hat{\psi}$ satisfies $\abs{\braket{\psi|\hat\psi}}^2 \geq 1 - \eps^2$ with probability at least 0.99.
        Together with the fact that $\abs{\braket{\psi | \hat\psi}}^{2k} \geq 0$, we have that for every $\psi \in \cG_m^0$,
    \begin{align*}
        \Ex_{\hat\psi \sim P_{\psi}} \, \abs{\braket{\psi | \hat\psi}}^{2k} \geq  0.99(1-\eps^2)^{k} .
    \end{align*}
    On the other hand, by \cref{lem:moments fidelity boson} and \cref{lem:formal degree n copies} we have
    \begin{align*}
        \Ex_{\psi \sim Q} \, \Ex_{\hat\psi \sim P_{\psi}} \, \abs{\braket{\psi | \hat\psi}}^{2k} &\leq \frac{d_{n + 2m+1,m}}{d_{n+2m+1+k,m}}
        = \prod_{1 \leq i \leq j \leq m} \frac{n + 2m + 1 - (i+j)}{n + 2m + 1 + k - (i+j)}\\
        &= \prod_{1 \leq i \leq j \leq m} \left(1 - \frac{k}{n + 2m + 1 + k - (i+j)}\right)
        \leq \left(1 - \frac{k}{n + 2m + 1 + k}\right)^{m(m+1)/2}.
    \end{align*}
    We conclude that for every $k$
    \begin{align*}
        0.99(1-\eps^2)^{k} \leq \Ex_{\psi \sim Q} \, \Ex_{\hat\psi \sim P_{\psi}} \, \abs{\braket{\psi | \hat\psi}}^{2k} \leq \left(1 - \frac{k}{n + 2m + 1 + k}\right)^{m(m+1)/2}.
    \end{align*}
    Using $1 + xy \leq (1+x)^y \leq e^{xy}$, which holds for $x \geq -1$, we deduce that
    \begin{align*}
        0.99(1-k\eps^2) \leq \exp\mleft( -\frac{km(m+1)}{2(n + 2m + 1 + k)}\mright).
    \end{align*}
    We now choose $k = \lfloor 1/(4\eps^2) \rfloor$, so $0.99(1-k\eps^2) \geq e^{-1/2}$ and taking logarithms we get
    \begin{align*}
        \frac{km(m+1)}{2(n + 2m + 1 + k)} \leq \frac12,
    \end{align*}
    which we can rewrite as
    \begin{equation*}
        n \geq k(m^2 + m - 1) - 2m - 1 = \Omega(m^2/\eps^2).
        \qedhere
    \end{equation*}
    This proves the result for constant probability of error $\delta$. The scaling of the lower bound with $\delta$ follows from the reduction to distinguishing two nearby states as in \cref{eq:distinguising two states}.
\end{proof}

\section{Random purification of bosonic gauge-invariant states}
We have shown that the covariant measurement, together with a variant of the random purification channel, can be used to derive sample-optimal tomography for bosonic Gaussian states.
However, the random purification channel itself is not well-defined in this setting.
We now prove \cref{cor:bosons}, which asserts the existence of a random purification channel for \emph{gauge-invariant} bosonic Gaussian states.
To that end, we formulate an appropriate infinite-dimensional variant of \cref{thm:main simplified,thm:main technical}, in the special case where we have a decomposition where each summand is finite-dimensional.
The reason that this is the case in contrast to general Gaussian states, is that the group of passive Gaussian unitaries is compact (and in fact it is a maximal compact subgroup of the group of all Gaussian unitaries).
Alternative derivations can be found in \cite{mergedwork1,mergedwork2}.

\subsection{The random purification channel for infinite dimensions}\label{sec:passive bosons}

We consider a slight generalization of \cref{thm:main technical} to the case where the Hilbert space is infinite dimensional, and the isotypical decomposition is an infinite direct sum, but all irreducible representation spaces and multiplicity spaces are finite dimensional. This is exactly the situation that occurs for passive bosonic Gaussian unitaries.

Throughout this subsection, let $\cH$ be a separable Hilbert space with a fixed orthonormal basis, and let $\tilde{\cH}$ be a copy of $\cH$ with the corresponding primed basis. If $\rho\in \trcl(\cH)$ is positive, its standard purification is the vector
\begin{align*}
\ket{\psi^\text{std}_\rho}
=
\sum_x \sqrt{\rho}\ket{x}\otimes \ket{\tilde{x}}
\in \cH\otimes \tilde{\cH},
\end{align*}
where the series converges in norm because $\sqrt{\rho}$ is Hilbert--Schmidt.

We now consider the following set-up.
We assume there exists a decomposition of the Hilbert space
\begin{align}\label{eq:decomposition infinite}
    \cH \cong \bigoplus_{\lambda \in \Lambda} \cL_\lambda \ot \cR_\lambda,
\end{align}
where~$\Lambda$ is a countable index set and $\cL_\lambda$ and $\cR_\lambda$ are finite dimensional Hilbert spaces.
We let
\begin{align}\label{eq:decomposition algebras infinite}
   \cA \cong \bigoplus_{\lambda \in \Lambda}  I_{\cL_\lambda} \ot \Lin(\cR_\lambda),
   \quad\text{and}\quad
   \cB \cong \bigoplus_{\lambda \in \Lambda} \Lin(\cL_\lambda) \ot I_{\cR_\lambda} ,
\end{align}
where the direct sum is the direct sum of von Neumann algebras, so $\cA$ consists of operators $(I_{\cL_\lambda} \ot A_\lambda)_{\lambda \in \Lambda}$ with $A_\lambda \in \Lin(\cR_\lambda)$ and $\sup_\lambda \norm{A_\lambda} < \infty$.
The algebras $\mathcal{A}$ and $\mathcal{B}$ are von Neumann algebras contained in the algebra of bounded operators $B(\cH)$. We recall that the commutant of a set of operators $M\subseteq \mathcal{B}(\mathcal{H})$ is the set of bounded operators that commute with $M$, denoted as $M'$. We also recall the double commutant theorem: $M''$ is a von Neumann algebra, and it is the smallest von Neumann algebra containing $M$.
We will denote by $P_\lambda$ the projections onto the components $\cL_\lambda \ot \cR_\lambda$. They are central projections in both $\cA$ and $\cB$. By restricting operators on $\mathcal{A}$ and $\mathcal{B}$ to fixed $\lambda$, it is clear that $\cB = \cA'$ and $\cA'' = \cA$.

We define
\begin{align}\label{eq:E infinite}
    \E_\cA[M] \cong \bigoplus_{\lambda \in \Lambda}  \frac{I_{\cL_\lambda}}{\dim \cL_\lambda} \ot \tr_{\cL_\lambda}[ P_\lambda M P_\lambda ]
    \qquad (M \in B(\cH)),
\end{align}
Alternatively, the orthogonal projection can be written as follows:
take any closed compact subgroup~$G \subseteq \U(\cH)$ that satisfies $G'' = \cB$ (or equivalently, by the double commutant theorem, satisfies that the weak operator closure of its span equals $\cB$).
Then the orthogonal projection is given by the Haar twirl over~$G$:
\begin{align}\label{eq:twirl infinite}
    \E_\cA[M] = \int_G g M g^\dagger \,\dd g.
\end{align}

For simplicity of exposition, we will assume that $\cA$, and hence $\cB$, are closed under transposes, in particular $\E_\cA[M] = \E_{\cA^{\tran}}[M]$.
The key point is now that since all the spaces $\cL_\lambda$ and $\cR_\lambda$ are finite-dimensional, maximally mixed states and maximally entangled states on them are well-defined, and hence we can construct the random purification channel in the same way as in the finite-dimensional case.

\begin{thm}[Random purification for infinite direct sums]\label{thm:infinite-random-purification}
Let $\cH$ and $\cA$ be as in \eqref{eq:decomposition infinite}, \eqref{eq:decomposition algebras infinite}.
Then there is a quantum channel
$\cP_\cA \colon \trcl(\cH) \to \trcl(\cH \ot \tilde{\cH})$
with the following properties:
\begin{enumerate}
\item \emph{Random purification of invariant input states:}
For any input state~$\rho \in \S(\cH)$ that commutes with~$\cA$, i.e., $\rho \in \cB$, we have
\begin{align}\label{eq:sym action infinite}
    \cP_\cA[\rho]
= \parens*{ \idCh \ot \E_{\cA} }[ \psi^\text{std}_\rho ]
= \int_G (I \ot g) \psi^\text{std}_\rho (I \ot g^\dagger) \,\dd g,
\end{align}
where $G \subseteq \U(\cH)$ is any closed compact subgroup with~$G'' = \cB$.
\item \emph{Explicit formulas:}
For all $\rho \in \S(\cH)$,
\begin{align}\label{eq:explicit infinite}
    \cP_\cA[\rho]
&\cong \bigoplus_{\lambda\in\Lambda} \tr_{\cR_\lambda}[P_\lambda \rho P_\lambda] \ot \frac{I_{\tilde{\cL}_\lambda}}{\dim \cL_\lambda} \ot \frac{\ket{\Gamma}_{\cR_\lambda\tilde{\cR}_\lambda}\bra{\Gamma}_{\cR_\lambda\tilde{\cR}_\lambda}}{\dim\cR_\lambda}  \\
&= \sum_k \sqrt{\cP_\cA[Q_k]} \parens*{ \rho \ot I } \sqrt{\cP_\cA[Q_k]},
\end{align}
where the $Q_k$ are any collection of finite-rank central projections satisfying $\sum_k Q_k = I$.
The sums converge in trace norm.
\end{enumerate}
\end{thm}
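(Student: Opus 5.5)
We define $\cP_\cA$ directly by the first line of \eqref{eq:explicit infinite} and then check the remaining claims, mirroring the proof of \cref{thm:main technical}. Concretely, for $\rho\in\trcl(\cH)$ set $\cP_\cA[\rho]\coloneqq\sum_\lambda \cP_\lambda[P_\lambda\rho P_\lambda]$, where $\cP_\lambda$ is the finite-dimensional map that traces out $\cR_\lambda$, appends $I_{\tilde\cL_\lambda}/\dim\cL_\lambda$, and appends the normalized maximally entangled state on $\cR_\lambda\tilde\cR_\lambda$. Each $\cP_\lambda$ is a finite-dimensional channel. The map $\rho\mapsto(P_\lambda\rho P_\lambda)_\lambda$ is the pinching, which is completely positive and trace preserving on $\trcl(\cH)$ with $\sum_\lambda\norm{P_\lambda\rho P_\lambda}_1\le\norm{\rho}_1$. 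The series therefore converges absolutely in trace norm, and $\cP_\cA$ is a CPTP map $\trcl(\cH)\to\trcl(\cH\ot\tilde\cH)$. Complete positivity follows because finite partial sums are CP and limits in trace norm preserve positivity of $(\mathrm{id}_k\ot\cP_\cA)[X]$.

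For item 1, we write $\rho\in\cB$ as $\rho=\bigoplus_\lambda\rho_\lambda\ot I_{\cR_\lambda}$, with $\rho_\lambda=\tr_{\cR_\lambda}[P_\lambda\rho P_\lambda]/\dim\cR_\lambda$. The only new point is that $\ket\Gamma$ is formal, so we work with $\ket{\psi^\text{std}_\rho}$ directly. Since $\sqrt\rho$ commutes with each $P_\lambda$ and $\sqrt\rho$ is Hilbert--Schmidt, we get the norm-convergent decomposition $\ket{\psi^\text{std}_\rho}=\sum_\lambda\ket{\psi_\lambda}_{\cL_\lambda\tilde\cL_\lambda}\ot\ket\Gamma_{\cR_\lambda\tilde\cR_\lambda}$. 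To justify this, we choose the fixed basis adapted to the product decomposition, which is allowed up to the identification in \eqref{eq:transpose} and the transpose-closedness assumption. In that basis $\ket{\psi^\text{std}_\rho}=\sum_x\sqrt\rho\ket x\ot\ket{\tilde x}$ splits blockwise exactly as in \eqref{eq:max ent decomposition}. We then apply $\idCh\ot\E_\cA$ using \eqref{eq:E infinite}. The cross terms $\lambda\ne\mu$ vanish because $\E_\cA$ is block diagonal. On the diagonal blocks, the computation is the finite-dimensional one from \cref{thm:main technical}, and it yields $\rho_\lambda\ot I/\dim\cL_\lambda\ot\proj\Gamma$. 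This matches the definition, with the sum converging in trace norm. The Haar-twirl expression then follows from \eqref{eq:twirl infinite}. To see this, we note that $\int_G g\,\cdot\,g^\dagger\,\dd g$ is a compact-group average, so it is a channel that commutes with the pinching. On each finite-dimensional block its restriction is the twirl over the image of $G$, which spans $\Lin(\cL_\lambda)\ot I$ because $G''=\cB$ and the blocks are finite-dimensional. This restriction therefore equals the blockwise formula, and the two channels agree on the trace-norm dense span of block-finite operators.

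For item 2, the first line holds by definition. For the Petz-type formula, we compute $\cP_\cA[Q_k]$ for a finite-rank central projection $Q_k=\sum_{\lambda\in L_k}P_\lambda$, where $L_k$ is a finite set. This gives $\cP_\cA[Q_k]=\bigoplus_{\lambda\in L_k}\frac{\dim\cR_\lambda}{\dim\cL_\lambda}\,I_{\cL_\lambda}\ot I_{\tilde\cL_\lambda}\ot\frac{\proj\Gamma_{\cR_\lambda\tilde\cR_\lambda}}{\dim\cR_\lambda}$, so its square root is $\bigoplus_{\lambda\in L_k}I\ot I\ot\proj\Gamma/\sqrt{\dim\cL_\lambda\dim\cR_\lambda}$. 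The blockwise identity proved in item 4 of \cref{thm:main technical} then gives $\sqrt{\cP_\cA[Q_k]}(\rho\ot I)\sqrt{\cP_\cA[Q_k]}=\sum_{\lambda\in L_k}\cP_\lambda[P_\lambda\rho P_\lambda]$. Here $\rho\ot I$ is unbounded in trace norm, but sandwiching it between finite-rank operators makes the expression finite-rank and well-defined. Because the $L_k$ partition $\Lambda$, summing over $k$ recovers the defining series in trace norm.

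The main obstacle is the rigorous handling of the formal vector $\ket\Gamma$ and of the infinite sums. Everything must be phrased through the norm-convergent $\ket{\psi^\text{std}_\rho}$ and through finite-block truncations, with the final step controlled by trace-norm convergence of the pinching. The finite-dimensional algebra itself is already done in \cref{thm:main technical}.
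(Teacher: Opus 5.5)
Your proposal is correct and follows essentially the same route as the paper: define $\cP_\cA$ by the blockwise formula, write the standard purification of $\rho\in\cB$ as a norm-convergent direct sum of block purifications, apply $\idCh\ot\E_\cA$ using transpose-closedness, and compute $\sqrt{\cP_\cA[Q_k]}$ block by block. You are more careful than the paper on the channel property (via the pinching) and on the Haar-twirl identity \eqref{eq:twirl infinite}, which the paper only asserts; for the twirl, commuting with the pinching alone does not show that it annihilates the off-diagonal blocks $P_\lambda M P_\mu$ with $\lambda\neq\mu$, but this follows because the twirl's output commutes with $G$ and hence lies in $G'=\cB'=\cA$, which is block diagonal.
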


There is one minor difference, compared to \cref{thm:main technical}: in \cref{eq:explicit} an expression for the random purification channel is derived in terms of $\cP_{\cA}[I]$; this is not necessarily a bounded operator when $\cH$ is infinite dimensional, so we break this up in the operators $\cP_\cA[Q_k]$.

\begin{proof}
The proof is the same as that of \cref{thm:main technical}. As before, we identify $\cH$ and $\tilde\cH$ with their decompositions.
We define $\cP_\cA$ through
\begin{align*}
    \cP_\cA[\rho]
&= \bigoplus_{\lambda\in\Lambda}  \tr_{\cR_\lambda}[P_\lambda \rho P_\lambda] \ot \frac{I_{\tilde{\cL}_\lambda}}{\dim \cL_\lambda} \ot \tfrac1{\dim\cR_\lambda} \ket{\Gamma}_{\cR_\lambda\tilde{\cR}_\lambda}\bra{\Gamma}_{\cR_\lambda\tilde{\cR}_\lambda} \, .
\end{align*}
It is easy to see that $\cP_\cA$ defines a quantum channel, and that the direct sum converges in trace norm.
We now discuss its interpretation as a random purification channel. Let $\rho$ be a state in $\cB$, then by the structure of $\cB$ we may write (with convergence in trace norm)
\begin{align}\label{eq:rho-block-form}
\rho
\cong
\bigoplus_\lambda \rho_\lambda \ot I_{\cR_\lambda} ,
\qquad
\rho_\lambda = \frac{1}{\dim \cR_\lambda}\tr_{\cR_\lambda}
\mleft[ P_\lambda \rho P_\lambda \mright].
\end{align}
With respect to the bases induced by \eqref{eq:decomposition infinite}, the standard purification decomposes as
\begin{align}\label{eq:std-purification-infinite-direct-sum}
\ket{\psi^\text{std}_\rho}
=
\bigoplus_{\lambda\in\Lambda}
\ket{\psi^\text{std}_{\rho_\lambda}}_{\cL_\lambda\tilde \cL_\lambda} \ot \ket{\Gamma}_{\cR_\lambda\tilde{\cR}_\lambda} ,
\end{align}
where $\ket{\psi^\text{std}_{\rho_\lambda}}$ is defined by the same formula as above, now on the finite-dimensional space $\cL_\lambda$. The sum in \eqref{eq:std-purification-infinite-direct-sum} converges in norm because its squared norm is $\sum_\lambda \dim \cR_\lambda \tr[\rho_\lambda]=1$.

From \cref{eq:E infinite}, we see that
\begin{align*}
    \parens*{ \idCh \ot \E_{\cA} }\mleft[ \proj{\psi^\text{std}_\rho} \mright]
&= \parens*{ \idCh \ot \E_{\cA^{\tran} }}\mleft[ \proj{\psi^\text{std}_\rho} \mright] \\
&= \bigoplus_{\lambda\in\Lambda}  \tr_{\tilde{\cL}_\lambda}[\ket{\psi_\lambda}_{\cL_\lambda\tilde{\cL}_\lambda}\bra{\psi_\lambda}_{\cL_\lambda\tilde{\cL}_\lambda}] \ot \frac{I_{\tilde{\cL}_\lambda}}{\dim \cL_\lambda} \ot \ket{\Gamma}_{\cR_\lambda\tilde{\cR}_\lambda}\bra{\Gamma}_{\cR_\lambda\tilde{\cR}_\lambda}\\
&= \bigoplus_{\lambda\in\Lambda} \rho_{\lambda} \ot \frac{I_{\tilde{\cL}_\lambda}}{\dim \cL_\lambda} \ot \ket{\Gamma}_{\cR_\lambda\tilde{\cR}_\lambda}\bra{\Gamma}_{\cR_\lambda\tilde{\cR}_\lambda} \\
&= \bigoplus_{\lambda\in\Lambda} \tr_{\cR_\lambda}[P_\lambda \rho P_\lambda] \ot \frac{I_{\tilde{\cL}_\lambda}}{\dim \cL_\lambda} \ot  \frac{\ket{\Gamma}_{\cR_\lambda\tilde{\cR}_\lambda}\bra{\Gamma}_{\cR_\lambda\tilde{\cR}_\lambda}}{\dim\cR_\lambda} = \cP_\cA[\rho].
\end{align*}

Finally, consider a collection of projection operators $Q_k \in \cA \cap \cB$ with finite rank. These are finite sums of the $P_\lambda$.
Suppose that $\sum_k Q_k = I$.
The $Q_k$ are trace class, so $\cP_\cA[Q_k]$ is well-defined.
Suppose that $Q_k = \sum_{\lambda \in \Lambda_k} P_\lambda$ for some finite set $\Lambda_k$, then it follows directly from \cref{eq:explicit infinite} that
\begin{align*}
\sqrt{\cP_\cA[Q_k]} = \bigoplus_{\lambda \in \Lambda_k} I_{\cL_\lambda} \ot I_{\tilde{\cL}_\lambda}  \frac{\ket{\Gamma}_{\cR_\lambda\tilde{\cR}_\lambda}\bra{\Gamma}_{\cR_\lambda\tilde{\cR}_\lambda}}{\sqrt{\dim\cL_\lambda \dim\cR_\lambda}}
\end{align*}
and hence
\begin{align*}
 \sqrt{\cP_\cA[Q_k]} \parens*{ \rho \ot I } \sqrt{\cP_\cA[Q_k]} &= \bigoplus_{\lambda\in\Lambda_k}  \tr_{\cR_\lambda}[P_\lambda \rho P_\lambda] \ot \frac{I_{\tilde{\cL}_\lambda}}{\dim \cL_\lambda} \ot \tfrac1{\dim\cR_\lambda} \ket{\Gamma}_{\cR_\lambda\tilde{\cR}_\lambda}\bra{\Gamma}_{\cR_\lambda\tilde{\cR}_\lambda}
\end{align*}
from which it follows that
\begin{align*}
\cP_{\cA}[\rho] = \sum_k \sqrt{\cP_\cA[Q_k]} \parens*{ \rho \ot I } \sqrt{\cP_\cA[Q_k]}.
\end{align*}
\end{proof}

\subsection{Application to passive bosonic Gaussian states}

We now apply \cref{thm:infinite-random-purification} to passive bosonic Gaussian states. Let
\begin{align}\label{eq:identification L2 copies}
\cH_{m,n}
=
\cH_m^{\otimes n}
\cong
\L^2(\R^{mn})
\end{align}
be the Hilbert space of $n$ copies of an $m$-mode bosonic system. The passive group $\U(n)\times \U(m)$ acts on $\cH_{m,n}$ by number-preserving Gaussian unitaries. There are two commuting subgroups
\begin{align*}
G_1
\cong
{I_n\otimes \U(m)}
\cong
\U(m),
\qquad
G_2
\cong
{\U(n)\otimes I_m}
\cong
\U(n).
\end{align*}
In the Fock basis, the passive representation is closed under transpose, since $U_u\tran=U_{u\tran}$.
Note that $U_{I_n \ot u} \in G_1$ corresponds to the action of $U_u^{\ot n}$ under the identification in \cref{eq:identification L2 copies}.
Bosonic Howe duality states that the representation of $\U(n)\times \U(m)$ decomposes as
\begin{align}\label{eq:howe-passive-decomposition-for-rp}
\cH_{m,n}
\cong
\bigoplus_{N=0}^\infty
\bigoplus_{\lambda\vdash N}
\cV_\lambda
\otimes
\cW_\lambda,
\end{align}
where the sum is over partitions $\lambda$ of $N$ with $l(\lambda)\leq \min(m,n)$, and where $N$ is the particle number. The group $G_2\cong \U(n)$ acts irreducibly on $\cW_\lambda$, and the group $G_1\cong \U(m)$ acts irreducibly on $\cV_\lambda$.

We apply \cref{thm:infinite-random-purification} to the compact group $G_1$, where the irreducible representation spaces are $\cV_\lambda$, and the multiplicity spaces are $\cW_\lambda$. All of these spaces are finite dimensional.
Therefore \cref{thm:infinite-random-purification} gives the following corollary.

\begin{cor}[Passive bosonic random purification]\label{cor:passive-bosonic-random-purification}
There is a quantum channel
\begin{align*}
\cP^\mathrm{pass}_{n,m}
\colon
\trcl(\cH_m^{\otimes n})
\to
\trcl(\cH_m^{\otimes n}\otimes \cH_m^{\otimes n})
\end{align*}
such that, for every state $\rho_{A^n} \in \S(\cH_{m,n})$ that commutes with $U_{u\otimes \id_m}$ for all $u\in \U(n)$
we have
\begin{align}\label{eq:passive-rp-un-invariant-state}
\cP^\mathrm{pass}_{n,m}[\rho_{A^n}]
=
\int_{\U(m)}
(I_{A^n}\otimes U_v^{\otimes n})
\psi^\text{std}_{\rho_{A^n}}
(I_{A^n}\otimes U_v^{\dagger, \otimes n})
\dd v .
\end{align}
In particular, if $\sigma_A$ is an $m$-mode passive Gaussian state, then
\begin{align}\label{eq:passive-rp-iid-gaussian-state}
\cP^\mathrm{pass}_{n,m}[\sigma_A^{\otimes n}]
=
\int_{\U(m)}
\mleft(
(I_A\otimes U_v)
\psi^\text{std}_{\sigma_A}
(I_A\otimes U_v^\dagger)
\mright)^{\otimes n}
\dd v .
\end{align}
Thus the channel maps $n$ copies of $\sigma_A$ to $n$ copies of the same Haar-random passive Gaussian purification of $\sigma_A$, each having expected number of particles twice that of $\sigma_A$
\end{cor}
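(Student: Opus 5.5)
The plan is to obtain the corollary directly from \cref{thm:infinite-random-purification}, applied to the decomposition \cref{eq:howe-passive-decomposition-for-rp}. We take $\cL_\lambda = \cV_\lambda$ (the irreducible $\U(m)$-spaces) and $\cR_\lambda = \cW_\lambda$ (the $\U(n)$-spaces), and let $\cA$ be the von Neumann algebra $\bigoplus_\lambda I_{\cV_\lambda}\ot \Lin(\cW_\lambda)$. Both families of spaces are finite-dimensional, since for fixed $N$ they are polynomial irreps of compact groups. Compact bosonic Howe duality gives $G_2'' = \cA$ and $G_1'' = \cA' = \cB$. For the Fock basis, the passive representation is closed under transpose, because $U_u\tran = U_{u\tran}$, so $\cA\tran = \cA$ as the theorem assumes. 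We then define $\cP^\mathrm{pass}_{n,m}\coloneqq\cP_\cA$ and take $G = G_1$, which is compact. By construction, $G_1''=\cB$.

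\textbf{First identity.} A state $\rho_{A^n}$ commuting with every $U_{u\ot I_m}$, $u\in\U(n)$, commutes with the von Neumann algebra $G_2''=\cA$, which means $\rho_{A^n}\in\cA'=\cB$. Item~1 of \cref{thm:infinite-random-purification} then gives
\[
\cP_\cA[\rho_{A^n}]=\int_{\U(m)} (I\ot U_v^{\ot n})\psi^\text{std}_{\rho_{A^n}}(I\ot U_v^{\dagger,\ot n})\,\dd v,
\]
using the identification $U_{I_n\ot v}\cong U_v^{\ot n}$. This is \cref{eq:passive-rp-un-invariant-state}. The one point needing care is that the Fock basis on $\cH_m^{\ot n}$, with respect to which $\Gamma$ is taken, is the tensor power of the single-copy Fock basis. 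Under that choice the standard purification factorizes, $\psi^\text{std}_{\sigma^{\ot n}}=(\psi^\text{std}_\sigma)^{\ot n}$, after reordering systems from $A^n\tilde A^n$ to $(A\tilde A)^n$.

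\textbf{IID passive Gaussian input.} The main step is to show that $\sigma^{\ot n}$ commutes with $U_{u\ot I_m}$ for a passive Gaussian $\sigma$. Write $\sigma\propto e^{-H}$ with $H=\sum_{jk}h_{jk}a_j^\dagger a_k$, handling limits by trace-norm continuity. Then $\sigma^{\ot n}\propto\exp(-\sum_{r}\sum_{jk}h_{jk}a_{j,r}^\dagger a_{k,r})$. The passive unitary $U_{u\ot I}$ maps $a_{j,r}\mapsto\sum_s u_{rs}a_{j,s}$ (or the transpose convention), and this sends the quadratic form $\sum_r a^\dagger_{\cdot,r}h\,a_{\cdot,r}$ to $\sum_{r,s,s'}\bar u_{rs}u_{rs'}a^\dagger_{\cdot,s}h\,a_{\cdot,s'}$. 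Since $u^\dagger u=I$, this equals the original form, so the Hamiltonian, and hence $\sigma^{\ot n}$, is invariant.

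Plugging $\sigma^{\ot n}$ into the first identity and using $U_v^{\ot n}$ together with the factorization of the standard purification gives \cref{eq:passive-rp-iid-gaussian-state}. Finally, \cref{lem:gaussian_purification_bos} shows that each $(I\ot U_v)\psi^\text{std}_\sigma(I\ot U_v^\dagger)$ is a Gaussian purification. Its mean photon number is twice that of $\sigma$, because the passive $U_v$ preserves photon number. This yields the interpretive statement.

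I expect the main obstacle to be the von Neumann algebra bookkeeping: checking that commuting with the group $G_2$ is equivalent to lying in $\cB$. This needs Howe duality in the form $G_2''=\cA$ (weak closure of the span of the $\U(n)$ action on each finite $N$-sector), and it also needs the convention that the transpose and the multi-copy Fock basis are compatible. The reordering of tensor factors must likewise be checked so that the tensor-power factorization of $\Gamma$ holds. The remaining steps are routine.
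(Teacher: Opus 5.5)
Your proposal is correct and follows essentially the same route as the paper: apply \cref{thm:infinite-random-purification} with $G=G_1$, identify $G_1''=\cB$ and $G_2'=\cB$ via Howe duality and Schur's lemma, show $\sigma^{\ot n}\in\cB$, and finish with the tensor-power factorization of the standard purification and \cref{lem:gaussian_purification_bos}. The one minor difference is how you show $\sigma^{\ot n}\in\cB$: you check directly that the quadratic form is $\U(n)$-invariant, which is close to the paper's alternative \cref{passive_gaussian} but without diagonalizing first; the paper's primary argument instead uses the Cauchy--Fourier integral representation of $e^{-x}$ to place $\sigma^{\ot n}$ directly in the weak closure of the span of $G_1$, so it does not need the $G_2$ side of the duality for this step.
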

Note that \cref{cor:bosons} is a special case of the above result.

\begin{proof}
The first statement is exactly \cref{thm:infinite-random-purification} applied to $G_1$. With $\mathcal{A}$ and $\mathcal{B}$ as in \eqref{eq:decomposition algebras infinite}, by Schur's lemma applied to projections on finite direct sums of irreps, the commutant of $G_1$ is $\mathcal{A}$. It follows that $G_1''=\mathcal{B}$, therefore by the double commutant theorem, the strong operator topology closure of the span of $G_1$ is $\mathcal{B}$. In particular, $G_2$-invariant states belong to $\mathcal{B}$. 

It remains to check that $\sigma_A^{\otimes n}$ commutes with $G_2$, when $\sigma_A$ is passive Gaussian. It is sufficient to recall the identity $e^{-x}=\frac{1}{\pi}\int_{-\infty}^{\infty}\frac{1}{1+t^2}e^{-itx} dt$ for $x>0$. Then, for any gauge-invariant positive-definite Hamiltonian $H$, since $(e^{-itH})^{\ot n} \in G_1$ for all $t \in \R$, the above formula interpreted as convergence of Riemann sums, shows that $(e^{-H})^{\ot n}$ is in the weak operator topology closure of $G_1$, and thus in the double commutant $G_1''=\mathcal{B}$. 
Alternatively, one can explicitly check that $\sigma_A^{\ot n}$ commutes with the action of $G_2$, see~\cref{passive_gaussian} below.

Finally, with respect to the product Fock basis,
\begin{align*}
\psi^\text{std}_{\sigma_A^{\otimes n}}
=
\mleft(\psi^\text{std}_{\sigma_A}\mright)^{\otimes n}.
\end{align*}
Substituting this into \eqref{eq:passive-rp-un-invariant-state} gives \eqref{eq:passive-rp-iid-gaussian-state}. By \cref{lem:gaussian_purification_bos} the standard purification of a passive Gaussian state is Gaussian, with twice the expected number of particles, and passive Gaussian unitaries preserve Gaussianity and the total mean photon number. Therefore each state inside the integral in \eqref{eq:passive-rp-iid-gaussian-state} is a Gaussian purification of $\sigma_A$, with mean photon number equal to twice that of $\sigma_A$.
\end{proof}

We can also express this random purification channel differently. Let $Q_k$ denote the projection onto the $k$-particle subspace of $\cH_{m,n}$. This is a finite rank projector that commutes with $G_1$ and $G_2$. Additionally, $\sum_{k=0}^{\infty} Q_k = I$, so by \cref{thm:infinite-random-purification}, for any state $\rho$ such that $[\rho, G_2] = 0$ we can write
\begin{align*}
\cP^\mathrm{pass}_{n,m}(\rho_{A^n})
= \sum_k \sqrt{\cP^\mathrm{pass}_{n,m}[Q_k]} \parens*{ \rho \ot I } \sqrt{\cP^\mathrm{pass}_{n,m}[Q_k]}.
\end{align*}

We conclude with an alternative elementary proof that multi-copy passive Gaussian states are $\U(n)$-invariant.
\begin{lem}\label{passive_gaussian}
    Let $\rho^{\otimes n}$ be the $n$-fold tensor product of an $m$-mode passive Gaussian state $\rho$. Then, for any $u\in \U(n)$, it holds that
    \begin{align}\label{eq:U_o_rho}
    U_{u\otimes\id_m}\rho^{\otimes n}U^\dagger_{u\otimes\id_m}=\rho^{\otimes n}.
\end{align}
\end{lem}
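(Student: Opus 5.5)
The approach is to reduce to the diagonal standard form and then use that the total number operator of each mode index is invariant under copy rotations. Since $\rho$ is passive Gaussian, there is a passive Gaussian unitary $U_v$, $v\in\U(m)$, with $\rho = U_v\,\tau\,U_v^\dagger$ and $\tau \propto \exp(-\sum_{j=1}^m \beta_j a_j^\dagger a_j)$, $\beta_j>0$ (by \cref{eq:diag gauge invariant hamiltonian}, with limits handled by continuity at the end). Then $\rho^{\ot n} = U_v^{\ot n}\tau^{\ot n}U_v^{\dagger,\ot n}$, and $U_v^{\ot n}=U_{\id_n\ot v}$ commutes with $U_{u\ot\id_m}$, because the two actions sit on different tensor factors of $\C^n\ot\C^m$ and the passive representation is a genuine (not merely projective) representation of $\U(mn)$. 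Hence it suffices to prove \eqref{eq:U_o_rho} for $\tau$.

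Label the annihilation operators of the $mn$ modes as $a_{j,r}$, where $j\in[m]$ is the mode index and $r\in[n]$ is the copy index. Then
\begin{align*}
\tau^{\ot n}\propto \exp\Big(-\sum_{j=1}^m \beta_j N_j\Big), \qquad N_j \coloneqq \sum_{r=1}^n a_{j,r}^\dagger a_{j,r}.
\end{align*}
The passive unitary $U_{u\ot\id_m}$ acts by $U a_{j,r} U^\dagger = \sum_s \overline{u_{rs}}\, a_{j,s}$, or with $u$ in place of $\bar u$ depending on convention. In either case, for each fixed $j$ it is a unitary transformation of the vector $(a_{j,1},\dots,a_{j,n})$. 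Therefore
\begin{align*}
U N_j U^\dagger = \sum_{r,s,s'} u_{rs}\overline{u_{rs'}}\, a_{j,s}^\dagger a_{j,s'} = \sum_{s} a_{j,s}^\dagger a_{j,s} = N_j,
\end{align*}
using $u^\dagger u=I$. The functional calculus then gives $U e^{-\sum_j\beta_jN_j}U^\dagger = e^{-\sum_j \beta_j U N_j U^\dagger}$, which proves the claim for $\tau$. The normalization constant is unaffected.

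The only delicate step is the justification with unbounded operators. Both $U$ and $N_j$ preserve each finite-dimensional total-photon-number sector, so the identity $UN_jU^\dagger=N_j$ and the exponential can be checked sector by sector, where everything is finite-dimensional. Finally, limit Gaussian states, for example with some $\beta_j\to\infty$, are trace-norm limits of such states, and conjugation by a fixed unitary is trace-norm continuous, so \eqref{eq:U_o_rho} passes to the limit.
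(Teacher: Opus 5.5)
Your proof is correct and follows essentially the same route as the paper. You conjugate away the passive unitary $U_v$, which commutes with $U_{u\ot\id_m}$. You then note that, for each mode index $j$, $\tau^{\ot n}$ is a Gibbs state of a multiple of the number operator on the $n$ modes $(j,1),\dots,(j,n)$, and is therefore invariant under the copy-rotating passive unitary. Your explicit check $U N_j U^\dagger = N_j$, the sector-by-sector handling of the unbounded operators, and the continuity argument for limiting states (e.g.\ vacuum components) make explicit what the paper leaves implicit.
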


\begin{proof}
    Since $\rho$ is a passive Gaussian state, it can be written as $\rho = U_{v}\tau U_{v}^\dagger$ for a unitary matrix $v\in \U(m)$, where $\tau = \tau_1 \ot \dots \tau_m$, where $\tau_i$ is a single-mode Gaussian state with Hamiltonian proportional to the number operator.
    Then, for $u\in \U(n)$
    \begin{align*}
        U_{u\otimes I_m}\rho^{\otimes n}U^\dagger_{u\otimes I_m} &= U_{u\otimes I_m}\left(U_{v}\tau U_{v}^\dagger\right)^{\otimes n}U^\dagger_{u\otimes I_m} \\
        &= U_{u\otimes I_m} U_{I \ot v} \tau ^{\otimes n} U_{I \ot v}^\dagger U^\dagger_{u\otimes I_m}.
    \end{align*}
    Now, we note that $U_{u\otimes I_m}$ and $U_{I \ot v}$ commute, and
    \begin{align*}
        U_{u\otimes I_m} \tau ^{\otimes n} U^\dagger_{u\otimes I_m} = \bigotimes_{i=1}^m U_u \tau_i^{\ot n} U_u^\dagger,
    \end{align*}
    but $\tau_i^{\ot n}$ is a Gibbs state for a Hamiltonian proportional to the number operator on $n$ modes, and thereby invariant under the number-preserving Gaussian unitary $U_u$.
    From this we conclude \cref{eq:U_o_rho} holds.
\end{proof}

\section{Haar measure of \texorpdfstring{$\mathrm{Sp}(2m,\RR)$}{Sp(2m)}}\label{sec:explicit haar}

An alternative way to compute the formal dimension $d_{n,m}$ in~\cref{eq:vacuum_formaldegree} is through explicit integration. While we recalled the general theory to argue that the formal degrees are bounded from below, an explicit integration approach is sufficient if we just want to compute the formal dimension of the vacuum representation. However, to give a formula for the Haar measure, we need to use a different root system than the one given in~\cref{eq:rootsys}. Let $\mathfrak{a}$ be a maximal subalgebra of the non-compact algebra $\mathfrak{p}$. The root system constructed from $\mathfrak{g}$ and $\mathfrak{a}$ (without complexification) is called \textit{restricted root system}, with root spaces
\[
\mathfrak{g}_{\mu}\coloneqq \{X\in\mathfrak{g}: [A,X]=\braket{\mu, A}X\, \, \forall A\in \mathfrak{b} \},
\]
and we denote $\beta_{\mu}$ the dimension of $\mathfrak{g}_{\mu}$. Let $\Sigma_+$ be a choice of a positive restricted root system.
For a reductive Lie group with a $G=KAK$ decomposition, where $K$ is a maximal compact subgroup and $A$ is the Lie group $e^{\mathfrak{a}}$, the Haar measure reads (Proposition 5.28 in~\cite{knapp2001representation}):
\begin{equation}
\dd\mu(g)= C \dd k_1 \dd k_2 \dd a \prod_{\mu\in\Sigma_+}|\sinh(\mu(a))|^{\beta_{\mu}},
\end{equation}
where $g=k_1e^ak_2$, and where $\dd k$ is the Haar measure of $K$ and $\dd a$ is Haar measure of $\mathfrak{a}$, $\beta_{\lambda}$ is the dimension of the restricted root space $\mathfrak{g}_{\lambda}$,  $\Sigma_{+}$ is the positive restricted root system and $C$ is some free constant.

Concretely for $\mathrm{Sp}(2m,\RR)$,
a maximal compact subgroup $K$ is $\mathrm{Sp}(2m,\RR)\cap \mathrm{O}(2m)\cong \U(m)$, and the non-compact (Cartan) subalgebra is the form $\{a \, : \, a=\diag(s_1,-s_1,\cdots s_{m}, -s_{m})\}$.
Let $\varepsilon_i\in \mathfrak{b}^{\C}{}' $ be defined from $\braket{\varepsilon_i,a}=s_i$, with the above notation for $a$. 
One can show that the roots are\footnote{note the formal similarity with the root system~\eqref{eq:rootsys}. However, the interpretation of the root spaces is different, with the new root spaces being associated with generators $q_i^2$, $p_i^2$, $q_ip_j$, $p_iq_j$, and only their linear combinations give phase shifters and beam splitters.}
\begin{align}\label{eq:rootsys2}
\pm\varepsilon_{i}\pm\varepsilon_j \,\, \text{ for } i\neq j \quad \text{ and }  \quad \pm 2\varepsilon_k\, \text{ for } k\in[m],
\end{align}
and the root spaces have multiplicity one. A valid choice of positive roots is
\begin{align*}
    \Sigma_+ = \{ \varepsilon_{i}\pm\varepsilon_j \; \text{ for } i < j \} \cup \{2\varepsilon_k,\; \text{ for } k\in[m]\}.
\end{align*}

Inserting the expressions of the roots, we obtain the following formula for the Haar measure at $S=u_1 e^{\diag(s_1,-s_1,\cdots s_{m}, -s_{m})} u_2$, with $s_1\geq s_2\geq\cdots\geq s_m$:
\begin{equation}
\dd\mu(S)=C \dd u_1 \dd u_2 \dd s_1  \cdots ds_{m}\mathbb{1}_{s_1\geq s_2\cdots \geq s_m}\prod_{i=1}^{m}\sinh(2s_i)\prod_{i<j}\abs{\cosh(2s_i)-\cosh(2s_j)}.
\end{equation}
This expression has also been derived in \cite{lupo2012invariant}.

We will now use this expression for the Haar measure to explicitly verify the formal degree.
We have that
\begin{equation}
d_{n,m}^{-1}= c\int_{\mathrm{Sp}(2m,\RR)} \dd\mu(S)  |\bra{0} U_S \ket{0}|^{2n},
\end{equation}
for some constant $c>0$. Clearly, $|\bra{0} U_S \ket{0}|^{2n}$, depends only on $a$, and it is equal to
\begin{align*}
|\bra{0} U_S \ket{0}|^{2n}&=\left(\frac{2^m}{\sqrt{\det (e^{2\diag(s_1,-s_1,\cdots s_{m}, -s_{m})}+I)}}\right)^{n}\\&=\prod_{i=1}^{m}\cosh^{-n}(s_i)=2^{mn/2}\prod_{i=1}^{m}({1+\cosh(2s_i)})^{-n/2}.
\end{align*}
Thus, the integral is

\begin{align}
&\int_{\mathrm{Sp}(2m,\RR)} \dd S  |\bra{0} U_S \ket{0}|^{2n}\\&=2^{mn/2}\int_{0}^{+\infty} \dd s_1  \cdots \dd s_{m}\delta_{s_1\geq s_2\cdots \geq s_m}\prod_{i=1}^{m}\sinh(2s_i)({1+\cosh(2s_i)})^{-n/2}\prod_{i<j}(\cosh(2s_i)-\cosh(2s_j))\\
&=\frac{2^{mn/2}}{2^m m!}\int_1^{+\infty} \dd t_1  \cdots \dd t_{m}\prod_{i=1}^{m}({1+t_i})^{-n/2}\prod_{i<j}|t_i-t_j|\\
&=\frac{2^{m(m-1)/2}}{m!}\int_{0}^{1} \dd x_1  \cdots \dd x_{m}\prod_{i=1}^{m}({1-x_i})^{n/2-m-1}\prod_{i<j}|x_i-x_j|,
\end{align}
where we used the change of variables $t_i=\cosh(2s_i)$ and then $t_{i}=\frac{1+x_{i}}{1-x_i}$. This is an instance of a Selberg integral~\cite{selberg1944remarks,forrester2008importance}, where for parameters $\alpha, \beta, \gamma$ satisfying $$\operatorname{Re}(\alpha) > 0, \, \operatorname{Re}(\beta) > 0, \, \operatorname{Re}(\gamma) > -\min\left\{ \frac{1}{m}, \frac{\operatorname{Re}(\alpha)}{m-1}, \frac{\operatorname{Re}(\beta)}{m-1} \right\},$$
the following integral converges
\begin{align*}
S_m(\alpha, \beta, \gamma) &= \int_0^1 \cdots \int_0^1 \prod_{i=1}^m x_i^{\alpha-1} (1-x_i)^{\beta-1} \prod_{1 \le i < j \le m} |x_i - x_j|^{2\gamma} \, \dd x_1 \cdots \dd x_m \\
&= \prod_{j=0}^{m-1} \frac{\Gamma(\alpha+j\gamma)\Gamma(\beta+j\gamma)\Gamma(1+(j+1)\gamma)}{\Gamma(\alpha+\beta+(m+j-1)\gamma)\Gamma(1+\gamma)}.
\end{align*}
In our case, $\alpha=1$, $\beta=n/2-m$, $\gamma=1/2$. We need $n>2m$ for the integral to be convergent. The final result can be written as
\begin{align*}
d_{n,m}^{-1}=c_{m}\prod_{j=0}^{m-1}\frac{\Gamma(\frac{n}{2}+\frac{j-2m}{2})}{\Gamma(\frac{n}{2}+\frac{j-m+1}{2})},
\end{align*}
where we hid terms independent of $n$ in $c_m$.
Now we observe that, for fixed $j$
\[
\prod_{1\leq i\leq j}(n-i-j)=\frac{(n-j-1)!}{(n-2j-1)!}.
\]
By using the duplication formula of Gamma functions, $\Gamma(z)=\frac{1}{\sqrt{\pi}}2^{z-1}\Gamma\left(\frac{z}{2}\right)\Gamma\left(\frac{z+1}{2}\right)$, we obtain

\[
\frac{(n-j-1)!}{(n-2j-1)!}=2^{j}\frac{\Gamma\left(\frac{n-j}{2}\right)\Gamma\left(\frac{n-j+1}{2}\right)}{\Gamma\left(\frac{n-2j}{2}\right)\Gamma\left(\frac{n-2j+1}{2}\right)}.
\]
By term-by-term cancellation in the full product over $j$, we can get the following simplification:
\[
\prod_{1\leq j\leq m}2^j\frac{\Gamma\left(\frac{n-j}{2}\right)\Gamma\left(\frac{n-j+1}{2}\right)}{\Gamma\left(\frac{n-2j}{2}\right)\Gamma\left(\frac{n-2j+1}{2}\right)}=2^{m(m+1)/2}\frac{\prod_{0\leq j\leq m-1}\Gamma\left(\frac{n-j}{2}\right)}{\prod_{m+1\leq j\leq 2m}\Gamma\left(\frac{n-j}{2}\right)}=2^{m(m+1)/2}\frac{\prod_{0\leq j\leq m-1}\Gamma\left(\frac{n+j-m+1}{2}\right)}{\prod_{0\leq j\leq m-1}\Gamma\left(\frac{n+j-2m}{2}\right)},
\]
establishing the desired correspondence.

\end{appendix}
\end{document}